\newcommand {\ignore} [1] {}
\definecolor{ForestGreen}{rgb}{0.1333,0.5451,0.1333}
\definecolor{DarkRed}{rgb}{0.65,0,0}
\tikzset{cross/.style={cross out, draw=black, fill=none, minimum size=2*(#1-\pgflinewidth), inner sep=0pt, outer sep=0pt}, cross/.default={10pt}}
\newcommand{\miss}{\mathsf{miss}}
\newcommand{\clr}{\mathsf{clr}}
\newcommand{\lo}{\mathrm{lo}}
\newcommand{\Vizing}{\textnormal{\textsf{Vizing}}}
\newcommand{\VizingF}{\textnormal{\texttt{Vizing-Fan}}}
\newcommand{\VizingP}{\textnormal{\texttt{Vizing-Path}}}
\newcommand{\ColorUFans}{\textnormal{\textsf{Color-U-Fans}}}
\newcommand{\PrimeUFans}{\textnormal{\textsf{Prime-U-Fans}}}
\newcommand{\ActUFans}{\textnormal{\textsf{Activate-U-Fans}}}
\newcommand{\ConUFans}{\textnormal{\textsf{Construct-U-Fans}}}
\newcommand{\PruneVFans}{\textnormal{\textsf{Prune-Vizing-Fans}}}
\newcommand{\ReduceUEdges}{\textnormal{\textsf{Reduce-U-Edges}}}
\newcommand{\UpdatePath}{\textnormal{\textsf{Update-Path}}}
\newcommand{\cost}{\textnormal{\texttt{cost}}}
\newcommand{\ceil}[1]{\left\lceil #1 \right\rceil}
\newcommand{\floor}[1]{\left\lfloor #1 \right\rfloor}
\DeclareMathAlphabet{\mathmybb}{U}{bbold}{m}{n}
\newcommand{\1}{\mathmybb{1}}
\newtheorem{theorem}{Theorem}[section]
\newtheorem{lemma}[theorem]{Lemma}
\newtheorem{corollary}[theorem]{Corollary}
\newtheorem{proposition}[theorem]{Proposition}
\newtheorem*{proposition*}{Proposition}
\newtheorem{definition}[theorem]{Definition}
\newtheorem{invariant}[theorem]{Invariant}
\newtheorem{claim}[theorem]{Claim}
\DeclareMathOperator*{\poly}{poly}
\DeclareMathOperator*{\polylog}{polylog}
\DeclareMathOperator*{\U}{\mathcal U}
\DeclareMathOperator*{\E}{\mathcal E}
\DeclareMathOperator*{\f}{\boldsymbol{f}}
\DeclareMathOperator*{\e}{\boldsymbol{e}}
\DeclareMathOperator*{\g}{\boldsymbol{g}}
\DeclareMathOperator*{\F}{\mathbf F}
\DeclareMathOperator*{\s}{\boldsymbol{S}}
\newcommand{\expect}[1]{\mathbb E \left[#1\right]}
\newcommand{\paren}[1]{\left( #1 \right)}
\newcommand{\sparen}[1]{\left[ #1 \right]}
\title{Vizing's Theorem in Near-Linear Time}
\author{
Sepehr Assadi\thanks{University of Waterloo, \texttt{sepehr@assadi.info}}
\and
Soheil Behnezhad\thanks{Northeastern University, \texttt{s.behnezhad@northeastern.edu}}
\and
Sayan Bhattacharya\thanks{University of Warwick, \texttt{s.bhattacharya@warwick.ac.uk}}
\and 
Mart\'in Costa\thanks{University of Warwick, \texttt{martin.costa@warwick.ac.uk}}
\and 
Shay Solomon\thanks{Tel Aviv University, \texttt{solo.shay@gmail.com}} \and 
Tianyi Zhang\thanks{ETH Z\"urich, \texttt{tianyi.zhang@inf.ethz.ch} \smallskip}
}
\date{}
\begin{document}

\maketitle

 \pagenumbering{gobble}

\begin{abstract}
    Vizing's theorem states that any $n$-vertex $m$-edge graph of maximum degree $\Delta$ can be \emph{edge colored} using at most $\Delta + 1$ different colors [Vizing, 1964]. Vizing's original proof is algorithmic and shows that such an edge coloring can be found in $O(mn)$ time. This was subsequently improved to $\tilde O(m\sqrt{n})$ time, independently by [Arjomandi, 1982] and by [Gabow et al., 1985].\footnote{Throughout, we use $\tilde O(f) := O(f \polylog{(n)})$ to suppress log-factors in the number of vertices of the graph.}
    
    \medskip
    
    Very recently, independently and concurrently, using randomization, this runtime bound was further improved to  $\tilde{O}(n^2)$ by [Assadi, 2024] and $\tilde O(mn^{1/3})$ by [Bhattacharya, Carmon, Costa, Solomon and Zhang, 2024] (and subsequently to $\tilde O(mn^{1/4})$ by [Bhattacharya, Costa, Solomon and Zhang, 2024]). 

     \medskip
    In this paper, we present a randomized algorithm that computes a $(\Delta+1)$-edge coloring in near-linear time---in fact, only $O(m\log{\Delta})$ time---with high probability, \emph{giving a near-optimal algorithm for this fundamental problem}.
 \end{abstract}

\clearpage

\setlength{\cftbeforesecskip}{7pt}

\setcounter{tocdepth}{3}
\tableofcontents
\clearpage
\pagenumbering{arabic}

\clearpage
\setcounter{page}{1}

\section{Introduction}

Given a simple undirected graph $G = (V, E)$ on $n$ vertices and $m$ edges, as well as an integer $\kappa \in \mathbb{N}^+$, a $\kappa$-edge coloring $\chi : E \rightarrow \{1, 2, \ldots, \kappa\}$ of $G$ assigns a color $\chi(e)$ to each edge $e \in E$ so that any two adjacent edges receive distinct colors.
The minimum possible value of $\kappa$ for which a $\kappa$-edge coloring exists in $G$ is known as the \emph{edge chromatic number} of $G$. If $G$ has maximum vertex degree $\Delta$, any proper edge coloring would require at least $\Delta$ different colors. A classical theorem by Vizing shows that $\Delta+1$ colors are always sufficient~\cite{Vizing}. Moreover, it was proven by~\cite{holyer1981np} that it is NP-complete to distinguish whether the edge chromatic number of a given graph is  $\Delta$ or $\Delta+1$, 
and therefore $\Delta+1$ is the best bound we can hope for with polynomial time algorithms. 

Vizing's original proof easily extends to an $O(mn)$ time algorithm, which was improved to $\tilde{O}(m\sqrt{n})$ in the 1980s by \cite{arjomandi1982efficient} and \cite{gabow1985algorithms} independently.
More recently, the algorithms of~\cite{arjomandi1982efficient,gabow1985algorithms} were simplified in \cite{sinnamon2019fast}, while shaving off extra logarithmic factors from their runtime complexities, achieving a clean $O(m\sqrt{n})$ runtime bound. Very recently, this longstanding $O(m\sqrt{n})$ time barrier was bypassed in two concurrent works \cite{Assadi24} and \cite{BhattacharyaCCSZ24} which improved the runtime bound to
two incomparable bounds of $\tilde{O}(n^2)$ and $\tilde{O}(mn^{1/3})$, respectively. In a follow-up work, the $\tilde{O}(mn^{1/3})$ runtime bound of~\cite{BhattacharyaCCSZ24} to was further improved to $\tilde{O}(mn^{1/4})$ in \cite{BhattacharyaCSZ24}. 

In this work, we resolve the time complexity of randomized $(\Delta+1)$-edge coloring up to at most a log factor by presenting a near-linear time algorithm for this problem.

\begin{theorem}\label{thm:main}
    There is a randomized algorithm that, given any simple undirected graph $G = (V, E)$ on $n$ vertices and $m$ edges with maximum degree $\Delta$, finds a $(\Delta + 1)$-edge coloring of $G$ in $O(m\log{n})$ time with high probability.
\end{theorem}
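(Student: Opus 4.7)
The plan is to extend a partial proper edge coloring in phases, coloring a constant fraction of the remaining uncolored edges in each phase, and to bound each phase by $O(m)$ work so that $O(\log n)$ phases deliver the claimed runtime. The fundamental primitive is a \emph{Vizing chain} (a Vizing fan followed by a Kempe alternating path) that unlocks a free color for an uncolored edge $(u,v)$ by recoloring only the edges along its length. Classically a chain can have length $\Theta(n)$, which is why the naive implementation costs $\Omega(mn)$. The first pillar of the proof is therefore a \emph{short-chain lemma}: with the aid of randomization, with high probability a constant fraction of the currently uncolored edges admit a valid Vizing chain of length only $\polylog(n)$. The progression from $\tilde O(m\sqrt n)$ down to $\tilde O(mn^{1/4})$ already points in this direction and must be pushed to the $\polylog(n)$ regime here.

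Given the short-chain lemma, the next step is parallel augmentation. I would sample uncolored edges, build their short chains, and then resolve conflicts among overlapping chains using random priorities (in the spirit of Luby's MIS): an augmentation fires only if its chain avoids all higher-priority chains. Because each chain has length $\polylog(n)$, each sampled edge conflicts in expectation with only $\polylog(n)$ others, so a constant fraction of the sampled augmentations succeed simultaneously. With $\Omega(m/\polylog n)$ augmentations per phase, each of cost $\polylog(n)$, the per-phase work is $O(m)$, and a standard potential argument on the number of uncolored edges yields $O(\log n)$ phases in total.

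The main obstacle, and the bulk of the technical work, is the \emph{long tail}: uncolored edges whose Vizing chains are inherently long and cannot be handled by the short-chain mechanism. I would treat them as a residual subproblem at a coarser scale---for instance, by isolating the vertices whose palettes of missing colors have become scarce and recursing only on the subgraph they induce, or by working inside a hierarchical partition of the color palette. Establishing that each level of this recursion strips away a constant fraction of the residual problem, so that there are only $O(\log n)$ recursive levels with $\polylog$-overhead per level, is the crux of the argument and where the genuine gain over $\tilde O(mn^{1/4})$ must materialize. On top of all this sits a light data-structural layer: each vertex maintains its set of missing colors in a balanced BST so that sampling a free color or matching one against a neighbor costs $O(\log n)$, which is the source of the single $\log n$ factor in the final $O(m \log n)$ bound.
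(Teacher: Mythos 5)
There is a genuine gap, and it sits exactly where you place the weight of your argument: the ``short-chain lemma.'' You assert that, with high probability, a constant fraction of the currently uncolored edges admit Vizing chains of length $\polylog(n)$. No such statement is known, it is not proved in your proposal, and the paper does not prove it either --- the best per-edge extension bounds cited in the paper are the trivial $O(n)$, the $\tilde{O}(\Delta^6)$ bound of Bernshteyn, and the $\tilde{O}(\Delta^2+\sqrt{\Delta n})$ bound of Bhattacharya et al., all of which are polynomial in $n$ in the relevant regime. Reading the progression $\tilde O(m\sqrt n)\to\tilde O(mn^{1/3})\to\tilde O(mn^{1/4})$ as evidence that chains can be driven down to $\polylog(n)$ length is a misreading of those works: they gain by amortizing over many chains (grouping chains by color type so that chains of one type are vertex-disjoint and have total length $O(n)$), not by shortening individual chains. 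Everything downstream of your short-chain lemma --- the Luby-style conflict resolution, the $O(m)$ per-phase accounting, the ``long tail'' recursion --- is conditioned on it, and the long-tail paragraph, which you yourself identify as the crux, is a placeholder rather than an argument.

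The paper's actual route is different in kind: it never tries to make a single chain short. It first reduces to extending a partial coloring to the last $O(m/\Delta)$ edges via Euler-partition recursion, and then colors those edges in \emph{batches} of a single color type. Concretely, it fixes the two \emph{least common} colors $\alpha,\beta$ (so only $O(m/\Delta)$ edges carry them), ``popularizes'' $\Omega(\lambda/\Delta)$ of the uncolored objects so that they all become $\{\alpha,\beta\}$-type by flipping alternating paths chosen at random --- a random uncolored edge's path has expected length $O(m/\lambda)$ by an averaging argument, and with constant probability the flip does not damage previously popularized objects --- and then colors that whole batch by flipping vertex-disjoint $\{\alpha,\beta\}$-paths of total length $O(m/\Delta)$. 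In general (non-bipartite) graphs the unit of work is a u-fan rather than a single uncolored edge, and pairs of uncolored edges are merged into u-fans by walking their Vizing chains in parallel until they intersect. So the expected length of a \emph{random} chain relative to the current uncolored set, not the worst-case length of any chain, is what makes the amortization work. If you want to salvage your outline, you would need to either prove your short-chain lemma (which would be a significant new result on its own) or replace it with this kind of global averaging over a single popularized color type.
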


\noindent
{\bf Remarks:} Several remarks on our~\Cref{thm:main} are in order: 
\vspace{-5pt}
\begin{itemize}[leftmargin=10pt,itemsep=1pt]
\item Our algorithm in~\Cref{thm:main} does not rely on any of the recent developments in~\cite{Assadi24,BhattacharyaCCSZ24,BhattacharyaCSZ24} 
and takes an entirely different path. We present an overview of our approach, as well as a comparison to these recent developments in~\Cref{sec:overview}. 

\item Our main contribution in this work is to improve the time-complexity of $(\Delta+1)$-edge coloring by {polynomial} factors all the way to near-linear. 
For this reason, as well as for the sake of transparency of our techniques, we focus primarily on presenting an $O(m\log^3{n})$ time randomized algorithm (\Cref{thm:main-tech}), which showcases our most novel ideas. Later in \Cref{sec:log(n)}, we show that a more careful implementation of the same algorithm achieves a clean $O(m \log n)$ runtime.

\item We can additionally use a result of~\cite{BernshteynD23} to further replace the $\log{n}$ term in~\Cref{thm:main} with a $\log{\Delta}$ term, leading to 
an algorithm for $(\Delta+1)$-edge coloring in $O(m\log{\Delta})$ time with high probability (\Cref{cor:small-Delta}). This matches the longstanding time bound for $\Delta$-edge coloring bipartite graphs \cite{combinatorica/ColeOS01} (which, to the best of our knowledge, is also the best known runtime for $(\Delta+1)$-edge coloring bipartite graphs). This bound is also related to a recent line of work in~\cite{BernshteynD23,bernshteyn2024linear,dhawan2024simple} that focused on the $\Delta=n^{o(1)}$ case 
and gave a randomized $(\Delta+1)$-coloring algorithm that runs in $O(m \Delta^4 \log \Delta)$ time with high probability~\cite{bernshteyn2024linear}. 

\item Vizing's theorem generalizes for (loop-less) multigraphs, asserting that any multigraph with edge multiplicity at most $\mu$ can be $(\Delta+\mu)$-edge colored~\cite{Vizing,vizing1965chromatic}. A related 
result is Shannon's theorem~\cite{shannon1949theorem} that asserts that any multigraph can be $\floor{3\Delta/2}$ edge colored independent of $\mu$; both these bounds are tight: see the so-called {\em Shannon multigraphs}~\cite{vizing1965chromatic}.
We show that our techniques extend to these theorems as well, giving $O(m\log{\Delta})$ time algorithms for both problems (see~\Cref{thm:multi Viz,thm:shannon} and~\Cref{cor:small-Delta}). 
\end{itemize}

\subsection{Related Work}

In addition to algorithms for Vizing's theorem, there has also been a long line of work on fast algorithms {for edge coloring} that use more than $\Delta + 1$ colors. It was first shown in \cite{karloff1987efficient} that an edge coloring can be computed in $\tilde{O}(m)$ time when we have $\Delta+\tilde{O}(\sqrt{\Delta})$ different colors. In addition, there are algorithms which run in linear or near-linear time for $(1+\epsilon)\Delta$-edge coloring~\cite{duan2019dynamic,BhattacharyaCPS24,elkin2024deterministic,bernshteyn2024linear,dhawan2024simple} when $\epsilon\in (0, 1)$ is a constant. Most recently, it was shown in \cite{Assadi24} that even a $(\Delta+O(\log n))$-edge coloring can be computed in $O(m\log{\Delta})$ expected time.

There are other studies on restricted graph classes. In bipartite graphs, a $\Delta$-edge coloring can be computed in $\tilde{O}(m)$ time~\cite{cole1982edge,combinatorica/ColeOS01,alon2003simple,goel2010perfect}. In bounded degree graphs, one can  compute a $(\Delta+1)$-edge coloring in $\tilde{O}(m \Delta)$ time \cite{gabow1985algorithms}, and it was generalized recently for
bounded arboricity graphs~\cite{BhattacharyaCPS24b}; see also \cite{BhattacharyaCPS24c,ChristiansenRV24,Kowalik24} for further recent results on edge coloring in bounded arboricity graphs.
Subfamilies of bounded arboricity graphs, including planar graphs, bounded tree-width graphs and bounded genus graphs, were studied in \cite{chrobak1989fast,chrobak1990improved,cole2008new}. 

Beside the literature on classical algorithms, considerable effort has been devoted to the study of edge coloring in various computational models in the past few years, including dynamic~\cite{BarenboimM17,BhattacharyaCHN18,duan2019dynamic,Christiansen23,BhattacharyaCPS24,Christiansen24}, online~\cite{CohenPW19,BhattacharyaGW21,SaberiW21,KulkarniLSST22,BilkstadSVW24,BlikstadOnline2025,dudeja2024randomizedgreedyonlineedge}, distributed~\cite{panconesi2001some,elkin20142delta,fischer2017deterministic,ghaffari2018deterministic,grebik2020measurable,balliu2022distributed,ChangHLPU20,Bernshteyn22,Christiansen23,Davies23}, and streaming~\cite{BehnezhadDHKS19,behnezhad2023streaming,chechik2023streaming,ghosh2023low} models, among others.

\subsection{Technical Overview}
\label{sec:overview}

We now present an overview of prior approaches to $(\Delta+1)$-edge coloring and describe our techniques at a high level. 
For the following discussions, we will assume basic familiarity with Vizing's proof and its underlying algorithm; see \Cref{sec:prelim:fans} for more details on Vizing's algorithm.

\medskip
\noindent \textbf{Prior Approaches:}
A generic approach for $(\Delta+1)$-edge coloring, dating back to Vizing's proof itself, is to {extend} a partial $(\Delta+1)$-edge coloring of the graph one edge at a time, possibly by recoloring some edges, until the entire graph 
becomes colored. As expected, the main bottleneck in the runtime of this approach comes from extending the coloring to the last few uncolored edges. 
For instance, given a graph $G = (V, E)$ with maximum degree $\Delta$, we can apply Eulerian partitions to divide $G$ into two edge-disjoint subgraphs $G = G_1 \cup G_2$ with maximum degrees at most  $\ceil{\Delta/2}$. 
We then find $(\ceil{\Delta/2}+1)$-edge colorings of the subgraphs $G_1$ and $G_2$ recursively. Directly combining the colorings of $G_1$ and $G_2$ gives a coloring of $G$ with $\Delta+3$ colors, so we have to uncolor a $2/(\Delta + 3)$ fraction of the edges---amounting to $O(m/\Delta)$ edges---and try to extend the current partial $(\Delta+1)$-edge coloring to these edges. Thus, the ``only'' remaining part is to figure out a way to color
these final $O(m/\Delta)$ edges, and let the above approach take care of the rest. 

This task of extending the coloring to the last $\Theta (m/\Delta)$ edges is the common runtime bottleneck of all previous algorithms. Vizing's original algorithm \cite{Vizing} gives a procedure to extend any partial coloring to an arbitrary uncolored edge $(u, v)$ by rotating some colors around $u$ and flipping the colors of an alternating path starting at $u$. The runtime of this procedure would be proportional to the size of the rotation, usually called a \emph{Vizing fan}, and the length of the alternating path, usually called a \emph{Vizing chain}, which are bounded by $\Delta$ and $n$ respectively. As such, the total runtime for coloring the remaining $O(m/\Delta)$ edges using Vizing fans and Vizing chains will be $O(mn/\Delta)$ time. 

As one can see, flipping long alternating paths is the major challenge in Vizing's approach. 
To improve this part of the runtime, \cite{gabow1985algorithms} designed an algorithm that groups all uncolored edges into $O(\Delta^2)$ types depending on the two colors of the Vizing chain induced by this edge. Since all the Vizing chains of the same type are vertex-disjoint, they can be flipped simultaneously and their total length is only $O(n)$. This means that the runtime of coloring all edges of a single type 
can be bounded by $O(n)$ as well. This leads to an $O(n\Delta^2)$ time algorithm for handling all $O(\Delta^2)$ types; a more careful analysis can bound this even with $O(m\Delta)$ time. 
Finally, balancing the two different bounds of $O(mn/\Delta)$ and $O(m\Delta)$ yields a runtime bound of $O(m\sqrt n)$ for coloring $O(m/\Delta)$ edges, which leads to an $\tilde O(m\sqrt{n})$ time algorithm using the above framework. 

There has been some very recent progress that broke through this classical barrier of $O(m\sqrt n)$ time in \cite{gabow1985algorithms,arjomandi1982efficient}. In \cite{BhattacharyaCCSZ24}, 
the authors speed up the extension of the coloring to uncolored edges when these edges admit a small vertex cover.
They then show how to precondition the problem so that uncolored edges admit a small vertex cover, leading to a $\tilde O(mn^{1/3})$ time algorithm. 
In \cite{Assadi24}, the author avoided the need for Eulerian partition and recursion altogether 
by instead designing a new near-linear time algorithm for $(\Delta + O(\log n))$-edge coloring. This algorithm borrows insights from sublinear matching algorithms in regular bipartite graphs by \cite{goel2010perfect} and is thus completely different from other edge coloring algorithms mentioned above. By using this algorithm, finding a $(\Delta+1)$-edge coloring directly reduces to the 
 color extension problem with $O((m\log n)/\Delta)$ uncolored edges (by removing the colors of a $\Theta(\log{n})/\Delta$ fraction of the edges in the $(\Delta+O(\log{n}))$-edge coloring to obtain a partial $(\Delta+1)$-edge coloring first). 
Applying Vizing's procedure for these uncolored edges takes additional $\tilde O(mn/\Delta) = \tilde O(n^2)$ time, leading to
an $\tilde O(n^2)$ time algorithm for $(\Delta+1)$-edge coloring. 
Finally, in \cite{BhattacharyaCSZ24}, the authors showed that a $(\Delta+1)$-coloring can be computed in $\tilde{O}(m n^{1/4})$ time, by using the algorithm of~\cite{Assadi24} for initial coloring of the graph and then presenting an improved color extension subroutine with a runtime of $\tilde{O}(\Delta^2 + \sqrt{\Delta n})$ for coloring each remaining edge; the best previous color extension time bounds were either the trivial $O(n)$ bound or the bound $\tilde{O}(\Delta^4)$ by \cite{Bernshteyn22,bernshteyn2024linear}. 

\vspace{-10pt}
\subsubsection*{Our Approach: A Near-Linear Time Color Extension Algorithm}
\vspace{-5pt}

We will no longer attempt to design a faster color extension for a {\em single edge}, and instead color them in large \emph{batches} like in \cite{gabow1985algorithms}, which allows for a much better amortization of runtime in coloring multiple edges.  
This ultimately leads to our main technical contribution: a new randomized algorithm for solving the aforementioned color extension problem for the last $O(m/\Delta)$ edges in $\tilde{O}(m)$ time. 
With this algorithm at hand, we can  follow the aforementioned Eulerian partition approach and obtain a $(\Delta+1)$-edge coloring algorithm whose runtime $T(m)$ follows the recursion $T(m) \leq 2 T(m/2) + \tilde{O}(m)$ with high probability; 
this implies that $T(m) = \tilde{O}(m)$, hence, giving us a near-linear time randomized algorithm for $(\Delta+1)$-edge coloring. This way, we will not even need to rely on the $(\Delta+O(\log{n}))$-edge coloring algorithm of~\cite{Assadi24} to color the earlier parts of the graph (although one
can use that algorithm instead of Eulerian partition approach to the same effect). 

We now discuss the main ideas behind our color extension algorithm. In the following, it helps to think of the input graph as being near-regular (meaning that the degree of each vertex is $\Theta(\Delta)$), 
and thus the total number of edges will be $m=\Theta(n\Delta)$; this assumption is \emph{not} needed for our algorithm and is only made here to simplify the exposition.

\medskip
\noindent \textbf{Color Type Reduction:} Recall that the runtime of $O(n\Delta^2)$ for the color extension algorithm of~\cite{gabow1985algorithms} 
is due to the fact that there are generally $O(\Delta^2)$ types of alternating paths in the graph and that the total length of the paths of each color type
is bounded by $O(n)$ edges. However, \emph{if} it so happens that the existing partial coloring only involves $O(\Delta)$ color types instead, 
then the same algorithm will only take $O(n\Delta) = O(m)$ time (by the near-regularity assumption). The underlying idea behind our algorithm 
is to modify the current partial edge coloring (without decreasing the number of uncolored edges) so that the number of color types reduces from $O(\Delta^2)$ to $O(\Delta)$ only. 

To explore this direction, let us \textbf{assume for now the input graph is \underline{bipartite}}, which greatly simplifies the structure of Vizing fans and Vizing chains,
thus allowing us to convey the key ideas more clearly (see \Cref{sec:bipartite} for a more detailed exposition); later we highlight some of the key challenges that arise when dealing with general graphs.
We shall note that it has been known since the 80s that one can $\Delta$-edge color bipartite graphs in $\tilde{O}(m)$ time~\cite{cole1982edge,combinatorica/ColeOS01}. However, the algorithms for bipartite graphs use techniques that are entirely different from Vizing fans and Vizing chains, and which do not involve solving the color extension problem at all.
In particular, prior to this work, it was unclear whether one can efficiently solve the color extension problem in bipartite graphs.
Therefore, the assumption of a bipartite input graph does not trivialize our goal of using Vizing fans and Vizing chains for efficiently solving the color extension problem.
Additionally, we can assume that in the color extension problem, the last $O(m/\Delta)$ edges to be colored can be partitioned into $O(1)$ matchings (this guarantee follows immediately from the recursive framework we outlined earlier), and that we deal with each of these matchings separately. In other words, we can also \textbf{assume that the uncolored edges are \underline{vertex-disjoint}}.

Let $\chi$ be a partial $(\Delta+1)$-edge coloring, and for any vertex $w\in V$, let $\miss_\chi(w)\subseteq [\Delta+1]$ be the set of colors missing from the edges incident to $w$ under $\chi$. 
Given any uncolored edge $(u, v)$, the color type of this edge $(u, v)$ would be 
$\{c_u, c_v\}$ for some arbitrary choices of $c_u\in \miss_\chi(u)$ and $c_v\in \miss_\chi(v)$ (it is possible for an edge to be able to choose more than one color type, but we fix one arbitrary choice among them); in other words, if we flip  the $\{c_u, c_v\}$-alternating path starting at $u$, then we can assign $\chi(u, v)$ to be $c_v$. To reduce the total number of different color types to $O(\Delta)$, we would have to make some color types much more \emph{popular}: at the beginning, a type spans an $\Omega(1/\Delta^2)$ proportion of the uncolored edges but we would like to have a type spanning an $\Omega(1/\Delta)$ proportion. For this purpose, we fix an arbitrary color type $\{\alpha, \beta\}$, and want to modify $\chi$ to transform the type of an arbitrary uncolored edge $(u, v)$ from $\{c_u, c_v\}$ to $\{\alpha, \beta\}$ -- we call this \emph{popularizing} the edge $(u,v)$. 
To do this, we can simply flip the $\{\alpha, c_u\}$-alternating path $P_u$ starting at $u$ and the $\{\beta, c_v\}$-alternating path $P_v$ starting at $v$.

There are two technical issues regarding this path-flipping approach. Firstly, the alternating paths $P_u$ and $P_v$ could be very long, and require a long time for being flipped. More importantly, flipping $P_u$ and $P_v$ could possibly damage other $\{\alpha, \beta\}$-type (uncolored) edges that we popularized before. More specifically, say that we have popularized a set $\Phi$ of uncolored edges. When popularizing the next uncolored edge $(u, v)$, it could be the case that the $\{\alpha, c_u\}$-alternating path $P_u$ is ending at a vertex $u'$ for some edge $(u', v')\in \Phi$. If we flip the path $P_u$, then $(u', v')$ would no longer be of $\{\alpha, \beta\}$-type as $\alpha$ would not be missing at $u'$ anymore.  See \Cref{overview-flip} for an illustration. 

\begin{figure}[h]
	\centering
	\begin{tikzpicture}[thick,scale=0.8]
	\draw (-1, 0) node(1)[circle, draw, color=cyan, fill=black!50,
	inner sep=0pt, minimum width=10pt, label = $v$] {};
	
	\draw (1, 0) node(2)[circle, draw, color=teal, fill=black!50,
	inner sep=0pt, minimum width=10pt, label = $u$] {};
	
	\draw (-3, 0) node(3)[circle, draw, fill=black!50,
	inner sep=0pt, minimum width=6pt] {};
	
	\draw (-5, 0) node(4)[circle, draw, fill=black!50,
	inner sep=0pt, minimum width=6pt] {};
	
	\draw (-7, 0) node(5)[circle, draw, fill=black!50,
	inner sep=0pt, minimum width=6pt] {};
	
	\draw (3, 0) node(6)[circle, draw, fill=black!50,
	inner sep=0pt, minimum width=6pt] {};
	
	\draw (5, 0) node(7)[circle, draw, fill=black!50,
	inner sep=0pt, minimum width=6pt] {};
	
	\draw (7, 0) node(8)[circle, draw, fill=black!50,
	inner sep=0pt, minimum width=6pt] {};
	
	\draw (9, 0) node(9)[circle, draw, color=orange, fill=black!50,
	inner sep=0pt, minimum width=10pt, label = $u'$] {};
	\draw (11, 0) node(10)[circle, draw, color=red, fill=black!50,
	inner sep=0pt, minimum width=10pt, label = $v'$] {};

    \draw (9, -1) node[label={$c_{u'}=\alpha$}] {};
    \draw (11, -1) node[label={$c_{v'}=\beta$}] {};
    \draw (1, -1) node[label={$c_{u}=\gamma$}] {};
    \draw (-1, -1) node[label={$c_{v}=\lambda$}] {};
    \draw (10, 0) node[label={\tiny popular}] {};
	
	\draw [line width = 0.5mm, dashed] (1) to (2);
	\draw [line width = 0.5mm, color=red] (1) to node[above] {$\beta$} (3);
	\draw [line width = 0.5mm, color=cyan] (3) to node[above] {$\lambda$} (4);
	\draw [line width = 0.5mm, color=red] (4) to node[above] {$\beta$} (5);
	\draw [line width = 0.5mm, color=orange] (2) to node[above] {$\alpha$} (6);
	\draw [line width = 0.5mm, color=teal] (6) to node[above] {$\gamma$} (7);
	\draw [line width = 0.5mm, color=orange] (7) to node[above] {$\alpha$} (8);
	\draw [line width = 0.5mm, color=teal] (8) to node[above] {$\gamma$} (9);
	\draw [line width = 0.5mm, dashed] (9) to (10);
	
    \draw[->, >={Triangle}, thick, line width = 0.9mm] (2, -1) to (2, -2);
	
	\draw (-1, -3) node(11)[circle, draw, color=red, fill=black!50,
	inner sep=0pt, minimum width=10pt, label = $v$] {};
	
	\draw (1, -3) node(12)[circle, draw, color=orange, fill=black!50,
	inner sep=0pt, minimum width=10pt, label = $u$] {};
	
	\draw (-3, -3) node(13)[circle, draw, fill=black!50,
	inner sep=0pt, minimum width=6pt] {};
	
	\draw (-5, -3) node(14)[circle, draw, fill=black!50,
	inner sep=0pt, minimum width=6pt] {};
	
	\draw (-7, -3) node(15)[circle, draw, fill=black!50,
	inner sep=0pt, minimum width=6pt] {};
	
	\draw (3, -3) node(16)[circle, draw, fill=black!50,
	inner sep=0pt, minimum width=6pt] {};
	
	\draw (5, -3) node(17)[circle, draw, fill=black!50,
	inner sep=0pt, minimum width=6pt] {};
	
	\draw (7, -3) node(18)[circle, draw, fill=black!50,
	inner sep=0pt, minimum width=6pt] {};
	
	\draw (9, -3) node(19)[circle, draw, color=teal, fill=black!50,
	inner sep=0pt, minimum width=10pt, label = $u'$] {};
	\draw (11, -3) node(20)[circle, draw, color=red, fill=black!50,
	inner sep=0pt, minimum width=10pt, label = $v'$] {};

    \draw (9, -4) node[label={$c_{u'}=\gamma$}] {};
    \draw (11, -4) node[label={$c_{v'}=\beta$}] {};
    \draw (1, -4) node[label={$c_{u}=\alpha$}] {};
    \draw (-1, -4) node[label={$c_{v}=\beta$}] {};
    \draw (0, -3) node[label={\tiny popular}] {};
	
	\draw [line width = 0.5mm, dashed] (11) to (12);
	\draw [line width = 0.5mm, color=cyan] (11) to node[above] {$\lambda$} (13);
	\draw [line width = 0.5mm, color=red] (13) to node[above] {$\beta$} (14);
	\draw [line width = 0.5mm, color=cyan] (14) to node[above] {$\lambda$} (15);
	\draw [line width = 0.5mm, color=teal] (12) to node[above] {$\gamma$} (16);
	\draw [line width = 0.5mm, color=orange] (16) to node[above] {$\alpha$} (17);
	\draw [line width = 0.5mm, color=teal] (17) to node[above] {$\gamma$} (18);
	\draw [line width = 0.5mm, color=orange] (18) to node[above] {$\alpha$} (19);
	\draw [line width = 0.5mm, dashed] (19) to (20);
	
\end{tikzpicture}
	\caption{
 In this picture, we attempt to popularize edge $(u, v)$ by flipping the $\{\alpha, \gamma\}$-alternating path from $u$ and the $\{ \beta, \lambda\}$-alternating path from $v$. However, flipping the $\{\alpha, \gamma \}$-alternating path from $u$ makes a previously popular edge $(u', v')$ unpopular as $u'$ will not miss color $\alpha$ anymore.
 }\label{overview-flip}
\end{figure}
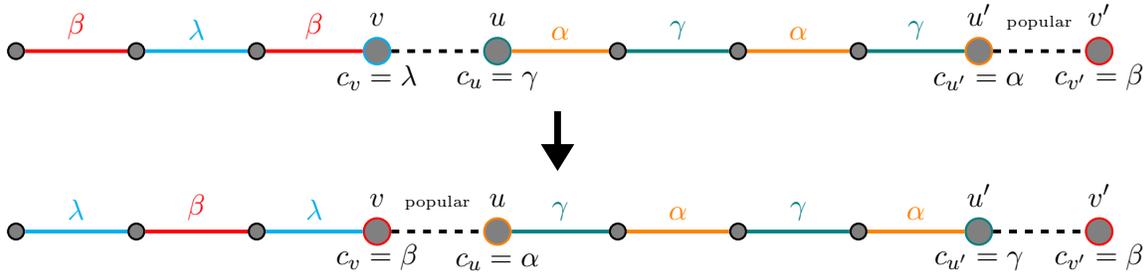

Our key observation is that when $|\Phi|$ is relatively small, most choices for an alternating path $P_u$ cannot be ending at edges in $\Phi$. Consider the above bad example where $P_u$ is a $\{c_u, \alpha\}$-alternating path 
ending at $u'$ for some $(u', v')\in \Phi$. Let us instead look at this from the perspective of the $\{\alpha,\beta\}$-type edge $(u',v') \in \Phi$. 
For any $(u', v')\in \Phi$ and any color $\gamma$, there can be at most one uncolored edge $(u, v)\notin \Phi$ whose corresponding path $P_u$ is the same $\{\gamma, \alpha\}$-alternating path starting at $u'$; 
this is because any vertex belongs to at most one alternating path of a fixed type (here, the type $\{\gamma,\alpha\}$) 
and each vertex belongs to at most one uncolored edge (recall that the uncolored edges form a matching).
This is 
also true for $\{\gamma,\beta\}$-type edges for the same exact reason. 
As such, ranging over all possible choices of $\gamma \in [\Delta+1]$, there are at most $O(|\Phi| \Delta)$ uncolored edges $(u, v)$ whose alternating paths could damage the set $\Phi$. Therefore, as long as the
size of $\Phi$ is a $o(1/\Delta)$ fraction (or more precisely, an $O(1/\Delta)$ fraction,
for a sufficiently small constant hiding in the $O$-notation) of all uncolored edges, the following property holds: a constant fraction of uncolored edges $e \notin \Phi$ can be popularized using the above method without damaging $\Phi$. See \Cref{overview-damage} for an illustration.

This property resolves both technical issues raised earlier simultaneously. Let $\lambda$ denote the number of uncolored edges. 
For the second issue, as long as $|{\Phi}| = o(\lambda/\Delta)$, we can take a random uncolored edge $(u, v)\notin \Phi$ and flip $P_u$ and $P_v$ if this does not 
damage any already popularized edge in $\Phi$; by the observation above, this happens with constant probability. For the first issue, we can
show that the expected length of alternating paths $P_u$ and $P_v$, for the random uncolored edge picked above, is $O(m/\lambda)$; 
indeed, this is because the number of edges colored $\alpha$ or $\beta$ is $O(n)$, hence the total length of all alternating paths with one color being fixed to $\alpha$ or $\beta$ is $O(m)$. 
All in all, the total time we spend to popularize a single type $\{\alpha,\beta\}$ to become a $\Theta(1/\Delta)$ fraction of all uncolored edges is $O(\lambda/\Delta \cdot m/\lambda) = O(m/\Delta)$. 
Since coloring edges of a single type takes $O(n) = O(m/\Delta)$ time by our earlier discussion,  we can color in this way a $\Theta(1/\Delta)$ fraction of all uncolored edges in $O(m/\Delta)$ expected time. As a direct corollary, we can color all uncolored edges in $O(m\log{n})$ time, hence solving the color extension problem, in (near-regular) bipartite graphs, in near-linear time. The above argument will be detailed in the proof of~\Cref{lem:main:bipartite}
in~\Cref{sec:bipartite} (see also \Cref{lem:color u-fans} of \Cref{sec:algo} for the analogous but not identical argument in general graphs).

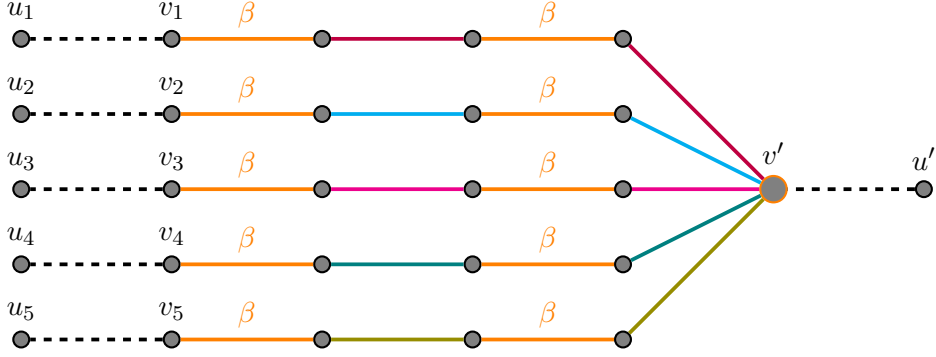
\begin{figure}
	\centering
	\begin{tikzpicture}[thick,scale=1]
	
	\draw (7, 0) node(1)[circle, draw, fill=black!50,
	inner sep=0pt, minimum width=6pt, label = $u'$] {};
	\draw (5, 0) node(2)[circle, draw, color=orange, fill=black!50,
	inner sep=0pt, minimum width=10pt, label = $v'$] {};	
	
	\draw (3, 2) node(3)[circle, draw, fill=black!50,
	inner sep=0pt, minimum width=6pt] {};
	\draw (1, 2) node(4)[circle, draw, fill=black!50,
	inner sep=0pt, minimum width=6pt] {};
	\draw (-1, 2) node(5)[circle, draw, fill=black!50,
	inner sep=0pt, minimum width=6pt] {};
	\draw (-3, 2) node(6)[circle, draw, fill=black!50, 
	inner sep=0pt, minimum width=6pt, label = $v_1$] {};
	\draw (-5, 2) node(7)[circle, draw, fill=black!50,
	inner sep=0pt, minimum width=6pt, label = $u_1$] {};

	\draw (3, 1) node(13)[circle, draw, fill=black!50,
	inner sep=0pt, minimum width=6pt] {};
	\draw (1, 1) node(14)[circle, draw, fill=black!50,
	inner sep=0pt, minimum width=6pt] {};
	\draw (-1, 1) node(15)[circle, draw, fill=black!50,
	inner sep=0pt, minimum width=6pt] {};
	\draw (-3, 1) node(16)[circle, draw, fill=black!50, 
	inner sep=0pt, minimum width=6pt, label = $v_2$] {};
	\draw (-5, 1) node(17)[circle, draw, fill=black!50,
	inner sep=0pt, minimum width=6pt, label = $u_2$] {};

	\draw (3, 0) node(23)[circle, draw, fill=black!50,
	inner sep=0pt, minimum width=6pt] {};
	\draw (1, 0) node(24)[circle, draw, fill=black!50,
	inner sep=0pt, minimum width=6pt] {};
	\draw (-1, 0) node(25)[circle, draw, fill=black!50,
	inner sep=0pt, minimum width=6pt] {};
	\draw (-3, 0) node(26)[circle, draw, fill=black!50, 
	inner sep=0pt, minimum width=6pt, label = $v_3$] {};
	\draw (-5, 0) node(27)[circle, draw, fill=black!50,
	inner sep=0pt, minimum width=6pt, label = $u_3$] {};

	\draw (3, -1) node(33)[circle, draw, fill=black!50,
	inner sep=0pt, minimum width=6pt] {};
	\draw (1, -1) node(34)[circle, draw, fill=black!50,
	inner sep=0pt, minimum width=6pt] {};
	\draw (-1, -1) node(35)[circle, draw, fill=black!50,
	inner sep=0pt, minimum width=6pt] {};
	\draw (-3, -1) node(36)[circle, draw, fill=black!50, 
	inner sep=0pt, minimum width=6pt, label = $v_4$] {};
	\draw (-5, -1) node(37)[circle, draw, fill=black!50,
	inner sep=0pt, minimum width=6pt, label = $u_4$] {};

	\draw (3, -2) node(43)[circle, draw, fill=black!50,
	inner sep=0pt, minimum width=6pt] {};
	\draw (1, -2) node(44)[circle, draw, fill=black!50,
	inner sep=0pt, minimum width=6pt] {};
	\draw (-1, -2) node(45)[circle, draw, fill=black!50,
	inner sep=0pt, minimum width=6pt] {};
	\draw (-3, -2) node(46)[circle, draw, fill=black!50, 
	inner sep=0pt, minimum width=6pt, label = $v_5$] {};
	\draw (-5, -2) node(47)[circle, draw, fill=black!50,
	inner sep=0pt, minimum width=6pt, label = $u_5$] {};

	\draw [line width = 0.5mm, dashed] (1) to (2);
	
	\draw [line width = 0.5mm, color=purple] (2) to (3);
	\draw [line width = 0.5mm, color=orange] (3) to node[above] {$\beta$} (4);
	\draw [line width = 0.5mm, color=purple] (4) to (5);
	\draw [line width = 0.5mm, color=orange] (5) to node[above] {$\beta$} (6);
	\draw [line width = 0.5mm, dashed] (6) to (7);
	
	\draw [line width = 0.5mm, color=cyan] (2) to (13);
	\draw [line width = 0.5mm, color=orange] (13) to node[above] {$\beta$} (14);
	\draw [line width = 0.5mm, color=cyan] (14) to (15);
	\draw [line width = 0.5mm, color=orange] (15) to node[above] {$\beta$} (16);
	\draw [line width = 0.5mm, dashed] (16) to (17);
	
	\draw [line width = 0.5mm, color=magenta] (2) to (23);
	\draw [line width = 0.5mm, color=orange] (23) to node[above] {$\beta$} (24);
	\draw [line width = 0.5mm, color=magenta] (24) to (25);
	\draw [line width = 0.5mm, color=orange] (25) to node[above] {$\beta$} (26);
	\draw [line width = 0.5mm, dashed] (26) to (27);
	
	\draw [line width = 0.5mm, color=teal] (2) to (33);
	\draw [line width = 0.5mm, color=orange] (33) to node[above] {$\beta$} (34);
	\draw [line width = 0.5mm, color=teal] (34) to (35);
	\draw [line width = 0.5mm, color=orange] (35) to node[above] {$\beta$} (36);
	\draw [line width = 0.5mm, dashed] (36) to (37);
	
	\draw [line width = 0.5mm, color=olive] (2) to (43);
	\draw [line width = 0.5mm, color=orange] (43) to node[above] {$\beta$} (44);
	\draw [line width = 0.5mm, color=olive] (44) to (45);
	\draw [line width = 0.5mm, color=orange] (45) to node[above] {$\beta$} (46);
	\draw [line width = 0.5mm, dashed] (46) to (47);
	
\end{tikzpicture}
	\caption{In this picture, $(u', v')\in \Phi$ with $c_{u'} = \alpha, c_{v'} = \beta$. For each uncolored edge $(u_i, v_i)$, flipping the $\{c_i, \beta\}$-alternating path from $v_i$ would damage the property that $c_{v'} = \beta$. Fortunately, there are at most $\Delta$ many different such $(u_i, v_i)$ because each of them is at the end of an $\{\beta, \cdot\}$-alternating path starting at $v'$.}\label{overview-damage}
\end{figure}

\medskip
\noindent \textbf{Collecting U-Fans in General Graphs:} 
We now discuss the generalization of our scheme above to non-bipartite graphs. Existence of odd cycles in non-bipartite graphs implies that 
we can no longer assign a color type $\{c_u, c_v\}$ to an uncolored edge $(u,v)$ for $c_u\in \miss_\chi(u), c_v\in \miss_\chi(v)$, and hope that flipping the $\{c_u,c_v\}$-alternating path
from $u$ allows us to color the edge $(u,v)$ with $c_v$ (because the path may end at $v$, and thus after flipping it $c_v$ will no longer be missing from $v$). 
This is where Vizing fans and Vizing chains come into play: in non-bipartite graphs, a color type of an uncolored edge $(u,v)$ is $\{c_u, \gamma_{u, v}\}$ where $c_u$ is an arbitrary color in $\miss_\chi(u)$ but $\gamma_{u, v}$ is determined by the Vizing fan around $u$ and the alternating path that we take for coloring $(u,v)$ (or the Vizing chain of $(u,v)$). Thus, while as before we can switch $c_u$ with some fixed color $\alpha$, it is unclear how to flip alternating paths to change $\gamma_{u, v}$ to $\beta$ also, in order to popularize the edge $(u,v)$ to be of some designated
type $\{\alpha,\beta\}$, without damaging another popularized edge as a result.

To address this challenge, we rely on the notion of a \emph{u-fan}, introduced by \cite{gabow1985algorithms}, which is the non-bipartite graph analogue of an uncolored edge in bipartite graphs. 
A u-fan of type $\{\alpha, \beta\}$ is a pair of uncolored edges $(u, v), (u, w)$ such that $\alpha\in \miss_\chi(u)$ and $\beta\in \miss_\chi(u)\cap \miss_\chi(v)$. 
Consider the $\{\alpha, \beta\}$-alternating path starting at $u$. As at least one of $v$ or $w$ is not the other endpoint of this alternating path (say $v$),  flipping this path allows us to assign the color $\beta$ to $(u,v)$. 
Consequently, as u-fans are similar to edges in bipartite graphs, we can still essentially (but not exactly) apply our color type reduction approach if all the uncolored edges are paired as u-fans. Therefore, it suffices to modify $\chi$ to pair all uncolored edges
together to form u-fans.

In order to pair different uncolored edges together and form $u$-fans, we should first be able to move uncolored edges around. Such operations already appeared in some previous work on dynamic edge coloring \cite{duan2019dynamic,Christiansen23,Christiansen24}. Basically, for any uncolored edge $(u, v)$, we can modify $\chi$ to shift this uncolored edge to any position on its Vizing chain. This naturally leads to a win-win situation: if the Vizing chain is  short, then $(u, v)$ can be colored efficiently using Vizing's procedure; otherwise if most Vizing chains are long, then there must be a pair of Vizing chains meeting together after a few steps, so we can shift two uncolored edges to form a u-fan efficiently.

\begin{figure}
	\centering
	\begin{tikzpicture}[thick,scale=0.48]
    \coordinate (shiftA) at (0, 0); 
    \coordinate (shiftB) at (20, 9); 

    
    \draw ($(shiftA) + (0, 3)$) node(1)[circle, draw, color=cyan, fill=black!50,
    inner sep=0pt, minimum width=10pt, label = $u_1$] {};
    
    \draw ($(shiftA) + (-2, 1)$) node(2)[circle, draw, fill=black!50,
    inner sep=0pt, minimum width=6pt, label = 180:{$v_1$}] {};
    
    \draw ($(shiftA) + (-1, 1)$) node(3)[circle, draw, fill=black!50,
    inner sep=0pt, minimum width=6pt] {};
    
    \draw ($(shiftA) + (0, 1)$) node(4)[circle, draw, fill=black!50,
    inner sep=0pt, minimum width=6pt] {};
    
    \draw ($(shiftA) + (1, 1)$) node(5)[circle, draw, fill=black!50,
    inner sep=0pt, minimum width=6pt] {};
    
    \draw ($(shiftA) + (3, 1)$) node(6)[circle, draw, fill=black!50,
    inner sep=0pt, minimum width=6pt] {};
    
    \draw ($(shiftA) + (5, 1)$) node(7)[circle, draw, fill=black!50,
    inner sep=0pt, minimum width=6pt] {};
    
    \draw ($(shiftA) + (7, 1)$) node(8)[circle, draw, fill=black!50,
    inner sep=0pt, minimum width=6pt, label=$w_1$] {};
    
    \draw ($(shiftA) + (9, 0)$) node(9)[circle, draw, fill=black!50,
    inner sep=0pt, minimum width=6pt, label=$x$] {};
    \draw ($(shiftA) + (11, 0)$) node(10)[circle, draw, fill=black!50,
    inner sep=0pt, minimum width=6pt, label=$y$] {};
    
    \draw ($(shiftA) + (0, -3)$) node(11)[circle, draw, color=cyan, fill=black!50,
    inner sep=0pt, minimum width=10pt, label = -90:{$u_2$}] {};
    
    \draw ($(shiftA) + (-2, -1)$) node(12)[circle, draw, fill=black!50,
    inner sep=0pt, minimum width=6pt, label = 180:{$v_2$}] {};
    
    \draw ($(shiftA) + (-1, -1)$) node(13)[circle, draw, fill=black!50,
    inner sep=0pt, minimum width=6pt] {};
    
    \draw ($(shiftA) + (0, -1)$) node(14)[circle, draw, fill=black!50,
    inner sep=0pt, minimum width=6pt] {};
    
    \draw ($(shiftA) + (1, -1)$) node(15)[circle, draw, fill=black!50,
    inner sep=0pt, minimum width=6pt] {};
    
    \draw ($(shiftA) + (3, -1)$) node(16)[circle, draw, fill=black!50,
    inner sep=0pt, minimum width=6pt] {};
    
    \draw ($(shiftA) + (5, -1)$) node(17)[circle, draw, fill=black!50,
    inner sep=0pt, minimum width=6pt] {};
    
    \draw ($(shiftA) + (7, -1)$) node(18)[circle, draw, fill=black!50,
    inner sep=0pt, minimum width=6pt, label=-90:{$w_2$}] {};
    
    \draw [line width = 0.5mm, dashed] (1) to (2);
    \draw [line width = 0.5mm, pink] (1) to (3);
    \draw [line width = 0.5mm, magenta] (1) to (4);
    \draw [line width = 0.5mm, purple] (1) to (5);
    \draw [line width = 0.5mm, cyan] (5) to node[above] {$\alpha$} (6);
    \draw [line width = 0.5mm, purple] (6) to (7);
    \draw [line width = 0.5mm, cyan] (7) to node[above] {$\alpha$} (8);
    \draw [line width = 0.5mm, purple] (8) to (9);
    \draw [line width = 0.5mm, cyan] (9) to node[above] {$\alpha$} (10);
    
    \draw [line width = 0.5mm, dashed] (11) to (12);
    \draw [line width = 0.5mm, lime] (11) to (13);
    \draw [line width = 0.5mm, teal] (11) to (14);
    \draw [line width = 0.5mm, olive] (11) to (15);
    \draw [line width = 0.5mm, cyan] (15) to node[above] {$\alpha$} (16);
    \draw [line width = 0.5mm, olive] (16) to (17);
    \draw [line width = 0.5mm, cyan] (17) to node[above] {$\alpha$} (18);
    \draw [line width = 0.5mm, olive] (18) to (9);
    
    \draw[->, >={Triangle}, thick, line width = 0.9mm] ($(13, 0)$) to ($(15, 0)$);
    
    
    \draw ($(shiftB) + (0, -6)$) node(21)[circle, draw, fill=black!50,
    inner sep=0pt, minimum width=6pt, label = $u_1$] {};
    
    \draw ($(shiftB) + (-2, -8)$) node(22)[circle, draw, fill=black!50,
    inner sep=0pt, minimum width=6pt, label = 180:{$v_1$}] {};
    
    \draw ($(shiftB) + (-1, -8)$) node(23)[circle, draw, fill=black!50,
    inner sep=0pt, minimum width=6pt] {};
    
    \draw ($(shiftB) + (0, -8)$) node(24)[circle, draw, fill=black!50,
    inner sep=0pt, minimum width=6pt] {};
    
    \draw ($(shiftB) + (1, -8)$) node(25)[circle, draw, fill=black!50,
    inner sep=0pt, minimum width=6pt] {};
    
    \draw ($(shiftB) + (3, -8)$) node(26)[circle, draw, fill=black!50,
    inner sep=0pt, minimum width=6pt] {};
    
    \draw ($(shiftB) + (5, -8)$) node(27)[circle, draw, fill=black!50,
    inner sep=0pt, minimum width=6pt] {};
    
    \draw ($(shiftB) + (7, -8)$) node(28)[circle, draw, color=cyan, fill=black!50,
    inner sep=0pt, minimum width=10pt, label=$w_1$] {};
    
    \draw ($(shiftB) + (9, -9)$) node(29)[circle, draw, fill=black!50,
    inner sep=0pt, minimum width=6pt, label = $x$] {};
    \draw ($(shiftB) + (11, -9)$) node(30)[circle, draw, fill=black!50,
    inner sep=0pt, minimum width=6pt, label = $y$] {};
    
    \draw ($(shiftB) + (0, -12)$) node(31)[circle, draw, fill=black!50,
    inner sep=0pt, minimum width=6pt, label = -90:{$u_2$}] {};
    
    \draw ($(shiftB) + (-2, -10)$) node(32)[circle, draw, fill=black!50,
    inner sep=0pt, minimum width=6pt, label = 180:{$v_2$}] {};
    
    \draw ($(shiftB) + (-1, -10)$) node(33)[circle, draw, fill=black!50,
    inner sep=0pt, minimum width=6pt] {};
    
    \draw ($(shiftB) + (0, -10)$) node(34)[circle, draw, fill=black!50,
    inner sep=0pt, minimum width=6pt] {};
    
    \draw ($(shiftB) + (1, -10)$) node(35)[circle, draw, fill=black!50,
    inner sep=0pt, minimum width=6pt] {};
    
    \draw ($(shiftB) + (3, -10)$) node(36)[circle, draw, fill=black!50,
    inner sep=0pt, minimum width=6pt] {};
    
    \draw ($(shiftB) + (5, -10)$) node(37)[circle, draw, fill=black!50,
    inner sep=0pt, minimum width=6pt] {};
    
    \draw ($(shiftB) + (7, -10)$) node(38)[circle, draw, color=cyan, fill=black!50,
    inner sep=0pt, minimum width=10pt, label=-90:{$w_2$}] {};
    
    \draw [line width = 0.5mm, pink] (21) to (22);
    \draw [line width = 0.5mm, magenta] (21) to (23);
    \draw [line width = 0.5mm, purple] (21) to (24);
    \draw [line width = 0.5mm, cyan] (21) to node[right] {$\alpha$} (25);
    \draw [line width = 0.5mm, purple] (25) to (26);
    \draw [line width = 0.5mm, cyan] (26) to node[above] {$\alpha$} (27);
    \draw [line width = 0.5mm, purple] (27) to (28);
    \draw [line width = 0.5mm, dashed] (28) to (29);
    \draw [line width = 0.5mm, cyan] (29) to node[above] {$\alpha$} (30);
    
    \draw [line width = 0.5mm, lime] (31) to (32);
    \draw [line width = 0.5mm, teal] (31) to (33);
    \draw [line width = 0.5mm, olive] (31) to (34);
    \draw [line width = 0.5mm, cyan] (31) to node[right] {$\alpha$} (35);
    \draw [line width = 0.5mm, olive] (35) to (36);
    \draw [line width = 0.5mm, cyan] (36) to node[above] {$\alpha$} (37);
    \draw [line width = 0.5mm, olive] (37) to (38);
    \draw [line width = 0.5mm, dashed] (38) to (29);
    
\end{tikzpicture}
	\caption{In this picture, we have two different uncolored edges $(u_1, v_1), (u_2, v_2)$ such that $\alpha\in \miss_{\chi}(u_1)\cap \miss_{\chi}(u_2)$, and their Vizing chains first intersect at edge $(x,y)$ which currently has color $\alpha$ under $\chi$. Then we can rotate their Vizing fans and flip part of their Vizing chains to shift $(u_1, v_1), (u_2, v_2)$ to $(w_1, x), (w_2, x)$ respectively to form a u-fan; note that $\alpha\in \miss_{\chi}(w_1)\cap \miss_{\chi}(w_2)$ after this shifting procedure.}\label{overview-pair}
\end{figure}
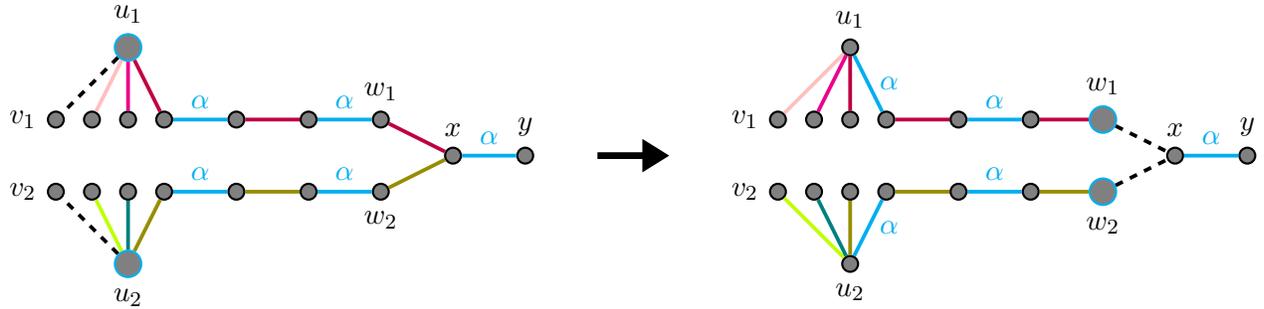

Let us make this a bit more precise. Fix an arbitrary color $\alpha$ and consider the set $U_\alpha$ of all the uncolored edges $(u, v)$ such that $\alpha\in \miss_\chi(u)$ and 
the respective Vizing chain is of type $\{\alpha,\cdot\}$.  Also, assume there are $m_\alpha$ edges colored $\alpha$ under $\chi$. If most $\{\alpha,\cdot\}$-Vizing chains have length larger than $\Omega(m_\alpha / |U_\alpha|)$, then on average, two Vizing chains should meet within the first $O(m_\alpha / |U_\alpha|)$ steps; in this case, we can repeatedly pick two intersecting Vizing chains and create a u-fan by shifting their initial uncolored edges 
to the intersection of these Vizing chains; see \Cref{overview-pair} for illustration. Given the length of the chains, this takes $O(m_\alpha / |U_\alpha|)$ time. Otherwise, the average cost of applying Vizing's color extension procedure is $O(m_\alpha / |U_\alpha|)$, and in this case we can directly color all those edges in $O(m_\alpha)$ time. Summing over all different $\alpha\in [\Delta+1]$ gives a near-linear runtime. The above argument will be detailed in the proof of \Cref{lem:build u-fans} in \Cref{sec:algo build u-fans}.

The above discussion leaves out various technical challenges. For instance, moving around uncolored edges as described above breaks the assumption that the uncolored edges form a matching. 
Handling this requires dedicating \emph{different} colors from $\miss_{\chi}(u)$ for every uncolored edge incident on a vertex $u$. This is formalized via the notion of {\em separability} in \Cref{sec:prelim:u}.
Additionally, we have ignored all algorithmic aspects of (efficiently) finding pairs of intersecting Vizing chains, as well as the corner cases of Vizing fan intersections and fan-chain intersections. 
We defer the discussions on these details to the actual proofs in subsequent sections.

\section{Basic Notation}
\label{sec:prelim:notation}

Let $G = (V, E)$ be graph on $n$ vertices with $m$ edges and maximum degree $\Delta$ and let $\chi : E \rightarrow [\Delta + 1] \cup \{ \bot\}$ be a (partial) $(\Delta + 1)$-edge coloring of $G$. We refer to edges $e \in E$ with $\chi(e) = \bot$ as \emph{uncolored}. Given a vertex $u \in V$, we denote the set of colors that are not assigned to any edge incident on $u$ by $\miss_\chi(u)$. We sometimes refer to $\miss_\chi(u)$ as the \emph{palette} of $u$. We say that the colors in $\miss_\chi(u)$ are \emph{missing} (or \emph{available}) at $u$. 

Given a path $P = e_1,\dots,e_k$ in $G$, we say that $P$ is an \emph{$\{\alpha, \beta\}$-alternating path} if $\chi(e_i)= \alpha$ whenever $i$ is odd and $\chi(e_i) = \beta$ whenever $i$ is even (or vice versa). We say that the alternating path $P$ is \emph{maximal} if one of the colors $\alpha$ or $\beta$ is missing at each of the endpoints of $P$. We refer to the process of changing the color of each edge $e_i \in P$ with color $\alpha$ (resp.~$\beta$) to $\beta$ (resp.~$\alpha$) as \emph{flipping} the path $P$. We denote by  $|P|$ the length (i.e., the number of edges) of the alternating path $P$. We define the \emph{length $i$ prefix} of the path $P$ to be the path $P_{\leq i} := e_1,\dots,e_i$.

Consider a set $U \subseteq E$ of edges that are uncolored under $\chi$, i.e., $\chi(e) = \bot$ for all $e \in U$. We use the phrase {\em ``extending $\chi$ to $U$''} to mean the following: Modify $\chi$ so as to ensure that $\chi(e)\neq \bot$ for all $e \in U$, without creating any new uncolored edges. When the set $U$ consists of a single edge $e$ (i.e., when $U = \{e\}$), we use the phrase {\em ``extending $\chi$ to the edge $e$''} instead of {\em ``extending $\chi$ to $U$''}.

Our algorithms will always work by modifying a partial coloring $\chi$; unless explicitly specified otherwise, every new concept we define (such as u-fans and separable collection in \Cref{sec:prelim:new}) will be defined with respect to this particular partial coloring $\chi$.

\section{Showcase: Our Algorithm Instantiated on Bipartite Graphs}
\label{sec:bipartite}

In this section, we instantiate  our algorithm on bipartite graphs to showcase some of our key insights, and outline a proof of \Cref{thm:main:bipartite} below. 

\begin{theorem}\label{thm:main:bipartite}
    There is a randomized algorithm that, given a bipartite graph $G = (V, E)$ with maximum degree $\Delta$, returns a $\Delta$-edge coloring of $G$ in $\tilde O(m)$ time with high probability.
\end{theorem}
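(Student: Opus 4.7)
The plan is to combine the Eulerian partition recursion sketched in \Cref{sec:overview} with a new randomized color extension subroutine that runs in $\tilde O(m)$ time. If $\Delta \le 1$ the coloring is trivial; otherwise I apply the classical Eulerian partition to split $E$ into two edge-disjoint subgraphs $G_1, G_2$, each of maximum degree at most $\lceil \Delta/2 \rceil$, recursively $\lceil \Delta/2 \rceil$-edge color each on disjoint palettes, and combine. When $\Delta$ is even this already yields a $\Delta$-edge coloring of $G$; when $\Delta$ is odd the combination uses $\Delta+1$ colors, so I uncolor the smallest color class (a matching) to obtain a partial $\Delta$-edge coloring $\chi$ together with $\lambda \le O(m/\Delta)$ uncolored edges $M$. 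The recursion $T(m) \le 2T(m/2) + \tilde O(m)$ then yields $T(m) = \tilde O(m)$, reducing everything to extending $\chi$ to $M$ in $\tilde O(m)$ time.

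For the color extension subroutine I run $O(\Delta \log n)$ phases, each fixing an arbitrary target color pair $\{\alpha, \beta\}$ and multiplicatively shrinking $|M|$. Within a phase I grow a set $\Phi \subseteq M$ of \emph{popularized} uncolored edges, meaning edges $(u,v)$ with $\alpha \in \miss_\chi(u)$ and $\beta \in \miss_\chi(v)$. To attempt to popularize a uniformly random remaining uncolored edge $(u, v)$ with current missing colors $c_u \in \miss_\chi(u)$ and $c_v \in \miss_\chi(v)$, I provisionally flip the $\{\alpha, c_u\}$-alternating path $P_u$ from $u$ and the $\{\beta, c_v\}$-alternating path $P_v$ from $v$. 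Bipartiteness is essential here: a parity argument forbids $P_u$ from ending at $v$ (otherwise an odd cycle would arise), so after the flips $\alpha \in \miss_\chi(u)$ and $\beta \in \miss_\chi(v)$ as desired. I commit the flips, adding $(u, v)$ to $\Phi$, only if no edge already in $\Phi$ is damaged; otherwise I revert in $O(|P_u| + |P_v|)$ time and resample.

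The correctness of the sampling strategy rests on the counting bound that at most $O(|\Phi|\Delta)$ uncolored edges can damage $\Phi$: for each $(u', v') \in \Phi$ the attempt damages it only if $P_u$ ends at $u'$ or $P_v$ ends at $v'$, and for each color $\gamma$ there is a unique $\{\alpha, \gamma\}$-alternating path ending at $u'$, which pins down the starting vertex $u$; the matching structure of $M$ then pins down $v$. If I cap $|\Phi| \le \lambda/(c\Delta)$ for a sufficiently large constant $c$, a constant fraction of uncolored edges yield safe attempts, so $O(\lambda/\Delta)$ samples suffice in expectation to grow $|\Phi|$ to $\Theta(\lambda/\Delta)$. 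A dual argument bounds the expected cost per attempt: for fixed $\gamma$ the $\{\alpha, \gamma\}$-alternating paths from vertices missing $\alpha$ are vertex-disjoint, of total length $O(m_\alpha + m_\gamma)$, so summing over $\gamma \in [\Delta]$ gives $O(m)$ total, and averaging over the random uncolored edge gives $O(m/\lambda)$ expected length. Hence the popularization stage costs $O(\lambda/\Delta \cdot m/\lambda) = O(m/\Delta)$ in expectation, and the final simultaneous flipping of the vertex-disjoint $\{\alpha,\beta\}$-alternating paths that color the entire $\Phi$ costs $O(m_\alpha + m_\beta) = O(m/\Delta)$. Summing these per-phase budgets over $O(\Delta \log n)$ phases yields the claimed $\tilde O(m)$ bound for the extension, and the outer recursion completes the proof.

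The main obstacle I anticipate is a clean treatment of the simultaneous flips together with the damage check: $P_u$ and $P_v$ can share vertices, and the check must run in time proportional to $|P_u| + |P_v|$ without touching the rest of $\Phi$ or the rest of the graph. I expect to resolve this by maintaining, for each vertex $w \in V$, a flag indicating whether $w$ is an endpoint of some edge in $\Phi$ together with the ``required'' missing color, so that the damage test reduces to examining the two endpoints of $P_u$ and $P_v$. A secondary subtlety is converting the per-phase expected-time guarantees into a high-probability bound across all phases, which I expect to handle by a Chernoff- or martingale-type concentration combined with standard repeat-on-failure amplification.
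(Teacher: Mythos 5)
Your overall architecture is the same as the paper's (Eulerian recursion plus a popularize-then-flip color extension), but there is a genuine gap in the runtime analysis of the extension subroutine: you fix an \emph{arbitrary} target pair $\{\alpha,\beta\}$ in each phase and then assert $m_\alpha + m_\beta = O(m/\Delta)$ and that summing $O(m_\alpha + m_\gamma)$ over $\gamma \in [\Delta]$ gives $O(m)$. Neither holds for an arbitrary color pair: a single color class is a matching and can have up to $n/2$ edges, which is $\omega(m/\Delta)$ whenever the graph is far from regular (e.g.\ one vertex of degree $\Delta$ attached to an otherwise sparse graph). In that case the final flipping stage costs $\Theta(m_\alpha+m_\beta)=\Theta(n)$ rather than $O(m/\Delta)$, and the popularization stage costs $O(m_\alpha+m_\beta+m/\Delta)$ rather than $O(m/\Delta)$; summed over your $O(\Delta\log n)$ phases this can exceed $\tilde O(m)$ by polynomial factors. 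The paper closes exactly this hole by choosing $\alpha,\beta$ to be the two \emph{least common} colors in the current coloring (maintained by a counter per color), which forces $m_\alpha+m_\beta=O(m/\Delta)$ by averaging; one must also note, as in \Cref{cl:small colors:bipartite}, that the count of $\alpha$- and $\beta$-colored edges grows by only $O(1)$ per successful iteration and hence stays $O(m/\Delta)$ throughout the phase. Your argument is fine under the near-regularity assumption $m=\Theta(n\Delta)$, but the theorem is stated for general bipartite graphs.

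Two smaller points. First, your parity argument is attached to the wrong step: the $\{\alpha,c_u\}$-path $P_u$ \emph{can} end at $v$ in a bipartite graph (it ends wherever $\alpha$ or $c_u$ is missing, and $v$ may miss $c_u$); what saves the popularization step is that $P_u$ never touches the color $\beta$ and $P_v$ never touches $\alpha$ (guaranteed by arranging $c_u\neq\beta$, $c_v\neq\alpha$, $c_u\neq c_v$), so flipping them cannot disturb $\beta\in\miss_\chi(v)$ or $\alpha\in\miss_\chi(u)$. Bipartiteness is needed only in the \emph{final} stage, to guarantee that the maximal $\{\alpha,\beta\}$-path starting at $u$ does not end at $v$. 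Second, the maximal $\{\alpha,\beta\}$-paths associated with distinct edges of $\Phi$ need not be vertex-disjoint (two of them can be the same path with the two start vertices at its two ends), so you cannot claim to color all of $\Phi$; flipping one path may deprive one other edge of $\Phi$ of its required missing color, which is why the paper only guarantees a constant fraction of $\Phi$. That weaker guarantee still suffices for the recursion.
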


As explained in \Cref{sec:overview}, focusing on bipartite graphs allows us to ignore the technical issues that arise while dealing with Vizing fans. At the same time, this does {\em not} trivialize the main conceptual ideas underpinning our algorithm. In particular, we prove \Cref{thm:main:bipartite} via \Cref{lem:main:bipartite} below, which gives a specific {\em color-extension} algorithm on bipartite graphs.  Although near-linear time $\Delta$-edge coloring algorithms on bipartite graphs existed since the 1980s~\cite{cole1982edge,combinatorica/ColeOS01}, to the best of our knowledge  there was no known algorithm for \Cref{lem:main:bipartite} prior to our work.

\begin{lemma}\label{lem:main:bipartite}
    Let $\chi : E \rightarrow [\Delta] \cup \{\bot\}$ be a partial $\Delta$-edge coloring in a bipartite graph $G = (V, E)$, and let $U \subseteq E$ be a \underline{matching} of size $\lambda$ such that every edge $e \in U$ is uncolored  in $\chi$. 
    Furthermore, suppose that we have access to an ``auxiliary data structure'', which allows us to detect in $\tilde{O}(1)$ time the two least common colors $\alpha, \beta \in [\Delta]$ in $\chi$.\footnote{Specifically,  for every $\gamma \in [\Delta] \setminus \{\alpha, \beta\}$ and $\gamma' \in \{\alpha, \beta\}$, we have $\left|\{ e \in E \mid \chi(e) = \gamma \right\}| \geq \left|\{ e \in E \mid \chi(e) = \gamma' \}\right|$.} 
    Then there is a randomized algorithm that can extend $\chi$ to $\Omega(\lambda/\Delta)$ many edges of $U$ in $\tilde O(m/\Delta)$ time with high probability.
\end{lemma}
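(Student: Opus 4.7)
The plan is to prove \Cref{lem:main:bipartite} by realizing the color-type-reduction scheme outlined in \Cref{sec:overview}. First, I would invoke the auxiliary data structure to retrieve the two least common colors $\alpha, \beta \in [\Delta]$, so that $m_\alpha, m_\beta = O(m/\Delta)$. The algorithm then grows a set $\Phi \subseteq U$ of ``popularized'' edges, where each edge $(u, v) \in \Phi$ (viewed with a fixed orientation) satisfies the invariant $\alpha \in \miss_\chi(u)$ and $\beta \in \miss_\chi(v)$. Starting from $\Phi = \emptyset$, I would repeat: sample $(u, v) \in U \setminus \Phi$ uniformly at random; pick $c_u \in \miss_\chi(u)$ and $c_v \in \miss_\chi(v)$ uniformly at random (they exist because $(u,v)$ is uncolored); compute the maximal $\{\alpha, c_u\}$-alternating path $P_u$ from $u$ and the maximal $\{\beta, c_v\}$-alternating path $P_v$ from $v$; if flipping these paths would not destroy the invariant at any endpoint of a $\Phi$-edge, perform the flips and add $(u, v)$ to $\Phi$. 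I would stop once $|\Phi| = \Theta(\lambda/\Delta)$.

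The core combinatorial step is a safety lemma: while $|\Phi| \le \lambda/(C\Delta)$ for a sufficiently large constant $C$, a random $(u,v)$ is safe with at least constant probability. Fix $(u', v') \in \Phi$ with $\alpha \in \miss_\chi(u')$; the only way $P_u$ can strip $\alpha$ from $u'$ is if $P_u$ ends at $u'$, and for each choice of $c \in [\Delta]$ there is a unique maximal $\{\alpha, c\}$-path from $u'$, which touches at most one vertex of the matching $U$. Summing over the two sides yields at most $2\Delta |\Phi|$ dangerous edges, which is at most a constant fraction of $|U \setminus \Phi|$; hence each trial succeeds with probability $\Omega(1)$, and a Chernoff bound shows that $O(\lambda/\Delta)$ trials suffice with high probability to reach the target size.

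For the runtime, I would argue the expected cost per trial is $O(m/\lambda)$. For each $c \neq \alpha$, the maximal $\{\alpha, c\}$-alternating paths partition the edges colored $\alpha$ or $c$, so their total length is $m_\alpha + m_c$; hence $\sum_{u \in U} \sum_{c \in \miss_\chi(u)} |P_u(\alpha, c)| \le \sum_c (m_\alpha + m_c) \le \Delta m_\alpha + m = O(m)$, using $m_\alpha = O(m/\Delta)$. Averaging over the uniformly random choice of $(u,v) \in U$ and of $c_u \in \miss_\chi(u)$, the expected length of $P_u$ is $O(m/\lambda)$, and the symmetric bound holds for $P_v$. Across the $O(\lambda/\Delta)$ trials this yields expected total work $O(m/\Delta)$ for the popularization phase; a high-probability version follows by a standard repetition-and-Markov argument. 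Finally, once $|\Phi| = \Theta(\lambda/\Delta)$, I would color all edges in $\Phi$ in the style of Gabow--Arjomandi by flipping, for each $(u, v) \in \Phi$, the prefix of the $\{\alpha, \beta\}$-alternating path from $u$ up to (but not including) $v$; the maximal $\{\alpha, \beta\}$-paths are vertex-disjoint, and their total length is $m_\alpha + m_\beta = O(m/\Delta)$.

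The main obstacle I anticipate is executing the safety check in time proportional to $|P_u| + |P_v|$ rather than to $|\Phi|$: this requires a bookkeeping structure that, for each vertex incident to a $\Phi$-edge, records which color must remain missing, so that when tracing $P_u$ and $P_v$ we can detect a collision on the fly. A secondary subtlety is verifying that the expected-length bound survives conditioning on the safe event; since the damaging edges constitute only a constant fraction of $U \setminus \Phi$, this conditioning inflates the expectation by at most a constant factor, and the randomness used for $(u,v)$ and for $c_u, c_v$ are independent, which lets us apply the safety lemma and the length bound in conjunction.
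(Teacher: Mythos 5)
Your plan is essentially the paper's own proof of \Cref{lem:main:bipartite} (\Cref{alg:bipartite}): the same popularization loop, the same counting argument that at most $O(\Delta\,|\Phi|)$ edges are dangerous (\Cref{lem:bipartite:1:main}), the same averaging bound on alternating-path lengths (\Cref{lem:bipartite:2:main}), and the same final phase of flipping $\{\alpha,\beta\}$-paths. Two steps, however, would fail as literally written. In the popularization step you impose no constraints on $c_u, c_v$, but ``flip $P_u$ and $P_v$'' is only well-defined when the color classes $\{\alpha, c_u\}$ and $\{\beta, c_v\}$ are disjoint, which requires $c_u \neq c_v$, $c_u \neq \beta$ and $c_v \neq \alpha$. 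If $c_u = c_v = \gamma$, the two paths can share edges colored $\gamma$ and flipping both yields an improper coloring (in that case one should simply set $\chi(u,v) \leftarrow \gamma$); if $c_u = \beta$, then $P_u$ is an $\{\alpha,\beta\}$-path that may end at $v$ and remove $\beta$ from $\miss_\chi(v)$ — a failure your safety check cannot detect, since it only inspects endpoints of edges already in $\Phi$. The paper enforces these constraints by coloring the edge directly when $c_u = c_v$ and otherwise reorienting $(u,v)$ so that $c_u \neq \beta$ and $c_v \neq \alpha$.

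In the final phase you propose to flip ``the prefix of the $\{\alpha,\beta\}$-path from $u$ up to (but not including) $v$.'' Flipping a proper prefix of an alternating path is not a legal operation: it creates a color conflict at the truncation vertex. One must flip the full maximal path, and the entire reason the lemma is restricted to bipartite graphs is that bipartiteness (together with $\alpha \in \miss_\chi(u)$ and $\beta \in \miss_\chi(v)$) guarantees this maximal path cannot end at $v$; this is the only place the hypothesis is used and it must be stated. Relatedly, such a flip can terminate at a vertex of another edge of $\Phi$ and damage it, so this phase colors a constant fraction of $\Phi$ rather than all of it — still enough for the $\Omega(\lambda/\Delta)$ bound, but the bookkeeping (removing damaged edges and arguing that each success destroys at most one other candidate) needs to be made explicit, as in \Cref{lm:forloop:bipartite}.
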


Let us start by showing that~\Cref{lem:main:bipartite} implies~\Cref{thm:main:bipartite} easily. 

\begin{proof}[Proof of~\Cref{thm:main:bipartite}]
	We follow the strategy outlined in~\Cref{sec:overview}. Given a bipartite graph $G$, we first find an Eulerian partition of the
	graph to partition the edges of $G$ into two subgraphs of maximum degree $\ceil{\Delta/2}$ each, and color them recursively using different colors. 
	This leads to a $2 \cdot \ceil{\Delta/2} \leq (\Delta+2)$ edge coloring of $G$. We then form a partial $\Delta$ edge coloring $\chi$ by uncoloring the two color classes of this $(\Delta+2)$-edge coloring with the fewest edges assigned to them, 
	which leaves us with two edge-disjoint matchings $U_1$ and $U_2$ to color. This is our previously mentioned color extension problem. To solve this problem, 
	we apply~\Cref{lem:main:bipartite} to $U_1$ first to extend $\chi$ to $\Omega(1/\Delta)$ fraction of it, and keep applying this lemma to the remaining uncolored edges of $U_1$ until they are all colored. 
	We then move to $U_2$ in the same exact way and extend $\chi$ to its edges as well, obtaining a $\Delta$-coloring of the entire $G$ as a result. 
	
	The correctness of the algorithm follows immediately by induction. The runtime can also be analyzed as follows. When coloring $U_1$ (or $U_2$), each invocation of~\Cref{lem:main:bipartite} reduces the number of uncolored edges in $U_1$ (or $U_2$) 
	by a $(1-\Omega(1/\Delta))$ factor and thus we apply this lemma a total of $O(\Delta \cdot \log{n})$ time. Moreover, each application of~\Cref{lem:main:bipartite} takes $\tilde O(m/\Delta)$ time with high probability. Thus, 
	with high probability, it only takes $\tilde O(m)$ time to extend the coloring $\chi$ to $U_1$ and $U_2$ in the color extension problem. 
	Hence, the runtime of the algorithm, with high probability, follows the recurrence $T(m,\Delta) \leq 2T(m/2,\Delta/2) + \tilde O(m)$, 
	and thus itself is $\tilde O(m)$ time.

	Finally, note that with $O(m)$ preprocessing time, we can maintain access to the ``auxiliary data structure'' throughout the repeated invocations of \Cref{lem:main:bipartite} above: All we  need to do is to maintain a counter for each color $\gamma \in [\Delta]$, which keeps track of how many edges in $G$ are currently assigned the color $\gamma$ in $\chi$. We maintain these counters in a balanced search tree, and update the relevant counter whenever we change the color assignment of an edge in $G$.
\end{proof}

The rest of this section is dedicated to the proof of~\Cref{lem:main:bipartite}.

\subsection{Our Bipartite Color Extension Algorithm in~\Cref{lem:main:bipartite}}

At a high level, our algorithm for \Cref{lem:main:bipartite} consists of the following three steps.

\begin{enumerate}[leftmargin=*]
    \item Pick the two least common colors $\alpha, \beta \in [\Delta]$ in $\Delta$. This implies that there are at most $O(m/\Delta)$ edges in $G$ that are colored with $\alpha$ or $\beta$ in $\chi$.\label{step1}
    \item Modify the coloring $\chi$ so that $\Omega(\lambda/\Delta)$ of the edges $(u,v) \in U$ either receive a color under $\chi$, or have $\alpha \in \miss_\chi(u)$ and $\beta \in \miss_\chi(v)$. While implementing this step, we ensure that the total number of edges with colors $\alpha$ or $\beta$ remains at most $O(m/\Delta)$.\label{step2}
    \item Let $\Phi$ denote the set of edges $(u,v) \in U$ with $\alpha \in \miss_\chi(u)$ and $\beta \in \miss_\chi(v)$. We call these edges \emph{popular}. We extend  $\chi$ to a constant fraction of the edges in $\Phi$, by flipping a set of maximal $\{\alpha, \beta\}$-alternating paths.\label{step3}
\end{enumerate}

We now formalize the algorithm; the pseudocode is provided in \Cref{alg:bipartite}. As input, we are given a bipartite graph $G = (V, E)$, a partial $\Delta$-edge coloring $\chi$ of $G$, and a matching $U \subseteq E$ of uncolored edges of size $\lambda$.

The algorithm starts by fixing the two least common colors $\alpha, \beta \in [\Delta]$ in $\chi$. 
The main part is the {\bf while} loop in \Cref{line:while loop:bipartite}, which runs  in \emph{iterations}. In  each iteration of the {\bf while} loop, the algorithm samples an edge $e = (u,v)$ from $U$ u.a.r.~and \emph{attempts} to either (1) directly extend the coloring $\chi$ to $e$ (see \Cref{line:colorchange1}), which adds $e$ to a set $C \subseteq U$ of colored edges in $U$ or (2) modify $\chi$ so that $\alpha \in \miss_\chi(u)$ and $\beta \in \miss_\chi(v)$---we refer to this as making the edge $(u,v)$ \emph{popular}---, 
which adds $e$ to a set $\Phi \subseteq U$ of popular edges (see~\Cref{line:modify}). The attempt to modifying $\chi$ is done by essentially finding a maximal $\{c_u, \alpha\}$-alternating path $P_u$ starting at $u$ 
and a $\{c_v,\beta\}$-alternating path $P_v$ starting at $v$ for $c_u \in \miss_{\chi}(u)$ and $c_v \in \miss_{\chi}(v)$ (see~\Cref{line:swap} and~\Cref{line:path1,line:path2}). The modification itself 
is done only if $P_u$ and $P_v$ do not intersect any other popular edge already in $\Phi$. 

We say that the concerned iteration of the {\bf while} loop \texttt{FAILS} if it chooses an already colored edge in $C$ (\Cref{fail:line1}), 
or modifying the color leads to an already popular edge in $\Phi$ to no longer remain popular (\Cref{fail:line2}); otherwise, we say the iteration \emph{succeeds}. 
As stated earlier, the algorithm maintains a subset $\Phi \subseteq U$ of popular edges, and a subset of edges $C \subseteq U$ that got colored since the start of the {\bf while} loop. Initially, we have $C = \Phi = \varnothing$.  Thus, the quantity $|\Phi| + |C|$ denotes the number of successful iterations of the {\bf while} loop that have been performed so far. The algorithm performs iterations until $|\Phi| + |C| = \Omega(\lambda / \Delta)$, and then it proceeds to extend the coloring $\chi$ to at least a constant fraction of the edges in $\Phi$ by finding $\{\alpha,\beta\}$-alternating paths for edges in $\Phi$ that admits such paths (see the {\bf for} loop in \Cref{line:forloop}).

\subsection{Analysis of the Bipartite Color Extension Algorithm: Proof of \Cref{lem:main:bipartite}}

We start by summarizing a few key properties of \Cref{alg:bipartite}.

\begin{claim}\label{cl:small colors:bipartite}
    Throughout the {\bf while} loop in \Cref{alg:bipartite}, there are at most $O(m/\Delta)$ edges in $G$ that receive either the color $\alpha$ or the color $\beta$, under $\chi$.
\end{claim}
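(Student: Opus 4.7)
The plan is to track a single quantity---$N_{\alpha\beta}(\chi) := |\{e \in E : \chi(e) \in \{\alpha, \beta\}\}|$---and argue it stays $O(m/\Delta)$ throughout the {\bf while} loop. At initialization, since $\alpha$ and $\beta$ are chosen via the auxiliary data structure as the two least common colors among the $\Delta$ available colors, each is used by at most $m/\Delta$ edges in $\chi$ (the per-color average), so $N_{\alpha\beta} \leq 2m/\Delta$ at the start.

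Next I would case-analyze how each iteration of the {\bf while} loop affects $N_{\alpha\beta}$. A \emph{failed} iteration (either \Cref{fail:line1} where the sampled edge is already colored, or \Cref{fail:line2} where the intended flip would damage an edge of $\Phi$) commits no net change to $\chi$---the damage check is done before, or the flip is reverted on detection---so $N_{\alpha\beta}$ is unchanged. A successful iteration that makes $(u,v)$ popular along \Cref{line:modify} flips a maximal $\{c_u, \alpha\}$-alternating path $P_u$ starting at $u$ and a maximal $\{c_v, \beta\}$-alternating path $P_v$ starting at $v$, with $c_u \in \miss_\chi(u)$ and $c_v \in \miss_\chi(v)$. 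The key observation is that $P_u$, if nonempty, must begin with an $\alpha$-edge (since $c_u$ is missing at $u$), so it contains $\lceil |P_u|/2\rceil$ edges of color $\alpha$ and $\lfloor |P_u|/2\rfloor$ edges of color $c_u$; after flipping, these counts swap. Splitting on the value of $c_u$: if $c_u \notin \{\alpha,\beta\}$, the $\alpha$-count drops by $0$ or $1$ and the $\beta$-count is untouched, so $N_{\alpha\beta}$ does not grow; if $c_u = \beta$, the flip merely relabels $\alpha$- and $\beta$-edges along $P_u$, preserving $N_{\alpha\beta}$; the case $c_u = \alpha$ cannot arise, because $\alpha \in \miss_\chi(u)$ would force $P_u$ to be empty to begin with. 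An identical analysis handles $P_v$. Finally, a successful iteration that directly colors $(u,v)$ via \Cref{line:colorchange1} can increase $N_{\alpha\beta}$ by at most $1$ (only if the color chosen happens to be $\alpha$ or $\beta$).

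To close, I would aggregate: every increment to $N_{\alpha\beta}$ comes from a direct-coloring step, which adds the corresponding edge to $C$. The loop's termination condition gives $|C| \leq |\Phi| + |C| = O(\lambda/\Delta) \leq O(m/\Delta)$ (using $\lambda \leq m$). Combining with the initial bound, $N_{\alpha\beta} \leq 2m/\Delta + O(m/\Delta) = O(m/\Delta)$ at all times during the {\bf while} loop. The only genuinely delicate step is the sub-case analysis of the path flip, where the possible coincidences $c_u \in \{\alpha,\beta\}$ (and symmetrically for $c_v$) have to be handled carefully so as to confirm that the \emph{sum} of $\alpha$- and $\beta$-counts is never increased; beyond that, the argument is a routine aggregation driven by the loop's stopping condition.
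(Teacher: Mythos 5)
Your proof is correct and follows essentially the same route as the paper's: the count starts at $O(m/\Delta)$ because $\alpha,\beta$ are the two least common colors, each successful iteration increases it by $O(1)$, and the termination condition caps the number of successful iterations at $O(\lambda/\Delta) = O(m/\Delta)$. Your additional case analysis showing that the path flips never increase the count (so only the direct-coloring branch contributes) is a correct refinement, but the paper's coarser "$O(1)$ per successful iteration" bound already suffices.
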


\begin{proof}
    We start with $O(m/\Delta)$ such edges in $G$ and each successful iteration of the {\bf while} loop increases the number of such edges by $O(1)$, and there are  $O(\lambda/\Delta) =  O(m/\Delta)$ such iterations.    
\end{proof}

\begin{algorithm}
    \SetAlgoLined
    \DontPrintSemicolon
    \SetKwRepeat{Do}{do}{while}
    \SetKwBlock{Loop}{repeat}{EndLoop}
    Let $\alpha, \beta \in [\Delta]$ be the two least common colors in $\chi$ \ \ // \texttt{We have $\alpha \neq \beta$} \label{line:1:bipartite}\;
    Initialize $\Phi \leftarrow \varnothing$, $C \leftarrow \varnothing$, and set $\lambda \leftarrow |U|$ \label{line:init}\;
    \While{$|\Phi| + |C| < \lambda / (10\Delta)$}{\label{line:while loop:bipartite}
        Sample an edge $e = (u, v) \sim U$ independently and u.a.r.\;
        \If{$(u, v) \in \Phi \cup C$}{
            The iteration \texttt{FAILS} \label{fail:line1} \;
            \textbf{go to \Cref{line:while loop:bipartite}} \label{line:type1}\;
        }
        Identify (arbitrarily) two colors $c_u \in \miss_\chi(u)$ and $c_v \in \miss_\chi(v)$ \label{line:pickcolors}\;
        \If{$c_u = c_v$}{\label{line:equal}
            Set $\chi(u,v) \leftarrow c_u$ and $C \leftarrow C \cup  \{ (u, v) \}$ \ \ \label{line:colorchange1}\;
            \textbf{go to \Cref{line:while loop:bipartite}} \label{line:type2}\;
        }
        \If{$c_u = \beta$ or $c_v = \alpha$}{\label{line:swap}
            Set $(u,v) \leftarrow (v,u)$ \ \ \ \ // \texttt{Now $c_u \neq c_v$, $c_u \neq \beta$, $c_v \neq \alpha$  (see \Cref{line:equal,line:swap})}\;
        }
        Let $P_u$ be the maximal $\{c_u, \alpha\}$-alternating path starting at $u$ ($P_u = \varnothing$ if $\alpha \in \miss_{\chi}(u)$) \label{line:path1}\;
        Let $P_v$ be the maximal $\{c_v, \beta\}$-alternating path starting at $v$ ($P_v = \varnothing$ if $\beta \in \miss_{\chi}(v)$) \label{line:path2}\;
        \If{either $P_u$ or $P_v$ ends at a node that is incident on some edge in $\Phi \setminus \{e\}$}
        {\label{line:fail:condition}
        The iteration \texttt{FAILS} \label{fail:line2} \;
        }
        \Else{
            Modify $\chi$ by flipping the alternating paths $P_u$ and $P_v$ \label{line:colorchange20}\;
            Set $\Phi \leftarrow \Phi \cup \{(u,v)\}$  \ \ \ // \texttt{Now $\alpha \in \miss_{\chi}(u)$ and $\beta \in \miss_{\chi}(v)$} \label{line:modify}\; 
        }
    }
    $\Phi' \leftarrow \Phi$ \label{line:forloop:start} \;
    \For{each edge $e = (u, v) \in \Phi'$}{\label{line:forloop}
    $\Phi' \leftarrow \Phi' \setminus \{e\}$\;
    W.l.o.g., suppose that $\alpha \in \miss_{\chi}(u)$ and $\beta \in \miss_{\chi}(v)$\;
    \If{there exists a color $c \in \{ \alpha, \beta \}$ such that $c \in \miss_{\chi}(u) \cap \miss_{\chi}(v)$}
    {
    $\chi(u, v) \leftarrow c$ \label{line:forloop:extend:1} \;
    }
    \Else{
    Let $P^{\star}_u$ be the maximal $\{\alpha, \beta\}$-alternating path starting at $u$ \label{line:block:path}\;
    (Since $G$ is bipartite, $\alpha \in \miss_{\chi}(u)$ and $\beta \in \miss_{\chi}(v)$,  $P^{\star}_u$ does {\em not} end at $v$) \label{line:forloop:good} \;
    Modify $\chi$ by flipping the alternating path $P^{\star}_u$, and set $\chi(u, v) \leftarrow \beta$ \label{line:forloop:extend:2}\;  
    \If{the path $P^{\star}_u$ ends at a node that is incident on some  edge in $e' \in \Phi' \setminus \{e \}$}
    {\label{line:block}
    $\Phi' \leftarrow \Phi' \setminus \{e'\}$  \label{line:block:1}\;
    }
  }  }\caption{\textsf{BipartiteExtension}$(G, \chi, U)$}
    \label{alg:bipartite}
\end{algorithm}

\begin{lemma}\label{lem:bipartite:0:main}
    Throughout the execution of the {\bf while} loop in \Cref{alg:bipartite}, the following conditions hold: (i) the set $C$ consists of all the edges in $U$ that are colored under $\chi$; (ii) for every edge $(u,v) \in \Phi$, we have $\alpha \in \miss_\chi(u)$ and $\beta \in \miss_\chi(v)$.
\end{lemma}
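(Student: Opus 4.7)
The plan is to prove both invariants by induction on the number of iterations of the while loop. The base case is immediate: right after Line \ref{line:init}, $C = \Phi = \varnothing$, and by the hypothesis of the lemma every edge in $U$ is uncolored, so both (i) and (ii) hold vacuously. For the inductive step, assume both invariants hold at the start of some iteration and verify that they still hold at the end. An iteration has four exit points: (a) failure at Line \ref{fail:line1} because the sampled edge already lies in $\Phi \cup C$; (b) coloring $(u,v)$ at Line \ref{line:colorchange1} when $c_u = c_v$; (c) failure at Line \ref{fail:line2} because an endpoint of $P_u$ or $P_v$ is incident to some edge in $\Phi \setminus \{e\}$; or (d) the path-flipping branch starting at Line \ref{line:colorchange20}.

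Cases (a) and (c) modify neither $\chi$ nor the sets $C$, $\Phi$, so both invariants are preserved trivially. In case (b) we color $(u,v)$ with the common color $c_u = c_v \in \miss_\chi(u) \cap \miss_\chi(v)$, which is a valid extension, and we add $(u,v)$ to $C$. Invariant (i) follows because the edge being colored was not previously in $C$ (otherwise Line \ref{fail:line1} would have fired); invariant (ii) follows because $U$ is a matching, so no edge in $\Phi \subseteq U$ other than $(u,v)$ itself is incident to $u$ or $v$, and the change to $\miss_\chi$ only touches the endpoints of $(u,v)$.

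The substantive case is (d). The key observation is that the two color pairs $\{c_u, \alpha\}$ and $\{c_v, \beta\}$ used by the flipped paths are pairwise disjoint: the branch at Line \ref{line:equal} guarantees $c_u \neq c_v$, the swap at Line \ref{line:swap} guarantees $c_u \neq \beta$ and $c_v \neq \alpha$, and $\alpha \neq \beta$ by Line \ref{line:1:bipartite}. Thus the two flips commute and can be analyzed independently. Since flipping a maximal bichromatic alternating path preserves properness of a partial coloring and only recolors already-colored edges (edges of $U$ are uncolored and therefore cannot lie on such a path), invariant (i) is preserved. Moreover, by the definition of $P_u$ together with $c_u \in \miss_\chi(u)$, after flipping $P_u$ we have $\alpha \in \miss_\chi(u)$; symmetrically $\beta \in \miss_\chi(v)$ after flipping $P_v$. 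Hence invariant (ii) holds for the new member $(u,v) \in \Phi$.

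The remaining (and most delicate) task is to show that invariant (ii) also survives for each pre-existing $(u', v') \in \Phi$, and this is where the guard at Line \ref{line:fail:condition} is essential. That guard rules out $P_u$ or $P_v$ ending at $u'$ or $v'$, since both vertices are incident to $(u', v') \in \Phi \setminus \{e\}$; therefore $u'$ and $v'$ can appear on the flipped paths only as internal vertices, at which the missing-color set restricted to the flipped pair is unchanged. Combined with the disjointness of $\{c_u, \alpha\}$ and $\{c_v, \beta\}$, this implies that neither $\alpha \in \miss_\chi(u')$ nor $\beta \in \miss_\chi(v')$ is disturbed by either flip. The main obstacle in writing the argument cleanly is enumerating all the ways the flipped paths can interact with vertices touching $\Phi$; the disjoint-color-pair observation together with the matching assumption on $U$ is what makes this bookkeeping tractable.
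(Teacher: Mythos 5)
Your proof is correct and follows essentially the same route as the paper's: it verifies the conditions in force just before \Cref{line:colorchange20} (in particular that $\{c_u,\alpha\}$ and $\{c_v,\beta\}$ are disjoint, hence $P_u$ and $P_v$ are edge-disjoint), and uses the guard at \Cref{line:fail:condition} together with the fact that only the endpoints of a flipped maximal alternating path change their palettes to conclude that previously popularized edges are not damaged. The only difference is presentational — you phrase it as an explicit induction over iterations with a four-case split, while the paper fixes an edge added to $\Phi$ and tracks it forward — but the substance is identical.
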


\begin{proof}
  Part (i) of the lemma follows from \Cref{line:init} and \Cref{line:colorchange1}. For part (ii), consider an edge $e = (u,v)$ that gets added to $\Phi$. This happens only after flipping the paths $P_u$ and $P_v$ in \Cref{line:colorchange20}. Just before we execute \Cref{line:colorchange20}, the following conditions hold:
    \begin{itemize}
     \item $\alpha \neq \beta$ (see \Cref{line:1:bipartite}).
    \item $c_u \in \miss_{\chi}(u)$ and $c_v \in \miss_{\chi}(v)$ (see \Cref{line:pickcolors}).
    \item $c_u \neq c_v$, $c_u \neq \beta$ and $c_v \neq \alpha$ (see \Cref{line:swap}). 
    \item If $\alpha \in \miss_{\chi}(u)$ then $P_u = \varnothing$ (see \Cref{line:path1}), and if $\beta \in \miss_{\chi}(v)$ then $P_v = \varnothing$ (see \Cref{line:path2}). 
    \item The path $P_u$ (resp.~$P_v$) does {\em not} end at that a vertex that is incident on some edge in $\Phi \setminus \{e \}$ (see \Cref{line:fail:condition,fail:line2}), although it might possibly end at $v$ (resp.~$u$).
    \end{itemize}
    From these conditions, it follows that the paths $P_u$ and $P_v$ are edge-disjoint, and after we flip them in \Cref{line:colorchange20}, we have 
   $\alpha \in \miss_\chi(u)$ and $\beta \in \miss_\chi(v)$ in \Cref{line:modify}. 
   
   In subsequent iterations of the {\bf while} loop, the only places where we change the coloring $\chi$ are \Cref{line:colorchange1,line:colorchange20}. Since the edges in $U$ form a matching, changing the coloring in \Cref{line:colorchange1} cannot affect whether or not the edge $(u,v) \in \Phi$ remains popular (i.e.,~has $\alpha \in \miss_\chi(u)$ and $\beta \in \miss_\chi(v)$). Finally, during a subsequent iteration of the {\bf while} loop where we sample an edge $(u', v') \sim U$, we  flip the paths $P_{u'}, P_{v'}$ in \Cref{line:colorchange20} only if their endpoints are {\em not} incident on any edges in $\Phi \setminus \{ (u',v') \}$ (see \Cref{line:fail:condition}), and in particular, on $u$ or $v$. Thus, this operation cannot change what colors are available at $u$ and $v$, and so cannot change whether or not the edge  $(u, v) \in \Phi$ remains popular.
\end{proof}

We use the following lemma to bound the number of iterations of the \textup{\textbf{while}} loop in \Cref{alg:bipartite}. 

\begin{lemma}\label{lem:bipartite:1:main}
    Each iteration of the \textup{\textbf{while}} loop in \Cref{alg:bipartite} increases the value of $|\Phi|+|C|$ by an additive one, with probability at least $1/2$ and otherwise keep it unchanged. 
\end{lemma}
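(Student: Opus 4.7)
The plan is to handle the two halves of the statement separately. For the structural claim, I would inspect the pseudocode and observe that $|\Phi|+|C|$ is modified only on \Cref{line:colorchange1} and \Cref{line:modify} (each of which increases it by exactly one), while both \texttt{FAILS} branches (\Cref{fail:line1} and \Cref{fail:line2}) leave $\chi$, $\Phi$, and $C$ untouched. Consequently, an iteration is successful precisely when neither failure test triggers, and the ``$+1$ vs.\ $+0$'' dichotomy follows; it then remains to union-bound the two failure probabilities.

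The type-1 failure on \Cref{fail:line1} is controlled directly by the loop guard: since $|\Phi|+|C|<\lambda/(10\Delta)$ at the start of the iteration and $e$ is sampled uniformly from $U$, the probability that $e\in\Phi\cup C$ is at most $(|\Phi|+|C|)/\lambda<1/(10\Delta)\le 1/10$.

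The heart of the argument is bounding the type-2 failure on \Cref{fail:line2}, which occurs exactly when one of $P_u,P_v$ ends at a vertex incident to some $e'\in\Phi\setminus\{e\}$. Here I would invoke the standard fact that for any two colors $\{c,\gamma\}$ the subgraph of edges colored $c$ or $\gamma$ is a vertex-disjoint union of paths and cycles; in particular, for a fixed vertex $w$ and a fixed color $c$ there is at most one vertex $x\ne w$ such that the maximal $\{c,\alpha\}$-alternating path starting at $x$ has $w$ as its other endpoint. Letting $W$ denote the set of endpoints of edges in $\Phi$ (so $|W|\le 2|\Phi|$), ranging $w\in W$ and $c\in[\Delta]$ yields at most $2|\Phi|\Delta$ choices of $u$ for which $P_u$ could end at a forbidden vertex; the symmetric argument with $\{c,\beta\}$-paths yields at most $2|\Phi|\Delta$ choices of $v$ for which $P_v$ could. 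Since $U$ is a matching, each such $u$ (resp.\ $v$) belongs to at most one edge of $U$, so by a union bound at most $4|\Phi|\Delta$ edges of $U$ can trigger the type-2 failure. Using $|\Phi|<\lambda/(10\Delta)$, this failure occurs with probability at most $4|\Phi|\Delta/\lambda<2/5$.

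Combining the two bounds, the per-iteration failure probability is strictly less than $1/10+2/5=1/2$, so the success probability is at least $1/2$, as required. I expect the main subtlety to be the type-2 counting argument---specifically, verifying that the algorithmic swap on \Cref{line:swap}, the case of empty paths $P_u=\varnothing$ or $P_v=\varnothing$ (\Cref{line:path1,line:path2}), and the case where a path ends at $u$ or $v$ do not introduce additional bad events. All of these are benign because $U$ is a matching (so $u$ and $v$ are not incident to any edge of $\Phi\setminus\{e\}$, as we already passed the type-1 test and thus $e\notin\Phi$), and the union bound above is invariant under swapping the roles of $u$ and $v$; hence the ``trivial'' endpoints do not push the count beyond $4|\Phi|\Delta$.
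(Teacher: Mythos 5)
Your proposal is correct and follows essentially the same argument as the paper: both count the ``bad'' edges by observing that each endpoint of an edge in $\Phi$ can be the terminus of at most one maximal $\{\alpha,c\}$- or $\{\beta,c\}$-alternating path per color $c$, yielding the same $4\Delta|\Phi|$ bound for type-2 failures, combined with the $|\Phi|+|C|$ bound for type-1 failures and the loop guard $|\Phi|+|C|<\lambda/(10\Delta)$.
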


\begin{proof}
    Fix any given iteration of the {\bf while} loop. At the start of this iteration, we sample an edge from $U$ u.a.r. We say that an edge $e \in U$ is {\em bad} if the iteration \texttt{FAILS} when we sample $e$ (see \Cref{fail:line1} and \Cref{fail:line2}), and {\em good} otherwise. Note that if we sample a good edge $e \in U$, then the  iteration either adds one edge to the set $\Phi$ (see \Cref{line:modify}), or adds one edge to the set $C$ (see \Cref{line:colorchange1}). In other words, if we sample a good (resp.~bad) edge $e \in U$ at the start of the iteration, then this increases in the value of $|\Phi| + |C|$ by one (resp.~keeps the value of $|\Phi| + |C|$ unchanged). We will show that at most $\lambda/2$ edges in $U$ are bad. Since $|U| = \lambda$, this will imply the lemma. 
    
    To see why this claimed upper bound on the number of bad edges holds, first note that there are  $(|\Phi| + |C|)$ many bad edges that cause the iteration to \texttt{FAIL} in \Cref{fail:line1}. It now remains to bound the number of bad edges which cause the iteration to \texttt{FAIL} in \Cref{fail:line2}.

    Towards this end, note that for each edge  $(u', v') \in \Phi$, there are at most $4\Delta$ many maximal $\{\alpha, \cdot\}$- or $\{\beta, \cdot\}$-alternating paths that end at either $u'$ or $v'$. Furthermore, each such alternating path has its other endpoint incident on at most one edge in $U$ since the edges in $U$ form a matching. Thus, for each edge $(u', v') \in \Phi$, there are at most $4\Delta$ many edges $f_{(u', v')} \in U$  that satisfy the following condition: Some alternating path constructed by the algorithm after sampling $f_{(u', v')}$  ends at either $u'$ or $v'$ (see \Cref{line:path1} and \Cref{line:path2}). Each such edge $f_{(u', v')}$ is a bad edge which causes the iteration to \texttt{FAIL} in \Cref{fail:line2}, and moreover, only such edges are causing the iteration to \texttt{FAIL} in \Cref{fail:line2}. Thus, the number of such bad edges is at most $|\Phi| \cdot 4 \Delta$.

    To summarize, the total number of bad edges is at most $(|\Phi| + |C|) + |\Phi| \cdot 4 \Delta < \lambda/2$, where the last inequality holds since $|\Phi| + |C| < \lambda/(10\Delta)$ (see \Cref{line:while loop:bipartite}). This concludes the proof.
\end{proof}

Similarly, we can bound the expected runtime of each iteration of the \textup{\textbf{while}} loop in \Cref{alg:bipartite}. 

\begin{lemma}\label{lem:bipartite:2:main}
    Each iteration of the \textup{\textbf{while}} loop in \Cref{alg:bipartite} takes $\tilde O(m / \lambda)$ time in expectation, regardless of the outcome of previous iterations. 
\end{lemma}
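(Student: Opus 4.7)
The plan is to split the work of one iteration into (i) an $\tilde O(1)$ bookkeeping overhead per elementary step and (ii) the construction and flipping of the two alternating paths $P_u$ and $P_v$, and then to show that the expected value of $|P_u|+|P_v|$ is $O(m/\lambda)$ when $e$ is drawn uniformly from $U$.

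For step (i), I would maintain standard data structures throughout the loop: a hash set for $\Phi \cup C$, a hash map indexing $V(\Phi)$, a balanced binary search tree storing $\miss_\chi(w)$ at each vertex $w$, and array-indexed access to $U$. With these, every constant-work step of an iteration (sampling $e$, the membership check in \Cref{fail:line1}, choosing $c_u$ and $c_v$, the endpoint-incidence check in \Cref{fail:line2}, and each individual edge-color lookup or palette update during path construction or flipping) runs in $O(\log n)$ time. Hence the overall iteration cost will be $\tilde O(1 + |P_u| + |P_v|)$, dominated by the path construction and flipping.

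The core of the proof is step (ii). I would fix the coloring $\chi$ at the start of the iteration and, for each color $c \in [\Delta]$, let $m_c$ denote the number of $\chi$-edges colored $c$; by \Cref{cl:small colors:bipartite}, $m_\alpha, m_\beta = O(m/\Delta)$ throughout the loop. The key combinatorial observation is that for a fixed color $c$, the maximal $\{c, \alpha\}$-alternating paths and cycles in $G$ partition the edges of color $c$ or $\alpha$, so their total length is at most $m_c + m_\alpha$, and each open path has exactly two endpoints (while cycles have none). For every $e = (u,v) \in U$, since $c_u \in \miss_\chi(u)$, the vertex $u$ must be an endpoint of the unique maximal $\{c_u,\alpha\}$-alternating path containing it, which is precisely $P_u$. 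Because $U$ is a matching, distinct edges of $U$ contribute distinct vertices $u$, and each such path has at most two endpoints eligible to play this role, so for each fixed $c$,
\[
\sum_{e = (u,v) \in U :\, c_u = c} |P_u| \;\leq\; 2(m_c + m_\alpha).
\]
Summing over $c \in [\Delta]$ gives $\sum_{e \in U} |P_u(e)| \leq 2m + 2\Delta \cdot m_\alpha = O(m)$, and an identical argument with $\beta$ in place of $\alpha$ yields $\sum_{e \in U} |P_v(e)| = O(m)$. Since $e$ is sampled uniformly from $U$, $\expect{|P_u|+|P_v|} = O(m/\lambda)$, and combined with step (i) this yields the claimed $\tilde O(m/\lambda)$ expected runtime.

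The delicate point will be the double-counting in step (ii): each maximal $\{c,\alpha\}$-alternating path must be charged at most twice when summing over $e \in U$ (once per endpoint lying in $V(U)$), which crucially uses the matching assumption on $U$ and the fact that $c_u \in \miss_\chi(u)$ forces $u$ to be an endpoint (not an interior vertex) of its path. The ``regardless of prior outcomes'' clause is handled entirely by \Cref{cl:small colors:bipartite}, which pins down $m_\alpha, m_\beta = O(m/\Delta)$ at every state of $\chi$ encountered in the loop, so no conditioning or additional probabilistic argument beyond the fresh uniform draw $e \sim U$ is needed.
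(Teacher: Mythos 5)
Your proposal is correct and follows essentially the same route as the paper: both bound the total length of the characteristic alternating paths over all edges of $U$ by $O(m)$ — using that $U$ is a matching (so each maximal $\{c,\alpha\}$- or $\{c,\beta\}$-path is charged at most a constant number of times) together with \Cref{cl:small colors:bipartite} — and then invoke the uniform sampling of $e$ to get $\tilde O(m/\lambda)$ expected time. The only cosmetic difference is that the paper phrases the counting via the Type I/II/III classification and distinct starting points of paths in $\mathcal{P}_3$, whereas you charge each maximal path to its at most two endpoints in $V(U)$; the underlying argument is identical.
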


\begin{proof}
   Alternating path flips can be done in time proportional to the path lengths using standard data structures, so we only need to analyze the path lengths. Fix any given iteration of the {\bf while} loop. At the start of this iteration, we can classify the edges in $U$ into one of the following three categories: An edge $e \in U$ is of ``Type I'' if the iteration ends at \Cref{line:type1} when we sample $e$, is of ``Type II'' if the iteration ends at \Cref{line:type2} when we sample $e$, and is of ``Type III'' otherwise. Let $\lambda_1, \lambda_2$ and $\lambda_3$ respectively denote the total number of Type I, Type II and Type III edges, with $\lambda_1 + \lambda_2 + \lambda_3 = \lambda$. For every Type III edge $e = (u, v) \in U$, we refer to the alternating paths $P_u$ and $P_v$ (see \Cref{line:path1} and \Cref{line:path2}) as the ``characteristic alternating paths'' for $e$. Let $\mathcal{P}_3$ denote the collection of characteristic alternating paths of all Type III edges. Since the set of Type III edges is a subset of $U$, they form a matching, and hence different paths in $\mathcal{P}_3$ have different starting points. Furthermore, every path in $\mathcal{P}_3$ is either a maximal $\{\alpha, \cdot\}$-alternating path or a maximal $\{\beta, \cdot\}$-alternating path. Accordingly, \Cref{cl:small colors:bipartite} implies that the total length of all the paths in $\mathcal{P}_3$ is at most $O((m/\Delta) \cdot \Delta) = O(m)$.

  Now, if at the start of the iteration, we happen to sample either a Type I or a Type II edge $e \in U$, then  the concerned iteration takes $O(1)$ time. In the paragraph below, we condition on the event that the edge $e = (u, v) \in U$ sampled at the start of the iteration is of Type III. 

  Using appropriate data structures,  the time taken to implement the concerned iteration is proportional (up to $\tilde O(1)$ factors) to the lengths of the  alternating paths $P_u$ and $P_v$ (see \Cref{line:path1} and \Cref{line:path2}).  The key observation is that for each $x \in \{u, v\}$,  the path $P_x$ is sampled almost uniformly (i.e., with probability $\Theta(1/\lambda_3)$) from the collection $\mathcal{P}_3$. Since the total length of all the paths in $\mathcal{P}_3$ is $O(m)$, it follows that the expected length of each of the paths $P_u, P_v$ is $O(m/\lambda_3)$.

  To summarize, with probability $\lambda_3/\lambda$, we sample a Type III edge at the start of the concerned iteration of the {\bf while} loop, and conditioned on this event the expected time spent on that iteration is $\tilde{O}(m/\lambda_3)$. In contrast, if we sample a Type I or a Type II edge at the start of the concerned iteration, then the time spent on that iteration is $O(1)$. This implies that we spend at most $\tilde{O}(m/\lambda_3) \cdot (\lambda_3/\lambda) + O(1) = \tilde{O}(m/\lambda)$ expected time per iteration of the {\bf while} loop.
\end{proof}

Finally, we show that in the very last step of the algorithm, the {\bf for} loop in \Cref{line:forloop}, the algorithm succeeds in coloring a constant fraction of popular edges.  

\begin{lemma}
\label{lm:forloop:bipartite}
The {\bf for} loop in \Cref{line:forloop} extends the coloring $\chi$ to at least half of the edges in $\Phi$.
\end{lemma}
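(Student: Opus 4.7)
The plan is a simple counting argument. I will show that each iteration of the \textbf{for} loop on \Cref{line:forloop} colors exactly one edge and removes at most two edges from $\Phi'$. Since $\Phi'$ starts at size $|\Phi|$ and is drained to $\varnothing$, if $k$ denotes the number of iterations, then $2k \ge |\Phi|$, yielding at least $|\Phi|/2$ colored edges, which is the bound claimed.

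The first step in carrying this out is to maintain an invariant: at the start of every iteration, for every edge $(u',v') \in \Phi'$, we have $\alpha \in \miss_\chi(u')$ and $\beta \in \miss_\chi(v')$. This holds initially by \Cref{lem:bipartite:0:main}. The only operation inside the loop that modifies $\chi$ outside the edge $e$ is the flipping of the maximal $\{\alpha,\beta\}$-alternating path $P^\star_u$ on \Cref{line:forloop:extend:2}. For this flip to alter the palette at $u'$, the path would have to use an edge incident on $u'$; but $\alpha \in \miss_\chi(u')$ means there is no $\alpha$-edge at $u'$, so the path can only touch $u'$ via a $\beta$-edge and must terminate there (the next edge along the path would need color $\alpha$, which is unavailable at $u'$). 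The symmetric argument applies to $v'$. Whenever this termination occurs at a vertex incident on some $e' \in \Phi' \setminus \{e\}$, the algorithm explicitly removes $e'$ from $\Phi'$ on \Cref{line:block:1}, preserving the invariant on the surviving set.

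With the invariant in hand, the edge $e = (u,v)$ in each iteration is successfully colored. The branch on \Cref{line:forloop:extend:1} simply picks a color $c \in \{\alpha,\beta\}$ missing at both endpoints. For the other branch, the crucial observation (flagged on \Cref{line:forloop:good}) is that bipartiteness plus $\alpha \in \miss_\chi(u)$ forces the first edge of $P^\star_u$ to have color $\beta$, which then pins down the side of the bipartition that each subsequent vertex along the path lies on. Since $v$ lies on the opposite side of the bipartition from $u$, the path cannot terminate at $v$: doing so would require arriving via an $\alpha$-edge (as $\beta \in \miss_\chi(v)$ rules out a $\beta$-edge ending the path at $v$), but the parity of the path forces an edge arriving at $v$ to be $\beta$, a contradiction. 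After flipping $P^\star_u$, both $u$ and $v$ miss $\beta$, so setting $\chi(u,v) \gets \beta$ is valid. Finally, the only other removal from $\Phi'$ per iteration is on \Cref{line:block:1}, which affects at most one edge because $\Phi' \subseteq U$ is a matching and so the terminal vertex of $P^\star_u$ is incident on at most one edge of $\Phi' \setminus \{e\}$. The trickiest step is the bipartite parity argument ruling out $v$ as an endpoint of $P^\star_u$; everything else is routine bookkeeping on $|\Phi'|$.
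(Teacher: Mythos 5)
Your proof is correct and follows essentially the same route as the paper's: maintain the invariant that every surviving edge of $\Phi'$ stays popular (with \Cref{line:block,line:block:1} handling the endpoints of flipped paths), use bipartiteness to rule out $P^\star_u$ ending at $v$, and observe that each successful coloring removes at most one other edge from $\Phi'$ because $\Phi'$ is a matching. You merely spell out two steps the paper leaves implicit (the parity argument behind \Cref{line:forloop:good} and why interior vertices of a flipped path keep their palettes), which is fine.
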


\begin{proof}
Consider any given iteration of the {\bf for} loop where we pick an edge $e = (u, v) \in \Phi'$ in \Cref{line:forloop},
where w.l.o.g.\ $\alpha \in \miss_{\chi}(u)$ and $\beta \in \miss_{\chi}(v)$. It is easy to verify that during this iteration, we successfully extend the coloring $\chi$ to $e$, either in \Cref{line:forloop:extend:1} or in \Cref{line:forloop:extend:2}. In the latter case, we crucially rely on the fact that the graph $G$ is bipartite (see \Cref{line:forloop:good}), and hence the 
maximal $\{\alpha, \beta\}$-alternating path $P^{\star}_u$ starting at $u$ 
cannot end at $v$; in fact, this is the only place where we rely on the biparteness of $G$. \Cref{line:block,line:block:1} ensure that the following invariant is satisfied: For every edge $e' = (u', v') \in \Phi'$, we have $\alpha \in \miss_{\chi}(u')$ and $\beta \in \miss_{\chi}(v')$, i.e., the edge $e'$ is popular; indeed, \Cref{lem:bipartite:0:main} implies that this invariant holds just before the {\bf for} loop starts (see  \Cref{line:forloop:start}), and any edge $e'  \in \Phi'$ that may violate this invariant at a later stage, which may only occur due to flipping an alternating path that ends at a node  incident on $e'$ (in \Cref{line:block}), is removed from $\Phi'$ (in \Cref{line:block:1}). 

Now, the lemma follows from the observation that each time we successfully extend the coloring to one edge $e$ in $\Phi'$, we remove at most one other edge $e' \neq e$ from $\Phi'$ (see \Cref{line:block:1}), due to the vertex-disjointness of the edges in $U \supseteq \Phi \supseteq \Phi'$.
\end{proof}

\noindent We are now ready to conclude the running time analysis of \Cref{alg:bipartite} and establish
the required lower bound on
the number of newly colored edges in $U$ under $\chi$ by this algorithm. 

\begin{lemma}\label{lem:bipartite-before-last}
\Cref{alg:bipartite} takes $\tilde O(m/\Delta)$ time in expectation and extends $\chi$ to $\Omega(\lambda/\Delta)$ new edges.
\end{lemma}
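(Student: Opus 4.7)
The plan is to combine \Cref{lem:bipartite:1:main,lem:bipartite:2:main,lm:forloop:bipartite} with a short amortization argument for the concluding \textbf{for} loop. First I would bound the number of iterations of the \textbf{while} loop. By \Cref{lem:bipartite:1:main}, conditional on the history of the algorithm, each iteration is a Bernoulli trial that increments $|\Phi|+|C|$ by one with probability at least $1/2$. Since the loop exits as soon as $|\Phi|+|C| \ge \lambda/(10\Delta)$, a standard stochastic-domination argument (by a negative binomial with parameter $1/2$) shows that the number of iterations $N$ satisfies $\mathbb{E}[N] \le \lambda/(5\Delta)$.

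Next, I would bound the expected total time of the \textbf{while} loop. Let $X_i$ denote the running time of the $i$-th iteration (if it occurs). By \Cref{lem:bipartite:2:main}, $\mathbb{E}[X_i \mid \text{history up to iteration } i] = \tilde O(m/\lambda)$ whenever iteration $i$ is executed. Writing the total time as $\sum_{i\ge 1} X_i\, \mathbf{1}[i \le N]$ and conditioning on the history before iteration $i$ (on which $\mathbf{1}[i\le N]$ is measurable), linearity of expectation gives a total expected running time for the \textbf{while} loop of $\tilde O(m/\lambda)\cdot \mathbb{E}[N] = \tilde O(m/\Delta)$.

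Then I would handle the \textbf{for} loop. It iterates over the at most $|\Phi|\le \lambda/(10\Delta) = O(m/\Delta)$ popular edges, contributing $\tilde O(m/\Delta)$ bookkeeping cost. The non-trivial cost comes from flipping the maximal $\{\alpha,\beta\}$-alternating paths $P^\star_u$ in \Cref{line:block:path}. The key observation is that the maximal $\{\alpha,\beta\}$-alternating paths and cycles in $G$ partition the edges of colors $\alpha$ or $\beta$ and are vertex-disjoint, so flipping one such path does not affect any other. By \Cref{cl:small colors:bipartite}, the total number of edges of color $\alpha$ or $\beta$ throughout the \textbf{while} loop, and hence at the start of the \textbf{for} loop, is $O(m/\Delta)$; therefore the total length of all paths flipped in the \textbf{for} loop is $O(m/\Delta)$, giving an overall runtime of $\tilde O(m/\Delta)$ for this phase. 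Combined with the expected $\tilde O(m/\Delta)$ runtime of the \textbf{while} loop, the total expected running time of \Cref{alg:bipartite} is $\tilde O(m/\Delta)$.

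Finally, for the number of newly colored edges: at the termination of the \textbf{while} loop we have $|\Phi|+|C| \ge \lambda/(10\Delta)$, the $|C|$ edges are colored directly, and by \Cref{lm:forloop:bipartite} the \textbf{for} loop extends $\chi$ to at least $|\Phi|/2$ of the popular edges. Hence the total number of newly colored edges is at least $|C|+|\Phi|/2 \ge (|\Phi|+|C|)/2 \ge \lambda/(20\Delta) = \Omega(\lambda/\Delta)$. The main subtlety I expect will be making the expected-time calculation for the \textbf{while} loop fully rigorous (a Wald-type identity applied to a stopping time with only conditional control on the per-iteration cost) and carefully invoking the vertex-disjointness of maximal $\{\alpha,\beta\}$-alternating paths to amortize the \textbf{for} loop against \Cref{cl:small colors:bipartite}.
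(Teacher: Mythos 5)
Your proposal is correct and follows essentially the same route as the paper: the same decomposition into \textbf{while} loop plus \textbf{for} loop, the same use of \Cref{lem:bipartite:1:main,lem:bipartite:2:main} for the loop analysis, the same amortization of the \textbf{for} loop against \Cref{cl:small colors:bipartite} via vertex-disjointness of the $\{\alpha,\beta\}$-paths, and the same count $|C|+|\Phi|/2\ge \lambda/(20\Delta)$ of newly colored edges via \Cref{lm:forloop:bipartite}. The only (cosmetic) difference is in the \textbf{while}-loop bookkeeping: you use a Wald-type identity ($\mathbb{E}[N]\le \lambda/(5\Delta)$ combined with the conditional per-iteration cost $\tilde O(m/\lambda)$), whereas the paper sets up the recursion $T(k)\le \tilde O(m/\lambda)+\tfrac12 T(k)+\tfrac12 T(k+1)$ on the expected remaining time as a function of $|\Phi|+|C|$; both are rigorous and yield the same $\tilde O(m/\Delta)$ bound.
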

\begin{proof}
We start with the runtime analysis: 
\begin{itemize}[leftmargin=10pt,itemsep=1pt]
\item \Cref{line:1:bipartite} can be implemented in $\tilde{O}(1)$ time using the auxiliary data structure, and \Cref{line:init,line:forloop:start} take constant time. 

\item Next, we bound the running time of the {\bf while} loop 
(\Cref{line:while loop:bipartite}). For any integer $k \geq 0$, let $T(k)$ denote the expected runtime of {\bf while} loop if we start the loop under the condition that $|\Phi| + |C| = k$. 
We are interested in $T(0)$ and we know that $T(\lambda/10\Delta) = O(1)$ by the termination condition of the loop. By~\Cref{lem:bipartite:1:main} and~\Cref{lem:bipartite:2:main}, for any $0 < k < \lambda/10\Delta$, we have, 
\[
	T(k) \leq \tilde O(m/\lambda) + \frac{1}{2} \cdot T(k) + \frac{1}{2} \cdot T(k+1), 
\]
where we additionally used the monotonicity of $T(\cdot)$, as well as the fact that each while-loop of \Cref{alg:bipartite} has expected runtime $\tilde{O}(m/\lambda)$ regardless of previous iterations, according to \Cref{lem:bipartite:2:main}. Thus, $T(k) \leq T(k+1) + \tilde O(m/\lambda)$ and hence $T(0) \leq \lambda/10\Delta \cdot \tilde O(m/\lambda) = \tilde O(m/\Delta)$. 

\item Finally, since the total number of edges with colors $\alpha$ and $\beta$ just before \Cref{line:forloop} is $O(m/\Delta)$ (see \Cref{cl:small colors:bipartite}), the {\bf for} loop can be implemented in $\tilde{O}(m/\Delta)$ time deterministically in a straightforward manner. 
\end{itemize}
Thus, the total runtime is $\tilde O(m/\Delta)$ in expectation. 

We now establish the bound on the number of newly colored edges. When the {\bf while} loop terminates, we have $|\Phi| + |C| \geq \lambda/(10\Delta)$ (see \Cref{line:while loop:bipartite}), and all the edges in $C$ are colored under $\chi$ (see \Cref{lem:bipartite:0:main}). Next, by \Cref{lm:forloop:bipartite}, the {\bf for} loop in \Cref{line:forloop} further extends the coloring $\chi$ to a constant fraction of the edges in $\Phi$, by only flipping $\{\alpha, \beta\}$-alternating paths. 
Consequently, we get at least $\Omega(\lambda/\Delta)$ newly colored edges in $U$ under $\chi$. This concludes the proof. 
\end{proof}

We can now conclude the proof of~\Cref{lem:main:bipartite}. To achieve the algorithm in this lemma, we simply run~\Cref{alg:bipartite} in parallel $\Theta(\log{n})$ times and use the coloring of whichever one finishes first (and terminate the rest at that point). 
This ensures the high probability guarantee of~\Cref{lem:main:bipartite} still in $\tilde O(m/\Delta)$ runtime. This concludes the entire proof.

\subsection{Extension to General Graphs: Roadmap for the Rest of the Paper}

In our~\Cref{lem:main:bipartite}, we crucially need the graph $G$ to be bipartite while executing \Cref{line:block:path,line:forloop:good,line:forloop:extend:2} in \Cref{alg:bipartite}. Otherwise, if $G$ contains odd cycles, then the maximal $\{\alpha, \beta\}$-alternating path $P^{\star}_u$ starting from $u$ can end at $v$. In that case, the color $\beta$ will no longer be missing at $v$ once we flip the path $P^{\star}_v$, and so we will not be able to extend the coloring $\chi$ to the edge $(u, v)$ via $\chi(u, v) \leftarrow \beta$. 
We shall emphasize that this is not a minor technicality, but rather the key reason general graphs are not necessarily $\Delta$ edge colorable and rather require $(\Delta+1)$ colors. 

The standard machinery to deal with this issue is the Vizing fan (see \Cref{sec:prelim}). However, if we try to use Vizing fans inside the framework of \Cref{alg:bipartite} in a naive manner,  then we lose control over one of the colors in the alternating path being flipped while extending the coloring to an edge, leading to a weaker averaging argument and a running time of $\tilde O(\Delta m)$ instead of $\tilde O(m)$.

To address this bottleneck, one of our key conceptual contributions is to focus on Vizing fans with respect to an object called a {\em separable collection of u-components} (see \Cref{sec:prelim:new}). Using this concept, in \Cref{sec:main:algo} we present our algorithmic framework in general graphs. Our main result (see \Cref{thm:main-tech}) relies upon two fundamental subroutines. The second subroutine (see \Cref{lem:color u-fans})  generalizes \Cref{alg:bipartite} presented in this section. In contrast, the first subroutine (see \Cref{lem:build u-fans}) either efficiently extends the current coloring to a constant fraction of the uncolored edges,  or changes the colors of some edges in the input graph so as to create a situation whereby we can invoke \Cref{lem:color u-fans}. We devote \Cref{sec:algo build u-fans,sec:algo} towards proving \Cref{lem:build u-fans,lem:color u-fans},  respectively. For clarity of presentation, we defer the details on supporting data structures to \Cref{sec:data structs}, such as the use of hash tables or binary search tree data structures for basic operations like picking a missing color at a vertex. Finally,~\Cref{sec:log(n)} shows how to optimize our algorithmic framework further and achieve an $O(m\log{n})$ time with high probability.

\section{Preliminaries: Vizing Fans and Vizing Chains}
\label{sec:prelim}
\label{sec:prelim:fans}

We now define the notion of \emph{Vizing fans}, which has been used extensively in the edge coloring literature \cite{Vizing, gabow1985algorithms, sinnamon2019fast}.

\begin{definition}[Vizing fan]
A \emph{Vizing fan} is a sequence $\F = (u, \alpha),(v_1,c_1),\dots,(v_k,c_k)$ where $u,v_1,\dots,v_k$ are distinct vertices and $c_1,\dots,c_k$ are colors such that
\begin{enumerate}
    \item $\alpha \in \miss_\chi(u)$ and $c_i \in \miss_\chi(v_i)$ for all $i \in [k]$.
    \item $v_1,\dots,v_k$ are distinct neighbours of $u$.
    \item $\chi(u,v_1) = \bot$ and $\chi(u,v_i) = c_{i-1}$ for all $i > 1$.
    \item Either $c_k \in \miss_\chi(u)$ or $c_k \in \{c_1,\dots,c_{k-1}\}$.
\end{enumerate}    
\end{definition}

\noindent
We say that the Vizing fan $\F = (u, \alpha),(v_1,c_1),\dots,(v_k,c_k)$ is $\alpha$-\emph{primed}, has \emph{center} $u$ and \emph{leaves} $v_1,\dots,v_k$. We refer to $c_i$ as the color of $v_i$ within $\F$.
A crucial property is that we can \emph{rotate}  colors around the Vizing fan $\F$ by setting $\chi(u, v_1) \leftarrow c_1,\dots, \chi(u, v_{i-1}) \leftarrow c_{i-1}, \chi(u, v_i) \leftarrow \bot$ for any $i \in [k]$. 
We say that $\F$ is a \emph{trivial} Vizing fan if $c_k \in \miss_\chi(u)$.
Note that, if $\F$ is trivial, we can immediately extend the coloring $\chi$ to $(u, v_1)$ by rotating all the colors around $\F$ and setting $\chi(u, v_k) \leftarrow c_k$.

\Cref{alg:vizing fan} describes the standard procedure used to construct Vizing fans. 
As input, it takes a vertex $u$ and a color $\alpha \in \miss_\chi(u)$, and returns an $\alpha$-primed Vizing fan with center $u$.

\begin{algorithm}[H]
    \SetAlgoLined
    \DontPrintSemicolon
    \SetKwRepeat{Do}{do}{while}
    \SetKwBlock{Loop}{repeat}{EndLoop}
    For each $x \in V$, let $\clr(x) \in \miss_\chi(x)$\;
    $k \leftarrow 1$ and $v_1 \leftarrow v$\;
    $c_1 \leftarrow \clr(v_1)$\; 
    \While{$c_k \notin \{c_1 ,\dots, c_{k-1}\}$ \textnormal{and} $c_k \notin \miss_\chi(u)$}{
        Let $(u, v_{k+1})$ be the edge with color $\chi(u, v_{k+1}) = c_k$\;
        $c_{k+1} \leftarrow \clr(v_{k+1})$\;
        $k \leftarrow k + 1$\;
    }
    \Return{$(u, \alpha),(v_1,c_1),\dots,(v_k,c_k)$}
    \caption{$\VizingF(u, v, \alpha)$}
    \label{alg:vizing fan}
\end{algorithm}

\medskip
\noindent \textbf{The Algorithm \textnormal{\textsf{Vizing}}}:
We now describe the algorithm $\Vizing$ that, given a Vizing fan $\F = (u, \alpha), (v_1,c_1),\dots,(v_k,c_k)$ as input, extends the coloring $\chi$ to the edge $(u,v_1)$ by building a Vizing chain. \Cref{alg:vizing} gives a formal description of this procedure.

\begin{algorithm}[H]
    \SetAlgoLined
    \DontPrintSemicolon
    \SetKwRepeat{Do}{do}{while}
    \SetKwBlock{Loop}{repeat}{EndLoop}
    \If{$\F$ is trivial}{
        $\chi(u, v_i) \leftarrow c_i$ for all $i \in [k]$\;
        \Return\;
    }
    Let $P$ denote the maximal $\{\alpha, c_k\}$-alternating path starting at $u$\;
    Extend $\chi$ to $(u,v_1)$ by flipping the path $P$ and rotating colors in $\F$ (details in \Cref{lem:full vizing proof})\; 
    \caption{\textsf{Vizing}$(\F)$}
    \label{alg:vizing}
\end{algorithm}

\noindent Thus, running $\Vizing(\F)$ extends the coloring $\chi$ to the uncolored edge $(u,v_1)$ by rotating colors in the Vizing fan $\F$ and flipping the colors of the alternating path $P$. We sometimes refer to the process of running $\Vizing(\F)$ as \emph{activating} the Vizing fan $\F$.

\begin{lemma}\label{lem:full vizing proof}
    \Cref{alg:vizing} extends the coloring $\chi$ to the edge $(u,v_1)$ in time $O(\Delta + |P|)$.
\end{lemma}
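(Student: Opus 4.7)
The plan is to fill in the unspecified details of \Cref{alg:vizing} via a Vizing/Misra--Gries-style case analysis, then bookkeep the running time.

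\textbf{Trivial branch ($c_k \in \miss_\chi(u)$).} I would show that the simultaneous update $\chi(u, v_i) \leftarrow c_i$ for $i \in [k]$ yields a valid partial coloring. Processing the indices in reverse order $i = k, k-1, \ldots, 1$, each single step is safe because $c_i \in \miss_\chi(v_i)$ by fan condition~(1) and $c_i$ is missing at $u$ at the moment of reassignment (for $i = k$ by the triviality hypothesis, and for $i < k$ because the previous step just freed $c_i$ at $u$ by reassigning the edge $(u, v_{i+1})$). This branch touches $O(k) = O(\Delta)$ edges.

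\textbf{Non-trivial branch ($c_k = c_j$ for some $j \in [k-1]$).} Since $c_1, \ldots, c_{k-1}$ are the distinct colors of $(u, v_2), \ldots, (u, v_k)$, the unique $c_k$-edge at $u$ is $(u, v_{j+1})$, so $P$ begins with this edge and the flip immediately makes $c_k$ missing at $u$. I then argue that after the flip, at least one of $v_j, v_k$ still has $c_k$ missing: both vertices originally have no $c_k$-edge, so neither can be an internal vertex of $P$; since $P$ has only one endpoint besides $u$, one of $\{v_j, v_k\}$ is absent from $P$ and retains $c_k$ as missing. Call it $v_\ell$. I then perform the shift $\chi(u, v_i) \leftarrow \chi(u, v_{i+1})$ for $i = 1, \ldots, \ell-1$ (using post-flip colors), followed by $\chi(u, v_\ell) \leftarrow c_k$. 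For validity, the shifted value at position $i$ equals $c_i$ when $i \neq j$ (the only fan edge touched by the flip is $(u, v_{j+1})$) and equals $\alpha$ when $i = j$. For $i \neq j$, the color $c_i$ is still missing at $v_i$ because $c_i \notin \{\alpha, c_k\}$ (using $c_i \notin \miss_\chi(u) \ni \alpha$ and the distinctness of the $c_i$'s for $i < k$), and the flip only swaps $\alpha$- and $c_k$-edges at internal vertices of $P$, preserving the set of present colors there. The exceptional position $i = j$ arises only when $\ell = k$ (i.e., $P$ ended at $v_j$), in which case $v_j$ must have arrived along an $\alpha$-edge (it has no $c_k$-edge), and by the properness of $\chi$, $v_j$ has exactly that one $\alpha$-edge, which the flip converts to $c_k$---so $\alpha$ becomes missing at $v_j$. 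The final assignment $\chi(u, v_\ell) \leftarrow c_k$ is safe because $c_k$ is missing at $u$ post-flip and at $v_\ell$ by choice.

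\textbf{Runtime and main obstacle.} Constructing and flipping $P$ take $O(|P|)$ time, while identifying $\ell$ and rotating the fan prefix cost $O(k) = O(\Delta)$, giving the claimed $O(\Delta + |P|)$ bound. The step I expect to take the most care is the $\ell = k$ sub-case of the validity check above: the proof that $v_j$'s sole $\alpha$-edge (forced by properness) is precisely the one the flip recolors is short, but it is the conceptual crux that prevents the rotation from creating a conflict at $v_j$ and underlies why the construction extends to non-bipartite graphs.
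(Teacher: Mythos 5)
Your proof is correct and follows essentially the same route as the paper's: the same case split on whether the maximal $\{\alpha, c_k\}$-alternating path $P$ ends at $v_j$, with the fan rotation stopping at $v_j$ in the first case and at $v_k$ in the second, and the same runtime accounting. The one point to make explicit is that when both $v_j$ and $v_k$ are absent from $P$ you must choose $\ell = j$ (as your later parenthetical implicitly assumes), since the validity of passing $\alpha$ down to $(u, v_j)$ in the $\ell = k$ branch relies on $P$ actually terminating at $v_j$.
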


\begin{proof}
    To see that the path $P$ is well-defined, note that $\alpha \in \miss_\chi(u)$ and $c_k \notin \miss_\chi(u)$, so there is an $\{\alpha,c_k\}$-alternating path starting at $u$.

    \medskip
    \noindent \textit{Extending the coloring $\chi$:}
    By the definition of a Vizing fan, $c_k$ is the first repetition of some color in $\{c_1, \ldots, c_{k-1}\}$, so there is a unique index $j \in [k-1]$ such that $c_j = c_k$, and $P$ has $(u, v_{j+1})$ as its first edge with color $c_j$. We consider the cases where the path $P$ does or does not have $v_{j}$ as an endpoint. If $P$ does not end at $v_j$, then we can rotate the colors of the first $j + 1$ edges in the fan by setting $\chi(u, v_1) \leftarrow c_1,\dots,\chi(u, v_{j}) \leftarrow c_{j}$ and flip the colors of the alternating path $P$.
    If $P$ does end at $v_j$, then we flip the colors of the alternating path $P$, rotate the colors of the fan by setting $\chi(u, v_1) \leftarrow \chi(u, v_2),\dots,\chi(u, v_{k-1}) \leftarrow \chi(u, v_{k})$, and set $\chi(u, v_k) \leftarrow c_k$. Note that, while rotating the fan in this last step, we have $\chi(u, v_{j+1}) = \alpha \neq c_{k}$.

    \medskip
    \noindent
    Using standard data structures (see \Cref{sec:data structs}), we can implement this algorithm to run in time proportional to the number of edges that change their colors, which is $O(\Delta + |P|)$.
\end{proof}

\noindent Given a Vizing fan $\F$, we denote the path $P$ considered by \Cref{alg:vizing} by $\texttt{Vizing-Path}(\F)$. If the Vizing fan $\F$ is trivial, then $\texttt{Vizing-Path}(\F)$ denotes an empty path $\varnothing$.

\section{Basic Building Blocks}\label{sec:prelim:new}

In this section, we introduce the notation of u-fans, u-edges and separable collections, which are the definitions that work as the basic building blocks for our algorithms.

\subsection{U-Fans and U-Edges}
\label{sec:prelim:u}

We begin by defining the notion of \emph{u-fans} that was used by \cite{gabow1985algorithms}.\footnote{The term `u-fan' was originally introduced by \cite{gabow1985algorithms} as an abbreviation for `uncolored fan'.}

\begin{definition}[u-fan, \cite{gabow1985algorithms}]
    A \emph{u-fan} is a tuple $\f = (u, v, w, \alpha, \beta)$ where $u$, $v$ and $w$ are distinct vertices and $\alpha$ and $\beta$ are distinct colors such that:
    \begin{enumerate}
        \item $(u,v)$ and $(u,w)$ are uncolored edges.
        \item $\alpha \in \miss_\chi(u)$ and $\beta \in \miss_\chi(v) \cap \miss_\chi(w)$.
    \end{enumerate}
\end{definition}
\noindent
We say that $u$ is the \emph{center} of $\f$ and that $v$ and $w$ are the \emph{leaves} of $\f$. We also say that the u-fan $\f$ is \emph{$\{\alpha, \beta\}$-primed} and define $c_{\f}(u) := \alpha$, $c_{\f}(v) := \beta$ and $c_{\f}(w) := \beta$ (i.e.~given a vertex $x \in \f$, $c_{\f}(x)$ is the available color that $\f$ `assigns' to $x$).

\medskip
\noindent \textbf{Activating U-Fans:}
Let $\f$ be an $\{\alpha,\beta\}$-primed u-fan with center $u$ and leaves $v$ and $w$. The key property of u-fans is that at most one of the $\{\alpha, \beta\}$-alternating paths starting at $v$ or $w$ ends at $u$. Say that the $\{\alpha, \beta\}$-alternating path starting at $v$ does not end at $u$. Then, after flipping this $\{\alpha, \beta\}$-alternating path, both $u, v$ are missing color $\alpha$. Thus, we can extend the coloring $\chi$ by assigning $\chi(u, v)\leftarrow \alpha$. 
We refer to this as \emph{activating} the u-fan $\f$.

\medskip
\noindent
We also define the notion of a \emph{u-edge} similarly to u-fans.

\begin{definition}[u-edge]\label{def:u-edge}
    A \emph{u-edge} is a tuple $\e = (u, v, \alpha)$ where $(u,v)$ is an uncolored edge and $\alpha$ is a color such that $\alpha \in \miss_\chi(u)$.
\end{definition}

\noindent We say that $u$ is the \emph{center} of $\e$ and that $\alpha$ is the \emph{active color} of $\e$. For notational convenience, we also say that the u-edge $\e$ is \emph{$\alpha$-primed} and define $c_{\e}(u) := \alpha$ and $c_{\e}(v) = \bot$.\footnote{Whenever we refer to an ``uncolored edge $e$'', we are referring to an edge $e \in E$ such that $\chi(e) = \bot$, whereas a `u-edge $\e$' always refers to the object from \Cref{{def:u-edge}} and is denoted in bold.}

\medskip
\noindent \textbf{Collections of U-Components:} While working with both u-fans and u-edges simultaneously, we sometimes refer to some $\g$ that is either a u-fan or a u-edge as a \emph{u-component}. Throughout this paper, we often consider collections of u-components $\U$. We only use the term `collection' in this context, so we abbreviate `collection of u-components' by just `collection'.
We will be particularly interested in collections satisfying the following useful property, which we refer to as \emph{separability}.

\begin{definition}[Separable Collection]
    Let $\chi$ be a partial $(\Delta + 1)$-edge coloring of $G$ and $\mathcal U$ be a collection of u-components (i.e.~u-fans and u-edges).
    We say that the collection $\mathcal U$ is \emph{separable} if the following holds:
    \begin{enumerate}
        \item All of the u-components in $\mathcal U$ are edge-disjoint.
        \item For each $x \in V$, the colors in the multi-set
        $C_{\mathcal U}(x) := \{ c_{\g}(x) \mid \g \in \mathcal U, \, x \in \g\}$ are distinct.
    \end{enumerate}
\end{definition}
We remark that the second property of this definition is rather important because we need to ensure that different u-components are not interfering with each other when they share common vertices. Check \Cref{fig:separable} for an illustration.

\begin{figure}
    \centering
    \begin{tikzpicture}[thick,scale=1.2]
	\draw (0, 0) node(x)[circle, draw, color=cyan, fill=black!50,
inner sep=0pt, minimum width=10pt, line width=2pt, label = $x$] {};
    \draw (0, 0) node(xx)[circle, draw, color=magenta, inner sep=0pt, minimum width=11pt] {};

	\draw (-2, 2) node(u1)[circle, draw, fill=black!50,
	inner sep=0pt, minimum width=10pt, label = $u_1$] {};
		
	\draw (-4, 0) node(x1)[circle, draw, color=magenta, fill=black!50,
	inner sep=0pt, minimum width=10pt, label = $x_1$] {};
		
	\draw (2, 2) node(u2)[circle, draw, fill=black!50,
	inner sep=0pt, minimum width=10pt, label=$u_2$] {};
	
	\draw (4, 0) node(x2)[circle, draw,  color=cyan, fill=black!50,
	inner sep=0pt, minimum width=10pt, label = $x_2$] {};
	
	\draw [line width = 0.5mm, dashed] (u1) to (x1);
	\draw [line width = 0.5mm, dashed] (u2) to (x2);
        \draw [line width = 0.5mm, dashed] (u1) to (x);
	\draw [line width = 0.5mm, dashed] (u2) to (x);
	
\end{tikzpicture}
    \caption{This picture shows two u-fans $(u_1, x_1, x, *, \beta_1)$ and $(u_2, x_2, x, *, \beta_2)$ sharing a common vertex $x$. The separable condition requires that $\beta_1\neq \beta_2$; for instance $\beta_1, \beta_2$ could be magenta and cyan as shown here.}
    \label{fig:separable}
\end{figure}

\medskip
\noindent \textbf{Damaged U-Components:} Suppose we have a partial $(\Delta + 1)$-edge coloring $\chi$ and a separable collection $\U$ w.r.t.~$\chi$. Now, suppose that we modify the coloring $\chi$. We say that a u-component $\g \in \U$ is \emph{damaged} if it is no longer a u-component w.r.t.~the new coloring $\chi$.

We note that this can only happen for one of the following two reasons: (1) one of the uncolored edges in $\g$ is now colored, or (2) there is a vertex $x \in \g$ such the color $c_{\g}(x)$ that $\g$ assigns to $x$ is no longer missing at $x$. 

The following lemma shows that flipping the colors of an alternating path cannot damage many u-components in a separable collection $\U$.

\begin{lemma}\label{lem:low damage flips}
    Let $\chi$ be a partial $(\Delta + 1)$-edge coloring of a graph $G$, $\U$ a separable collection and $P$ a maximal alternating path in $\chi$. Then flipping the colors of the alternating path $P$ damages at most $2$ u-components in $\U$ (corresponding to the two endpoints of the path).
\end{lemma}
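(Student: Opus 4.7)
The plan is to trace exactly how flipping a maximal $\{\alpha,\beta\}$-alternating path $P$ can alter the data that defines a u-component, and show that only the two endpoints of $P$ can cause trouble. I would proceed in three steps.

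First, I would observe that flipping $P$ does not change which edges are uncolored: it only swaps colors $\alpha$ and $\beta$ on edges of $P$, and none of these edges is uncolored to begin with. So the first way a u-component can be damaged, namely having one of its uncolored edges become colored, can never happen through flipping $P$. Therefore, any damaged $\g \in \U$ must be damaged for the second reason: there exists $x \in \g$ such that $c_{\g}(x) \notin \miss_\chi(x)$ after the flip.

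Next, I would argue that $\miss_\chi(x)$ can change only when $x$ is an endpoint of $P$. Indeed, for any vertex $x$ not on $P$, no incident edge changes color, so its palette is unchanged. For an internal vertex $x$ of $P$, exactly two edges incident to $x$ lie on $P$, one of each color in $\{\alpha,\beta\}$; flipping the path merely swaps these two colors on those two edges, so both $\alpha$ and $\beta$ remain present at $x$ (hence neither enters or leaves $\miss_\chi(x)$), and no other incident edge changes either. So internal vertices of $P$ cannot cause damage. For each endpoint $x$ of $P$, exactly one of $\alpha,\beta$ is in $\miss_\chi(x)$ before the flip (by maximality of $P$), and after the flip that color is replaced in the palette by the other. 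Concretely, exactly one color $\gamma_x \in \{\alpha,\beta\}$ leaves $\miss_\chi(x)$.

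Finally, I would invoke separability of $\U$. If a u-component $\g \in \U$ with $x \in \g$ becomes damaged because of the change at endpoint $x$, then necessarily $c_{\g}(x) = \gamma_x$. By separability, the colors in the multiset $\{c_{\g'}(x) : \g' \in \U,\, x \in \g'\}$ are pairwise distinct, so at most one $\g \in \U$ has $c_{\g}(x) = \gamma_x$. Hence each endpoint of $P$ accounts for at most one damaged u-component, giving at most $2$ damaged u-components in total. The whole argument is essentially a bookkeeping exercise; the only mild subtlety is handling the case where the two endpoints of $P$ coincide with a vertex appearing in the same u-component (for instance, both leaves of a single u-fan), but in that case separability still bounds the count by $2$, and in fact by $1$, so the stated bound continues to hold.
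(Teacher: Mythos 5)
Your proposal is correct and follows essentially the same argument as the paper's proof: only the palettes of the two endpoints of $P$ change, each by exactly one color, and separability then guarantees at most one damaged u-component per endpoint. Your write-up is simply a more detailed version of the same bookkeeping, including the (harmless) edge case where both endpoints lie in the same u-component.
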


\begin{proof}
    Let $x$ and $y$ be the endpoints of the path $P$. Since only the palettes of $x$ and $y$ change after flipping the colors of $P$, only u-components $\g \in \U$ that contain $x$ and $y$ may be damaged. Since the palette of $x$ changes by $1$  and the collection $\U$ is separable, only one $\g \in \U$ containing $x$ may be damaged by the color $c_{\g}(x)$ no longer being available at $x$. The same holds for the vertex $y$. Thus, at most 2 u-components in $\U$ are damaged by this operation.
\end{proof}

\subsection{Data Structures}\label{sec:data struc overview}

In \Cref{sec:data structs}, we describe the data structures that we use to implement our algorithm. On top of the standard data structures used to maintain the edge coloring $\chi$, which allows us to implement Algorithms~\ref{alg:vizing fan} and \ref{alg:vizing} efficiently, we also use data structures that allow us to efficiently maintain a separable collection $\U$. More specifically, the data structures that we use to implement a separable collection $\U$ support the following queries.
\begin{itemize}
    \item $\textsc{Insert}_{\U}(\g)$:  The input to this query is a u-component $\g$. In response, the data structure adds $\g$ to $\U$ if $\U \cup \{\g\}$ is separable and outputs $\texttt{fail}$ otherwise.
    \item $\textsc{Delete}_{\U}(\g)$: The input to this query is a u-component $\g$. In response, the data structure removes $\g$ from $\U$ if $\g \in \U$ and outputs $\texttt{fail}$ otherwise.
    \item $\textsc{Find-Component}_{\U}(x,c)$: The input to this query is a vertex $x \in V$ and a color $c \in [\Delta + 1]$. In response, the data structure returns the u-component $g \in \U$ with $c_{\g}(x) = c$ if such a u-component exists and outputs $\texttt{fail}$ otherwise.
    \item  $\textsc{Missing-Color}_{\U}(x)$: The input to this query is a vertex $x \in V$. In response, the data structure returns an arbitrary color from the set $\miss_\chi(x) \setminus C_{\U}(x)$.\footnote{Note that, since $|C_{\U}(x)| < |\miss_\chi(x)|$, such a color always exists.}
\end{itemize}
The following claim shows that it is always possible to answer a $\textsc{Missing-Color}$ query.

\begin{claim}\label{claim:missing color}
    For each $x \in V$, the set $\miss_\chi(x) \setminus C_{\U}(x)$ is non-empty.
\end{claim}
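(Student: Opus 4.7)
The plan is to show that $|C_\U(x)| < |\miss_\chi(x)|$ for every $x \in V$, which immediately gives the non-emptiness of $\miss_\chi(x) \setminus C_\U(x)$ (after noting the obvious inclusion $C_\U(x) \subseteq \miss_\chi(x)$ that follows directly from the definitions of u-fan and u-edge). The argument proceeds by a double-counting / edge-disjointness argument relating the number of u-components incident to $x$ to the number of uncolored edges incident to $x$.

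Let $u_x$ denote the number of uncolored edges of $G$ incident on $x$ under $\chi$. Since $x$ has at most $\Delta$ neighbors in $G$ and exactly $\deg(x) - u_x$ of the edges incident on $x$ receive a color in $\chi$, we have $|\miss_\chi(x)| \geq (\Delta+1) - (\deg(x) - u_x) \geq u_x + 1$.

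Next, I would classify the u-components in $\U$ that contain $x$ into three types: u-edges centered at $x$, u-fans centered at $x$, and u-fans in which $x$ appears as a leaf. A u-edge at $x$ contains the single uncolored edge incident on $x$ at its center; a u-fan centered at $x$ contains two such uncolored edges; and a u-fan with $x$ as a leaf contains one uncolored edge incident on $x$ (the edge joining $x$ to the center). Writing these counts as $a$, $b$, $c$ respectively, the edge-disjointness condition in the definition of a separable collection forces $a + 2b + c \leq u_x$, so the total number of u-components in $\U$ containing $x$ is $a + b + c \leq u_x$.

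Finally, by the second condition of separability, the multiset $C_\U(x)$ actually has distinct entries, so $|C_\U(x)|$ equals the number of u-components of $\U$ containing $x$, namely $a + b + c$. Combining, $|C_\U(x)| \leq u_x < u_x + 1 \leq |\miss_\chi(x)|$, giving the claim. No step here is a real obstacle; the only subtlety is to remember that u-fans centered at $x$ consume \emph{two} uncolored edges at $x$ while contributing only one color to $C_\U(x)$, which only strengthens the inequality.
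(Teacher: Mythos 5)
Your proof is correct and follows essentially the same route as the paper's: bound the number of uncolored edges $d$ incident on $x$, note $|\miss_\chi(x)| \geq d+1$, and use edge-disjointness of the separable collection to get $|C_{\U}(x)| \leq d$. You simply spell out the case analysis (u-edges vs.\ u-fans, center vs.\ leaf) that the paper leaves implicit in the phrase ``since the collection $\U$ is separable.''
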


\begin{proof}
    Let $d$ denote the number of uncolored edges incident on $x$.
    Since the collection $\U$ is separable, we have that $|C_{\U}(x)| \leq d$. Since $|\miss_\chi(x)| \geq d + 1$, it follows that $|C_{\U}(x)| < |\miss_\chi(x)|$ and so $\miss_\chi(x) \setminus C_{\U}(x) \neq \varnothing$.
\end{proof}

\noindent
Furthermore, the data structure supports the following initialization operation.
\begin{itemize}
    \item $\textsc{Initialize}(G, \chi)$: Given a graph $G$ and an edge coloring $\chi$ of $G$, we can initialize the data structure with an empty separable collection $\U = \varnothing$.
\end{itemize}
In \Cref{sec:data structs}, we show how to implement the initialization operation in $O(m)$ time and each of these queries in $O(1)$ time with the appropriate data structures.
\textbf{These queries provide the `interface' via which our algorithm will interact with the u-components.}

\medskip
\noindent \textbf{A Remark on Randomization:}
In order to get good space complexity and running time simultaneously, we implement the standard data structures (used for Algorithms~\ref{alg:vizing fan} and \ref{alg:vizing}) and the data structure supporting the preceding queries using \emph{hashmaps} (see \Cref{sec:data structs}). The following proposition describes the hashmaps used by our data structures.\footnote{We could have even
used the construction of~\cite{Kuszmaul22} to obtain exponentially small error probability in this case, but since we do not need this stronger guarantee we stick
with the simpler work of~\cite{DietzfelbingerH90}.}
\begin{proposition}[\cite{DietzfelbingerH90}]\label{prop:hash-map}
	There exists a dynamic dictionary that, given a parameter $k$, can handle $k$ insertions, deletions, and map operations, 
	uses $O(k)$ space and takes $O(1)$ worst-case time per operation with probability at least $1-O(1/k^7)$. 
\end{proposition}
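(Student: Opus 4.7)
The plan is to build a dynamic dictionary via two-level perfect hashing with randomized hash-function resampling and deamortized rebuilds. At the top level, hash the current key set into a table of $\Theta(k)$ buckets using a function $h$ drawn from a $2$-universal family, such as the Carter--Wegman family. A standard second-moment calculation shows that with constant probability $\sum_i s_i^2 = O(k)$, where $s_i$ is the size of the $i$-th bucket. Inside each bucket of size $s$, store its keys in a second-level array of size $\Theta(s^2)$ via another universal hash function; with constant probability this second level is collision-free, so resampling yields a working table in $O(1)$ expected attempts per level. The total space is $\sum_i \Theta(s_i^2) = O(k)$.

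For dynamic support, partition the operation sequence into epochs of length $\Theta(k)$ and rebuild the global table at each epoch boundary. Within an epoch, an insertion places the new key into its slot; a collision triggers a resample of that bucket's second-level hash function against range $\Theta(s^2)$, and a bucket exceeding twice its recorded size triggers a doubling of its allocated range. Deletions mark slots as empty, and lookups are a pair of hash evaluations plus a comparison. A standard analysis shows that the expected total rebuild work within one epoch is $O(k)$, giving amortized $O(1)$ cost per operation.

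The main technical step is converting this into a \emph{worst-case} $O(1)$ per operation. Deamortize as follows: when an epoch rebuild is triggered, build the new global table incrementally in the background, advancing by a constant number of steps per operation, while the old table continues to answer queries; bucket-level rebuilds are interleaved with subsequent operations on that bucket in the same way. Since each rebuild costs $O(k)$ and an epoch spans $\Theta(k)$ operations, the background workload fits within the per-operation budget. For the failure probability, each random draw of a hash function succeeds with constant probability; by drawing $\Theta(\log k)$ independent candidates and keeping the first that passes a quick verification step, the failure probability of any single draw drops below $1/k^{8}$, and a union bound over the $O(k)$ such draws across the whole sequence yields the claimed $1 - O(1/k^{7})$ bound.

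The main obstacle I expect is orchestrating the deamortization so that the $O(k)$ space budget is honored at every instant: during a global rebuild the old and new tables coexist, so the constants in the bucket allocations and the size thresholds must be tuned so that their combined footprint remains $O(k)$, and the background rebuild schedule must guarantee that the new table is fully populated by the epoch boundary no matter how the adversary interleaves insertions and deletions.
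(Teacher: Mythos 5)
You should first note that the paper does not prove this proposition at all: it is imported verbatim as a black box from the cited reference of Dietzfelbinger and Meyer auf der Heide, so there is no in-paper argument to compare against. Judged on its own terms, your sketch reproduces the standard FKS / dynamic-perfect-hashing framework (two-level tables, $\Theta(s^2)$ second-level arrays, epoch rebuilds), which correctly yields $O(1)$ \emph{amortized expected} time and $O(k)$ space. But the content of the proposition is the \emph{worst-case} $O(1)$ bound holding with probability $1-O(1/k^7)$, and that is exactly where your argument has a gap.

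Two concrete problems. First, your amplification step is circular: to verify that a candidate second-level hash function for a bucket of size $s$ is collision-free you must evaluate it on all $s$ keys, so testing $\Theta(\log k)$ candidates costs $\Theta(s\log k)$ time at the moment the rebuild is triggered; with only a $2$-universal top-level function, buckets of size $\omega(1)$ (indeed up to $\Theta(\sqrt{k})$) occur with non-negligible probability, so a single insertion can force $\omega(1)$ work that cannot be hidden. The same issue afflicts the top level: certifying $\sum_i s_i^2 = O(k)$ for each of $\Theta(\log k)$ candidate top-level functions costs $\Theta(k\log k)$ per epoch, which exceeds the $O(k)$ per-epoch budget available at $O(1)$ work per operation. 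Second, your plan to interleave a bucket-level rebuild with ``subsequent operations on that bucket'' does not work in the worst case: the very next operation may be a lookup on that bucket, which must be answered in $O(1)$ while the bucket is mid-rebuild, and the newly inserted colliding key lives in neither the old nor the (unfinished) new second-level table; an overflow buffer restores correctness but its size, and hence lookup time, is not $O(1)$ in the worst case. Overcoming these two obstacles is precisely the contribution of the cited work, which introduces a new class of hash functions (beyond $2$-universality) to bound all bucket sizes with high probability and a genuinely real-time rebuilding schedule; a proof at the level of detail you give establishes only the amortized expected version of the claim.
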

\noindent
This implementation gives us the guarantee that \emph{across the run of the entire algorithm, every query made to one of these data structures takes $O(1)$ time and each initialization operation takes $O(m)$ time, with high probability}.
Since the randomization used for the hashmaps is independent of the randomization used in the rest of the algorithm, \textbf{we implicitly condition on the high probability event that, every operation performed using a hashmap runs in $O(1)$ time throughout the rest of the paper}.\footnote{Alternatively, one can replace these hashmaps with balanced search trees to make these data structures deterministic, incurring $O(\log n)$ overhead in the running time of each operation.}

\subsection{Vizing Fans in Separable Collections}\label{sec:U-avoid}

Within our algorithm, we only construct Vizing fans and Vizing chains in a setting where there is some underlying separable collection $\U$.
To ensure that activating and rotating colors around Vizing fans does not damage too many u-components, we choose the missing colors involved in Vizing fan constructions so that they `avoid' the colors assigned to the u-components in $\U$.
\begin{definition}\label{def:U-avoid}
    Let $\U$ be a separable collection and $\F = (u,\alpha),(v_1,c_1),\dots, (v_k,c_k)$ be a Vizing fan. We say that the Vizing fan $\F$ is \emph{$\U$-avoiding} if $c_i \in \miss_\chi(v_i) \setminus C_{\U}(v_i)$ for each leaf $v_i \in \F$.
\end{definition}

\noindent  We say that a Vizing fan \emph{$\F$ is a Vizing fan of the u-edge $\e = (u, v, \alpha)$} if $\F$ is $\alpha$-primed, has center $u$ and its first leaf is $v$. The following lemma shows that we can always find a $\U$-avoiding Vizing fan for a u-edge.

\begin{lemma}\label{lem:fast U-avoid fan}
    Given a u-edge $\e \in \U$, there exists a $\U$-avoiding Vizing fan $\F$ of $\e$. Furthermore, we can compute such a Vizing fan in $O(\Delta)$ time.
\end{lemma}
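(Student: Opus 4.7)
The plan is to show that a minor modification of the standard fan-construction procedure (Algorithm 2) already produces a $\U$-avoiding fan of $\e$, and then argue that it runs in $O(\Delta)$ time using the data structure queries.

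First, I would observe that the only step of Algorithm \textsf{Vizing-Fan} which depends on the choice of missing colors at the leaves is the assignment $c_i \leftarrow \clr(v_i)$; the termination condition ($c_k$ either repeats an earlier $c_j$ or lies in $\miss_\chi(u)$) is independent of \emph{which} missing color at $v_i$ we choose. Thus, I replace line ``$c_i \leftarrow \clr(v_i)$'' with $c_i \leftarrow \textsc{Missing-Color}_{\U}(v_i)$. By \Cref{claim:missing color}, the set $\miss_\chi(v_i) \setminus C_{\U}(v_i)$ is always non-empty, so this query is well-defined and returns a color $c_i \in \miss_\chi(v_i) \setminus C_{\U}(v_i)$. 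Hence, whenever the procedure terminates, the resulting fan $\F = (u,\alpha),(v_1,c_1),\dots,(v_k,c_k)$ is $\U$-avoiding by definition.

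Next, I would verify that the procedure terminates with a valid Vizing fan. This is exactly Vizing's classical argument: since $v_1, v_2, \dots$ are pairwise distinct neighbors of $u$ (each $v_{i+1}$ is the unique neighbor connected to $u$ by the edge of color $c_i$, and the $c_i$'s were pairwise distinct up to step $k$), the sequence of leaves cannot grow beyond $\deg(u) \le \Delta$, so within at most $\Delta$ steps either $c_k \in \{c_1,\dots,c_{k-1}\}$ or $c_k \in \miss_\chi(u)$, meeting the defining conditions of a Vizing fan with first leaf $v$.

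Finally, I turn to the runtime. Each iteration of the loop performs: one $\textsc{Missing-Color}_{\U}$ query (which is $O(1)$ by the data structure from \Cref{sec:data struc overview}), one lookup for the edge of color $c_k$ at $u$ (which is $O(1)$ using the standard edge-coloring data structures mentioned in \Cref{sec:data struc overview}), and $O(1)$ bookkeeping. Since the loop runs at most $\Delta$ times, the total time is $O(\Delta)$. There is no genuine obstacle here: the claim really is a direct consequence of \Cref{claim:missing color} together with the $O(1)$-time data structure queries. The only conceptual point worth highlighting is that separability of $\U$ is exactly what makes the palette $\miss_\chi(v_i) \setminus C_{\U}(v_i)$ non-empty at every leaf, which is what permits the fan construction to proceed without ever getting ``stuck.''
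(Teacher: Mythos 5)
Your proposal is correct and follows essentially the same route as the paper: choose each leaf color from $\miss_\chi(v_i)\setminus C_{\U}(v_i)$ (justified by \Cref{claim:missing color}) inside the standard $\VizingF$ construction, which immediately yields a $\U$-avoiding fan, and implement it in $O(\Delta)$ time via the $O(1)$-time queries of \Cref{sec:data struc overview}. Your write-up merely spells out the termination and per-iteration cost in more detail than the paper does.
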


\begin{proof}
    By \Cref{claim:missing color}, we can always find a collection of colors $\{\clr(x)\}_{x \in V}$ such that $\clr(x) \in \miss_\chi(x) \setminus C_{\U}(x)$ for each $x \in V$. If we construct a Vizing fan $\F$ by calling $\VizingF$ and using such a collection of colors within the algorithm, then it follows that the Vizing fan $\F$ is $\U$-avoiding.
    By combining the standard implementation of \Cref{alg:vizing fan} with the queries described in \Cref{sec:data struc overview} we can do this in $O(\Delta)$ time.
\end{proof}

\noindent
The following lemma describes some crucial properties of $\U$-avoiding Vizing fans.

\begin{lemma}\label{lem:Vfans are safe}
Let $\chi$ be a $(\Delta + 1)$-edge coloring of a graph $G$ and $\U$ be a separable collection. For any u-edge $\e = (u,v,\alpha) \in \U$ with a $\U$-avoiding Vizing fan $\F$, we have the following:
\begin{enumerate}
    \item Rotating colors around $\F$ does not damage any u-component in $\U \setminus \{\e\}$.
    \item Calling $\Vizing(\F)$ damages at most one u-component in $\U \setminus \{\e\}$. Furthermore, we can identify the damaged u-component in $O(1)$ time.
\end{enumerate}
\end{lemma}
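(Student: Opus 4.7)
The plan is to trace carefully which edges and vertex palettes are affected by each operation, and then to leverage the $\U$-avoiding property of $\F$ together with the separability of $\U$ to confine any damage to at most one u-component outside $\e$.

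For Part 1, I would describe the rotation explicitly for some index $i$: edge $(u, v_1)$ receives color $c_1$, edge $(u, v_i)$ becomes uncolored, and each intermediate edge $(u, v_\ell)$ has its color shifted from $c_{\ell-1}$ to $c_\ell$. On the palette side, this causes three kinds of local changes: (i) the multiset of colors present on the fan edges incident to $u$ is merely permuted, so $\miss_\chi(u)$ is unchanged; (ii) at each leaf $v_\ell$, the only color removed from $\miss_\chi(v_\ell)$ is $c_\ell$, and the $\U$-avoiding hypothesis guarantees $c_\ell \notin C_\U(v_\ell)$, so no $\g \in \U$ is damaged through $v_\ell$; (iii) the only uncolored edge that becomes colored is $(u, v_1)$, which by edge-disjointness of $\U$ lies only in $\e$. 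These three observations give Part 1.

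For Part 2, I would split into the trivial and non-trivial cases of $\Vizing(\F)$. In the trivial case, the operation is a full rotation together with setting $\chi(u, v_k) \leftarrow c_k$; the rotation portion is safe by Part 1, and the extra assignment removes $c_k$ from $\miss_\chi(u)$, so by separability of $\U$ at most one $\g \in \U$ has $c_\g(u) = c_k$, accounting for at most one potential damage. In the non-trivial case, $\Vizing(\F)$ flips the maximal $\{\alpha, c_k\}$-alternating path $P$ starting at $u$ (with other endpoint $y$) and performs a rotation within $\F$; the rotation part is safe by Part 1, and \Cref{lem:low damage flips} confines damage from the flip to the endpoints $u$ and $y$. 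A direct calculation of the final coloring shows that the combined flip-then-rotate at $u$ removes only $\alpha$ from $\miss_\chi(u)$ (and may return $c_k$ to it); since $c_\e(u) = \alpha$ and $\U$ is separable, only $\e$ is damaged through $u$. At $y$, the flip removes exactly one color $\gamma \in \{\alpha, c_k\}$ from $\miss_\chi(y)$, and separability again bounds the damage by a single u-component $\g$ with $c_\g(y) = \gamma$. In the special sub-case $y = v_j$ (where $c_j = c_k$ is the unique repetition), tracing through both the flip and the rotation reveals that the net removal at $v_j$ is of $c_k = c_j$, and the $\U$-avoiding property ensures $c_j \notin C_\U(v_j)$, so in fact no damage occurs at $y$ in this sub-case.

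For the $O(1)$ identification, I would perform a single \textsc{Find-Component}$_\U$ query at the vertex whose palette loses a color other than $\alpha$: at $(u, c_k)$ in the trivial case and at $(y, \gamma)$ in the non-trivial case, and discard the result if it returns $\e$ or \texttt{fail}.

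The main obstacle will be the bookkeeping in the non-trivial case, specifically verifying that the composite flip-then-rotate modifies $\miss_\chi(u)$ only by removing $\alpha$. This relies on the Vizing-fan property that $c_k$ appears exactly once among $c_1,\dots,c_{k-1}$ (namely as $c_j$), so that the shuffle of colors on fan edges, combined with the first flipped edge of $P$, reassembles the same set of used colors at $u$ plus $\alpha$; one must also patiently work through the sub-case $y = v_j$ to see that intermediate palette fluctuations at $v_j$ cancel out to a single removal of $c_j$.
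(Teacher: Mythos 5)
Your proposal is correct and follows essentially the same route as the paper's proof: rotation only removes $c_i$ from each leaf's palette (harmless by $\U$-avoidance) while coloring only $(u,v_1)$, and activation additionally removes one color either from $u$'s palette (trivial case) or from the far endpoint of the flipped path, which separability confines to a single u-component retrievable via one $\textsc{Find-Component}$ query. Your extra bookkeeping for the sub-case where the path ends at the leaf $v_j$ is a correct refinement (showing zero damage there) that the paper subsumes under its coarser "at most one" bound.
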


\begin{proof}
    Let $\F = (u, \alpha),(v_1, c_1),\dots, (v_k, c_k)$.
    Rotating colors around $\F$ will only remove the color $c_i$ from the palette of a leaf $v_i$ appearing in $\F$. Since $c_i \notin C_{\U}(v_i)$, these changes to the palettes will not damage any u-component in $\U$. However, rotating colors around $\F$ will color the edge $(u,v)$, damaging the u-edge $\e$.

    Now, suppose we call $\Vizing(\F)$.
    If the Vizing fan $\F$ is trivial, then the algorithm rotates all the colors in $\F$ and sets $\chi(u, v_k) \leftarrow c_k$. Rotating the colors can only damage $\e \in \U$, and removing $c_k$ from the palette of $u$ can damage at most one other u-component in $\U$ since $\U$ is separable. Using the queries from \Cref{sec:data struc overview}, we can check if such a u-component exist and return it in $O(1)$ time.
    If $\F$ is not trivial, then the algorithm flips the $\{\alpha, c_k\}$-alternating path $P = \VizingP(\F)$ and rotates colors around $\F$.
    By similar arguments, rotating colors around $\F$ can only damage $\e \in \U$. Let $x$ denote the endpoint of $P$ that is not $u$. Flipping the path $P$ might remove either the color $\alpha$ or $c_k$ from the palette of $x$, which might damage at most one u-component in $\U\setminus \{e\}$ since $\U$ is separable. We can again identify this u-component in $O(1)$ time using the queries from \Cref{sec:data struc overview}. It also removes the color $\alpha$ from the palette of $u$, but this cannot damage any u-component other than $\e$ (again, since $\U$ is separable).
\end{proof}

\section{The Main Algorithm}
\label{sec:main:algo}

We are now ready to present our main technical result, which is a slightly weaker version of~\Cref{thm:main}, and focuses on achieving a near-linear time algorithm for $(\Delta+1)$ edge coloring (instead of 
the exact $O(m\log{n})$ time in~\Cref{thm:main}; see the remark after that theorem). We will then use this theorem in~\Cref{sec:log(n)} to conclude the proof of~\Cref{thm:main}. 

\begin{theorem}\label{thm:main-tech}
    There is a randomized algorithm that, given any simple undirected graph $G = (V, E)$ on $n$ vertices and $m$ edges with maximum degree $\Delta$, finds a $(\Delta + 1)$-edge coloring of $G$ in $\tilde O(m)$ time with high probability.
\end{theorem}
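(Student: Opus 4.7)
The plan is to mirror the strategy that proved \Cref{thm:main:bipartite}, replacing the bipartite color-extension routine (\Cref{lem:main:bipartite}) by a general-graph analogue. Specifically, I would first apply Eulerian partitioning to split $G$ into two edge-disjoint subgraphs of maximum degree $\lceil \Delta/2 \rceil$, recursively $(\lceil \Delta/2 \rceil + 1)$-edge color each with disjoint palettes, and then uncolor the three smallest color classes of the resulting $(\Delta + 3)$-coloring. This leaves a partial $(\Delta+1)$-coloring in which the uncolored set $U$ decomposes into $O(1)$ matchings of total size $O(m/\Delta)$, and reduces the theorem to a color-extension subroutine that extends $\chi$ to $\Omega(\lambda/\Delta)$ edges of $U$ in $\tilde O(m/\Delta)$ expected time. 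Iterating this subroutine $O(\Delta \log n)$ times and applying the recurrence $T(m) \leq 2T(m/2) + \tilde O(m)$ yields $T(m) = \tilde O(m)$.

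For the color-extension subroutine itself, I would proceed in two stages. In the first stage, I would convert most uncolored edges into a separable collection $\U$ of u-fans (\Cref{sec:prelim:u}). The core idea is a win--win analysis on Vizing-chain lengths: fix a color $\alpha$ and consider all uncolored edges $(u,v) \in U$ with $\alpha \in \miss_\chi(u)$ whose Vizing chains are $\{\alpha, \cdot\}$-alternating paths. If the average such chain is short, then Vizing's procedure colors those edges in $\tilde O(m_\alpha)$ total time (where $m_\alpha$ is the number of edges colored $\alpha$). Otherwise, by pigeonhole, many pairs of these Vizing chains must intersect within $O(m_\alpha/|U_\alpha|)$ steps, and at such an intersection the two initial uncolored edges can be shifted — by rotating their Vizing fans and flipping prefixes of their chains, as in \Cref{overview-pair} — to meet at a common vertex, producing a u-fan. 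Each newly created u-fan is inserted into $\U$ via the data-structure interface of \Cref{sec:data struc overview}, and I would use only $\U$-avoiding Vizing fans (\Cref{lem:fast U-avoid fan}) so that, by \Cref{lem:low damage flips} and \Cref{lem:Vfans are safe}, each operation damages at most $O(1)$ elements of $\U$.

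In the second stage, I would generalize \Cref{alg:bipartite} to the separable collection $\U$ of u-fans. As before, pick the two least common colors $\alpha, \beta$ in $\chi$, and attempt to popularize sampled u-fans $\f = (u,v,w,\gamma,\delta) \in \U$ by flipping a bounded number of maximal alternating paths so that $\alpha \in \miss_\chi(u)$ and $\beta \in \miss_\chi(v)\cap \miss_\chi(w)$. The counting argument of \Cref{lem:bipartite:1:main} shows that as long as the popularized set $\Phi$ has size $o(\lambda/\Delta)$, a uniformly random u-fan succeeds with probability at least $1/2$, and the averaging argument of \Cref{lem:bipartite:2:main} bounds the expected per-iteration cost by $\tilde O(m/\lambda)$, since only $O(m/\Delta)$ edges are colored $\alpha$ or $\beta$. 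To finish, activating each popularized u-fan by flipping a maximal $\{\alpha,\beta\}$-alternating path extends the coloring to the corresponding edge; the defining property of a u-fan — that at most one of the two $\{\alpha,\beta\}$-alternating paths starting at $v$ or $w$ terminates at $u$ — plays exactly the role that bipartiteness played in \Cref{lm:forloop:bipartite}.

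The main obstacle is orchestrating the two stages so that all Vizing-fan rotations, alternating-path flips, and u-fan insertions preserve separability of $\U$ and respect the accounting needed for the averaging arguments. The delicate point is that the shifting operations of Stage 1 and the popularization flips of Stage 2 can damage previously-built u-fans, so the invariants must be stated precisely enough that the $O(1)$-damage bounds of \Cref{lem:low damage flips} and \Cref{lem:Vfans are safe} translate into the desired amortized runtime. Once these bookkeeping issues are handled, the high-probability guarantee follows by running $\Theta(\log n)$ independent copies of the color-extension subroutine in parallel and keeping whichever finishes first, exactly as in the bipartite proof.
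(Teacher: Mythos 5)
Your proposal follows essentially the same route as the paper: Euler partition plus recursion to reduce everything to a color-extension problem on $O(m/\Delta)$ uncolored edges, and then a two-stage extension consisting of (i) pairing uncolored edges into a separable collection of u-fans via the win--win analysis on intersecting Vizing chains (the paper's $\ConUFans$, with $\U$-avoiding fans and the $O(1)$-damage lemmas) and (ii) popularizing and activating those u-fans with two globally fixed least-common colors, generalizing the bipartite algorithm (the paper's $\ColorUFans$). The only point to tighten is the accounting: the u-fan construction stage costs $O(m+\Delta\lambda)$ per batch rather than $\tilde O(m/\Delta)$ per $\Omega(\lambda/\Delta)$ colored edges, so it must be amortized over an entire constant-fraction round as in the paper's \Cref{lem:main extend}, which does not affect the $\tilde O(m)$ total.
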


Our main algorithm consists of two main components.
The first component is an algorithm called $\ConUFans$ that takes a partial $(\Delta+1)$-edge coloring $\chi$ with $\lambda$ uncolored edges and either extends $\chi$ to $\Omega(\lambda)$ of these edges or modifies the coloring to construct a separable collection of $\Omega(\lambda)$ u-fans. \Cref{lem:build u-fans} summarizes the behavior of this algorithm.

\begin{lemma}\label{lem:build u-fans}
    Given a graph $G$, a partial $(\Delta + 1)$-edge coloring $\chi$ of $G$ and a set of $\lambda$ uncolored edges $U$, the algorithm $\ConUFans$ does one of the following in $O(m + \Delta \lambda)$ time:
    \begin{enumerate}
        \item  Extends the coloring to $\Omega(\lambda)$ uncolored edges.
        \item  Modifies $\chi$ to obtain a separable collection of $\Omega(\lambda)$ u-fans $\mathcal U$.
    \end{enumerate}
\end{lemma}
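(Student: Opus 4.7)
\textbf{Proof Plan for \Cref{lem:build u-fans}.}

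The plan is to follow the color-by-color strategy sketched in \Cref{sec:overview}. First, partition $U$ into buckets $\{U_\alpha\}_{\alpha \in [\Delta+1]}$: for each uncolored edge $(u,v) \in U$, assign it to a color $\alpha \in \miss_\chi(u)$ (maintaining separability-style bookkeeping, so distinct uncolored edges at the same vertex are assigned distinct missing colors; this is possible because $u$ misses strictly more colors than it has uncolored incident edges). For each color $\alpha$, let $m_\alpha$ denote the current number of $\alpha$-colored edges and set a threshold $\ell_\alpha := c \cdot m_\alpha / \max(1, |U_\alpha|)$ for a small constant $c$. Then process the colors $\alpha$ one at a time. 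For each edge $e \in U_\alpha$, build an $\alpha$-primed $\U$-avoiding Vizing fan $\F_e$ (\Cref{lem:fast U-avoid fan}) and start growing its Vizing chain $\VizingP(\F_e)$ in lockstep with the other chains of $U_\alpha$, one alternating step at a time, up to depth $\ell_\alpha$.

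Three outcomes can arise for each edge $e \in U_\alpha$:
\textbf{(i)} the fan is trivial, or the chain terminates (reaches a vertex missing $\alpha$) before depth $\ell_\alpha$; in this case we immediately invoke $\Vizing(\F_e)$ to extend $\chi$ to $e$;
\textbf{(ii)} two chains for distinct $e, e' \in U_\alpha$ collide at a common $\alpha$-colored edge $(x,y)$ within depth $\ell_\alpha$; in this case we use the shift operation illustrated in \Cref{overview-pair} (rotate both initial Vizing fans and flip the initial segments of both chains up to the collision edge) to simultaneously shift both uncolored edges to the same endpoint of $(x,y)$, producing a u-fan with center at that endpoint, color $\alpha$, and the two tails as leaves, which we insert into $\U$;
\textbf{(iii)} the chain survives past depth $\ell_\alpha$ without terminating or colliding; we abort this chain and discard $e$ from further processing in this invocation. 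Collisions in (ii) are detected efficiently by maintaining, per color $\alpha$, a hashmap from $\alpha$-colored edges to the chain currently owning them.

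The counting argument splits into the two cases of the lemma. Each surviving chain in (iii) uses at least $\ell_\alpha / 2$ distinct $\alpha$-colored edges not shared with any other surviving chain, so the number of such edges is at most $2 m_\alpha / \ell_\alpha \leq \tfrac{2}{c}|U_\alpha|$. Hence at least $(1 - 2/c)|U_\alpha|$ edges land in (i) or (ii). Summing over $\alpha$: if across all colors the total count of (i) outcomes is $\Omega(\lambda)$, we have extended $\chi$ to $\Omega(\lambda)$ edges of $U$ and are in case 1 of the lemma; otherwise the (ii) outcomes collectively yield $\Omega(\lambda)$ u-fans (each collision produces one), placing us in case 2. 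The runtime bound follows since growing $|U_\alpha|$ chains up to depth $\ell_\alpha$ costs $O(|U_\alpha|(\Delta + \ell_\alpha))$ per color (the $\Delta$ term for the initial fan), and $\sum_\alpha |U_\alpha|(\Delta + \ell_\alpha) = O(\Delta \lambda + \sum_\alpha m_\alpha) = O(\Delta \lambda + m)$; each invocation of $\Vizing$ or of a shift operation runs in time proportional to the length of the processed chain segment, also absorbed into the same bound.

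\textbf{Main obstacles.} The principal technical hurdle will be to make the u-fans produced in case (ii) form a single separable collection across the entire run. Every shift operation that creates a u-fan \emph{modifies} $\chi$, and so can (a) alter the palettes of vertices appearing in other u-fans already in $\U$ or in chains currently being grown, and (b) change $m_{\alpha'}$ for other colors $\alpha'$ yet to be processed. We will handle (a) by insisting on $\U$-avoiding fans throughout (\Cref{lem:Vfans are safe}) and by using $\textsc{Insert}_{\U}$ to reject any u-fan whose creation would damage a previously inserted one, showing via \Cref{lem:low damage flips} that only a constant fraction of candidate u-fans are rejected in this way so the $\Omega(\lambda)$ bound survives. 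For (b), the key observation is that each shift touches only $O(\ell_\alpha)$ edges, so the cumulative change to any $m_{\alpha'}$ is absorbed by maintaining thresholds lazily or by recomputing $\ell_{\alpha'}$ at the start of processing each color, at negligible additional cost. The last subtlety is correctly identifying \emph{which} endpoint of the collision edge becomes the u-fan's center (the chains must approach from the same side so that both uncolored edges can be shifted to land on it with color $\alpha$ missing at both leaves); this is enforced by requiring the collision edge to have opposite traversal orientations in the two chains, and we show by a pigeonhole over orientations that a constant fraction of collisions are of the ``correct'' type.
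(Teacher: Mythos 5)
Your plan follows essentially the same route as the paper's: build a separable collection of $\alpha$-primed u-edges, and for each color $\alpha$ grow the $\{\alpha,\cdot\}$-Vizing chains in lockstep up to depth $\Theta(m_\alpha/|U_\alpha|)$, coloring the edges whose chains terminate and merging colliding chains into u-fans, with the edge-disjointness of surviving chains giving the counting bound and $\sum_\alpha m_\alpha \le m$ giving the runtime. Your ``cap the depth and discard survivors'' is an equivalent reformulation of the paper's ``run rounds until half the fans are removed.'' However, there are two genuine gaps.

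First, you never address intersections between the \emph{Vizing fans themselves}. Two $\alpha$-primed u-edges can have fans sharing a vertex (one fan's center may be a leaf of another, or two fans may share a leaf), and then rotating or activating one fan changes the palette of a vertex in the other, invalidating it and the chain hanging off it; your collision detection, keyed on $\alpha$-colored chain edges, does not see this. The paper devotes an entire preprocessing subroutine ($\PruneVFans$, \Cref{sec:prune}) to making the fans in each color class vertex-disjoint before any chain is grown (resolving fan--fan overlaps by either coloring an edge or forming a u-fan at the shared leaf), and the vertex-disjointness of the fans is an invariant (\Cref{invaraint:paths}, Condition~\ref{inv:5}) that the correctness of the lockstep procedure relies on. Without this step your procedure is not well-defined.

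Second, your treatment of collision orientation is wrong in both direction and substance. A u-fan arises when the two chains traverse the collision edge $(x,y)$ in the \emph{same} orientation (both entering at $x$): then the two preceding edges $(z,x),(z',x)$ can be uncolored and $(x,z,z',\beta,\alpha)$ is a u-fan (\Cref{line:case 1}, \Cref{pair-same}). When the orientations are \emph{opposite}, one should not discard the collision: uncoloring $(x,y)$ and activating both truncated chains colors \emph{both} original uncolored edges (\Cref{line:case 2}, \Cref{pair-opposite}), which contributes to case~1 of the lemma. Your proposed pigeonhole claim that a constant fraction of collisions have a fixed orientation has no justification (all collisions could have the ``wrong'' orientation), but it is also unnecessary once the opposite-orientation case is handled by coloring rather than discarding; the lemma permits either outcome, which is exactly how the paper's potential function $3(\Psi_{\texttt{f}}+2\Psi_{\texttt{c}})+\Psi_{\texttt{e}}$ accounts for progress.
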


The second component is an algorithm called $\ColorUFans$ that takes a collection of $\lambda$ u-fans and extends the coloring to $\Omega(\lambda)$ of the edges in the u-fans. \Cref{lem:color u-fans} summarizes the behavior of this algorithm. The reader may find it helpful to keep in mind that the algorithm for \Cref{lem:color u-fans} is a generalization of algorithm for \Cref{lem:main:bipartite} in \Cref{sec:bipartite}.

\begin{lemma}\label{lem:color u-fans}
    Given a graph $G$, a partial $(\Delta + 1)$-edge coloring $\chi$ of $G$ and a separable collection of $\lambda$ u-fans $\mathcal U$, the algorithm $\ColorUFans$ extends $\chi$ to $\Omega(\lambda)$ edges in $O(m \log n)$ time w.h.p.
\end{lemma}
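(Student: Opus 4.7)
The plan is to mirror Algorithm~\ref{alg:bipartite} almost verbatim with u-fans playing the role of uncolored edges, and to run the resulting subroutine as a single \emph{phase} that colors $\Omega(\lambda/\Delta)$ edges in $\tilde O(m/\Delta)$ expected time; iterating $O(\Delta)$ such phases colors $\Omega(\lambda)$ edges, and $O(\log n)$ parallel copies per phase boost the bound to $O(m\log n)$ with high probability, exactly as in the final step of Lemma~\ref{lem:main:bipartite}. Within a phase, I first identify the two least common colors $\alpha,\beta \in [\Delta+1]$, so that only $O(m/\Delta)$ edges of $G$ carry those colors. I then run a while-loop that samples a u-fan $\f = (u,v,w,\gamma,\delta)\in\mathcal U$ uniformly and tries to \emph{popularize} $\f$, i.e.\ to modify $\chi$ so that $\f$ becomes $\{\alpha,\beta\}$-primed. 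Popularizing amounts to flipping a $\mathcal U$-avoiding $\{\gamma,\alpha\}$-alternating path from $u$, together with $\mathcal U$-avoiding $\{\delta,\beta\}$-alternating paths from $v$ and from $w$; if any of these paths ends at a vertex belonging to a u-fan already placed in the popular set $\Phi$, I mark the iteration \texttt{FAIL}, and otherwise I add $\f$ to $\Phi$ (or to a set $C$ of u-fans accidentally colored when two missing colors happen to agree). The loop runs until $|\Phi|+|C| = \Theta(\lambda/\Delta)$, after which I process each popular $\f=(u,v,w,\alpha,\beta)\in\Phi$ by picking whichever of $v,w$ has its maximal $\{\alpha,\beta\}$-alternating path not ending at $u$ (such a side exists, by the defining property of u-fans), flipping that path, and assigning color $\alpha$ to the corresponding uncolored edge.

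The analysis transfers piecewise from \Cref{sec:bipartite}. The analog of Lemma~\ref{lem:bipartite:1:main} still gives each iteration a constant success probability, because each popular u-fan has only three vertices, each vertex is the endpoint of at most $O(\Delta)$ maximal $\{\alpha,\cdot\}$- or $\{\beta,\cdot\}$-alternating paths, and each such path arises as a popularization path of at most one sampled u-fan (the starting side of such a path is determined by the path's other endpoint, and the edge-disjointness of $\mathcal U$ associates it with a unique u-fan side). Thus the number of "bad" u-fans is at most $O(\Delta|\Phi|)\le\lambda/2$. The analog of Lemma~\ref{lem:bipartite:2:main} still gives each iteration expected cost $\tilde O(m/\lambda)$, since only $O(m/\Delta)$ edges are colored $\alpha$ or $\beta$, so the total length of all maximal $\{\alpha,\cdot\}$- and $\{\beta,\cdot\}$-alternating paths is $O(m)$, and the three paths flipped per iteration each have expected length $O(m/\lambda)$. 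The analog of Lemma~\ref{lm:forloop:bipartite} shows that the final processing colors a constant fraction of $\Phi$: each activation flips a single $\{\alpha,\beta\}$-alternating path, which by Lemma~\ref{lem:low damage flips} damages at most a constant number of other popular u-fans, so after pruning at most $O(|\Phi|)$ edges are still colored. Thus one phase yields $\Omega(\lambda/\Delta)$ new colored edges in $\tilde O(m/\Delta)$ expected time, which is exactly what is needed.

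The main obstacle, relative to the bipartite case, is the extra bookkeeping during popularization: the $\{\delta,\beta\}$-paths from $v$ and from $w$ may coincide (when $v$ and $w$ already lie on the same maximal $\{\delta,\beta\}$-alternating path), in which case a single flip popularizes both leaves simultaneously, and the $\{\gamma,\alpha\}$-path from $u$ may share interior vertices with the $\{\delta,\beta\}$-paths. Both issues are neutralized by (i) choosing the swap-in colors via \textsc{Missing-Color}$_{\mathcal U}$ and Lemma~\ref{lem:fast U-avoid fan} so that they avoid $C_{\mathcal U}$ at every vertex touched (ensuring that interior palette changes cannot damage any $\g\in\mathcal U$), and (ii) performing the three flips sequentially with respect to the post-flip coloring, detecting and pruning collateral damage via the $O(1)$-time queries from \Cref{sec:data struc overview}. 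Once this is in place, the combinatorial and runtime arguments from \Cref{sec:bipartite} carry over with at most the constants adjusted ($6\Delta$ bad u-fans per popular u-fan rather than $4\Delta$, three paths flipped per iteration rather than two), and iterating the phase $O(\Delta)$ times with $O(\log n)$ parallel copies closes the proof.
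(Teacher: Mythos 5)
Your proposal is correct and follows essentially the same route as the paper: the paper's $\ColorUFans$ is exactly your phase structure, with $\PrimeUFans$ (random sampling of a u-fan, flipping three $\mathcal U$-avoiding $\{\alpha',\cdot\}$/$\{\beta',\cdot\}$-paths, failing on collisions with the popular set $\Phi$) followed by $\ActUFans$, iterated $\Delta/2$ times. The only mechanical difference is how the high-probability runtime is obtained: you use expected cost $\tilde O(m/\lambda)$ per iteration plus $O(\log n)$ parallel restarts per phase (as in the bipartite section), whereas the paper caps each iteration's cost at $O(m/\lambda)$ deterministically by aborting path traversals exceeding $128m/\lambda$ (counting them as failures via Markov) and then concentrates the number of iterations — both yield $O(m\log n)$ w.h.p.
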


In Sections~\ref{sec:algo build u-fans} and \ref{sec:algo}, we  prove Lemmas~\ref{lem:build u-fans} and \ref{lem:color u-fans} respectively.
Using these lemmas, we now show how to efficiently extend an edge coloring $\chi$ to the remaining uncolored edges.

\begin{lemma}\label{lem:main extend}
    Given a graph $G$ and a partial $(\Delta + 1)$-edge coloring $\chi$ of $G$ with $\lambda$ uncolored edges $U$, we can extend $\chi$ to the remaining uncolored edges in time $O((m + \Delta \lambda)\log^2 n)$ w.h.p.
\end{lemma}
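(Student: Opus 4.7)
The plan is to repeatedly invoke the two main subroutines (\Cref{lem:build u-fans} and \Cref{lem:color u-fans}) in tandem, driving down the number of uncolored edges by a constant fraction each time. Concretely, let $\lambda_1 = \lambda$ and $\lambda_i$ denote the number of uncolored edges at the start of iteration $i$. In iteration $i$, first call $\ConUFans$ on the current coloring and the current set of uncolored edges. By \Cref{lem:build u-fans}, in $O(m + \Delta \lambda_i)$ time we end up in one of two cases: either (a) $\chi$ has already been extended to $\Omega(\lambda_i)$ uncolored edges, or (b) we hold a separable collection $\U_i$ of $\Omega(\lambda_i)$ u-fans. In case (b), we follow up with a single call to $\ColorUFans(\U_i)$, which by \Cref{lem:color u-fans} extends $\chi$ to $\Omega(|\U_i|) = \Omega(\lambda_i)$ of the edges of $\U_i$ in $O(m \log n)$ time with high probability. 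Either way, the number of uncolored edges after iteration $i$ satisfies $\lambda_{i+1} \leq (1-c)\lambda_i$ for some absolute constant $c > 0$.

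Since $\lambda_1 \leq m \leq n^2$ and the $\lambda_i$'s decrease geometrically, the process terminates in $t = O(\log n)$ iterations once $\lambda_{t+1} = 0$. The total runtime decomposes as the sum of the $\ConUFans$ costs and the $\ColorUFans$ costs. For $\ConUFans$, since $\sum_{i=1}^{t} \lambda_i = O(\lambda)$ by the geometric decay and $t = O(\log n)$, the total cost is
\[
\sum_{i=1}^{t} O(m + \Delta \lambda_i) \;=\; O(m \log n + \Delta \lambda).
\]
For $\ColorUFans$, each of the at most $t = O(\log n)$ calls costs $O(m \log n)$ w.h.p., contributing $O(m \log^2 n)$ in total. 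Summing both gives
\[
O(m \log^2 n + \Delta \lambda) \;=\; O\!\bigl((m + \Delta \lambda)\log^2 n\bigr),
\]
which matches the statement.

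For the high-probability guarantee, each invocation of $\ColorUFans$ individually succeeds within its promised time bound with probability at least $1 - n^{-C}$ for an arbitrarily large constant $C$. Since there are only $O(\log n) = \mathrm{poly}(n)$ invocations across the whole procedure, a union bound yields that every invocation respects its runtime and correctness guarantees with high probability, and similarly for the hashmap-based data structures underlying Lemmas~\ref{lem:build u-fans} and \ref{lem:color u-fans} (as already conditioned on in \Cref{sec:data struc overview}).

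The only real conceptual point, as opposed to a routine combination, is verifying that the two lemmas can be chained without loss: $\ConUFans$ works starting from any partial $(\Delta+1)$-edge coloring with \emph{any} set $U$ of uncolored edges, and its case (b) output is already a separable collection, which is exactly what $\ColorUFans$ requires as input. Because of this clean interface, each iteration is a black-box combination of the two subroutines, and the overall argument is simply a geometric shrinkage bound with a union bound on top. The main obstacle in this proof is essentially nothing beyond this bookkeeping; the genuine technical work lies in establishing Lemmas~\ref{lem:build u-fans} and \ref{lem:color u-fans}, which are carried out in \Cref{sec:algo build u-fans,sec:algo}.
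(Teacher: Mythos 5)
Your proposal is correct and follows essentially the same route as the paper: alternate \Cref{lem:build u-fans} and \Cref{lem:color u-fans} to reduce the number of uncolored edges by a constant factor per round, run $O(\log n)$ rounds, and union-bound the high-probability guarantees. Your accounting of the $\Delta\lambda_i$ terms via geometric decay is in fact slightly tighter than the paper's, but both yield the stated $O((m+\Delta\lambda)\log^2 n)$ bound.
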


\begin{proof}
    Let $U$ denote  the set of edges that are uncolored by $\chi$.
    We can then apply \Cref{lem:build u-fans} to either extend $\chi$ to a constant fraction of the edges in $U$ or construct a separable collection of $\Omega(\lambda)$ u-fans $\mathcal U$ in $O(m + \Delta \lambda)$ time. In the second case, we can then apply \Cref{lem:color u-fans} to color $\Omega(\lambda)$ of the edges contained in these u-fans in $O(m \log n)$ time w.h.p. Thus, we can extend $\chi$ to a constant fraction of the uncolored edges in $O(m \log n + \Delta \lambda)$ time w.h.p. After repeating this process  $O(\log \lambda) \leq O(\log n)$ many times, no edges remain uncolored. Thus, we can extend the coloring $\chi$ to the entire graph in $O((m + \Delta \lambda)\log^2 n)$ time w.h.p. 
\end{proof}

\noindent
We now use this lemma to prove \Cref{thm:main-tech}.

\begin{proof}[Proof of \Cref{thm:main-tech}.]
We prove this by applying \Cref{lem:main extend} to the standard Euler partition framework \cite{gabow1985algorithms, sinnamon2019fast}. Given a graph $G$, we partition it into two edge-disjoint subgraphs $G_1$ and $G_2$ on the same vertex set such that $\Delta(G_i) \leq \lceil \Delta/2 \rceil$ for each $G_i$, where $\Delta(G_i)$ denotes the maximum degree of $G_i$. We then recursively compute a $(\Delta(G_i) + 1)$-edge coloring $\chi_i$ for each $G_i$. Combining $\chi_1$ and $\chi_2$, we obtain a $(\Delta + 3)$-edge coloring $\chi$ of $G$. We then uncolor the two smallest color classes in $\chi$, which contain $O(m / \Delta)$ edges, and apply \Cref{lem:main extend} to recolor all of the uncolored edges in $\chi$ using only $\Delta + 1$ colors in $O(m \log^2 n)$ time w.h.p.

To show that the total running time of the algorithm is $O(m \log^3n)$,
first note that the depth of the recursion tree is $O(\log \Delta)$.
Next, consider the $i^{th}$ level of the recursion tree, for an arbitrary $i = O(\log \Delta)$: we have $2^i$ edge-disjoint subgraphs $G_1,\dots,G_{2^i}$ such that $\Delta(G_j) \leq O(\Delta/2^i)$ and $\sum_{j = 1}^{2^i} |E(G_j)| = m$.
Since the total running time at recursion level $i$ is $O(m \log^2 n)$ and the depth of the recursion tree is $O(\log \Delta)$, it follows that the total running time is $O(m \log^3 n)$ w.h.p.
\end{proof}

\section{The Algorithm $\ConUFans$: Proof of \Cref{lem:build u-fans}}\label{sec:algo build u-fans}

As input, the algorithm $\ConUFans$ is given a graph $G$ and a partial $(\Delta + 1)$-edge coloring $\chi$ of $G$ with $\lambda$ uncolored edges. It begins by taking the $\lambda$ uncolored edges in $\chi$ and using them to construct a separable collection $\U$ of $\lambda$ u-edges in an obvious way, which we describe in \Cref{lem:create u-edges}.
It then uses two subroutines, $\PruneVFans$ and $\ReduceUEdges$, to reduce the number of u-edges in $\U$, either by coloring them or modifying $\chi$ to turn them into u-fans. More specifically, for each color $\alpha \in [\Delta + 1]$, the algorithm considers the collection $\E_\alpha(\U)$ of u-edges that are $\alpha$-primed and (1) calls the subroutine $\PruneVFans$ to ensure that the u-edges in $\E_\alpha(\U)$ have vertex-disjoint Vizing fans by either coloring the u-edges with overlapping Vizing fans or using them to create u-fans, and (2) calls the subroutine $\ReduceUEdges$ which either extends the coloring to the u-edges in $\E_\alpha(\U)$ or uses them to create u-fans. \Cref{alg:constructUfans} gives the pseudocode for this algorithm.

\begin{algorithm}[H]
    \SetAlgoLined
    \DontPrintSemicolon
    \SetKwRepeat{Do}{do}{while}
    \SetKwBlock{Loop}{repeat}{EndLoop}
    Construct a separable collection of $\lambda$ u-edges $\U$\;
    \For{\textup{\textbf{each}} $\alpha \in [\Delta + 1]$}{
        $\mathcal F \leftarrow \PruneVFans(\U, \alpha)$\;
        $\ReduceUEdges(\U, \alpha, \mathcal F)$\;
    }
    \Return{$\U$}
    \caption{$\ConUFans$}
    \label{alg:constructUfans}
\end{algorithm}

\medskip
\noindent \textbf{Organization of \Cref{sec:algo build u-fans}:} We first describe and analyze the subroutines $\PruneVFans$ and $\ReduceUEdges$ used by the algorithm $\ConUFans$ before proving \Cref{lem:build u-fans} in \Cref{sec:proof of L2}.

\subsection{The Subroutine $\PruneVFans$}
\label{sec:prune}

As input, the subroutine $\PruneVFans$ is given a graph $G$, a partial $(\Delta + 1)$-edge coloring $\chi$ of $G$, a color $\alpha$ and a separable collection $\U$ with $\lambda_\alpha$ $\alpha$-primed u-edges.
We let $\E(\U) \subseteq \U$ denote the set of u-edges in $\U$ and $\E_\alpha(\U) \subseteq \E(\U)$ denote the set of u-edges in $\U$ that are $\alpha$-primed.
The subroutine then modifies the coloring $\chi$ and collection $\U$ to ensure that the Vizing fans of the u-edges in $\E_\alpha(\U)$ are vertex-disjoint.

\medskip
\noindent \textbf{Algorithm Description:}
The subroutine $\PruneVFans$ scans through the u-edges in $\E_\alpha(\U)$ and constructs $\U$-avoiding Vizing fans at these u-edges (see \Cref{lem:fast U-avoid fan}), maintaining a vertex-disjoint subset $\mathcal F$ of these Vizing fans throughout the process. After constructing a Vizing fan $\F$ for a u-edge $\e \in \E_\alpha(\U)$, the subroutine checks if $\F$ intersects some other fan in $\mathcal F$. If not, it adds $\F$ to $\mathcal F$. Otherwise, it uses these intersecting Vizing fans to either extend the coloring $\chi$ or construct a u-fan, removing the corresponding Vizing fans and u-edges from $\mathcal F$ and $\U$ respectively. See \Cref{prune-fan} for an illustration.

\begin{figure}
	\centering
	\begin{tikzpicture}[thick,scale=1.2]
	\draw (0, 0) node(0)[circle, draw, fill=black!50,
inner sep=0pt, minimum width=6pt, label = $w$] {};

	\draw (-2, 2) node(1)[circle, draw, color=cyan, fill=black!50,
	inner sep=0pt, minimum width=10pt, label = $u_1$] {};
		
	\draw (-4, 0) node(2)[circle, draw, fill=black!50,
	inner sep=0pt, minimum width=6pt, label = $v_1$] {};
	
	\draw (-3, 0) node(3)[circle, draw, fill=black!50,
	inner sep=0pt, minimum width=6pt] {};
	
	\draw (-2, 0) node(4)[circle, draw, fill=black!50,
	inner sep=0pt, minimum width=6pt] {};
	
	\draw (-1, 0) node(5)[circle, draw, fill=black!50,
	inner sep=0pt, minimum width=6pt] {};
	
	\draw (2, 2) node(6)[circle, draw, color=cyan, fill=black!50,
	inner sep=0pt, minimum width=10pt, label=$u_2$] {};
	
	\draw (1, 0) node(7)[circle, draw, fill=black!50,
	inner sep=0pt, minimum width=6pt] {};
	
	\draw (2, 0) node(8)[circle, draw, fill=black!50,
	inner sep=0pt, minimum width=6pt] {};
	
	\draw (3, 0) node(9)[circle, draw, fill=black!50,
	inner sep=0pt, minimum width=6pt] {};
	\draw (4, 0) node(10)[circle, draw, fill=black!50,
	inner sep=0pt, minimum width=6pt, label = $v_2$] {};
	
	\draw [line width = 0.5mm, dashed] (1) to (2);
	\draw [line width = 0.5mm, color=pink] (1) to (3);
	\draw [line width = 0.5mm, color=magenta] (1) to (4);
	\draw [line width = 0.5mm, color=purple] (1) to (5);
	\draw [line width = 0.5mm, color=Plum] (1) to (0);
	
	\draw [line width = 0.5mm, dashed] (6) to (10);
	\draw [line width = 0.5mm, color=lime] (6) to (9);
	\draw [line width = 0.5mm, color=ForestGreen] (6) to (8);
	\draw [line width = 0.5mm, color=teal] (6) to (7);
	\draw [line width = 0.5mm, color=olive] (6) to (0);
	
        \draw[->, >={Triangle}, thick, line width = 0.9mm] (0, -1.5) to (0, -2.5);
	
	\draw (0, -6) node(30)[circle, draw, fill=black!50,
	inner sep=0pt, minimum width=6pt, label=$w$] {};
	
	\draw (-2, -4) node(11)[circle, draw, color=cyan, fill=black!50,
	inner sep=0pt, minimum width=10pt, label = $u_1$] {};
	
	\draw (-4, -6) node(12)[circle, draw, fill=black!50,
	inner sep=0pt, minimum width=6pt, label = $v_1$] {};
	
	\draw (-3, -6) node(13)[circle, draw, fill=black!50,
	inner sep=0pt, minimum width=6pt] {};
	
	\draw (-2, -6) node(14)[circle, draw, fill=black!50,
	inner sep=0pt, minimum width=6pt] {};
	
	\draw (-1, -6) node(15)[circle, draw, fill=black!50,
	inner sep=0pt, minimum width=6pt] {};
	
	\draw (2, -4) node(16)[circle, draw, color=cyan, fill=black!50,
	inner sep=0pt, minimum width=10pt, label=$u_2$] {};
	
	\draw (1, -6) node(17)[circle, draw, fill=black!50,
	inner sep=0pt, minimum width=6pt] {};
	
	\draw (2, -6) node(18)[circle, draw, fill=black!50,
	inner sep=0pt, minimum width=6pt] {};
	
	\draw (3, -6) node(19)[circle, draw, fill=black!50,
	inner sep=0pt, minimum width=6pt] {};
	\draw (4, -6) node(20)[circle, draw, fill=black!50,
	inner sep=0pt, minimum width=6pt, label = $v_2$] {};
	
	\draw [line width = 0.5mm, pink] (11) to (12);
	\draw [line width = 0.5mm, color=magenta] (11) to (13);
	\draw [line width = 0.5mm, color=purple] (11) to (14);
	\draw [line width = 0.5mm, color=Plum] (11) to (15);
	\draw [line width = 0.5mm, dashed] (11) to (30);
	
	\draw [line width = 0.5mm, lime] (16) to (20);
	\draw [line width = 0.5mm, color=ForestGreen] (16) to (19);
	\draw [line width = 0.5mm, color=teal] (16) to (18);
	\draw [line width = 0.5mm, color=olive] (16) to (17);
	\draw [line width = 0.5mm, dashed] (16) to (30);
	
\end{tikzpicture}
	\caption{In this picture, we look at the Vizing fan around an uncolored edge $(u_1, v_1)$, and it intersects with the Vizing fan of another edge $(u_2, v_2)$ currently residing in $\mathcal{F}$ at vertex $w$. Then, we can rotate both Vizing fans and pair these two uncolored edges as a u-fan.}\label{prune-fan}
\end{figure}
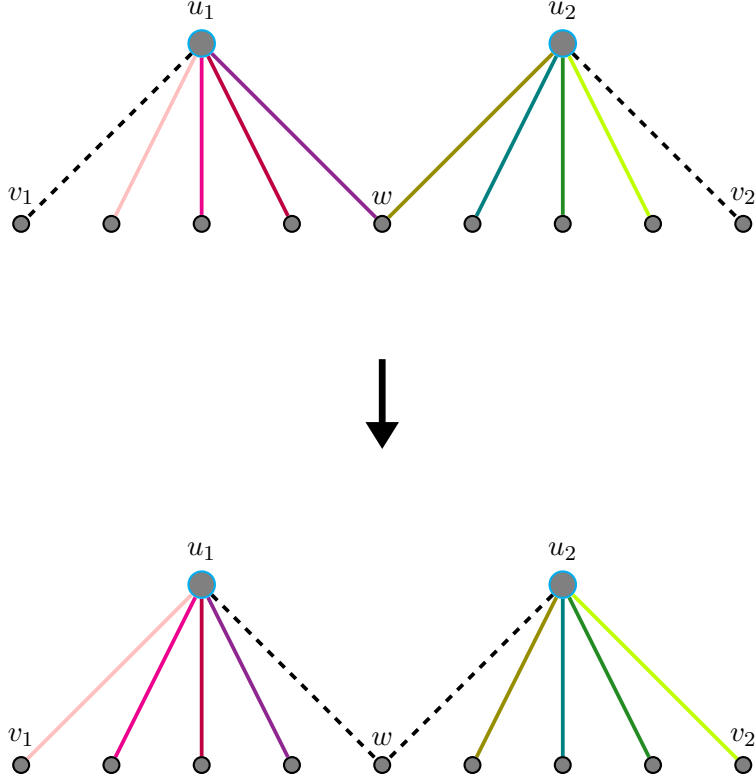

More formally, let $\e_1,\dots,\e_\ell$ denote the u-edges in $\E_\alpha(\U)$ at the time when the subroutine is called and let $\e_i = (u_i, v_i, \alpha)$. Initialize an empty set of Vizing fans $\mathcal F \leftarrow \varnothing$.
Then, for each $i\in [\ell]$, the subroutine constructs a $\U$-avoiding Vizing fan $\F_i$ of $\e_i$ and checks if any of the vertices in $\F_i$ (i.e.~any of the leaves or the center of $\F_i$) are contained in any of the Vizing fans in $\mathcal F$.
Note that, since $\U$ is separable and $u_1,\dots,u_\ell$ are the centers of $\alpha$-primed u-edges, these vertices are all distinct.
If $\F_i$ is vertex-disjoint from the Vizing fans in $\mathcal F$, we add $\F_i$ to $\mathcal F$. Otherwise, consider the following 3 cases. 
\begin{itemize}
    \item If $u_i$ is a leaf in $\F_j$ for some $\F_j \in \mathcal F$: Then we rotate the fan $\F_j$ so that $(u_i, u_j)$ is uncolored, set $\chi(u_i, u_j) \leftarrow \alpha$, remove $\F_j$ from $\mathcal F$ and remove $\e_i$ and $\e_j$ from $\E_\alpha(\U)$.
\end{itemize}
If this is not the case, then some leaf of $\F_i$ appears in a Vizing fan in $\mathcal F$.
Let $w$ be the first such leaf in $\F_i$.
\begin{itemize}
    \item If $w = u_j$ for some $\F_j \in \mathcal F$: Then we rotate the fan $\F_i$ so that $(u_i,w)$ is uncolored, set $\chi(u_i, w) \leftarrow \alpha$, remove $\F_j$ from $\mathcal F$ and remove $\e_i$ and $\e_j$ from $\E_\alpha(\U)$.
    \item If $w$ is a leaf in $\F_j \in \mathcal F$: Then we rotate the fans $\F_i$ and $\F_j$ so that $(u_i,w)$ and $(u_j,w)$ are uncolored, create a u-fan $\f = (w, u_i, u_j, \beta, \alpha)$ for an arbitrary $\beta \in \miss_\chi(w) \setminus C_{\U}(w)$, remove $\F_j$ from $\mathcal F$, add $\f$ to $\U$ and remove $\e_i$ and $\e_j$ from $\E_\alpha(\U)$.\footnote{Note that, since the vertices $u_1, \dots, u_\ell$ are all distinct, these cases are exhaustive.}
\end{itemize}
The subroutine then returns the set $\mathcal F$ of vertex-disjoint Vizing fans.

\subsubsection*{Analysis of $\PruneVFans$}

The following lemmas summarize the main properties of the subroutine $\PruneVFans$.

\begin{lemma}\label{lem:good Vfans}
    $\PruneVFans$ returns a set $\mathcal F$ of vertex-disjoint $\U$-avoiding Vizing fans for all of the u-edges in $\E_\alpha(\U)$.
\end{lemma}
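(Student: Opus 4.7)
The plan is to prove the lemma by induction on the iteration index $i$, maintaining the invariant that $\mathcal{F}$ is a collection of vertex-disjoint $\U$-avoiding Vizing fans, each being the fan of a distinct u-edge currently in $\E_\alpha(\U)$. The base case $\mathcal{F} = \varnothing$ is trivial, and by \Cref{lem:fast U-avoid fan} the fan $\F_i$ built in iteration $i$ is a $\U$-avoiding Vizing fan of $\e_i$.

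If $\F_i$ is vertex-disjoint from every fan currently in $\mathcal{F}$, I would simply add it and be done with the inductive step, since neither $\chi$ nor $\U$ changes. Otherwise, each of the three conflict cases performs some combination of rotating $\F_i$ and/or $\F_j$, assigning $\alpha$ to one edge, and in Case~3 inserting a new u-fan into $\U$, after which $\F_j$ is deleted from $\mathcal{F}$ and both $\e_i, \e_j$ are removed from $\E_\alpha(\U)$. Since $\F_j$ and $\e_j$ are discarded together and $\F_i$ was never inserted, the correspondence between $\mathcal{F}$ and $\E_\alpha(\U)$ is preserved; it remains to verify that the other fans $\F_k \in \mathcal{F} \setminus \{\F_j\}$ survive the modifications both as Vizing fans and as $\U$-avoiding ones.

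This last step will be the main obstacle. The key observation I plan to exploit is that every vertex whose incident edges or palette is altered lies in $\F_i \cup \F_j$, and all such vertices outside $\F_j$ are forced to lie outside $\mathcal{F}$ entirely: the center $u_i$ is not a leaf of any $\F_k$ because Case~1 failed, and is not a center of another fan because centers of $\alpha$-primed u-edges in the separable collection $\U$ are distinct by separability. Moreover, the choice of $w$ as the \emph{first} leaf of $\F_i$ appearing in any fan of $\mathcal{F}$ ensures that the earlier leaves $v_{i,1}, \dots, v_{i,k-1}$ also avoid $\mathcal{F}$. Since fans in $\mathcal{F}$ are vertex-disjoint from $\F_j$ by the inductive hypothesis, the rotations and the single edge recoloring touch only palettes of vertices outside $\mathcal{F} \setminus \{\F_j\}$, leaving the remaining fans structurally valid. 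The $\U$-avoiding property of each $\F_k$ is preserved for the same reason: removing u-edges from $\U$ only shrinks the sets $C_{\U}(\cdot)$, and the u-fan created in Case~3 contributes colors to $C_{\U}$ only at $w, u_i, u_j$, none of which lie in any remaining fan of $\mathcal{F}$. Applying the invariant after iteration $\ell$ yields the lemma.
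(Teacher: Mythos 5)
Your proposal is correct and follows essentially the same route as the paper's proof: an induction over the scan, using the fact that the only vertices whose palettes change lie in $\F_j$ or in the prefix of $\F_i$ up to the first intersecting vertex (hence outside $\mathcal F \setminus \{\F_j\}$), plus the observation that the u-fan created in the third case is vertex-disjoint from the surviving fans so $\U$-avoidance is preserved. No gaps.
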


\begin{proof}
    We can show by induction that, after the subroutine finishes scanning the u-edge $\e_i$, the set $\mathcal F$ consists of vertex-disjoint $\U$-avoiding Vizing fans for the u-edges in $\{\e_1,\dots,\e_i\} \cap \E_\alpha(\U)$.\footnote{Since we remove edges from $\E_\alpha(\U)$ throughout this process, $\{\e_1,\dots,\e_{i-1}\}$ might not be contained in $\E_\alpha(\U)$.}
    This is true trivially for $i = 1$. Now, suppose that this is true for $i-1$. If the $\U$-avoiding Vizing fan $\F_i$ of $\e_i$ does not intersect any of the Vizing fans in $\mathcal F$, then we add $\F_i$ to $\mathcal F$ and are done.
    
    Thus, we assume this is not the case. Let $x_0$ denote $u_i$ and $x_1,\dots,x_k$ denote the leaves of the Vizing fan $\F_i$ (in order). Let $x_p$ be the first vertex in the sequence $x_0,x_1,\dots,x_k$ which appears in some Vizing fan $\F_j \in \mathcal F$.
    It follows from our algorithm that we remove $\e_i$ and $\e_j$ from $\E_\alpha(\U)$ while only affecting the palettes of vertices in $\F_j$ and $x_0,\dots,x_p$ by rotating the fans $\F_i$ and $\F_j$. Since all of the Vizing fans in $\mathcal F$ were vertex-disjoint, changing the palettes of vertices in $\F_j$ cannot affect the palettes of vertices in any of the Vizing fans in $\mathcal F \setminus \{\F_j\}$, and hence they are still valid. Similarly, none of the vertices $x_0,\dots,x_p$ are contained in any of the Vizing fans in $\mathcal F \setminus \{\F_j\}$, so changing their palettes does not affect the validity of any other Vizing fans in $\mathcal F$. Thus, after removing $\F_j$ from $\mathcal F$, removing $\e_i$ and $\e_j$ from $\E_\alpha(\U)$ and making the changes to $\chi$, the set $\mathcal F$ consists of vertex-disjoint Vizing fans of the u-edges in $\{\e_1,\dots,\e_i\} \cap \E_\alpha(\U)$. To see why all of these Vizing fans also remain $\U$-avoiding, note that any u-fan $\f$ added to $\U$ is vertex-disjoint from these Vizing fans by an analogous argument.
\end{proof}

\begin{lemma}
    The subroutine $\PruneVFans$ maintains the invariant that $\U$ is separable.
\end{lemma}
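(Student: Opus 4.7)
The plan is to argue inductively over the iterations of $\PruneVFans$ that every modification to $\U$ preserves separability. The subroutine modifies $\U$ in only two ways: it removes u-edges $\e_i$ and possibly $\e_j$ from $\U$ (in all three cases of the case analysis), and in the third case it additionally inserts a new u-fan $\f = (w, u_i, u_j, \beta, \alpha)$ into $\U$. Removal of u-components is immediately safe, since both separability requirements---edge-disjointness and distinctness of the color multiset $C_{\U}(x)$ at each $x$---are monotone under restriction to sub-collections. So the crux of the argument is to verify that, in the third case, $(\U \setminus \{\e_i, \e_j\}) \cup \{\f\}$ is again separable.

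For edge-disjointness, I would argue that the two edges $(w, u_i)$ and $(w, u_j)$ of $\f$ do not appear in any u-component of $\U \setminus \{\e_i, \e_j\}$. Before the rotations of $\F_i$ and $\F_j$, the edge $(u_i, w)$ is either colored (and hence cannot lie in any u-component, which consist of uncolored edges only) or it is itself $\e_i$'s uncolored edge in the boundary case $w = v_i$, but then $\e_i$ has already been removed; the symmetric dichotomy handles $(u_j, w)$. To conclude that no other u-component in $\U \setminus \{\e_i, \e_j\}$ winds up using these two edges after the rotations, I would invoke \Cref{lem:Vfans are safe} for the $\U$-avoiding Vizing fans $\F_i$ and $\F_j$: it guarantees that neither rotation damages any u-component outside $\{\e_i, \e_j\}$, so the uncolored edges the surviving u-components occupy are unchanged.

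For color-distinctness at each vertex of $\f$, I would check the three relevant vertices in turn. At the center $w$, the color $\beta$ is chosen explicitly from $\miss_\chi(w) \setminus C_{\U}(w)$, so it avoids the entire set $C_{\U \setminus \{\e_i, \e_j\}}(w) \subseteq C_{\U}(w)$. At the leaf $u_i$, the color $\f$ contributes is $\alpha$, and by the separability of $\U$ before the modification the removed $\e_i$---being $\alpha$-primed with center $u_i$---was the unique u-component in $\U$ assigning $\alpha$ to $u_i$; hence no u-component in $\U \setminus \{\e_i, \e_j\}$ assigns $\alpha$ to $u_i$. The leaf $u_j$ is handled identically. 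At every other vertex $x$, $C_{(\U \setminus \{\e_i, \e_j\}) \cup \{\f\}}(x) = C_{\U \setminus \{\e_i, \e_j\}}(x)$, inheriting distinctness from the inductive hypothesis.

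The most delicate point I anticipate is ensuring that \Cref{lem:Vfans are safe} applies to $\F_j$ at the moment of the rotation, even though $\F_j$ was placed into $\mathcal F$ in an earlier iteration during which $\U$ has since been altered. This should follow from the invariant that the fans in $\mathcal F$ are kept vertex-disjoint, so intervening color changes and u-fan insertions touch only vertices inside already-removed fans and do not disturb either the palettes along $\F_j$ or the colors in $C_{\U}$ at its leaves---leaving $\F_j$ both a valid and $\U$-avoiding Vizing fan when it is finally rotated. Pinning this down formally---in particular, tracking that newly added u-fans can only contribute to $C_{\U}$ at vertices disjoint from $\F_j$---is the main technical hurdle of the argument.
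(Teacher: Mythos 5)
Your proof is correct and follows essentially the same route as the paper's: induction over the scanned u-edges, a case analysis on how $\chi$ and $\U$ are modified, \Cref{lem:Vfans are safe} to show the fan rotations damage nothing outside $\{\e_i,\e_j\}$, and the separability of $\U$ (uniqueness of the $\alpha$-assigner at $u_i$ and $u_j$) to justify the new colors. One small spot is glossed over: in the first two cases the subroutine does more than remove u-components --- it also sets $\chi(u_i,u_j)\leftarrow\alpha$, which deletes $\alpha$ from both palettes and could in principle damage a surviving u-component; the fix is exactly the uniqueness argument you already give for case 3 (only $\e_i$ assigns $\alpha$ to $u_i$, only $\e_j$ to $u_j$, and both are removed), so this is an unapplied ingredient rather than a missing one. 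Your closing concern about $\F_j$ remaining valid and $\U$-avoiding across iterations is precisely what the paper discharges via the inductive argument in \Cref{lem:good Vfans}, using the vertex-disjointness of the fans in $\mathcal F$, as you anticipated.
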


\begin{proof}
    We can show by induction that, after the subroutine finishes scanning the u-edge $\e_i$, the collection $\U$ is separable.
    This is true trivially for $i = 1$, since the subroutine does not change $\chi$ or $\U$ while scanning this u-edge. Now, suppose that this is true for $i-1$. It follows from the inductive argument in \Cref{lem:good Vfans} that the Vizing fans currently in $\mathcal F$ are all $\U$-avoiding.
    If we do not change $\chi$ or $\U$ while scanning $\e_i$, then we are done. Thus, we assume that this is not the case. The subroutine can modify $\chi$ and $\U$ in the following ways:
    \begin{itemize}
    \item Rotating colors around some $\F_j$ so that $(u_j, u_i)$ is uncolored, setting $\chi(u_j, u_i) \leftarrow \alpha$ and removing $\e_i$ and $\e_j$ from $\U$.
    \item Rotating colors around $\F_i$ so that $(u_i, u_j)$ is uncolored for some $\F_j \in \mathcal F$, setting $\chi(u_j, u_i) \leftarrow \alpha$ and removing $\e_i$ and $\e_j$ from $\U$.
    \item Rotating colors around $\F_i$ and some $\F_j \in \mathcal F$ so that $(u_i, u_j)$ is uncolored, removing $\e_i$ and $\e_j$ from $\U$ and adding a u-fan $\f$ to $\U$.
    \end{itemize}
    In the first case, it follows from \Cref{lem:Vfans are safe} (since the Vizing fans are all $\U$-avoiding) that rotating colors around $\F_j$ does not damage any u-component in $\U \setminus \{\e_j\}$. Furthermore, since $\U$ is separable, setting $\chi(u_j, u_i) \leftarrow \alpha$ does not damage any u-component in $\U \setminus \{\e_i, \e_j\}$. Since we remove $\e_i$ and $\e_j$ from $\U$, we ensure that $\U$ remains separable. The same argument extends to the second case.
    
    Finally, for the third case, it follows from \Cref{lem:Vfans are safe} that rotating colors around $\F_i$ and $\F_j$ does not damage an u-component in $\U \setminus \{\e_i, \e_j\}$, so after removing $\e_i$ and $\e_j$ from $\U$ we have that $\U$ is separable. Since we have that $\alpha \in C_{\U}(u_i) \cap C_{\U}(u_j)$ after removing $\e_i$ and $\e_j$ from $\U$, we can see that $\f$ is indeed a u-fan and that $\U \cup \{\f\}$ is separable.\footnote{Note that, before removing $\e_i$ and $\e_j$ from $\U$, $c_{\e_i}(u_i) = c_{\e_j}(u_j) = \alpha$.}
\end{proof}

\begin{lemma}\label{lem:potential prune}
    Each time $\PruneVFans$ modifies the coloring $\chi$, it removes at most $2$ u-edges from $\E_\alpha(\U)$ and either (1) extends the coloring $\chi$ to one more edge, or (2) adds a u-fan to $\U$.
\end{lemma}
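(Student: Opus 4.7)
The plan is to simply enumerate every point at which $\PruneVFans$ touches the coloring $\chi$, and to check the bookkeeping in each case. The only place the pseudocode calls an operation that can change $\chi$ is in the three branches triggered when the newly-built Vizing fan $\F_i$ for $\e_i = (u_i, v_i, \alpha)$ meets some $\F_j \in \mathcal{F}$; if no collision occurs, $\F_i$ is merely inserted into $\mathcal{F}$, leaving $\chi$ (and $\U$) untouched. So I would handle those three cases in order, and in each one count (a) the net change in the number of colored edges and (b) the number of u-edges removed from $\E_\alpha(\U)$.

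The key arithmetic lemma to invoke in each case is that rotating a Vizing fan preserves the number of uncolored edges incident on its center: it just moves the unique uncolored edge from the first leaf to the chosen leaf. Hence, in Case~1 (where $u_i$ is a leaf of $\F_j$), rotating $\F_j$ turns $(u_j, v_j)$ into a colored edge while $(u_j, u_i)$ becomes uncolored; then setting $\chi(u_j, u_i) \leftarrow \alpha$ colors that edge. Net effect: one additional colored edge. Both $\e_i$ and $\e_j$ are then removed from $\E_\alpha(\U)$---$\e_j$ because its uncolored edge $(u_j, v_j)$ is now colored, and $\e_i$ because $\alpha$ is no longer missing at $u_i$. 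Case~2 (where the first intersecting leaf $w$ of $\F_i$ equals some $u_j$) is symmetric: rotating $\F_i$ shifts the uncolored edge from $(u_i, v_i)$ to $(u_i, u_j)$, and then $\chi(u_i, u_j) \leftarrow \alpha$ colors it; again $\e_i$ is invalidated because $\alpha$ is no longer missing at $u_i$, and $\e_j$ because $\alpha$ is no longer missing at $u_j$. So Cases~1 and~2 both yield outcome~(1) of the lemma with exactly two u-edges removed.

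Case~3 is the one that produces outcome~(2). Here we rotate both $\F_i$ and $\F_j$ so that the uncolored edges shift to $(u_i, w)$ and $(u_j, w)$ respectively, but we never assign a color to any edge---so the total number of colored edges is unchanged. After the rotations, $\alpha$ is missing at both $u_i$ and $u_j$ (since we only displaced each center's uncolored edge to a different leaf), and choosing $\beta \in \miss_\chi(w) \setminus C_{\U}(w)$ (which exists by \Cref{claim:missing color}) witnesses that $\f = (w, u_i, u_j, \beta, \alpha)$ is a legitimate u-fan to add to $\U$. Meanwhile, $\e_i$ and $\e_j$ are both invalidated because their underlying uncolored edges $(u_i, v_i)$ and $(u_j, v_j)$ have been colored by the rotations, so they are removed from $\E_\alpha(\U)$.

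The only subtlety I anticipate having to spell out carefully is the reason that in each case \emph{exactly} these two u-edges, and no more, need to be removed from $\E_\alpha(\U)$: other $\alpha$-primed u-edges $\e_k = (u_k, v_k, \alpha) \in \U$ are unaffected because the vertices $u_1, \dots, u_\ell$ are pairwise distinct (as centers of $\alpha$-primed u-edges in a separable collection), the rotations only alter palettes of vertices appearing inside $\F_i$ or $\F_j$, and these fans are $\U$-avoiding so the rotation step itself does not strip $\alpha$ from any other leaf's palette. That argument is essentially the same as the one already used in the analysis of \Cref{lem:Vfans are safe} and can be quoted, so no genuinely new difficulty arises.
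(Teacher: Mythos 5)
Your proof is correct and takes the same route as the paper, which simply states that the lemma ``follows immediately from the description of the subroutine''; your case analysis is exactly the explicit version of that observation, correctly accounting in each of the three collision cases for the two removed u-edges and for whether the net effect is one newly colored edge or one new u-fan. The extra care about not damaging other $\alpha$-primed u-edges is not needed for this particular lemma (the paper handles that in the separability lemma via $\U$-avoidance, as you note), but it does no harm.
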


\begin{proof}
    This follows immediately from the description of the subroutine.
\end{proof}

\noindent The following lemma shows that this subroutine can be implemented efficiently.

\begin{lemma} \label{lem:prune}
    We can implement $\PruneVFans$ to run in $O(\Delta \lambda_\alpha)$ time.
\end{lemma}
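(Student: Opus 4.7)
The plan is to show that the subroutine spends $O(\Delta)$ work per u-edge in $\E_\alpha(\U)$, which gives the claimed total of $O(\Delta \lambda_\alpha)$.

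First, I would account for the cost of constructing the $\U$-avoiding Vizing fan $\F_i$ for each $\e_i = (u_i,v_i,\alpha)$. By \Cref{lem:fast U-avoid fan} this takes $O(\Delta)$ time. Each such fan has size $O(\Delta)$. Rotating a Vizing fan or applying a constant-size update to $\chi$ also takes $O(\Delta)$ time, and each $\textsc{Insert}_{\U}$, $\textsc{Delete}_{\U}$, $\textsc{Find-Component}_{\U}$, $\textsc{Missing-Color}_{\U}$ query is $O(1)$ by the interface of \Cref{sec:data struc overview}.

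The remaining issue is how to quickly decide whether $\F_i$ intersects some fan currently in $\mathcal F$, and if so to identify which one. My plan is to maintain an auxiliary array $A:V\to\mathcal F\cup\{\bot\}$ which, for each vertex $x$, stores the (unique) Vizing fan in $\mathcal F$ containing $x$, or $\bot$ if no such fan exists. Such an array is consistent because the fans in $\mathcal F$ are vertex-disjoint by \Cref{lem:good Vfans}. Whenever a fan $\F$ is added to $\mathcal F$, we walk through its center and leaves and set $A[x]\leftarrow \F$ for each vertex $x\in\F$ in $O(|\F|)=O(\Delta)$ time; when $\F$ is removed, we clear these entries in the same time. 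To test intersection, we traverse the $O(\Delta)$ vertices of $\F_i$ in order and look them up in $A$, taking $O(\Delta)$ time and locating the first colliding vertex (needed by the algorithm to decide which case applies).

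Putting the pieces together: each scanned u-edge $\e_i$ contributes $O(\Delta)$ for fan construction, $O(\Delta)$ for the intersection test, and $O(\Delta)$ for either inserting $\F_i$ into $\mathcal F$ (updating $A$) or else performing one or two fan rotations, constant-many $\chi$ updates, and a constant number of $\U$-operations. A fan $\F_j$ already in $\mathcal F$ is removed at most once over the entire run, and its removal costs $O(\Delta)$ to clear $A$; the removal can be charged to the u-edge $\e_i$ whose processing triggered it. Summing over the $\lambda_\alpha$ u-edges in $\E_\alpha(\U)$ yields the $O(\Delta\lambda_\alpha)$ bound. I expect no real obstacle here beyond making the bookkeeping explicit; the only subtle point is confirming that clearing $A$ on removal can be charged in this amortized fashion, which is immediate because each removal is paired with a distinct triggering u-edge.
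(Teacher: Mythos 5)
Your proposal is correct and follows essentially the same argument as the paper: the paper likewise constructs each $\U$-avoiding fan in $O(\Delta)$ time via \Cref{lem:fast U-avoid fan} and maintains the set of vertices covered by fans in $\mathcal F$ with pointers back to their fans, so that the intersection test and the identification of the first colliding vertex take $O(\Delta)$ per u-edge. Your extra bookkeeping about charging removals to the triggering u-edge is a harmless elaboration of the same accounting.
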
 
\begin{proof}
    By \Cref{lem:fast U-avoid fan}, we can construct each of the $\U$-avoiding Vizing fans $\F_i$ in $O(\Delta)$ time.
    As we scan through the u-edges $\e_1,\dots,\e_\ell$, we can maintain the set of vertices $S$ that are contained in the Vizing fans in $\mathcal F$, along with pointers to the corresponding Vizing fan. This allows us to either verify that $\F_i$ is vertex-disjoint from the Vizing fans in $\mathcal F$ or to find the first vertex in $\F_i$ that is contained in one of these fans in $O(\Delta)$ time. We can then appropriately update the coloring $\chi$ and the collection $\U$ in $O(\Delta)$ time using the data structures outlined in \Cref{sec:data struc overview}. It follows that the entire subroutine takes $O(\Delta \lambda_\alpha)$ time.
\end{proof}

\subsection{The Subroutine $\ReduceUEdges$}\label{sec:reduce}

As input, the subroutine $\ReduceUEdges$ is given a graph $G$, a partial $(\Delta + 1)$-edge coloring $\chi$ of $G$, a color $\alpha$, a separable collection $\U$ with $\lambda_\alpha$ $\alpha$-primed u-edges and a set $\mathcal F$ of vertex-disjoint $\U$-avoiding Vizing fans for all of the u-edges in $\E_\alpha(\U)$. Let $m_{\alpha}$ denote the number of edges with color $\alpha$ in the input coloring $\chi$. For each u-edge $\e \in \E_\alpha(\U)$, let $\F_{\e}$ denote its Vizing fan in $\mathcal F$. Similarly, for each $\F \in \mathcal F$, let $\e_{\F}$ denote its corresponding u-edge in $\E_{\alpha}(\U)$.
The subroutine then either extends the coloring to the u-edges in $\E_\alpha(\U)$ or uses them to construct u-fans.

\medskip
\noindent \textbf{Algorithm Description:}
The subroutine $\ReduceUEdges$ begins by activating all of the trivial Vizing fans in $\mathcal F$ and then removing them from $\mathcal F$. It then constructs Vizing chains at each of the remaining Vizing fans in $\mathcal F$ and explores the alternating paths $\VizingP(\F)$ `in parallel' (meaning, one edge at a time across all these paths). The subroutine continues this process until it either (1) reaches the end of one of these paths, or (2) two of these paths intersect.
In the first case, it identifies the Vizing fan $\F \in \mathcal F$ whose Vizing chain is fully explored and calls $\Vizing(\F)$ before removing $\F$ from $\mathcal F$. In the second case, it identifies the Vizing fans $\F, \F' \in \mathcal F$ corresponding to the intersecting Vizing chains and either extends the coloring $\chi$ to one more uncolored edge or creates a u-fan by shifting uncolored edges down these Vizing chains, before removing $\F$ and $\F'$ from $\mathcal F$. It repeats this process until at least half of the Vizing fans have been removed from $\mathcal F$.

More formally, the subroutine begins by scanning through all of the trivial Vizing fans $\F \in \mathcal F$ and calling $\Vizing(\F)$, removing the $\F$ from $\mathcal F$ and $\e_{\F}$ from $\U$, and removing any other u-component damaged by calling $\Vizing(\F)$ from $\U$ (see \Cref{lem:Vfans are safe}).
Let $L \leftarrow 0$, $S \leftarrow \varnothing$ and $P_{\F} \leftarrow \varnothing$ for all $\F \in \mathcal F$.
The subroutine then proceeds in \emph{rounds} which consist of updating each of the paths $P_{\F}$ while maintaining the following invariant.
\begin{invariant}\label{invaraint:paths}
After updating a path $P_{\F}$, the following hold:
    \begin{enumerate}
    \item For each $\F \in \mathcal F$, $P_{\F}$ is the length $|P_{\F}|$ prefix of the alternating path $\VizingP(\F)$, which we denote by $\VizingP(\F)_{\leq |P_{\F}|}$.\label{inv:1}
    \item The prefix paths $\{P_{\F}\}_{\F \in \mathcal F}$ are all edge-disjoint.\label{inv:2}
    \item The set $S$ is the union of all the edges in the prefix paths $\{P_{\F}\}_{\F \in \mathcal F}$.\label{inv:3}
    \item The collection $\U$ is separable and $|\mathcal F| = |\mathcal E_\alpha(\U)|$.\label{inv:4}
    \item The Vizing fans in $\mathcal F$ are $\U$-avoiding and vertex-disjoint.\label{inv:5}
\end{enumerate}
\end{invariant}
\noindent The prefix paths $\{P_{\F}\}_{\F \in \mathcal F}$ maintained by the algorithm also satisfy the following invariant.
\begin{invariant}\label{invaraint:iterations}
    At the start of each round, we have that $|P_{\F}| = L$ for all $\F \in \mathcal F$.
\end{invariant}

\noindent
Immediately after starting a round, the subroutine increases the value of $L$ by $1$, invalidating \Cref{invaraint:iterations}. To restore this invariant,
we need to update each of the prefix paths $P_{\F}$, which in turn may require us to update the other objects maintained by the subroutine to ensure that the other conditions of \Cref{invaraint:paths} are satisfied. The subroutine does this by scanning through each $\F \in \mathcal F$ and calling $\UpdatePath(\F)$, which we describe formally in \Cref{alg:update path}. The subroutine performs these rounds until $|\mathcal F| \leq \lambda_\alpha/2$, at which point it terminates.

\begin{algorithm}
    \SetAlgoLined
    \DontPrintSemicolon
    \SetKwRepeat{Do}{do}{while}
    \SetKwBlock{Loop}{repeat}{EndLoop}
    $P_{\F} \leftarrow \VizingP(\F)_{\leq |P_{\F}| + 1}$\;
    Let $(x,y)$ be the $L^{th}$ edge in $P_{\F}$ and $y$ be an endpoint of $P_{\F}$\;
    \If{$(x,y) \in S$}{
        Let $\F' \in \mathcal F$ be the Vizing fan such that $(x,y) \in P_{\F'}$\;
        \If{$(x,y)$ appears in the same orientation in $P_{\F}$ and $P_{\F'}$\label{line:case 1}}{
            Let $(z, x)$ and $(z', x)$ be the edges appearing before $(x,y)$ in $P_{\F}$ and $P_{\F'}$ respectively\;
            $\beta \leftarrow \chi(z, x)$\;
            $\chi(z, x) \leftarrow \bot$ and $\chi(z', x) \leftarrow \bot$\;
            $\Vizing(\F)$\;
            $\Vizing(\F')$\;
            Remove $\F$ and $\F'$ from $\mathcal F$ and $\e_{\F}$ and $\e_{\F'}$ from $\U$\;
            Add the u-fan $\f = (x, z, z', \beta, \alpha)$ to $\U$\label{lin:12}\;
        }
        \Else{ 
            \label{line:case 2}
            $\chi(x,y) \leftarrow \bot$\label{lin:14}\;
            $\Vizing(\F)$\;
            $\Vizing(\F')$\label{lin:16}\;
            Remove $\F$ and $\F'$ from $\mathcal F$ and $\e_{\F}$ and $\e_{\F'}$ from $\U$\;
        }
        $S \leftarrow S \setminus (P_{\F} \cup P_{\F'})$\;
        \Return
    }
    $S \leftarrow S \cup \{(x,y)\}$\;
    \If{$P_{\F}$ is maximal \label{line:case 3}}{
        $\Vizing(\F)$\label{lin:22}\;
        Remove $\F$ from $\mathcal F$ and $\e_{\F}$ from $\U$\;
        $S \leftarrow S \setminus P_{\F}$\;
        \If{there exists $\g \in \U$ with $c_{\g}(y) = \chi(x,y)$}{
            Remove $\g$ from $\U$\label{line:g} (see Lemma~\ref{lem:Vfans are safe})\;
        }
        \If{there exists $\F' \in \mathcal F$ such that $y \in \F$}{
            Remove $\F'$ from $\mathcal F$ and $\e_{\F'}$ from $\U$ \textit{\texttt{// $\e_{F'}$ might also get removed in \Cref{line:g}}}\label{line:g to F}\;
            $S \leftarrow S \setminus P_{\F'}$\label{lin:29}\;
        }
    }
    \caption{$\UpdatePath(\F)$}
    \label{alg:update path}
\end{algorithm}

\subsubsection*{Analysis of $\ReduceUEdges$}

The following lemmas summarize the main properties of the subroutine $\ReduceUEdges$.

\begin{lemma}\label{lem:inv 1 proof}
    The subroutine $\ReduceUEdges$ satisfies \Cref{invaraint:paths}.
\end{lemma}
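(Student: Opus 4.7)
The plan is to proceed by induction on the operations performed by $\ReduceUEdges$, showing that both the initialization phase and each call to $\UpdatePath(\F)$ preserves all five conditions. For the base case, I would first argue that after the initial sweep that activates every trivial Vizing fan $\F \in \mathcal F$, the invariant holds with all $P_{\F} = \varnothing$ and $S = \varnothing$: conditions \ref{inv:1}--\ref{inv:3} hold vacuously, condition \ref{inv:4} follows because $\U$ was separable on input, each activation of a trivial fan via \Cref{lem:Vfans are safe} damages at most one additional u-component (which we remove), and we remove $\e_{\F}$ and $\F$ from $\U$ and $\mathcal F$ in lockstep. Condition \ref{inv:5} is then preserved because the Vizing fans $\mathcal F$ produced by $\PruneVFans$ were vertex-disjoint and $\U$-avoiding, and removing fans cannot break these properties (the remaining fans are still $\U$-avoiding since the only u-components removed from $\U$ are either $\e_{\F}$'s for activated $\F$'s or u-components vertex-disjoint from the surviving fans, by the vertex-disjointness assumption).

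For the inductive step, I would analyze each branch of $\UpdatePath(\F)$ separately. Condition \ref{inv:1} is immediate from the assignment $P_{\F} \leftarrow \VizingP(\F)_{\le |P_{\F}|+1}$ together with the fact that $\VizingP(\F)$ itself depends on $\chi$; so the subtle point is that whenever we modify $\chi$, we also \emph{remove} $\F$ (and $\F'$) from $\mathcal F$, so no ``stale'' path indexed by a surviving fan is corrupted. Condition \ref{inv:2} is maintained because we only extend $P_{\F}$ by an edge $(x,y) \notin S$, and in all branches that modify $\chi$ we immediately delete $P_{\F}$ (and possibly $P_{\F'}$) from $S$ and remove the corresponding fans. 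Condition \ref{inv:3} is maintained by direct bookkeeping: we add $(x,y)$ to $S$ when we extend the path, and remove $P_{\F}$ from $S$ exactly when we remove $\F$ from $\mathcal F$.

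The main obstacle, and therefore the bulk of the work, is verifying conditions \ref{inv:4} and \ref{inv:5} in the three modification branches (lines \ref{lin:12}, \ref{lin:14}--\ref{lin:16}, and \ref{lin:22}). Here I would use \Cref{lem:Vfans are safe} as the central tool: since every fan currently in $\mathcal F$ is $\U$-avoiding by the inductive hypothesis, each call to $\Vizing(\F)$ damages at most one u-component in $\U \setminus \{\e_{\F}\}$, and we can identify and delete it via the $O(1)$ query (this is exactly what line \ref{line:g} handles in the maximal-path branch). For the same-orientation case (line \ref{line:case 1}), I need to verify that after uncoloring $(z,x)$ and $(z',x)$, calling $\Vizing$ on each fan, and appending the u-fan $\f = (x, z, z', \beta, \alpha)$, the resulting collection is separable: this follows because the two path-prefixes were edge-disjoint and ended at the same common vertex $x$ in the same orientation (so $(z,x)$ and $(z',x)$ both had color $\beta = \chi(z,x) = \chi(z',x)$ before uncoloring, giving us $\beta \in \miss_\chi(z) \cap \miss_\chi(z')$ after the flips), and because the two fans we remove account for every u-component potentially damaged by the two $\Vizing$ calls. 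For the opposite-orientation case (line \ref{line:case 2}), uncoloring the shared edge $(x,y)$ before the two $\Vizing$ calls ensures each call's alternating path is truly maximal and terminates at $x$ or $y$, so the two color extensions do not interfere. Finally, for the maximal-path branch, the endpoint $y$ may lie on another fan $\F'$ (line \ref{line:g to F}), and I would argue that removing this $\F'$ together with $\F$ suffices to restore vertex-disjointness of $\mathcal F$ and separability of $\U$, completing the induction.
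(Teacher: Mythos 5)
Your overall strategy (induction over the initialization phase and each call to $\UpdatePath$, a case analysis on the three branches, with \Cref{lem:Vfans are safe} as the workhorse) matches the paper's proof, but two of the steps you sketch do not go through as stated.

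First, your justification of the same-orientation branch (\Cref{line:case 1}) rests on a false claim. You assert that $(z,x)$ and $(z',x)$ ``both had color $\beta=\chi(z,x)=\chi(z',x)$''; but these are two distinct edges incident on the same vertex $x$ in a proper partial coloring, so they cannot share a color. In fact the two chains are $\{\alpha,c\}$- and $\{\alpha,c'\}$-alternating for generally different second colors, the shared edge $(x,y)$ is the one colored $\alpha$, and $\chi(z,x)$, $\chi(z',x)$ are the two distinct second colors. You also have the roles of the two colors in the new u-fan reversed: $\f=(x,z,z',\beta,\alpha)$ requires $\beta\in\miss_\chi(x)$ (true because uncoloring $(z,x)$ frees $\beta=\chi(z,x)$ at $x$) and $\alpha\in\miss_\chi(z)\cap\miss_\chi(z')$ (true because each truncated chain, once flipped, terminates at $z$ resp.\ $z'$ with its final $\alpha$-edge recolored). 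As written, your argument does not establish that the created object is a u-fan.

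Second, the real content of Conditions~\ref{inv:1} and \ref{inv:5} concerns the fans that \emph{survive} the call, and you do not argue for them. Flipping an activated chain could a priori recolor an edge of a surviving fan $\F'$ (invalidating it) or alter the palette of one of its vertices (breaking $\U$-avoidance), and the u-fan added in \Cref{lin:12} could a priori assign to a shared vertex the same color a surviving fan assigns there. The missing arguments are: (i) an activated $\{\alpha,\cdot\}$-chain that does not end at $\F'$ cannot contain any edge of $\F'$, because the center of $\F'$ misses $\alpha$ and the chain would be forced to terminate there; any chain that does end at $\F'$ triggers the removal of $\F'$ (\Cref{line:g to F}); and (ii) the colors the new u-fan assigns to its vertices were unavailable at those vertices before the call, hence cannot coincide with the (missing) colors a surviving $\U$-avoiding fan assigns to them. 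These same disjointness facts are also what guarantee that $P_{\F'}$ remains a prefix of the recomputed $\VizingP(\F')$, so Condition~\ref{inv:1} is not ``immediate'' as you claim; it is exactly where the work lies.
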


\begin{proof}
    We show that, as long as \Cref{invaraint:paths} is satisfied, calling $\UpdatePath(\F)$ for some $\F \in \mathcal F$ maintains the invariant. We first note that \Cref{invaraint:paths} is trivially satisfied immediately after initializing $L$, $S$ and $\{P_{\F}\}_{\F \in \mathcal F}$.
    
    After activating and removing all of the trivial Vizing fans in $\mathcal F$, Conditions~\ref{inv:1}-\ref{inv:3} are clearly still satisfied (since the paths are all empty). For Condition~\ref{inv:4}, note that the subroutine removes any damaged u-components from $\U$, removes $\e$ from $\E_{\alpha}(\U)$ if it removes $\F_{\e}$ from $\mathcal F$, and that activating some trivial $\F_{\e} \in \mathcal F$ cannot damage any $\e' \in \E_{\alpha}(\U) \setminus \{\e\}$ since their Vizing fans are vertex-disjoint. For Condition~\ref{inv:5}, note that the subroutine does not add any u-component to $\U$, so the vertex-disjoint Vizing fans remain $\U$-avoiding. This establishes the base case of the induction.

    Now, assume that \Cref{invaraint:paths} is satisfied and suppose that we call $\UpdatePath(\F)$ for some $\F \in \mathcal F$. We now argue that each condition of \Cref{invaraint:paths} is satisfied after handling this call.

    \medskip
    \noindent \textit{Conditions~\ref{inv:2} and \ref{inv:3}:} After extending $P_{\F}$ by one more edge, we check if there is some other path $P_{\F'}$ that intersects the updated path $P_{\F}$ at this new edge. If so, we remove both $\F$ and $\F'$ from $\mathcal F$, ensuring that the remaining paths are edge-disjoint. Otherwise, the paths are all edge-disjoint. It's straightforward to verify that $S$ is updated correctly in each case.
    
    \medskip
    \noindent \textit{Condition~\ref{inv:4}:} We first show that the collection $\U$ remains separable. If this call to $\UpdatePath$ does not change $\chi$ or $\U$, then clearly $\U$ remains separable. 
    Note that the Vizing fans in $\mathcal F$ are all $\U$-avoiding and recall \Cref{lem:Vfans are safe}.
    We now consider the following three cases.
	\begin{enumerate}[(1),leftmargin=*]
		\item If we enter the \textbf{if} statement on \Cref{line:case 1}, then calling $\Vizing(\F)$ and $\Vizing(\F')$ after uncoloring edges on their Vizing chains does not damage any u-component in $\U$ apart from $\e_{\F}$ and $\e_{\F'}$, which are both removed from $\U$. This is because shifting colors around the Vizing fans $\F$ and $\F'$ does not damage any u-components apart from $\e_{\F}$ and $\e_{\F'}$ (see \Cref{lem:Vfans are safe}) and truncating an alternating path before flipping it ensures that we do not remove any colors from the palettes of the vertices on that path. Finally, note that the u-fan $\f$ that we add to $\U$ only uses colors that were previously unavailable. See the example \Cref{pair-same} for an illustration.
		\begin{figure}
			\centering
			\begin{tikzpicture}[thick,scale=0.9]
	\draw (0, 3) node(1)[circle, draw, color=cyan, fill=black!50,
	inner sep=0pt, minimum width=10pt, label = $u_1$] {};
	
	\draw (-2, 1) node(2)[circle, draw, fill=black!50,
	inner sep=0pt, minimum width=6pt, label = 180:{$v_1$}] {};
	
	\draw (-1, 1) node(3)[circle, draw, fill=black!50,
	inner sep=0pt, minimum width=6pt] {};
	
	\draw (0, 1) node(4)[circle, draw, fill=black!50,
	inner sep=0pt, minimum width=6pt] {};
	
	\draw (1, 1) node(5)[circle, draw, fill=black!50,
	inner sep=0pt, minimum width=6pt] {};
	
	\draw (3, 1) node(6)[circle, draw, fill=black!50,
	inner sep=0pt, minimum width=6pt] {};
	
	\draw (5, 1) node(7)[circle, draw, fill=black!50,
	inner sep=0pt, minimum width=6pt] {};
	
	\draw (7, 1) node(8)[circle, draw, fill=black!50,
	inner sep=0pt, minimum width=6pt, label=$z$] {};
	
	\draw (9, 0) node(9)[circle, draw, fill=black!50,
	inner sep=0pt, minimum width=6pt, label=$x$] {};
	\draw (11, 0) node(10)[circle, draw, fill=black!50,
	inner sep=0pt, minimum width=6pt, label=$y$] {};
	
	\draw (0, -3) node(11)[circle, draw, color=cyan, fill=black!50,
	inner sep=0pt, minimum width=10pt, label = -90:{$u_2$}] {};
	
	\draw (-2, -1) node(12)[circle, draw, fill=black!50,
	inner sep=0pt, minimum width=6pt, label = 180:{$v_2$}] {};
	
	\draw (-1, -1) node(13)[circle, draw, fill=black!50,
	inner sep=0pt, minimum width=6pt] {};
	
	\draw (0, -1) node(14)[circle, draw, fill=black!50,
	inner sep=0pt, minimum width=6pt] {};
	
	\draw (1, -1) node(15)[circle, draw, fill=black!50,
	inner sep=0pt, minimum width=6pt] {};
	
	\draw (3, -1) node(16)[circle, draw, fill=black!50,
	inner sep=0pt, minimum width=6pt] {};
	
	\draw (5, -1) node(17)[circle, draw, fill=black!50,
	inner sep=0pt, minimum width=6pt] {};
	
	\draw (7, -1) node(18)[circle, draw, fill=black!50,
	inner sep=0pt, minimum width=6pt, label=-90:{$z'$}] {};
	
	\draw [line width = 0.5mm, dashed] (1) to (2);
	\draw [line width = 0.5mm, pink] (1) to (3);
	\draw [line width = 0.5mm, magenta] (1) to (4);
	\draw [line width = 0.5mm, purple] (1) to (5);
	\draw [line width = 0.5mm, cyan] (5) to node[below] {$\alpha$} (6);
	\draw [line width = 0.5mm, purple] (6) to (7);
	\draw [line width = 0.5mm, cyan] (7) to node[below] {$\alpha$} (8);
	\draw [line width = 0.5mm, purple] (8) to (9);
	\draw [line width = 0.5mm, cyan] (9) to node[below] {$\alpha$} (10);
	
	\draw [line width = 0.5mm, dashed] (11) to (12);
	\draw [line width = 0.5mm, lime] (11) to (13);
	\draw [line width = 0.5mm, teal] (11) to (14);
	\draw [line width = 0.5mm, olive] (11) to (15);
	\draw [line width = 0.5mm, cyan] (15) to node[below] {$\alpha$} (16);
	\draw [line width = 0.5mm, olive] (16) to (17);
	\draw [line width = 0.5mm, cyan] (17) to node[below] {$\alpha$} (18);
	\draw [line width = 0.5mm, olive] (18) to (9);
	
	\draw [gray!50] plot [smooth cycle] coordinates {(0, 4) (-3, 1)  (-0.5, 0) (2, 1)};
	\node at (0, 4.5) {Vizing fan $\F$ around $u_1$};
	
	\draw [gray!50] plot [smooth cycle] coordinates {(0, -4) (-0.5, -3) (0.5, -1) (1.5, -0.5) (7, -0.5) (9, 0.8) (11, 0.8) (12, 0) (11, -0.5) (9, -0.5) (7, -1.8) (1.8, -1.5)};
	\node at (7, -2.5) {A prefix of the Vizing chain $P_{\F'}$ starting at $u_2$};
	
        \draw[->, >={Triangle}, thick, line width = 0.9mm] (4, -4.5) to (4, -5.5);
	
	\draw (0, -7) node(21)[circle, draw, fill=black!50,
	inner sep=0pt, minimum width=6pt, label = $u_1$] {};
	
	\draw (-2, -9) node(22)[circle, draw, fill=black!50,
	inner sep=0pt, minimum width=6pt, label = 180:{$v_1$}] {};
	
	\draw (-1, -9) node(23)[circle, draw, fill=black!50,
	inner sep=0pt, minimum width=6pt] {};
	
	\draw (0, -9) node(24)[circle, draw, fill=black!50,
	inner sep=0pt, minimum width=6pt] {};
	
	\draw (1, -9) node(25)[circle, draw, fill=black!50,
	inner sep=0pt, minimum width=6pt] {};
	
	\draw (3, -9) node(26)[circle, draw, fill=black!50,
	inner sep=0pt, minimum width=6pt] {};
	
	\draw (5, -9) node(27)[circle, draw, fill=black!50,
	inner sep=0pt, minimum width=6pt] {};
	
	\draw (7, -9) node(28)[circle, draw, color=cyan, fill=black!50,
	inner sep=0pt, minimum width=10pt, label=$z$] {};
	
	\draw (9, -10) node(29)[circle, draw, fill=black!50,
	inner sep=0pt, minimum width=6pt, label = $x$] {};
	\draw (11, -10) node(30)[circle, draw, fill=black!50,
	inner sep=0pt, minimum width=6pt, label = $y$] {};
	
	\draw (0, -13) node(31)[circle, draw, fill=black!50,
	inner sep=0pt, minimum width=6pt, label = -90:{$u_2$}] {};
	
	\draw (-2, -11) node(32)[circle, draw, fill=black!50,
	inner sep=0pt, minimum width=6pt, label = 180:{$v_2$}] {};
	
	\draw (-1, -11) node(33)[circle, draw, fill=black!50,
	inner sep=0pt, minimum width=6pt] {};
	
	\draw (0, -11) node(34)[circle, draw, fill=black!50,
	inner sep=0pt, minimum width=6pt] {};
	
	\draw (1, -11) node(35)[circle, draw, fill=black!50,
	inner sep=0pt, minimum width=6pt] {};
	
	\draw (3, -11) node(36)[circle, draw, fill=black!50,
	inner sep=0pt, minimum width=6pt] {};
	
	\draw (5, -11) node(37)[circle, draw, fill=black!50,
	inner sep=0pt, minimum width=6pt] {};
	
	\draw (7, -11) node(38)[circle, draw, color=cyan, fill=black!50,
	inner sep=0pt, minimum width=10pt, label=-90:{$z'$}] {};
	
	\draw [line width = 0.5mm, pink] (21) to (22);
	\draw [line width = 0.5mm, magenta] (21) to (23);
	\draw [line width = 0.5mm, purple] (21) to (24);
	\draw [line width = 0.5mm, cyan] (21) to node[right] {$\alpha$} (25);
	\draw [line width = 0.5mm, purple] (25) to (26);
	\draw [line width = 0.5mm, cyan] (26) to node[below] {$\alpha$} (27);
	\draw [line width = 0.5mm, purple] (27) to (28);
	\draw [line width = 0.5mm, dashed] (28) to (29);
	\draw [line width = 0.5mm, cyan] (29) to node[below] {$\alpha$} (30);
	
	\draw [line width = 0.5mm, lime] (31) to (32);
	\draw [line width = 0.5mm, teal] (31) to (33);
	\draw [line width = 0.5mm, olive] (31) to (34);
	\draw [line width = 0.5mm, cyan] (31) to node[right] {$\alpha$} (35);
	\draw [line width = 0.5mm, olive] (35) to (36);
	\draw [line width = 0.5mm, cyan] (36) to node[below] {$\alpha$} (37);
	\draw [line width = 0.5mm, olive] (37) to (38);
	\draw [line width = 0.5mm, dashed] (38) to (29);
	
\end{tikzpicture}
			\caption{In this picture, $\alpha$ is blue, and two uncolored edges $\e_{\F} = (u_1, v_1), \e_{\F'} = (u_2, v_2)$ generate Vizing fans $\F$ and $\F'$, and the two Vizing chains are joining at $(x, y)$ for the first time and in the same direction. Then, we can rotate both Vizing fans and partially flip $P_{\F}$ and $P_{\F'}$ to make a u-fan}\label{pair-same}
		\end{figure}
		
		\item If we enter the \textbf{else} statement on \Cref{line:case 2}, then the argument is completely analogous to the previous case, except that we do not add any u-components to $\U$ (see \Cref{lem:Vfans are safe}). See \Cref{pair-opposite} for an illustration.
		\begin{figure}
			\centering
			\begin{tikzpicture}[thick,scale=0.9]
	\draw (0, 3) node(1)[circle, draw, color=cyan, fill=black!50,
	inner sep=0pt, minimum width=10pt, label = $u_1$] {};
	
	\draw (-2, 1) node(2)[circle, draw, fill=black!50,
	inner sep=0pt, minimum width=6pt, label = 180:{$v_1$}] {};
	
	\draw (-1, 1) node(3)[circle, draw, fill=black!50,
	inner sep=0pt, minimum width=6pt] {};
	
	\draw (0, 1) node(4)[circle, draw, fill=black!50,
	inner sep=0pt, minimum width=6pt] {};
	
	\draw (1, 1) node(5)[circle, draw, fill=black!50,
	inner sep=0pt, minimum width=6pt] {};
	
	\draw (3, 1) node(6)[circle, draw, fill=black!50,
	inner sep=0pt, minimum width=6pt] {};
	
	\draw (5, 1) node(7)[circle, draw, fill=black!50,
	inner sep=0pt, minimum width=6pt] {};
	
	\draw (7, 1) node(8)[circle, draw, fill=black!50,
	inner sep=0pt, minimum width=6pt] {};
	
	\draw (9, 1) node(9)[circle, draw, fill=black!50,
	inner sep=0pt, minimum width=6pt, label=$x$] {};
	\draw (9, -1) node(10)[circle, draw, fill=black!50,
	inner sep=0pt, minimum width=6pt, label=-90:{$y$}] {};
	
	\draw (0, -3) node(11)[circle, draw, color=cyan, fill=black!50,
	inner sep=0pt, minimum width=10pt, label = -90:{$u_2$}] {};
	
	\draw (-2, -1) node(12)[circle, draw, fill=black!50,
	inner sep=0pt, minimum width=6pt, label = 180:{$v_2$}] {};
	
	\draw (-1, -1) node(13)[circle, draw, fill=black!50,
	inner sep=0pt, minimum width=6pt] {};
	
	\draw (0, -1) node(14)[circle, draw, fill=black!50,
	inner sep=0pt, minimum width=6pt] {};
	
	\draw (1, -1) node(15)[circle, draw, fill=black!50,
	inner sep=0pt, minimum width=6pt] {};
	
	\draw (3, -1) node(16)[circle, draw, fill=black!50,
	inner sep=0pt, minimum width=6pt] {};
	
	\draw (5, -1) node(17)[circle, draw, fill=black!50,
	inner sep=0pt, minimum width=6pt] {};
	
	\draw (7, -1) node(18)[circle, draw, fill=black!50,
	inner sep=0pt, minimum width=6pt] {};
	
	\draw [line width = 0.5mm, dashed] (1) to (2);
	\draw [line width = 0.5mm, pink] (1) to (3);
	\draw [line width = 0.5mm, magenta] (1) to (4);
	\draw [line width = 0.5mm, purple] (1) to (5);
	\draw [line width = 0.5mm, cyan] (5) to node[below] {$\alpha$} (6);
	\draw [line width = 0.5mm, purple] (6) to (7);
	\draw [line width = 0.5mm, cyan] (7) to node[below] {$\alpha$} (8);
	\draw [line width = 0.5mm, purple] (8) to (9);
	\draw [line width = 0.5mm, cyan] (9) to node[right] {$\alpha$} (10);
	
	\draw [line width = 0.5mm, dashed] (11) to (12);
	\draw [line width = 0.5mm, lime] (11) to (13);
	\draw [line width = 0.5mm, teal] (11) to (14);
	\draw [line width = 0.5mm, olive] (11) to (15);
	\draw [line width = 0.5mm, cyan] (15) to node[below] {$\alpha$} (16);
	\draw [line width = 0.5mm, olive] (16) to (17);
	\draw [line width = 0.5mm, cyan] (17) to node[below] {$\alpha$} (18);
	\draw [line width = 0.5mm, olive] (18) to (10);
	
        \draw[->, >={Triangle}, thick, line width = 0.9mm] (4, -4.5) to (4, -5.5);
	
	\draw (0, -7) node(21)[circle, draw, fill=black!50,
	inner sep=0pt, minimum width=6pt, label = $u_1$] {};
	
	\draw (-2, -9) node(22)[circle, draw, fill=black!50,
	inner sep=0pt, minimum width=6pt, label = 180:{$v_1$}] {};
	
	\draw (-1, -9) node(23)[circle, draw, fill=black!50,
	inner sep=0pt, minimum width=6pt] {};
	
	\draw (0, -9) node(24)[circle, draw, fill=black!50,
	inner sep=0pt, minimum width=6pt] {};
	
	\draw (1, -9) node(25)[circle, draw, fill=black!50,
	inner sep=0pt, minimum width=6pt] {};
	
	\draw (3, -9) node(26)[circle, draw, fill=black!50,
	inner sep=0pt, minimum width=6pt] {};
	
	\draw (5, -9) node(27)[circle, draw, fill=black!50,
	inner sep=0pt, minimum width=6pt] {};
	
	\draw (7, -9) node(28)[circle, draw, fill=black!50,
	inner sep=0pt, minimum width=6pt] {};
	
	\draw (9, -9) node(29)[circle, draw, fill=black!50,
	inner sep=0pt, minimum width=6pt, label = $x$] {};
	\draw (9, -11) node(30)[circle, draw, fill=black!50,
	inner sep=0pt, minimum width=6pt, label = -90:{$y$}] {};
	
	\draw (0, -13) node(31)[circle, draw, fill=black!50,
	inner sep=0pt, minimum width=6pt, label = -90:{$u_2$}] {};
	
	\draw (-2, -11) node(32)[circle, draw, fill=black!50,
	inner sep=0pt, minimum width=6pt, label = 180:{$v_2$}] {};
	
	\draw (-1, -11) node(33)[circle, draw, fill=black!50,
	inner sep=0pt, minimum width=6pt] {};
	
	\draw (0, -11) node(34)[circle, draw, fill=black!50,
	inner sep=0pt, minimum width=6pt] {};
	
	\draw (1, -11) node(35)[circle, draw, fill=black!50,
	inner sep=0pt, minimum width=6pt] {};
	
	\draw (3, -11) node(36)[circle, draw, fill=black!50,
	inner sep=0pt, minimum width=6pt] {};
	
	\draw (5, -11) node(37)[circle, draw, fill=black!50,
	inner sep=0pt, minimum width=6pt] {};
	
	\draw (7, -11) node(38)[circle, draw, fill=black!50,
	inner sep=0pt, minimum width=6pt] {};
	
	\draw [line width = 0.5mm, pink] (21) to (22);
	\draw [line width = 0.5mm, magenta] (21) to (23);
	\draw [line width = 0.5mm, purple] (21) to (24);
	\draw [line width = 0.5mm, cyan] (21) to node[right] {$\alpha$} (25);
	\draw [line width = 0.5mm, purple] (25) to (26);
	\draw [line width = 0.5mm, cyan] (26) to node[below] {$\alpha$} (27);
	\draw [line width = 0.5mm, purple] (27) to (28);
	\draw [line width = 0.5mm, cyan] (28) to node[below] {$\alpha$} (29);
	\draw [line width = 0.5mm, dashed] (29) to (30);
	
	\draw [line width = 0.5mm, lime] (31) to (32);
	\draw [line width = 0.5mm, teal] (31) to (33);
	\draw [line width = 0.5mm, olive] (31) to (34);
	\draw [line width = 0.5mm, cyan] (31) to node[right] {$\alpha$} (35);
	\draw [line width = 0.5mm, olive] (35) to (36);
	\draw [line width = 0.5mm, cyan] (36) to node[below] {$\alpha$} (37);
	\draw [line width = 0.5mm, olive] (37) to (38);
	\draw [line width = 0.5mm, cyan] (38) to node[below] {$\alpha$} (30);
	
\end{tikzpicture}
			\caption{In this picture, $\alpha$ is blue, and two uncolored edges $\e_{\F} = (u_1, v_1), \e_{\F'} = (u_2, v_2)$ generate Vizing fans $\F$ and $\F'$, and the two Vizing chains are joining at $(x, y)$ for the first time and in the opposite direction. Then, we can uncolor $(x, y)$ and color both $\e_{\F}, \e_{\F'}$.}\label{pair-opposite}
		\end{figure}
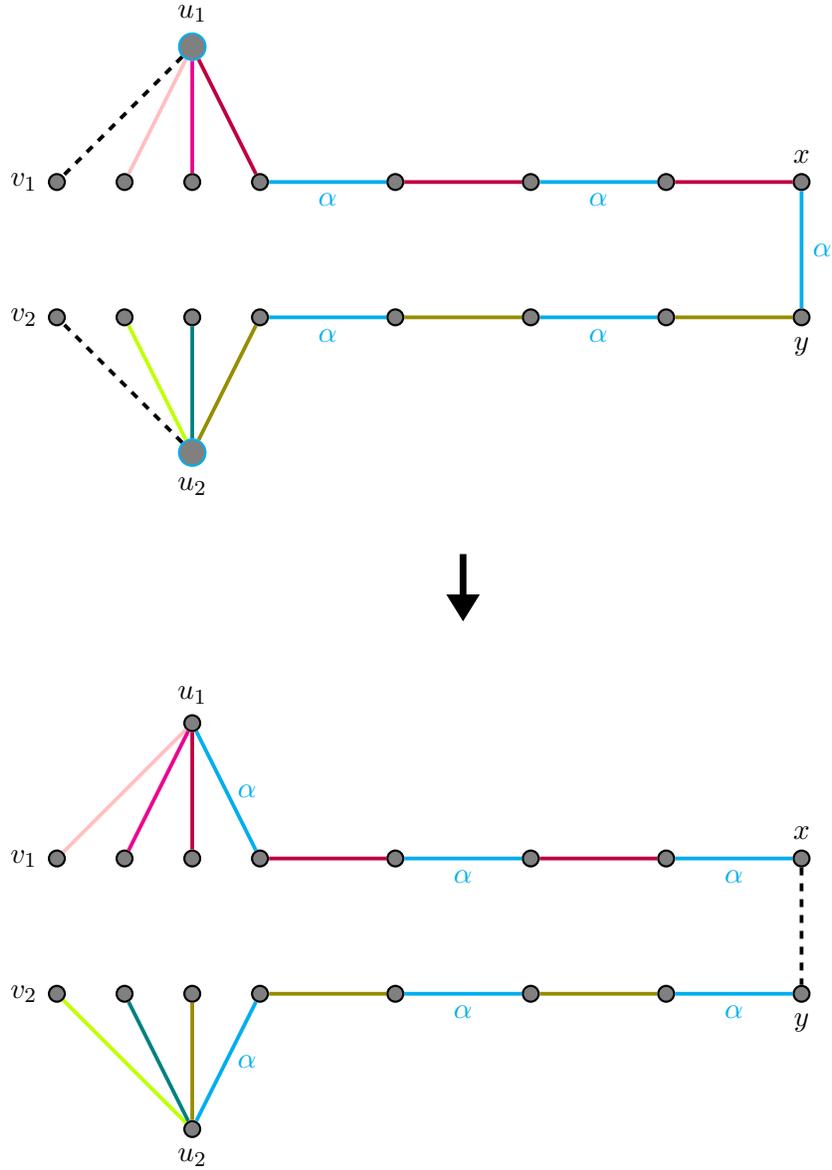
		
		\item If we enter the \textbf{if} statement on \Cref{line:case 3}, then calling $\Vizing(\F)$ damages at most one u-component in $\U$ apart from $\e_{\F}$ (see Lemma~\ref{lem:Vfans are safe}). In particular, if there is such a u-component $\g$, it will contain the vertex $y$ and have $c_{\g}(x) = \chi(x,y)$. Thus, we check if such a u-component exists and remove it from $\U$. 
	\end{enumerate}
    Finally, to see that $|\mathcal F| = |\mathcal E_{\alpha}(\U)|$, we can verify that we remove $\e$ from $\E_{\alpha}(\U)$ if and only if we remove $\F_{\e}$ from $\mathcal F$. It's clear that whenever we remove some $\F'$ from $\mathcal F$, we also remove $\e_{\F'}$ from $\E_{\alpha}(\U)$. Similarly, if we remove a u-edge $\e$ from $\mathcal E_{\alpha}(\U)$ in \Cref{line:g}, then we can see that we then remove $\F_{\e}$ from $\mathcal F$ in \Cref{line:g to F} immediately afterwards.

    \medskip
    \noindent \textit{Condition~\ref{inv:5}:} Let $\F'$ be some Vizing fan that remains in $\mathcal F$ after handling this call. Then, for any Vizing fan $\F''$ that was activated during the call, we know that its corresponding Vizing chain did \emph{not} end at $\F'$, otherwise $\F'$ would have been removed from $\mathcal F$ (see \Cref{lin:29}). Thus, activating $\F''$ (which is vertex-disjoint from $\F'$) does not change the palettes of any vertices in $\F'$. Note that, since the center of $\F'$ is missing $\alpha$, it is not possible for the Vizing chain of $\F''$ to contain any edges in $\F'$ without ending at $\F'$. Hence, the Vizing fans remaining in $\mathcal F$ are still valid Vizing fans and remain vertex-disjoint. To see that they also remain $\U$-avoiding, note that if we add a u-fan $\f$ to $\U$ (see \Cref{lin:12}), the color that $\f$ assigns to a vertex $x \in \f$ was previously not available at $x$ before the start of the call to $\UpdatePath$. Thus, if $x$ is contained in the Vizing fan $\F'$, its color within $\F'$ is not $c_{\f}(x)$.

    \medskip
    \noindent \textit{Condition~\ref{inv:1}:} Let $\F'$ be some Vizing fan that remains in $\mathcal F$ after handling this call. We now show that the first $|P_{\F'}|$ edges of $\VizingP(\F')$ do not change throughout this call. It follows from the argument for Condition~\ref{inv:5} that neither the colors of the edges nor the palettes of the vertices in $\F'$ change during this call. It remains to show that none of the edges in $P_{\F'}$ change color during this call. We can see that $P_{\F'}$ and $P_{\F}$ are edge-disjoint since otherwise $\F'$ would have been removed from $\mathcal F$. Consequently, $P_{\F'}$ is edge-disjoint from any Vizing chain that is activated during the call, and hence none of the edges in $P_{\F'}$ change colors.
\end{proof}

\begin{lemma}
    The subroutine $\ReduceUEdges$ satisfies \Cref{invaraint:iterations}.
\end{lemma}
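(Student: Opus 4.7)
The plan is to prove the invariant by induction on the round index. For the base case, note that right before the first round begins, the subroutine has activated and discarded all trivial Vizing fans, set $L \leftarrow 0$, and initialized $P_{\F} \leftarrow \varnothing$ for every $\F$ that still lies in $\mathcal F$. Hence $|P_{\F}| = 0 = L$ for all such $\F$, establishing the invariant at the start of round one.

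For the inductive step, I would assume the invariant holds at the start of some round and then trace what happens during that round. The first action is to increment $L$ by one, so the invariant is temporarily broken in a controlled way: every surviving $\F \in \mathcal F$ now has $|P_{\F}| = L - 1$. The subroutine then iterates through $\mathcal F$ (in its current state) and invokes $\UpdatePath(\F)$ on each $\F$. The very first instruction of $\UpdatePath(\F)$ is $P_{\F} \leftarrow \VizingP(\F)_{\leq |P_{\F}|+1}$, which increases $|P_{\F}|$ by exactly one. After this, one of three branches is taken, and in the only branch where $\F$ survives in $\mathcal F$ (i.e., the branch that triggers neither the intersection-induced removal at \Cref{line:case 1} or \Cref{line:case 2} nor the maximal-path-induced removal at \Cref{line:case 3}), its updated path length equals the new value of $L$. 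In the other branches $\F$ is removed from $\mathcal F$ altogether (possibly alongside a second fan $\F'$), so the invariant imposes no further constraint on $\F$.

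The only subtlety to verify is that processing $\UpdatePath(\F)$ never changes $|P_{\F''}|$ for any $\F'' \in \mathcal F$ that is neither $\F$ itself nor one of the fans removed during the same call. A direct inspection of \Cref{alg:update path} confirms this: the only path variables ever written are those of the argument $\F$ and, in the intersection branch, those of the explicitly removed $\F'$. Thus, at the end of the round, every fan still residing in $\mathcal F$ has its stored path length equal to the new $L$, which is exactly \Cref{invaraint:iterations} at the start of the next round.

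I expect no genuine obstacle here beyond one routine bookkeeping point: the outer loop of $\ReduceUEdges$ must iterate over the \emph{current} $\mathcal F$ (or equivalently, skip any fan that has been removed during the same round), so that $\UpdatePath$ is never invoked on a fan that an earlier iteration already removed. This is an implementation detail rather than a conceptual difficulty, and with it in hand the induction closes immediately.
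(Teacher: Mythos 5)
Your proof is correct and takes essentially the same approach as the paper's, which simply observes that all paths start empty and each round's call to $\UpdatePath(\F)$ extends $P_{\F}$ by exactly one edge (or removes $\F$ from $\mathcal F$ entirely). Your additional bookkeeping about non-interference between fans and iterating over the current $\mathcal F$ is a harmless elaboration of the same argument.
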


\begin{proof}
    This follows from the fact that initially $|P_{\F}| = 0$ for all $\F \in \mathcal F$ and that the subroutine calls $\UpdatePath(\F)$ for each $\F \in \mathcal F$ during each round, which increases $|P_{\F}|$ by $1$.
\end{proof}

\begin{lemma}\label{lem:potential for reduce}
    Each time $\ReduceUEdges$ modifies the coloring $\chi$, it removes at most $2$ u-edges from $\E_\alpha(\U)$ and either (1) extends the coloring $\chi$ to one more edge and removes at most one other u-component from $\U$, or (2) adds a u-fan to $\U$.
\end{lemma}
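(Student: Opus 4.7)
The plan is to inspect $\ReduceUEdges$ exhaustively and verify the claim on a case-by-case basis. Tracing the subroutine, the coloring $\chi$ is modified in exactly four distinct situations: (a) the preprocessing pass activates each trivial Vizing fan $\F \in \mathcal F$ via $\Vizing(\F)$; (b) Case 1 in $\UpdatePath(\F)$ (\Cref{line:case 1}), the same-orientation intersection, which uncolors $(z,x)$ and $(z',x)$ and then activates $\F$ and $\F'$; (c) Case 2 in $\UpdatePath(\F)$ (\Cref{line:case 2}), the opposite-orientation intersection, which uncolors $(x,y)$ and then activates $\F$ and $\F'$; and (d) Case 3 in $\UpdatePath(\F)$ (\Cref{line:case 3}), the maximal path case, which activates $\F$ and may additionally remove one u-component on \Cref{line:g} and/or a Vizing fan $\F'$ on \Cref{line:g to F}. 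Establishing the lemma amounts to checking that each of these four events matches either outcome (1) or outcome (2).

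For cases (a) and (d) only a single Vizing fan is activated. By \Cref{invaraint:paths} the fan is $\U$-avoiding, so \Cref{lem:Vfans are safe} implies that $\Vizing(\F)$ extends $\chi$ to the single edge $\e_{\F}$ and damages at most one u-component in $\U \setminus \{\e_{\F}\}$. In case (a) this removes exactly one u-edge from $\E_\alpha(\U)$; in case (d) the algorithm may additionally remove $\e_{\F'}$ when the center of $\F'$ coincides with the endpoint $y$ of the flipped Vizing chain, raising the total to at most two u-edges removed from $\E_\alpha(\U)$. The at-most-one damaged extra u-component is identified and removed on \Cref{line:g}. Both situations therefore match outcome (1).

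For cases (b) and (c) the key ingredient is the observation already used in the proof of \Cref{lem:inv 1 proof}: pre-uncoloring the shared edge(s) truncates the relevant Vizing chains just before the shared edge, and flipping a truncated $\{\alpha, c\}$-alternating path only adds colors to, but never removes colors from, the palettes of vertices along it. Concretely, at the new endpoint of a truncated chain the color removed by the uncoloring step is precisely re-added by the flip, while a new color becomes missing; palettes at interior vertices are unchanged. Hence in both cases the only u-components damaged are $\e_{\F}$ and $\e_{\F'}$, both coming from the rotations of the fans (\Cref{lem:Vfans are safe}). In case (b) the two newly uncolored edges $(z,x), (z',x)$ are then bundled into the u-fan $\f = (x, z, z', \beta, \alpha)$ added to $\U$, matching outcome (2). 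In case (c) the net count of newly colored edges is $2 - 1 = 1$ (two colored by the $\Vizing$ calls, one uncolored by hand), no additional u-component is damaged, and outcome (1) is satisfied with zero other u-components removed.

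The main obstacle is justifying the truncation claim carefully for cases (b) and (c): without the pre-uncoloring, the endpoint palette change from flipping a full Vizing chain could damage one additional u-component per call, aggregating to as much as two extra damages across the $\Vizing(\F)$ and $\Vizing(\F')$ activations. Verifying that the pre-uncoloring exactly cancels the endpoint palette change, so that the true aggregate extra damage is zero rather than two, is the core step and is what makes the bundled analysis across the two activations go through.
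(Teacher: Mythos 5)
Your proposal is correct and follows essentially the same route as the paper: the paper's own proof is a terse case analysis over the preprocessing step and the branches of $\UpdatePath$, invoking \Cref{lem:Vfans are safe} and deferring the truncation argument to the proof of \Cref{lem:inv 1 proof}, which is exactly the structure you spell out in more detail. The only nitpick is that in your case (d) the fan $\F'$ is removed whenever $y$ is \emph{any} vertex of $\F'$ (not only its center), but this does not affect the ``at most $2$ u-edges'' bound.
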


\begin{proof}
    While activating some trivial Vizing fan $\F_{\e} \in \mathcal F$ at the start, we remove $\e$ from $\U$ (and thus also from $\E_{\alpha}(\U)$) along with at most one other u-component in $\U$ that is damaged by this operation (see \Cref{lem:Vfans are safe}). Now, suppose that the subroutine calls $\UpdatePath(\F)$ for some $\F \in \mathcal F$. By considering each case, we can verify that the subroutine either makes no changes to $\chi$ and $\U$ or it removes at most $2$ u-edges from $\E_\alpha(\U)$ and either (1) extends the coloring $\chi$ to one more edge (see Lines~\ref{lin:12}-\ref{lin:14} and \Cref{lin:22}) and removes at most one other u-component from $\U$ (see \Cref{line:g}), or (2) adds a u-fan to $\U$ (see \Cref{lin:12}).
\end{proof}

\medskip
\noindent \textbf{Running Time:} We now show how to implement this subroutine and analyse its running time. We begin with the following claim which shows that $\UpdatePath$ can be implemented efficiently.

\begin{claim}\label{claim:UpdatePathTime}
    Each call to $\UpdatePath(\F)$ for some $\F \in \mathcal F$ that does not remove $\F$ from $\mathcal F$ takes $O(1)$ time. Otherwise, it takes $O(\Delta + L)$ time.
\end{claim}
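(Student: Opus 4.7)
The plan is to separately analyze each code path in \Cref{alg:update path} under the assumption that the data structures from \Cref{sec:data struc overview} support all the queries we use in $O(1)$ time (implicitly conditioned on the high-probability event from \Cref{prop:hash-map}). Across all cases, we will rely on \Cref{invaraint:iterations}, which guarantees that the length of every prefix path $P_{\F}$ is bounded by $L$ throughout the call, as well as on \Cref{lem:full vizing proof} which bounds the cost of $\Vizing(\F)$ by $O(\Delta + |\VizingP(\F)|)$.

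First I would handle the ``fast path,'' in which $\F$ remains in $\mathcal F$ after the call. In this situation the algorithm only executes the first two lines plus (possibly) the insertion $S \leftarrow S \cup \{(x,y)\}$. The new edge $(x,y)$ extending $P_{\F}$ can be read off in $O(1)$ time: its endpoint $x$ is the current last vertex of $P_{\F}$, and the next edge is either the unique edge incident on $x$ with a specific color (the other color of the alternating path, looked up in the standard coloring data structure of \Cref{sec:data structs}) or we detect that $P_{\F}$ is maximal. Testing $(x,y) \in S$ and, if appropriate, inserting it, are hash-map operations, each $O(1)$. Since $\F$ is not removed, no other work is done, so the total cost is $O(1)$.

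Next I would analyze the ``slow path,'' where $\F$ is removed from $\mathcal F$. This happens in three subcases: intersection with the same orientation (\Cref{line:case 1}), intersection with the opposite orientation (\Cref{line:case 2}), and the maximal-path case (\Cref{line:case 3}). In every subcase, the dominant cost comes from one or two calls to $\Vizing$ on truncated Vizing chains of length at most $L$, each costing $O(\Delta + L)$ by \Cref{lem:full vizing proof}. The remaining operations---uncoloring a constant number of edges, deleting $P_{\F}$ (and possibly $P_{\F'}$) from the hash-set $S$, updating $\mathcal F$ and $\U$ via the operations of \Cref{sec:data struc overview}, and creating the new u-fan $\f$ in \Cref{lin:12}---each run in $O(1)$ time per edge or per object, and since $|P_{\F}|, |P_{\F'}| \le L$, the set maintenance costs at most $O(L)$. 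The extra checks in \Cref{line:g,line:g to F} for a damaged u-component $\g$ and for a Vizing fan $\F'$ containing $y$ are both $\textsc{Find-Component}_{\U}$ and lookup queries on the Vizing-fan data structure, so each is $O(1)$. Summing these bounds gives $O(\Delta + L)$ for the slow path.

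The main obstacle, and the only part that requires care, is making sure that every constant-time assertion above is actually justified by the data structures promised in \Cref{sec:data struc overview} and implemented in \Cref{sec:data structs}: in particular, looking up ``the edge of color $c$ incident on $x$'' in $O(1)$, testing membership and inserting/deleting in the edge set $S$ representing the union of the prefix paths in $O(1)$ per edge, and performing $\textsc{Find-Component}_{\U}$, $\textsc{Insert}_{\U}$, $\textsc{Delete}_{\U}$, and $\textsc{Missing-Color}_{\U}$ in $O(1)$. Once these are in place, the two bounds $O(1)$ and $O(\Delta + L)$ fall out by inspection of the pseudocode.
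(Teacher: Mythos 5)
Your proposal is correct and follows essentially the same route as the paper's proof: split on whether $\F$ is removed, use \Cref{invaraint:paths,invaraint:iterations} plus the $O(1)$-time data-structure operations to charge the fast path at $O(1)$, and bound the slow path by $O(\Delta + L)$ via the $O(1)$ activations of truncated Vizing chains of length $O(L)$ together with the $O(L)$ cost of removing the corresponding prefix paths from $S$. No gaps.
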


\begin{proof}
    By Invariants~\ref{invaraint:paths} and \ref{invaraint:iterations}, we know that $P_{\F} = \VizingP(\F)_{\leq L}$ when the subroutine calls $\UpdatePath(\F)$. Thus, updating $P_{\F}$ to $\VizingP(\F)_{\leq L+1}$ only requires computing the next edge in the path, which can be done in $O(1)$ time using our data structures (see \Cref{sec:data structs}).

    If this call does not remove $\F$ from $\mathcal F$, then we know that $P_{\F}$ is not maximal and is also edge-disjoint from the other prefix paths maintained by the subroutine. In this case, we update $S$ in $O(1)$ time and do not modify $\chi$ or $\U$.

    On the other hand, if this call does remove $\F$ from $\mathcal F$, then we need to activate $O(1)$ many Vizing chains of length $O(L)$, which can be done in $O(\Delta + L)$ time. Removing the edges of the corresponding paths from $S$ can also be done in $O(L)$ time. Finally, using the data structures outlined in \Cref{sec:data struc overview}, we can update $\mathcal F$ and $\U$ in $O(1)$ time.
\end{proof}

\noindent
Let $\mathcal F^\star$ denote the subset of Vizing fans that get removed from $\mathcal F$ by the subroutine $\ReduceUEdges$ before it terminates.
For each $\F \in \mathcal F^\star$, let $L_{\F}$ denote the value of $L$ at the time that $\F$ is removed from $\mathcal F$.

\begin{claim}\label{claim:UPtime bound}
    The total time spent handling calls to $\UpdatePath$ is at most
    \begin{equation}\label{eq:updatepath time}
    O(\Delta \lambda_{\alpha}) + O(1) \cdot \sum_{\F \in \mathcal F^\star} L_{\F}.
    \end{equation}
\end{claim}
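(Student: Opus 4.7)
My plan is to apply Claim~\ref{claim:UpdatePathTime} and bound the removing and non-removing calls to $\UpdatePath$ separately. Recall that a call $\UpdatePath(\F)$ that does \emph{not} remove $\F$ from $\mathcal F$ costs $O(1)$, while a call that \emph{does} remove $\F$ costs $O(\Delta + L)$ where $L$ is the value of the round counter at the time of the call. Combining these per-call estimates with a careful count of each type of call will give the claimed bound.

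First, I will handle the removing calls. Each such call is uniquely indexed by its argument $\F$ (which is the fan being removed), so $\F \in \mathcal F^\star$, and by definition the call takes place in round $L_\F$. Hence the number of removing calls is at most $|\mathcal F^\star| \le \lambda_\alpha$, and the total cost is
\[
\sum_{\F \in \mathcal F^\star} O(\Delta + L_\F) \;=\; O(\Delta\,\lambda_\alpha) + O(1)\cdot\sum_{\F \in \mathcal F^\star} L_\F,
\]
which already matches the claimed bound on its own.

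Second, I will count the non-removing calls and multiply by the $O(1)$ per-call cost. The round structure together with \Cref{invaraint:iterations} guarantees that, during any round, each fan currently in $\mathcal F$ receives \emph{exactly} one $\UpdatePath$ call. Consequently, every $\F \in \mathcal F^\star$ receives at most $L_\F$ calls in total over its lifetime in $\mathcal F$; stripping off the at-most-one removing call among these, the non-removing calls charged to $\F \in \mathcal F^\star$ number at most $L_\F$. Summing over $\mathcal F^\star$ bounds the count of non-removing calls directed at removed fans by $\sum_{\F \in \mathcal F^\star} L_\F$, each contributing $O(1)$ time.

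The main obstacle I expect is accounting for the non-removing calls directed at fans in $\mathcal F_{\mathrm{remain}} := \mathcal F \setminus \mathcal F^\star$, i.e., the fans still alive when the subroutine halts. I plan to exploit the termination rule ($|\mathcal F|\le \lambda_\alpha/2$ first becomes true in a round that performs a removal): this forces the total number of rounds $T$ to equal $L_{\F_0}$ for some $\F_0\in\mathcal F^\star$, and simultaneously $|\mathcal F_{\mathrm{remain}}|\le \lambda_\alpha/2\le |\mathcal F^\star|$. These two facts let me charge the round-by-round work of each surviving fan to a removed fan processed in the same round, so that all non-removing calls remain within the budget $O(1)\cdot \sum_{\F \in \mathcal F^\star}L_\F$. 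Combining with the bound for the removing calls then yields \eqref{eq:updatepath time}.
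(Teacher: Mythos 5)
Your treatment of the fans in $\mathcal F^\star$ is correct and coincides with the paper's own proof: each $\F \in \mathcal F^\star$ receives at most $L_{\F}$ calls, all but the last costing $O(1)$ and the last costing $O(\Delta + L_{\F})$ by \Cref{claim:UpdatePathTime}, which sums to the displayed bound. You have also correctly isolated the one point the paper's proof is silent about, namely the $O(1)$-cost calls made in every round to the fans in $\mathcal F \setminus \mathcal F^\star$ that survive until termination. However, your charging scheme for those calls does not work. The termination rule only guarantees that at the start of every executed round $r$ we have $|\mathcal F| > \lambda_\alpha/2 \ge |\mathcal F \setminus \mathcal F^\star|$, i.e., that \emph{at least one} member of $\mathcal F^\star$ is still alive in round $r$ --- not that there are as many live members of $\mathcal F^\star$ as there are survivors. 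Concretely, suppose roughly $\lambda_\alpha/2 - 1$ fans are removed already in round $1$ (as intersecting pairs) and then nothing is removed until a single further fan is removed in round $T$. Then $\sum_{\F \in \mathcal F^\star} L_{\F} = (\lambda_\alpha/2 - 1) + T$, while the $\lambda_\alpha/2$ survivors alone incur $\Theta(T \lambda_\alpha)$ work; a per-round charge from survivors to the removed fans processed in that round loads $\Theta(\lambda_\alpha)$ onto the single live member of $\mathcal F^\star$ in each of rounds $2,\dots,T$, yielding only $O(\lambda_\alpha) \cdot \sum_{\F \in \mathcal F^\star} L_{\F}$, and the global inequality $|\mathcal F \setminus \mathcal F^\star| \le |\mathcal F^\star|$ does not rescue this because the fans you would reroute the charge to may have $L_{\F} = 0$ (trivial fans) or $L_{\F} = 1$.

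The same example shows that the bound \eqref{eq:updatepath time} as literally stated does not account for the survivors at all, which is precisely why the paper's proof restricts attention to $\mathcal F^\star$ without comment. The clean way to close the gap is not a charging argument but \Cref{claim:path packing}: the total number of rounds equals $\max_{\F \in \mathcal F^\star} L_{\F} \le O(m_\alpha/\lambda_\alpha)$, so the survivors contribute at most $O(m_\alpha/\lambda_\alpha) \cdot |\mathcal F \setminus \mathcal F^\star| \le O(m_\alpha/\lambda_\alpha) \cdot \lambda_\alpha = O(m_\alpha)$ in total. This adds an $O(m_\alpha)$ term to the right-hand side of \eqref{eq:updatepath time}, which is harmless because \Cref{lem:new running time reduce} already carries an $O(m_\alpha)$ term. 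Salvaging the claim exactly as stated would require the per-round inequality $|\{\F \in \mathcal F^\star : L_{\F} \ge r\}| = \Omega(|\mathcal F \setminus \mathcal F^\star|)$ for every round $r$, which the example above shows is false in general.
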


\begin{proof}
    Let $\F \in \mathcal F^\star$. We can observe that the subroutine calls $\UpdatePath(\F)$ at most $L_{\F}$ times. It follows from \Cref{claim:UpdatePathTime} that the last call takes $O(\Delta + L_{\F})$ time while the rest take $O(1)$ time. Thus, the total time spent handling calls to $\UpdatePath(\F)$ is at most $O(\Delta + L_{\F})$. Summing over each $\F \in \mathcal F^\star$, we get that the total time spent handling calls to $\UpdatePath$ is at most
    $$ \sum_{\F \in \mathcal F^\star} O(\Delta + L_{\F}) \leq O(\Delta \lambda_{\alpha}) + O(1) \cdot \sum_{\F \in \mathcal F^\star} L_{\F}.\qedhere$$
\end{proof}

\noindent
Recall that $m_\alpha$ denotes the number of edges with color $\alpha$ when we first call the subroutine.

\begin{claim}\label{claim:path packing}
    For each $\F \in \mathcal F^\star$, we have that $L_{\F} \leq O(m_{\alpha}/ \lambda_{\alpha})$.
\end{claim}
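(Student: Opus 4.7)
The plan is to show that at the start of the round in which $\F$ is removed from $\mathcal F$, the surviving Vizing fans together witness $\Omega(\lambda_{\alpha}L_{\F})$ distinct $\alpha$-colored edges in $G$; then, by separately upper bounding the total number of $\alpha$-colored edges present at any moment by $m_{\alpha} + O(\lambda_{\alpha})$, I can conclude $L_{\F} = O(m_{\alpha}/\lambda_{\alpha})$.

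First I would fix the round $r$ during which $\F$ is removed, so that $L = L_{\F}$ throughout $r$. Since the termination check $|\mathcal F| \leq \lambda_{\alpha}/2$ did not fire at the top of round $r$, we have $|\mathcal F| > \lambda_{\alpha}/2$ at the start of $r$. By \Cref{invaraint:iterations}, every $\F' \in \mathcal F$ at that instant has $|P_{\F'}| = L_{\F} - 1$, and by Condition~\ref{inv:2} of \Cref{invaraint:paths} the paths $\{P_{\F'}\}_{\F' \in \mathcal F}$ are pairwise edge-disjoint. Each $P_{\F'}$ is the length-$(L_{\F}-1)$ prefix of a maximal $\{\alpha, c\}$-alternating path starting at the center $u_{\F'}$ of $\F'$, where $c$ is the terminal color of the Vizing fan $\F'$; since $\alpha \in \miss_{\chi}(u_{\F'})$, no $\alpha$-colored edge is incident on $u_{\F'}$, so the first edge of $P_{\F'}$ has color $c \neq \alpha$ and exactly $\lfloor (L_{\F}-1)/2 \rfloor$ of its edges are $\alpha$-colored. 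Edge-disjointness then gives at least $|\mathcal F| \cdot \lfloor (L_{\F}-1)/2 \rfloor = \Omega(\lambda_{\alpha} L_{\F})$ distinct $\alpha$-colored edges present in $G$ at that moment.

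Separately, I would argue that the number of $\alpha$-colored edges in $G$ at any point during $\ReduceUEdges$ is at most $m_{\alpha} + O(\lambda_{\alpha})$: the initial count is $m_{\alpha}$; every modification the subroutine makes to $\chi$ (either an initial trivial-fan activation, or one of the three cases within $\UpdatePath$) perturbs the $\alpha$-edge count by only $O(1)$, since flipping an $\{\alpha, c\}$-alternating path changes the count by $\pm 1$ and a Vizing-fan rotation touches $O(1)$ edges; and \Cref{lem:potential for reduce} bounds the number of such modifications by $O(\lambda_{\alpha})$. Combining this with the lower bound above gives $\Omega(\lambda_{\alpha} L_{\F}) \leq m_{\alpha} + O(\lambda_{\alpha})$, and hence $L_{\F} = O(m_{\alpha}/\lambda_{\alpha})$.

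I expect the main obstacle to be this last step: verifying rigorously, case by case through \Cref{alg:update path}, that no local modification can inflate the $\alpha$-edge count by more than a constant. Apart from this somewhat tedious bookkeeping, the argument is essentially the pigeonhole observation that $\Omega(\lambda_{\alpha})$ edge-disjoint paths, each carrying $\Omega(L_{\F})$ edges of color $\alpha$, force that many $\alpha$-edges to coexist in the graph.
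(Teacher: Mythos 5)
Your proposal is correct and follows essentially the same route as the paper: both arguments lower-bound the total length of the $> \lambda_\alpha/2$ edge-disjoint prefix paths alive at the start of the relevant round by $\Omega(\lambda_\alpha L_{\F})$ $\alpha$-colored edges, upper-bound the number of $\alpha$-colored edges at any time by $m_\alpha + O(\lambda_\alpha)$, and combine. One small slip in your justification of the upper bound: a Vizing-fan rotation does \emph{not} touch $O(1)$ edges (it can recolor up to $\Delta$ of them); the correct reason the $\alpha$-count changes by $O(1)$ per modification is that the rotated colors are merely permuted among the fan edges, so only the final color assignment and the single path flip can alter the number of $\alpha$-edges --- which is exactly the paper's own accounting.
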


\begin{proof}
    Let $L_{\max}$ denote the value of $L$ at the start of the final round performed by the subroutine. We now show that $L_{\max} \leq 8(m_{\alpha} + 4) / \lambda_{\alpha}$, which implies the claim.
    At the start of the final round, we know that $|\mathcal F| > \lambda_\alpha/2$, otherwise the subroutine would terminate. Furthermore, it follows from Invariants~\ref{invaraint:paths} and \ref{invaraint:iterations} that, at the start of this round, the alternating paths $\{P_{\F}\}_{\F \in \mathcal F}$ form a collection of at least $\lambda_\alpha/2$ edge-disjoint $\{\alpha, \cdot\}$-alternating paths of length $L_{\max}$ in $G$. Let $T$ denote the total length of these paths. We can observe that $T$ is at most $3 m_\alpha'$, where $m_\alpha'$ is the number of edges that currently have color $\alpha$, since at least a third of the edges in each of these paths has color $\alpha$ (note that any $\{\alpha, \cdot\}$-alternating path of length $k$ has at least $\lfloor k / 2 \rfloor$ edges with color $\alpha$). We can also observe that $m_\alpha' \leq m_{\alpha} + 2\lambda_{\alpha}$ since the number of edges with color $\alpha$ increases by at most $2$ each time we activate a Vizing fan in $\mathcal F$. Thus, it follows that
    $$ \frac{\lambda_\alpha}{2} \cdot L_{\max} \leq T \leq 3m_\alpha' \leq 3(m_{\alpha} + 2\lambda_{\alpha}), $$
    and so $L_{\max} \leq 6 m_\alpha/\lambda_\alpha + 12$.
\end{proof}

\begin{lemma}\label{lem:new running time reduce}
    We can implement $\ReduceUEdges$ to run in $O(m_\alpha + \Delta \lambda_{\alpha})$ time.
\end{lemma}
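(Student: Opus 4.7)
The plan is to decompose the total work of $\ReduceUEdges$ into two parts: (a) the initial activation of all trivial Vizing fans in $\mathcal F$, and (b) the work performed by the $\UpdatePath$ calls across all rounds of the main loop. I would then argue that both parts fit within the target $O(m_\alpha + \Delta \lambda_\alpha)$ budget.

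For (a), the input collection $\mathcal F$ contains at most $\lambda_\alpha$ Vizing fans, so at most $\lambda_\alpha$ trivial ones need to be processed. For each trivial fan, the associated Vizing path is empty, so $\Vizing$ runs in $O(\Delta + |P|) = O(\Delta)$ time by \Cref{lem:full vizing proof}. The bookkeeping required to remove such a fan from $\mathcal F$, its u-edge from $\U$, and any additional u-component damaged by the activation (as characterised by \Cref{lem:Vfans are safe}) takes $O(1)$ using the data structures of \Cref{sec:data struc overview}. This contributes $O(\Delta \lambda_\alpha)$ overall.

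For (b), I would combine \Cref{claim:UpdatePathTime,claim:UPtime bound,claim:path packing}. \Cref{claim:path packing} gives $L_\F \leq O(m_\alpha / \lambda_\alpha)$ for every $\F \in \mathcal F^\star$, and similarly bounds $L_{\max}$. Feeding this into \Cref{claim:UPtime bound}, together with the trivial bound $|\mathcal F^\star| \leq \lambda_\alpha$, yields
\[
O(\Delta \lambda_\alpha) + O(1) \cdot \sum_{\F \in \mathcal F^\star} L_\F \;\leq\; O(\Delta \lambda_\alpha) + O\!\left(\lambda_\alpha \cdot \tfrac{m_\alpha}{\lambda_\alpha}\right) \;=\; O(\Delta \lambda_\alpha + m_\alpha),
\]
which covers all $\UpdatePath$ calls on fans in $\mathcal F^\star$. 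Separately, the $\UpdatePath$ calls on fans that are never removed are each $O(1)$ by \Cref{claim:UpdatePathTime}; there are at most $\lambda_\alpha/2$ such fans and at most $L_{\max} = O(m_\alpha/\lambda_\alpha)$ rounds, contributing an additional $O(m_\alpha)$. Summing these pieces gives the desired bound.

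The argument is essentially bookkeeping, so the main (minor) obstacle is simply to keep the two types of $\UpdatePath$ calls straight: \Cref{claim:UPtime bound} as stated only sums over the fans in $\mathcal F^\star$, so the work done on fans that survive to the end of the main loop must be accounted for separately. Once \Cref{claim:path packing} bounds $L_{\max}$, this extra contribution is $O(m_\alpha)$ and is absorbed into the final $O(m_\alpha + \Delta \lambda_\alpha)$ bound.
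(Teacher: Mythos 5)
Your proposal is correct and follows essentially the same route as the paper: activate the trivial fans in $O(\Delta\lambda_\alpha)$ time and then combine \Cref{claim:UPtime bound} with \Cref{claim:path packing} to bound the $\UpdatePath$ work by $O(\Delta\lambda_\alpha + m_\alpha)$. Your extra observation---that the $O(1)$-time calls on fans surviving to termination are not covered by the sum over $\mathcal F^\star$ in \Cref{claim:UPtime bound} and must be charged separately as $O(\lambda_\alpha \cdot L_{\max}) = O(m_\alpha)$---is a valid refinement of a point the paper's accounting leaves implicit.
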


\begin{proof}
    Activating the trivial Vizing fans in $\mathcal F$ when the subroutine is first called can be done in $O(\Delta \lambda_{\alpha})$ time.
    The running time of the rest of the subroutine is dominated by the time taken to handle the call to $\UpdatePath$.
    Combining Claims~\ref{claim:UPtime bound} and \ref{claim:path packing}, it follows that the total time spent handling calls to $\UpdatePath$ is at most
    $$ O(\Delta \lambda_{\alpha}) + O(1) \cdot \sum_{\F \in \mathcal F^\star} L_{\F} \leq O(\Delta \lambda_{\alpha}) + O(\lambda_\alpha) \cdot O\!\left( \frac{m_{\alpha}}{\lambda_{\alpha}}\right) \leq O(m_\alpha + \Delta \lambda_{\alpha}).\qedhere$$
\end{proof}

\subsection{Analysis of $\ConUFans$: Proof of \Cref{lem:build u-fans}}\label{sec:proof of L2}

The algorithm begins by constructing a separable collection of $\lambda$ u-components $\U$. The following lemma shows that this can be done efficiently.

\begin{lemma}\label{lem:create u-edges}
    Given a graph $G$ and a partial $(\Delta + 1)$-edge coloring $\chi$ of $G$ with $\lambda$ uncolored edges, we can construct a separable collection of $\lambda$ u-edges $\U$ in $O(m)$ time.
\end{lemma}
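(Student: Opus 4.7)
The plan is to process the uncolored edges of $\chi$ one at a time, building $\U$ incrementally while maintaining separability as an invariant. At a high level, for each uncolored edge we decide on a center and then use the \textsc{Missing-Color} query to find an active color that does not clash with the active colors already assigned at that center.

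First, I would initialize an empty separable collection $\U \leftarrow \varnothing$ via $\textsc{Initialize}(G,\chi)$ from Section~\ref{sec:data struc overview}, at a cost of $O(m)$ time. Then, I would iterate through the $\lambda$ edges $e \in E$ with $\chi(e)=\bot$. For each such edge $(u,v)$, I pick one endpoint (say $u$, chosen arbitrarily) to serve as the center, invoke $\textsc{Missing-Color}_{\U}(u)$ to obtain some color $\alpha \in \miss_{\chi}(u) \setminus C_{\U}(u)$, form the u-edge $\e = (u,v,\alpha)$, and insert it via $\textsc{Insert}_{\U}(\e)$.

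To verify correctness, I maintain the invariant that $\U$ remains separable after each insertion. The edge-disjointness condition is immediate: each uncolored edge is visited exactly once and hence contributes exactly one u-edge, all of which are distinct. The distinct-active-color condition at $u$ holds by construction, because $\alpha$ was chosen to lie outside $C_{\U}(u)$; no other vertex's palette is constrained by a non-$\bot$ assignment from $\e$. The key point is that the $\textsc{Missing-Color}_{\U}(u)$ query never fails, and this is precisely what Claim~\ref{claim:missing color} guarantees, applied inductively to the current (separable) $\U$.

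For the running time, the initialization is $O(m)$, and each of the $\lambda \le m$ iterations performs a constant number of data structure operations, each running in $O(1)$ time by Section~\ref{sec:data struc overview} (under the global high-probability hashmap event already assumed there). Hence the total running time is $O(m)$, as claimed. There is no real obstacle: the only subtlety is the availability of a suitable missing color at each step, and Claim~\ref{claim:missing color} delivers this automatically.
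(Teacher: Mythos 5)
Your proposal is correct and follows essentially the same approach as the paper: initialize an empty separable collection, scan the $\lambda$ uncolored edges, and for each one pick a center and an active color via the $\textsc{Missing-Color}$ query (justified by Claim~\ref{claim:missing color}), inserting the resulting u-edge in $O(1)$ time. The correctness and running-time arguments match the paper's proof.
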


\begin{proof}
    The algorithm first initializes an empty separable collection $\U = \varnothing$. It then scans through the edges of the graph and retrieves the $\lambda$ edges $e_1, \dots, e_\lambda$ that are uncolored by $\chi$. The algorithm then scans through each of these uncolored edges and, for each $e_i = (u_i, v_i)$, picks a missing color $\alpha_i \in \miss_{\chi}(u_i) \setminus C_{\U}(u_i)$ (see \Cref{claim:missing color}) and adds the u-edge $\e_i := (u_i, v_i, \alpha_i)$ to $\U$.
    It's easy to verify that the resulting collection $\U$ is separable and contains $\lambda$ u-edges.
    Using the data structures outlined in \Cref{sec:data struc overview}, finding such a color and adding a u-edge to $\U$ can be done in $O(1)$ time. Thus, this entire process can be implemented in $O(m)$ time.
\end{proof}

\noindent 
Let $m_\alpha$ denote the number of edges that have color $\alpha$ w.r.t.~the initial coloring $\chi$ when we first call $\ConUFans$.
The following claim shows that we cannot create too many more edges with the color $\alpha$ throughout this sequence of calls to the subroutines $\PruneVFans$ and $\ReduceUEdges$ for each color $\alpha \in [\Delta + 1]$.

\begin{claim}\label{cl:bounded colors}
    For each color $\alpha \in [\Delta + 1]$, we have that at most $m_\alpha + O(\lambda)$ edges have color $\alpha$ throughout the entire run of the algorithm $\ConUFans$.
\end{claim}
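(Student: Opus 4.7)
The plan is to bound the total increase in the number of edges of color $\alpha$ over the execution of $\ConUFans$ and show that it is $O(\lambda)$; since every contribution I account for is nonnegative, the bound $m_\alpha + O(\lambda)$ then holds at every intermediate point, not only at termination.

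I would begin by cataloguing every primitive that modifies the coloring $\chi$ inside $\ConUFans$: direct color assignments $\chi(e) \leftarrow \alpha'$ and explicit uncolorings $\chi(e) \leftarrow \bot$ (which occur inside $\PruneVFans$ and inside $\UpdatePath$), rotations of Vizing fans (inside both $\PruneVFans$ and $\Vizing$), and calls to $\Vizing(\F)$ which additionally flip at most one maximal $\{\alpha', c_k\}$-alternating path. Two immediate observations dispose of the easy cases: an uncoloring cannot increase any color count, and a Vizing fan rotation merely cyclically shifts the multiset of colors on the fan's edges and hence preserves the count of every color. Thus only direct assignments and alternating-path flips can ever increase the $\alpha$-count.

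For the direct assignments, by inspection of $\PruneVFans$ every such assignment is of the form $\chi(u_i, u_j) \leftarrow \alpha'$, where $\alpha'$ is the primed color being processed in the current outer iteration, and it is paired with the removal of two distinct u-edges $\e_i,\e_j$ from $\E_{\alpha'}(\U)$ (cf.\ \Cref{lem:potential prune}). Consequently, direct assignments that create new color-$\alpha$ edges occur only in the single outer iteration corresponding to $\alpha$, and since u-edges are never re-added to $\U$, their total number is at most $|\E_\alpha(\U)|/2 \leq \lambda/2$.

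The main quantitative step is to bound the contribution of the $\Vizing$ calls. Each such call flips at most one maximal alternating path, whose numbers of edges of the two colors differ by at most one, so the call changes the $\alpha$-count by at most $1$ (and changes it at all only when $\alpha$ coincides with the primed color $\alpha'$ of the fan or with its terminal color $c_k$). It therefore suffices to show that the total number of $\Vizing(\F)$ calls across the entire run of $\ConUFans$ is $O(\lambda)$. For this, observe that every $\Vizing$ call is applied to a Vizing fan which sits in some set $\mathcal F$ at the moment of the call and is removed from $\mathcal F$ immediately afterwards, and that $\mathcal F$ is populated only by $\PruneVFans(\U,\alpha')$, which inserts one fan per u-edge in $\E_{\alpha'}(\U)$. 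Since every u-edge has a fixed primed color, enters $\E_{\alpha'}(\U)$ only in the outer iteration matching that color, and is never reinserted into $\U$ after removal, summing over all outer iterations gives at most $\lambda$ Vizing fans ever constructed, hence $O(\lambda)$ calls to $\Vizing$. The main subtlety I anticipate is that an alternating path flipped while processing some $\alpha' \neq \alpha$ can still interact with color $\alpha$ (whenever $c_k = \alpha$), but the uniform per-call bound of $1$ above absorbs this without a case split. Combining the $O(\lambda)$ bound on direct assignments with the $O(\lambda)$ bound coming from $\Vizing$ calls gives a net increase of $O(\lambda)$ in the $\alpha$-count over its initial value $m_\alpha$, as claimed.
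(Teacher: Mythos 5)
Your proof is correct and follows essentially the same route as the paper's, which simply observes that the $\alpha$-count increases only when the coloring is extended to an uncolored edge, by at most $2$ per extension, with $O(\lambda)$ such extensions in total. One minor slip: a call to $\Vizing(\F)$ changes the $\alpha$-count by at most $2$, not $1$, since besides the path flip it also assigns a color to the fan's uncolored edge (e.g., $\chi(u,v_k)\leftarrow c_k$ in the trivial case), an assignment your cataloguing of ``direct assignments'' does not cover; this affects only the constant and not the $O(\lambda)$ conclusion.
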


\begin{proof}
    This follows from the fact that the number of edges with color $\alpha$ can only increase when we extend the coloring $\chi$ to some uncolored edge. Furthermore, it can only increase by at most $2$ every time this happens. Thus, for any $\alpha \in [\Delta + 1]$, the number of edges with color $\alpha$ can increase by at most $\lambda$ throughout the entire run of the algorithm $\ConUFans$.
\end{proof}

\noindent Let $\lambda_\alpha$ denote the number of $\alpha$-primed u-edges in the initial collection $\U$. Note that the number of such u-edges can only decrease throughout the run of this algorithm.
It follows from Lemmas~\ref{lem:prune} and \ref{lem:new running time reduce} that the total running time of the algorithm $\ConUFans$ across all of these calls to the subroutines $\PruneVFans$ and
$\ReduceUEdges$ is at most
$$\sum_{\alpha \in [\Delta + 1]} \left( O(\lambda_\alpha \Delta) + O(m_\alpha + \lambda + \lambda_\alpha \Delta) \right) \leq O(m + \Delta \lambda).$$

\begin{lemma}
    The algorithm $\ConUFans$ either extends the coloring $\chi$ to at least $\lambda/18$ more edges or returns a separable collection of at least $\lambda/18$ u-fans $\U$.
\end{lemma}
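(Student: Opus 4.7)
The plan is to obtain matching upper and lower bounds on the total number of u-edges removed from $\U$ during the execution of $\ConUFans$, and then combine them with the per-event accounting supplied by Lemmas~\ref{lem:potential prune} and \ref{lem:potential for reduce}. Let $c$ denote the total number of new edges that the algorithm colors, $f$ the total number of u-fans that get added to $\U$, and $d$ the number of u-fans that are subsequently removed from $\U$. The only way a u-fan can leave $\U$ is as the ``at most one other u-component'' removed by a coloring event inside $\ReduceUEdges$ (see Lemma~\ref{lem:potential for reduce}); $\PruneVFans$ never removes a u-fan. Hence $d \leq c$, and the collection returned by the algorithm contains at least $f - c$ u-fans.

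For the lower bound on the number $N$ of u-edges removed, I would use the termination condition of $\ReduceUEdges$: the call $\ReduceUEdges(\U, \alpha, \cdot)$ does not terminate until $|\mathcal F|$ has shrunk from its initial value $\lambda_\alpha$ down to at most $\lambda_\alpha/2$, and Invariant~\ref{invaraint:paths} guarantees that $|\mathcal F| = |\E_\alpha(\U)|$ throughout the call. Since no subroutine ever creates a new u-edge, $\lambda_\alpha$ at the start of the call is at most $\lambda_\alpha^{\mathrm{init}}$, the number of $\alpha$-primed u-edges produced by \Cref{lem:create u-edges}; and $|\E_\alpha(\U)|$ can only decrease further as subsequent colors are processed. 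Summing over $\alpha$ yields that at most $\sum_\alpha \lambda_\alpha^{\mathrm{init}}/2 = \lambda/2$ u-edges remain in $\U$ at termination, so $N \geq \lambda/2$. For the matching upper bound, Lemmas~\ref{lem:potential prune} and \ref{lem:potential for reduce} together say that each coloring or u-fan-creation event removes at most two u-edges from $\U$: the ``at most two from $\E_\alpha(\U)$'' clause gives this directly in the paired cases, and in the trivial-fan or case-(3) events of $\ReduceUEdges$ only a single $\e_\F$ is removed from $\E_\alpha(\U)$, so even when the additional ``other u-component'' happens to be a u-edge the total is still at most two. Thus $N \leq 2(c+f)$, and combining with $N \geq \lambda/2$ gives $c + f \geq \lambda/4$.

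The lemma then follows by an immediate case analysis. If $c \geq \lambda/18$, the first conclusion of the lemma holds. Otherwise $c < \lambda/18$, in which case $f \geq \lambda/4 - c > \lambda/4 - \lambda/18 = 7\lambda/36$, so the returned collection contains at least $f - c > 7\lambda/36 - 2\lambda/36 = 5\lambda/36 \geq \lambda/18$ u-fans, yielding the second conclusion. The main point that requires care is the per-event bound $N \leq 2(c+f)$: the ``other u-component'' destroyed in a coloring event of $\ReduceUEdges$ must be split into two subcases, so that when it is a u-fan it is absorbed into $d$ (and not double-counted as a u-edge), and when it is a u-edge it is balanced against the observation that only a single $\e_\F$ was removed from $\E_\alpha(\U)$ in that event.
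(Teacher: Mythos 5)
Your overall strategy is the same double-counting that the paper packages into the potential function $3(\Psi_{\texttt{f}} + 2\Psi_{\texttt{c}}) + \Psi_{\texttt{e}}$ (\Cref{cl:potential}): lower-bound the number of destroyed u-edges by $\lambda/2$ via the termination condition, upper-bound it per event, and trade off against the events that color edges or create u-fans. The lower bound $N \geq \lambda/2$ and the observation $d \leq c$ (u-fans are only ever destroyed as the ``one other u-component'' of a coloring event in $\ReduceUEdges$) are both correct.

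However, the step you yourself flag as the delicate one is wrong. In a case-(3) event of $\UpdatePath(\F)$ it is \emph{not} true that only the single u-edge $\e_{\F}$ leaves $\E_\alpha(\U)$: if the maximal path $\VizingP(\F)$ terminates at a vertex $y$ that belongs to another Vizing fan $\F' \in \mathcal F$, then \Cref{line:g to F} also removes $\e_{\F'}$ from $\E_\alpha(\U)$ (this is forced, to preserve $|\mathcal F| = |\E_\alpha(\U)|$ in \Cref{invaraint:paths}, because activating $\F$ changes the palette of the leaf $y$ of $\F'$ and invalidates $\F'$). On top of that, \Cref{line:g} may remove a third u-component $\g$ with $c_{\g}(y) = \chi(x,y)$, and $\g$ can be a u-edge primed with a color different from $\alpha$ and distinct from $\e_{\F'}$ (the algorithm's own comment only says $\g$ \emph{might} coincide with $\e_{\F'}$). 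So a single coloring event can destroy three u-edges, and your inequality $N \leq 2(c+f)$ fails; the correct per-event bound is three u-edges per coloring event and two per u-fan creation, i.e.\ $N \leq 3c + 2f$ — which is exactly the asymmetry the paper's weights $3$ and $6$ on $\Psi_{\texttt{f}}$ and $\Psi_{\texttt{c}}$ are designed to absorb. Fortunately your argument survives the repair: with $N \leq 3c + 2f$ and $N \geq \lambda/2$, if $c < \lambda/18$ then $f \geq \lambda/4 - 3c/2$ and hence $f - c \geq \lambda/4 - 5c/2 > \lambda/9 \geq \lambda/18$, so the conclusion still holds. As written, though, the proof contains a false claim at its load-bearing step.
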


\begin{proof}
    To see why this lemma holds, consider the following three quantities and how they evolve over time throughout the execution of the algorithm $\ConUFans$: The number of edges that the algorithm has extended the coloring to so far, $\Psi_{\texttt{c}}$, the number of u-edges in $\U$, $\Psi_{\texttt{e}} := |\mathcal E|$, and the number of u-fans in $\U$, $\Psi_{\texttt{f}} := |\mathcal U \setminus \mathcal E|$. 
    Immediately after constructing the separable collection of $\lambda$ u-edges $\U$, it holds that $\Psi_{\texttt{c}} = 0$, $\Psi_{\texttt{e}} = \lambda$ and $\Psi_{\texttt{f}} = 0$.
    We next argue that by the time the execution of the algorithm has finished,  either $\Psi_{\texttt{c}} \geq \lambda / 18$ or $\Psi_{\texttt{f}} \geq \lambda / 18$ must hold.
    To this end, we employ the following potential function argument.

    \begin{claim}\label{cl:potential}
        The quantity $3(\Psi_{\textnormal{\texttt{f}}} + 2 \Psi_{\textnormal{\texttt{c}}}) + \Psi_{\textnormal{\texttt{e}}}$ is non-decreasing throughout the algorithm's execution.
    \end{claim}

    \begin{proof}
        Consider first a call made by algorithm $\ConUFans$ to the $\PruneVFans$ subroutine. By \Cref{lem:potential prune}, the subroutine repeatedly removes at most $2$ u-edges from $\U$ and either (1) adds a u-fan to $\U$, or (2) extends the coloring to another edge. This may decrease $\Psi_{\texttt{e}}$ by at most $2$ while increasing one of $\Psi_{\texttt{f}}$ or $\Psi_{\texttt{c}}$ by at least $1$, so $3(\Psi_{\texttt{f}} + 2 \Psi_{\texttt{c}}) + \Psi_{\texttt{e}}$ increases in this case.

        Next, consider a call to the $\ReduceUEdges$ subroutine. By \Cref{lem:potential for reduce},  during any iteration, the subroutine removes at most $3$ u-edges from $\U$ and either (1) adds a u-fan to $\U$, or (2) extends the coloring to another edge while removing at most one u-fan from $\U$. This may decrease $\Psi_{\texttt{e}}$ by at most $3$ while increasing $\Psi_{\texttt{f}} + 2\Psi_{\texttt{c}}$  by at least $1$, so $3(\Psi_{\texttt{f}} + 2 \Psi_{\texttt{c}}) + \Psi_{\texttt{e}}$ cannot decrease in value.
    \end{proof}
\noindent
Immediately after constructing the separable collection of $\lambda$ u-edges in $\U$, we have
$3(\Psi_{\texttt{f}} + 2 \Psi_{\texttt{c}}) + \Psi_{\texttt{e}} = \lambda$.
    \Cref{cl:potential} implies that $3(\Psi_{\texttt{f}} + 2 \Psi_{\texttt{c}}) + \Psi_{\texttt{e}} \geq \lambda$
     holds at all times afterwards. Since we have that $|\mathcal E| \leq \lambda/2$ when the algorithm terminates,
     we know that $\Psi_{\texttt{e}} \leq \lambda/2$ at this time, so at that moment we have $3(\Psi_{\texttt{f}} + 2 \Psi_{\texttt{c}}) \geq \lambda/2$. Hence, either $\Psi_{\texttt{f}} \geq \lambda/18$ or $\Psi_{\texttt{c}} \geq \lambda/18$ must hold.
\end{proof}

\section{The Algorithm $\ColorUFans$: Proof of \Cref{lem:color u-fans}}\label{sec:algo}

As input, the algorithm $\ColorUFans$ is given a graph $G$, a partial $(\Delta + 1)$-edge coloring $\chi$ of $G$, and a collection of separable u-fans $\mathcal U$ of size $\lambda$. It then uses two further subroutines, $\PrimeUFans$ and $\ActUFans$, to prime u-fans with the same colors and then activate them.
More specifically, the algorithm first identifies the two least common colors $\alpha, \beta \in [\Delta + 1]$.
It calls the subroutine $\PrimeUFans$ which proceeds to prime $\Omega(\lambda / \Delta)$ of the u-fans in $\U$ with the colors $\alpha$ and $\beta$ (i.e.~to modify the coloring $\chi$ and these u-fans so that they are $\{\alpha, \beta\}$-primed). It then calls $\ActUFans$ which extends the coloring $\chi$ to $\Omega(\lambda /\Delta)$ of the edges in these u-fans. The algorithm repeats this process until it has colored $\Omega(\lambda)$ edges. \Cref{alg:colorufansfinal} gives the pseudocode for $\ColorUFans$.

\begin{algorithm}[H]
    \SetAlgoLined
    \DontPrintSemicolon
    \SetKwRepeat{Do}{do}{while}
    \SetKwBlock{Loop}{repeat}{EndLoop}
    \For{$\Delta/2$ \textbf{\textup{iterations}}}{
        Let $\alpha, \beta \in [\Delta + 1]$ be the two least common colors in $\chi$\;
        $\PrimeUFans(\U, \alpha, \beta)$\;
        $\ActUFans(\U, \alpha, \beta)$\;
    }
    \caption{$\ColorUFans(\mathcal U)$}
    \label{alg:colorufansfinal}
\end{algorithm}

\medskip
\noindent \textbf{Organization of \Cref{sec:algo}:}
We begin by describing and analyzing the subroutines $\PrimeUFans$ and $\ActUFans$ used by $\ColorUFans$ before proving \Cref{lem:color u-fans} in \Cref{sec:proof of L1}.

\subsection{The Subroutine $\PrimeUFans$}\label{sec:primeUfans}

As input, this subroutine is given a graph $G$, a partial $(\Delta + 1)$-edge coloring $\chi$ of $G$, a collection of separable u-fans $\mathcal U$ of size $\lambda$, and the two least common colors $\alpha, \beta \in [\Delta + 1]$ (see \Cref{cl:small colors}).
It then proceeds in \emph{iterations}, where in each iteration it samples a u-fan $\f$ from $\mathcal U$ uniformly at random and attempts to prime $\f$ with the colors $\alpha$ and $\beta$.
The subroutine maintains a subset $\Phi \subseteq \U$ of $\{\alpha, \beta\}$-primed u-fans.
The subroutine performs iterations until $|\Phi| = \Omega(\lambda / \Delta)$, after which we proceed to call the subroutine $\ActUFans$ to extend the coloring $\chi$ to edges in the u-fans in $\Phi$.
The pseudocode in \Cref{alg:colorufans} gives a formal description of the subroutine $\PrimeUFans$.

\begin{algorithm}[H]
    \SetAlgoLined
    \DontPrintSemicolon
    \SetKwRepeat{Do}{do}{while}
    \SetKwBlock{Loop}{repeat}{EndLoop}
    $\Phi \leftarrow \varnothing$ and $\lambda \leftarrow |\mathcal U|$\;
    \While{$|\Phi| < \lambda / (48\Delta)$}{\label{line:while loop}
        Sample $\f = (u, v, w, \gamma, \delta) \sim U$ independently and u.a.r.\;
        $(\alpha', \beta') \leftarrow (\alpha, \beta)$ \textit{\texttt{ ~~~~// $\alpha'$ and $\beta'$ ensure we never flip an $\{\alpha, \beta\}$-alt.~path}}\;
        \If{$\gamma = \beta$ or $\delta = \alpha$}{
            $(\alpha', \beta') \leftarrow (\beta, \alpha)$\;
        }
        Let $P_u$ be the $\{\alpha', \gamma\}$-alternating path starting at $u$ \;
        Let $P_v$ and $P_w$ be the $\{\beta', \delta\}$-alternating paths starting at $v$ and $w$ respectively\;
        $\cost(\f) := |P_u| + |P_v| + |P_w|$\;
        \If{$\cost(\f) \leq 128m/\lambda$\label{line: if 1}}{
            Let $S$ denote the set of endpoints of $P_u$, $P_v$ and $P_w$\;
            \If{the vertices in $S\cup \{u,v,w\}$ are not in any u-fan in $\Phi$\label{line: if 2}}{
                Flip the alternating paths $P_u$, $P_v$ and $P_w$\label{line:colorchange2}\;
                Remove $\f$ and any damaged u-fans from $\mathcal U$\;
                Add the u-fan $(u,v,w,\alpha',\beta')$ to $\U$ and $\Phi$\;
            }
        }
    }
    \Return $(\alpha, \beta)$\;
    \caption{$\PrimeUFans(\mathcal U)$}
    \label{alg:colorufans}
\end{algorithm}

\subsubsection*{Analysis of $\PrimeUFans$}\label{sec:analprimeUfans}

We say that an iteration of (the \textup{\textbf{while}} loop in) $\PrimeUFans$ is \emph{successful} if it adds a u-fan to the set $\Phi$. Otherwise, we say that the iteration \emph{fails} (see \Cref{socialize-flip} for an illustration). Note that the algorithm repeatedly performs iterations until it has performed at least $\lambda/(48 \Delta)$ successful iterations.
The following lemmas summarize the main properties of the subroutine $\PrimeUFans$.

\begin{figure}
	\centering
	\begin{tikzpicture}[thick,scale=0.8]
	\draw (-1, 1) node(1)[circle, draw, color=cyan, fill=black!50,
	inner sep=0pt, minimum width=10pt, label = $u$] {};
	
	\draw (1, 0) node(2)[circle, draw, color=teal, fill=black!50,
	inner sep=0pt, minimum width=10pt, label = $v$] {};
	
	\draw (1, 2) node(3)[circle, draw, color=teal, fill=black!50,
	inner sep=0pt, minimum width=10pt, label = $w$] {};
	
	\draw (-3, 1) node(4)[circle, draw, fill=black!50,
	inner sep=0pt, minimum width=6pt] {};
	
	\draw (-5, 1) node(5)[circle, draw, fill=black!50,
	inner sep=0pt, minimum width=6pt] {};
	
	\draw (-7, 1) node(6)[circle, draw, fill=black!50,
	inner sep=0pt, minimum width=6pt] {};
	
	\draw (3, 0) node(7)[circle, draw, fill=black!50,
	inner sep=0pt, minimum width=6pt] {};
	
	\draw (5, 0) node(8)[circle, draw, fill=black!50,
	inner sep=0pt, minimum width=6pt] {};
	
	\draw (7, 0) node(9)[circle, draw, fill=black!50,
	inner sep=0pt, minimum width=6pt] {};
	
	\draw (3, 2) node(10)[circle, draw, fill=black!50,
	inner sep=0pt, minimum width=6pt] {};
	
	\draw (5, 2) node(11)[circle, draw, fill=black!50,
	inner sep=0pt, minimum width=6pt] {};
	
	\draw (7, 2) node(12)[circle, draw, fill=black!50,
	inner sep=0pt, minimum width=6pt] {};
	
	\draw (9, 0) node(13)[circle, draw, color=orange, fill=black!50, inner sep=0pt, minimum width=10pt, label = $v'$] {};
	\draw (11, -1) node(14)[circle, draw, color=red, fill=black!50,
	inner sep=0pt, minimum width=10pt, label = $u'$] {};
	\draw (9, -2) node(15)[circle, draw, color=orange, fill=black!50,
	inner sep=0pt, minimum width=10pt, label = $w'$] {};
	
	\draw [gray!50] plot [smooth cycle] coordinates {(-2, 1) (1, 3) (2, 1) (1, -1)};
	\node at (-1, 3) {u-fan $\f$};
	\draw [gray!50] plot [smooth cycle] coordinates {(12, -1) (9, 1) (8, -1) (9, -3)};
	\node at (11, -3) {u-fan $\f'\in \Phi$};
	
	\draw [line width = 0.5mm, dashed] (1) to (2);
	\draw [line width = 0.5mm, dashed] (1) to (3);
	\draw [line width = 0.5mm, color=red] (1) to node[above] {$\alpha$} (4);
	\draw [line width = 0.5mm, color=cyan] (4) to node[above] {$\gamma$} (5);
	\draw [line width = 0.5mm, color=red] (5) to node[above] {$\alpha$} (6);
	\draw [line width = 0.5mm, color=orange] (2) to node[above] {$\beta$} (7);
	\draw [line width = 0.5mm, color=teal] (7) to node[above] {$\delta$} (8);
	\draw [line width = 0.5mm, color=orange] (8) to node[above] {$\beta$} (9);
	\draw [line width = 0.5mm, color=teal] (9) to node[above] {$\delta$} (13);
	\draw [line width = 0.5mm, color=orange] (3) to node[above] {$\beta$} (10);
	\draw [line width = 0.5mm, color=teal] (10) to node[above] {$\delta$} (11);
	\draw [line width = 0.5mm, color=orange] (11) to node[above] {$\beta$} (12);
	
	\draw [line width = 0.5mm, dashed] (13) to (14);
	\draw [line width = 0.5mm, dashed] (14) to (15);
	
        \draw[->, >={Triangle}, thick, line width = 0.9mm] (1, -2) to (1, -4);
		
	\draw (-1, -7) node(21)[circle, draw, color=red, fill=black!50,
	inner sep=0pt, minimum width=10pt, label = $u$] {};
	
	\draw (1, -8) node(22)[circle, draw, color=orange, fill=black!50,
	inner sep=0pt, minimum width=10pt, label = $v$] {};
	
	\draw (1, -6) node(23)[circle, draw, color=orange, fill=black!50,
	inner sep=0pt, minimum width=10pt, label = $w$] {};
	
	\draw (-3, -7) node(24)[circle, draw, fill=black!50,
	inner sep=0pt, minimum width=6pt] {};
	
	\draw (-5, -7) node(25)[circle, draw, fill=black!50,
	inner sep=0pt, minimum width=6pt] {};
	
	\draw (-7, -7) node(26)[circle, draw, fill=black!50,
	inner sep=0pt, minimum width=6pt] {};
	
	\draw (3, -8) node(27)[circle, draw, fill=black!50,
	inner sep=0pt, minimum width=6pt] {};
	
	\draw (5, -8) node(28)[circle, draw, fill=black!50,
	inner sep=0pt, minimum width=6pt] {};
	
	\draw (7, -8) node(29)[circle, draw, fill=black!50,
	inner sep=0pt, minimum width=6pt] {};
	
	\draw (3, -6) node(30)[circle, draw, fill=black!50,
	inner sep=0pt, minimum width=6pt] {};
	
	\draw (5, -6) node(31)[circle, draw, fill=black!50,
	inner sep=0pt, minimum width=6pt] {};
	
	\draw (7, -6) node(32)[circle, draw, fill=black!50,
	inner sep=0pt, minimum width=6pt] {};
	
	\draw (9, -8) node(33)[circle, draw, color=orange, fill=black!50, inner sep=0pt, minimum width=10pt, label = $v'$] {};
	\draw (11, -9) node(34)[circle, draw, color=red, fill=black!50,
	inner sep=0pt, minimum width=10pt, label = $u'$] {};
	\draw (9, -10) node(35)[circle, draw, color=orange, fill=black!50,
	inner sep=0pt, minimum width=10pt, label = $w'$] {};
	
	\draw [gray!50] plot [smooth cycle] coordinates {(-2, -7) (1, -5) (2, -7) (1, -9)};
	\node at (-1, -5) {u-fan $\f$};
	\draw [gray!50] plot [smooth cycle] coordinates {(12, -9) (9, -7) (8, -9) (9, -11)};
	\node at (11, -11) {u-fan $\f'\in \Phi$};
	
	\draw [line width = 0.5mm, dashed] (21) to (22);
	\draw [line width = 0.5mm, dashed] (21) to (23);
	\draw [line width = 0.5mm, color=cyan] (21) to node[above] {$\gamma$} (24);
	\draw [line width = 0.5mm, color=red] (24) to node[above] {$\alpha$} (25);
	\draw [line width = 0.5mm, color=cyan] (25) to node[above] {$\gamma$} (26);
	\draw [line width = 0.5mm, color=teal] (22) to node[above] {$\delta$} (27);
	\draw [line width = 0.5mm, color=orange] (27) to node[above] {$\beta$} (28);
	\draw [line width = 0.5mm, color=teal] (28) to node[above] {$\delta$} (29);
	\draw [line width = 0.5mm, color=orange] (29) to node[above] {$\beta$} (33);
	\draw [line width = 0.5mm, color=teal] (23) to node[above] {$\delta$} (30);
	\draw [line width = 0.5mm, color=orange] (30) to node[above] {$\beta$} (31);
	\draw [line width = 0.5mm, color=teal] (31) to node[above] {$\delta$} (32);
	
	\draw [line width = 0.5mm, dashed] (33) to (34);
	\draw [line width = 0.5mm, dashed] (34) to (35);
	
	\draw (8.5, -8.5) node[cross=6, red, label=225:{\text{violating $c_{v'}=\beta$}}] {};
	
\end{tikzpicture}
	\caption{In this picture, $\alpha$ is red, $\beta$ is orange, $\gamma$ is blue, and $\delta$ is green. $(u', v', w')\in \Phi$ is an existing $\{\alpha, \beta\}$-primed u-fan in $\Phi$. If we flip the $\{\alpha, \gamma\}$-alternating path at $u$ and the $\{\beta, \delta\}$-alternating paths at $v$ and $w$, then we would destroy the property that $\beta\in \miss_{\chi}(v')$.}\label{socialize-flip}
\end{figure}
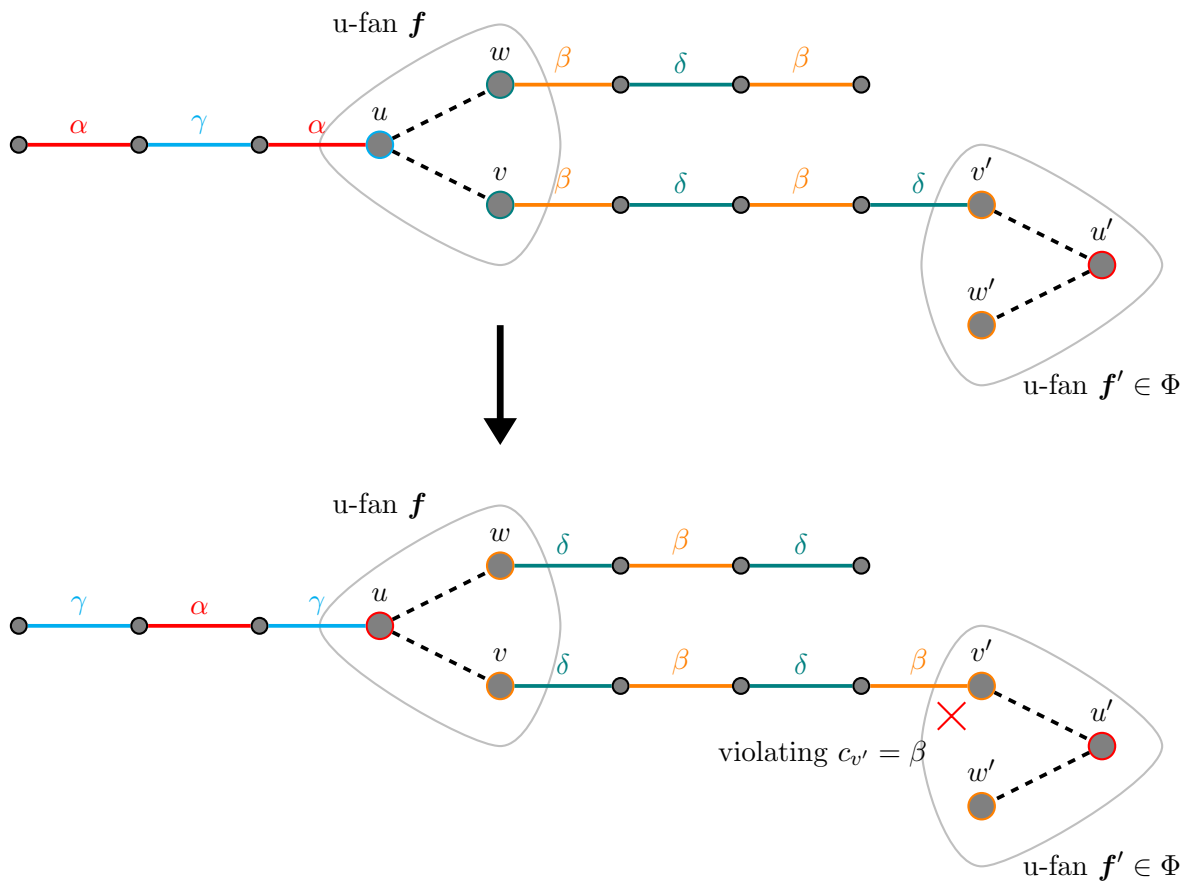

\begin{claim}\label{cl:small colors}
    The total number of edges with color $\alpha$ or $\beta$ is $O(m/\Delta)$ at any point throughout the run of $\PrimeUFans$.
\end{claim}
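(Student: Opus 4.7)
The plan is to prove this claim by combining a pigeonhole bound on the \emph{initial} number of edges colored $\alpha$ or $\beta$ with a careful accounting of how this count can grow across the \textbf{while} loop.

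First, I would argue that at the moment $\PrimeUFans$ is invoked, the number of edges colored $\alpha$ or $\beta$ is already $O(m/\Delta)$. Since there are $\Delta+1$ colors and at most $m$ colored edges, sorting the color classes by size gives that the smallest has at most $m/(\Delta+1)$ edges, and the second smallest has at most $(m - c_1)/\Delta \le m/\Delta$ edges (being at most the average of the remaining $\Delta$ classes). As $\alpha,\beta$ are chosen to be the two least common colors, their combined size is $O(m/\Delta)$.

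Next, I would bound how much this count can grow during the loop. Failed iterations do not modify $\chi$ and hence contribute $0$. A successful iteration flips at most three maximal alternating paths $P_u, P_v, P_w$, of types $\{\alpha',\gamma\}$ and $\{\beta',\delta\}$. The crux is to verify, by a short case analysis on whether the swap $(\alpha',\beta') \leftarrow (\beta,\alpha)$ occurs, that none of these three paths is an $\{\alpha,\beta\}$-alternating path; here I would use both the swap condition (``$\gamma=\beta$ or $\delta=\alpha$'') and the defining property $\gamma\neq\delta$ of a u-fan to eliminate the otherwise-problematic sub-case. Once this is established, any single path flip involves at most one color from $\{\alpha,\beta\}$, so by the usual endpoint-parity argument the total number of $\alpha$-or-$\beta$-colored edges changes by at most $1$ per flip, hence by at most $3$ per successful iteration.

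Finally, the number of successful iterations is capped by the loop condition at $\lambda/(48\Delta)$, so the cumulative increase throughout the subroutine is at most $3\lambda/(48\Delta) = O(\lambda/\Delta) \le O(m/\Delta)$, since $\lambda \le m$. Adding this to the initial $O(m/\Delta)$ bound yields the desired invariant at every point in time. The main obstacle is the case analysis showing that the swap rule genuinely prevents the flipping of an $\{\alpha,\beta\}$-path---everything else is bookkeeping. I would also note, as a sanity check, that edges can only become colored $\alpha$ or $\beta$ inside \Cref{line:colorchange2}, so no hidden increases occur elsewhere in the loop body.
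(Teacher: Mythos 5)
Your proposal is correct and follows essentially the same route as the paper: the two least common colors account for $O(m/\Delta)$ edges initially, each successful iteration changes the count by $O(1)$, and there are at most $\lambda/(48\Delta) = O(m/\Delta)$ successful iterations. The only difference is that you spend effort verifying that no flipped path is an $\{\alpha,\beta\}$-path, which is not actually needed for this claim (flipping any maximal alternating path, including an $\{\alpha,\beta\}$-path, changes the number of edges colored $\alpha$ or $\beta$ by at most one), though the observation is true and is used elsewhere in the paper.
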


\begin{proof}
    Since $\alpha$ and $\beta$ are initially the two least common colors, there are initially $O(m/\Delta)$ such edges.
    Since each successful iteration increases the number of such edges by $O(1)$ and there are at most $O(\lambda/\Delta) \leq  O(m/\Delta)$ such iterations, the claim follows.
\end{proof}

\begin{claim}\label{cl:big U}
    $|\mathcal U| \geq (1 - 1/(2\Delta)) \lambda$ at any point throughout the run of $\PrimeUFans$.
\end{claim}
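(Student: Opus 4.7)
The plan is to bound the net decrease of $|\mathcal U|$ across all successful iterations of the \textbf{while} loop. Starting from $|\mathcal U| = \lambda$, the collection $\mathcal U$ is only modified during a successful iteration---that is, one that passes both the cost test on Line~\ref{line: if 1} and the separation test on Line~\ref{line: if 2}---which removes $\f$ together with any u-fans damaged by flipping $P_u, P_v, P_w$, and then adds one new $\{\alpha',\beta'\}$-primed u-fan. My goal is to show that each successful iteration shrinks $|\mathcal U|$ by at most $5$, and that there are at most roughly $\lambda/(48\Delta)$ such iterations throughout the entire execution.

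For the per-iteration bound, I would invoke Lemma~\ref{lem:low damage flips} once per path, obtaining that each of $P_u, P_v, P_w$ damages at most $2$ u-components in $\mathcal U$, for a total of at most $6$ damaged u-components. Since $\f = (u,v,w,\gamma,\delta)$ contains the starting endpoints of all three paths, any non-empty flip already makes $\f$ one of the damaged u-components (via palette changes at $u$, $v$, or $w$); and if all three paths happen to be empty, the coloring does not change at all, so no u-fan besides $\f$ is damaged. Consequently, at most $6$ u-fans leave $\mathcal U$ while one enters, giving a net change of at least $-5$ per successful iteration.

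For the iteration count, each successful iteration increments $|\Phi|$ by exactly one, and the \textbf{while} loop test prevents $|\Phi|$ from exceeding $\lambda/(48\Delta)$ (up to the final increment). Combining the two bounds, at any intermediate moment the total decrease in $|\mathcal U|$ is at most $5 \cdot \lambda/(48\Delta) \leq \lambda/(2\Delta)$, which yields $|\mathcal U| \geq \lambda - \lambda/(2\Delta) = (1 - 1/(2\Delta))\lambda$. The main obstacle worth flagging is that Lemma~\ref{lem:low damage flips} is stated for a single flip, whereas the three paths here are flipped sequentially before $\mathcal U$ is updated; I would resolve this by observing that separability is a purely structural property of $\mathcal U$ that does not depend on the coloring, so the lemma applies in turn to each of the three flips relative to the original $\mathcal U$, and the total number of damaged u-components is subadditive across the flips, bounded by $6$ as claimed.
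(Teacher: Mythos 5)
Your proof is correct and follows essentially the same route as the paper: bound the loss per successful iteration via Lemma~\ref{lem:low damage flips} (at most $2$ damaged u-fans per flipped path, hence at most $6$ per iteration), multiply by the at most $\lambda/(48\Delta)$ successful iterations enforced by the \textbf{while}-loop condition, and observe the resulting constant is below $1/2$. Your accounting is if anything slightly sharper than the paper's (you credit the newly added u-fan and note that $\f$ itself is among the damaged ones, getting $5$ per iteration where the paper uses $7$), and your remark about applying the lemma to three sequential flips is a fair point that the paper glosses over.
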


\begin{proof}
    Consider a successful iteration of the algorithm where we sample a u-fan $\f \in \mathcal U$. The algorithm removes $\f$ from $\mathcal U$, along with any other damaged u-fans in $\mathcal U$. 
    It follows from \Cref{lem:low damage flips} that flipping the colors of the alternating paths $P_u$, $P_v$ and $P_w$ damages at most $6$ u-fans in $\U$.
    Since our algorithm runs for at most $\lambda/(48 \Delta)$ iterations, it follows that $|\mathcal U| \geq \lambda - 7 \cdot \lambda/(48 \Delta)$.
\end{proof}

\begin{lemma}\label{lem:bipartite:0}
    The u-fans in $\Phi$ are all vertex-disjoint and $\{\alpha, \beta\}$-primed at any point throughout the run of $\PrimeUFans$.
\end{lemma}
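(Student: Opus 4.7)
The plan is to prove both properties by simultaneous induction on the number of successful iterations of the \textbf{while} loop in $\PrimeUFans$. The base case is trivial since $\Phi$ starts empty. For the inductive step, I will consider a successful iteration that samples $\f = (u,v,w,\gamma,\delta)$, flips the paths $P_u, P_v, P_w$, and inserts $\f^\star := (u,v,w,\alpha',\beta')$ into $\Phi$. There are three obligations: (i) every u-fan already in $\Phi$ remains a valid $\{\alpha,\beta\}$-primed u-fan after the flips; (ii) the newly inserted $\f^\star$ is itself a valid $\{\alpha,\beta\}$-primed u-fan; and (iii) $\f^\star$ is vertex-disjoint from every u-fan already in $\Phi$.

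Parts (i) and (iii) will be driven entirely by the guard on \Cref{line: if 2}, which ensures that no vertex of $S \cup \{u,v,w\}$ belongs to any u-fan of $\Phi$, where $S$ denotes the set of endpoints of $P_u, P_v, P_w$. Part (iii) follows immediately since the vertex set of $\f^\star$ is $\{u,v,w\}$. For part (i) I would use two basic facts: flipping a maximal alternating path changes the palette only at its two endpoints (each interior vertex has exactly one edge of each color both before and after the flip), and alternating paths contain only colored edges. Combined with the guard, this means no vertex of a u-fan in $\Phi$ has its palette altered, and no uncolored edge of such a u-fan becomes colored, so each u-fan in $\Phi$ retains its $\{\alpha,\beta\}$-priming.

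For part (ii), first observe that $\{\alpha',\beta'\} = \{\alpha,\beta\}$ by construction, and since alternating paths contain no uncolored edges, $(u,v)$ and $(u,w)$ remain uncolored after the flips. To verify $\alpha' \in \miss_\chi(u)$ and $\beta' \in \miss_\chi(v) \cap \miss_\chi(w)$ I would perform a short case analysis of the swap rule. Degenerate cases in which $\alpha' = \gamma$ (respectively $\beta' = \delta$) are handled immediately: the relevant color is already missing and the corresponding path is empty. In the remaining cases, the swap rule, together with $\gamma \neq \delta$ (from $\f$ being a u-fan) and $\alpha \neq \beta$, forces the color pairs $\{\alpha',\gamma\}$ and $\{\beta',\delta\}$ to be disjoint as sets of colors, so the three flips decouple. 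Flipping $P_u$ then swaps the status of $\alpha'$ and $\gamma$ in $\miss_\chi(u)$; since $\gamma \in \miss_\chi(u)$ by virtue of $\f$, this gives $\alpha' \in \miss_\chi(u)$. The analogous argument at $v$ and $w$ yields $\beta' \in \miss_\chi(v) \cap \miss_\chi(w)$.

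The main obstacle will be a subtle degenerate configuration: $v$ and $w$ might lie on the same maximal $\{\beta',\delta\}$-alternating path, in which case $P_v$ and $P_w$ describe the same edge set traversed from opposite ends, and the pseudocode must be read as flipping this common edge set once rather than twice. A parity argument---both endpoints miss $\delta$, so both endpoint edges of this path must be colored $\beta'$---shows that a single flip simultaneously makes $\beta'$ missing at both $v$ and $w$, preserving the intended effect. Handling this edge case (and the analogous fact that $P_u$ sharing an endpoint with $P_v$ or $P_w$ still leaves the decoupling argument intact, because the relevant color pairs are disjoint) is the only part of the inductive step that requires care beyond routine bookkeeping.
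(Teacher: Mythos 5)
Your proof is correct and takes essentially the same route as the paper's: vertex-disjointness and the preservation of previously primed u-fans both follow from the guard on \Cref{line: if 2}, and the new u-fan is verified to be $\{\alpha,\beta\}$-primed immediately after the flips. The only difference is one of detail---the paper simply asserts that the flips leave $\alpha'$ missing at $u$ and $\beta'$ missing at $v$ and $w$, whereas you carry out the case analysis of the swap rule, the disjointness of the color pairs $\{\alpha',\gamma\}$ and $\{\beta',\delta\}$, and the $P_v=P_w$ degeneracy explicitly; these checks are all sound and consistent with the paper's terser argument.
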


\begin{proof}
    If the algorithm samples a u-fan $\f$ that shares a vertex $x$ with a u-fan $\f' \in \Phi$, then the iteration fails. Thus, the u-fans in $\Phi$ are all vertex-disjoint.
    
    Now, note that whenever we add a u-fan $(u,v,w,\alpha',\beta')$ to $\Phi$, we do this immediately after flipping the paths $P_u$, $P_v$, and $P_w$. Thus, we have that $\alpha' \in \miss_\chi(u)$ and $\beta' \in \miss_\chi(v) \cap \miss_\chi(w)$, so the u-fan is $\{\alpha, \beta\}$-primed when we add it to $\Phi$.
    Since we only flip the paths in \Cref{line:colorchange2} if their endpoints do not touch any u-fans in $\Phi$, this operation cannot change what colors are available at vertices in u-fans in $\Phi$, and hence cannot change whether or not any u-fan in $\Phi$ is $\{\alpha, \beta\}$-primed.
\end{proof}

\noindent The following standard claim bounds the total length of all maximal $\{c, \cdot\}$-alternating paths.

\begin{claim}\label{claim:total ap bound}
    For any color $c \in [\Delta + 1]$, the total length of all maximal $\{c, \cdot\}$-alternating paths in $\chi$ is at most $4m$.
\end{claim}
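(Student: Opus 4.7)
The plan is to establish the bound by a standard double-counting argument over edges. Let $L_{c,c'}$ denote the total length of maximal $\{c,c'\}$-alternating paths. I would first observe that
\[
\sum_{c' \neq c} L_{c,c'} \;=\; \sum_{e \in E}\#\bigl\{c' \neq c : e \text{ lies in a maximal } \{c,c'\}\text{-alternating path}\bigr\},
\]
so it suffices to bound, for each edge $e$, the number of colors $c'$ with which $e$ participates in a maximal $\{c,c'\}$-alternating path. Because $\chi$ is a proper (partial) coloring, the edges of colors $c$ and $c'$ induce a subgraph of maximum degree $2$, whose components are vertex-disjoint paths and even cycles; in particular, for each fixed pair $\{c,c'\}$ every edge is in at most one maximal $\{c,c'\}$-alternating path.

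Next I would split the per-edge contribution into two cases. If $\chi(e) = c$, then $e$ sits inside the $\{c,c'\}$-subgraph for every $c' \neq c$, so it can contribute to as many as $\Delta$ maximal alternating paths (at most one per choice of $c'$). If $\chi(e) = c'' \neq c$, then $e$ can appear in a $\{c,c'\}$-alternating path only when $c' = c''$ (otherwise the alternation with colors in $\{c,c'\}$ is violated at $e$), so $e$ contributes to at most a single maximal path. Summing these per-edge bounds and writing $m_c = |\{e : \chi(e) = c\}|$ yields the key inequality
\[
\sum_{c' \neq c} L_{c,c'} \;\le\; \Delta \cdot m_c + (m - m_c) \;=\; (\Delta-1)m_c + m.
\]

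The remaining step, which I expect to be the main obstacle, is to bound $(\Delta-1)m_c + m$ by $4m$, i.e.~to show $m_c \le 3m/(\Delta-1)$. The only structural fact on $m_c$ that is immediate from a proper coloring is that the $c$-edges form a matching, giving only $m_c \le n/2$, which is insufficient in general. To close this gap I would appeal to the context in which the claim is used: it is invoked in \textsf{PrimeUFans} with $c \in \{\alpha,\beta\}$, where $\alpha,\beta$ are the two least-common colors of $\chi$. By Claim~\ref{cl:small colors} together with the pigeonhole choice of $\alpha,\beta$, one has $m_c \le 2m/(\Delta+1)$, from which $(\Delta-1)m_c \le 2m\cdot(\Delta-1)/(\Delta+1) < 2m$, and hence the total is at most $3m < 4m$. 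I would document that the ``for any color $c$'' phrasing is used in the proof only through this low-frequency regime, so the $4m$ bound is applied exactly where it is needed, and note that the double-counting estimate $(\Delta-1)m_c + m$ is the strongest universal statement the argument yields.
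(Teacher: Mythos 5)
There is a genuine gap here, and you have already put your finger on it: your per-edge accounting charges every edge of color $c$ once for each of the $\Delta$ possible partner colors $c'$, producing the term $(\Delta-1)m_c$, which is not $O(m)$ for an arbitrary color $c$ (e.g.\ a sparse graph of large maximum degree in which the color class of $c$ is a near-perfect matching). Retreating to the two least common colors proves a different, weaker statement than the one claimed, and even there it only yields $O(m)$ with an unspecified constant once \Cref{cl:small colors} is invoked mid-run of $\PrimeUFans$, whereas the paper relies on the explicit constant $4m$ (it feeds into the $8m$ bound on $\sum_{P \in \mathcal P}|P|$, the $128m/\lambda$ cost threshold, and the $1/4$ success probability in \Cref{lem:bipartite:1}). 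The claim as stated is true for every color $c$, so the restriction is not needed.

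The idea you are missing is to charge each path to its edges of color \emph{different from} $c$, rather than counting the $c$-colored edges at all. A maximal $\{c,c'\}$-alternating path of length $k \geq 2$ contains at least $\lfloor k/2 \rfloor \geq k/3$ edges of color $c'$, and an edge with $\chi(e) = c' \neq c$ lies in at most one path of $\mathcal P_c$ overall: it can only belong to a $\{c,\chi(e)\}$-alternating path, and for a fixed color pair the maximal paths are edge-disjoint. Hence the paths of length at least $2$ have total length at most $3\,\left|\{e \in E : \chi(e) \neq c\}\right| \leq 3m$. The paths of length $1$ are single edges and, viewed as paths, are pairwise distinct, so they contribute at most $m$ more, giving $4m$. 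Note that an edge of color $c$ really can lie on up to $\Delta$ distinct maximal paths, exactly as in your case analysis; the point is that each such path of length at least $2$ pays for itself through its private $c'$-colored edges, and the degenerate single-edge paths are counted once as set elements rather than once per choice of $c'$.
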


\begin{proof}
    Let $\mathcal P_c$ denote the set of all such alternating paths. First note that the total length of all alternating paths in $\mathcal P_c$ with length $1$ is at most $m$.
    Now, let $c' \in [\Delta + 1] \setminus \{c\}$ and let $P$ be a maximal $\{c,c'\}$-alternating path with $|P| \geq 2$. We can observe that at least a third of the edges in $P$ have color $c'$, and that each edge with color $c'$ only appears in one path in $\mathcal P_c$. Thus, we have that
    $$\sum_{P \in \mathcal P_c} |P| \leq m + 3 \cdot |\{e \in E \mid \chi(e) \neq c\}| \leq 4m.\qedhere$$
\end{proof}

\begin{lemma}\label{lem:bipartite:1}
    Each iteration of $\PrimeUFans$ succeeds with probability at least $1/4$.
\end{lemma}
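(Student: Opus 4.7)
The plan is to bound the failure probability of a single iteration by $\Pr[B] + \Pr[C]$, where $B$ is the event that the sampled u-fan $\f$ satisfies $\cost(\f) > 128m/\lambda$ (failing the check on \Cref{line: if 1}) and $C$ is the event that some vertex in $S \cup \{u,v,w\}$ is already contained in a u-fan in $\Phi$ (failing the check on \Cref{line: if 2}). I will argue $\Pr[B] \leq 3/8$ and $\Pr[C] \leq 3/8$, which yields $\Pr[\text{success}] \geq 1/4$.

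To bound $\Pr[B]$ I will apply Markov's inequality to $\cost(\f)$ after showing $\sum_{\f \in \U} \cost(\f) = O(m)$. The swap rule in the algorithm guarantees that each of $P_u, P_v, P_w$ is a maximal $\{c_1,c_2\}$-alternating path with $c_1 \in \{\alpha, \beta\}$. For any fixed color pair $\{c_1, c_2\}$ with $c_1 \in \{\alpha, \beta\}$, the maximal $\{c_1, c_2\}$-alternating paths are edge-disjoint and have total length at most $m_{c_1} + m_{c_2}$. Moreover, by the separability of $\U$, each such maximal path can serve as $P_u$ (respectively $P_v$ or $P_w$) for at most $O(1)$ u-fans in $\U$: its two endpoints are the only candidates for the starting vertex, and the separability condition pins down the u-fan once the starting vertex and the contributed color in $C_\U(\cdot)$ are fixed. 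Summing over $c_2 \in [\Delta+1]$ and using $m_\alpha, m_\beta = O(m/\Delta)$ from \Cref{cl:small colors} gives $\sum_{\f \in \U} \cost(\f) = O(m)$. Combined with $|\U| \geq \lambda/2$ from \Cref{cl:big U}, this yields $\E[\cost(\f)] = O(m/\lambda)$, and Markov's inequality with threshold $128m/\lambda$ gives $\Pr[B] \leq 3/8$.

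For $\Pr[C]$ I will count the ``bad'' u-fans in $\U$ whose sampling would trigger the overlap failure and union-bound over $\f' \in \Phi$. Fix $\f' = (u',v',w',\alpha,\beta) \in \Phi$; a u-fan $\f \in \U$ conflicts with $\f'$ only if one of $u',v',w'$ lies in $\{u,v,w\}$ or is an endpoint of some $P_u, P_v, P_w$. For the first case, separability gives $|C_\U(z')| \leq \Delta+1$, so at most $\Delta+1$ u-fans contain any fixed $z' \in \{u',v',w'\}$. For the second case, for each $z' \in \{u',v',w'\}$ and each color pair $\{c_1,c_2\}$ with $c_1 \in \{\alpha,\beta\}$, there is at most one maximal $\{c_1,c_2\}$-alternating path ending at $z'$, and by the same separability argument as above this path corresponds to at most $O(1)$ u-fans across the three roles $P_u, P_v, P_w$. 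Summing over $O(\Delta)$ color pairs yields $O(\Delta)$ bad u-fans per $\f'$. Summing over $|\Phi| < \lambda/(48\Delta)$ gives $O(\lambda/48)$ bad u-fans in total, and dividing by $|\U| \geq \lambda/2$ yields $\Pr[C] \leq 3/8$.

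The main obstacle is the careful separability-based accounting in both steps, in particular ensuring that a maximal alternating path is never charged to more than $O(1)$ u-fans in each of the three roles; this is the feature that makes the charging go through the per-vertex palette budget in $C_\U(\cdot)$ rather than blowing up by a factor of $\Delta$.
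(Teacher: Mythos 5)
Your proposal follows essentially the same route as the paper's proof: bound the cost of the sampled u-fan via Markov's inequality, using the fact that each maximal $\{\alpha,\cdot\}$- or $\{\beta,\cdot\}$-alternating path can be charged to at most $O(1)$ u-fans by separability (the paper's \Cref{cl:1P2F}), and separately count the u-fans whose associated paths would end at a vertex of some u-fan in $\Phi$. The one caveat is that your split of $3/8 + 3/8$ is asserted rather than verified—with the algorithm's actual constants ($|\Phi| < \lambda/(48\Delta)$, threshold $128m/\lambda$, $|\U| \geq \lambda/2$) the two failure probabilities work out to $1/4$ and $1/2$ respectively, which still sum to $3/4$ and yield the claimed success probability of $1/4$.
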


\begin{proof}
    Consider the state of the algorithm at the start of some iteration. For each u-fan $\f' \in \U$, let $P_x(\f')$ denote the alternating path starting at $x \in \f'$ that is considered by the algorithm if it samples $\f' \in \U$. We define the \emph{cost} of the u-fan $\f'$ to be 
    $$ \cost(\f{\!'}) := \sum_{x \in \f'} |P_x(\f{\!'})|. $$
    Let $\mathcal U^\star \subseteq \mathcal U$ denote the subset of u-fans $\f' \in \U$ such that none of the alternating paths $\{P_{x}(\f')\}_{x \in \f'}$ have endpoints at some u-fan in $\Phi$. We can see that the iteration is successful if and only if the u-fan $\f \in \U$ sampled during the iteration satisfies $\cost(\f) \leq 128m/\lambda$ (see \Cref{line: if 1}) and $\f \in \U^\star$ (see \Cref{line: if 2}). Thus, we now show that this happens with probability at least $1/4$.
        
    We first begin with the following claim.

    \begin{claim}\label{cl:1P2F}
        Let $P$ be an $\{\alpha, \cdot\}$- or $\{\beta, \cdot\}$-alternating path in $\chi$. 
        Then there are at most $2$ u-fans $\f' \in \mathcal U$ that, if sampled during the iteration, might cause the algorithm to flip $P$.
    \end{claim}

    \begin{proof}
        Suppose that the path $P$ is a $\{c, c'\}$-alternating path for some colors $c \in \{\alpha, \beta\}$ and $c' \notin \{\alpha, \beta\}$ with endpoints $x$ and $y$.\footnote{Note that the algorithm never flips $\{\alpha, \beta\}$-alternating paths.} Since the collection of u-fans $\mathcal U$ is separable, we know that at most one u-fan in $\mathcal U$ containing $x$ (resp.~$y$) has $c'$ assigned as the color missing at $x$ (resp.~$y$). Thus, we must sample one of these $2$ u-fans for the algorithm to flip the path $P$.
    \end{proof}

    \noindent
    Let $\mathcal P$ denote the set of all maximal $\{\alpha, \cdot\}$- and $\{\beta, \cdot\}$-alternating paths in $\chi$.
    We can see that the expected cost of the u-fan $\f$ sampled during the iteration is
    $$ \mathbb E[\cost(\f)] = \frac{1}{|\mathcal U|} \cdot \sum_{\f \in \U} \sum_{x \in \f} |P_x(\f)| \leq \frac{2}{|\mathcal U|} \cdot \sum_{P \in \mathcal P} |P| \leq \frac{32m}{\lambda}, $$
    where we are using the facts that (1) each path in $P \in \mathcal P$ appears at most twice while summing over the paths $|P_x(\f)|$ by \Cref{cl:1P2F}, (2) that $|\mathcal U| \geq \lambda/2$ by \Cref{cl:big U}, and (3) that that the total length of all paths in $\mathcal P$ is $8m$ by \Cref{claim:total ap bound}.
    Applying Markov's inequality, it follows that
    $$ \Pr \! \left[\cost(\f) \geq \frac{128 m}{\lambda} \right] \leq \frac{1}{4}. $$

    Since each u-fan in $\Phi$ has 3 vertices and at most $2\Delta$ $\{\alpha, \cdot\}$- or $\{\beta, \cdot\}$-alternating paths end at each of these vertices, we know that there are at most $6\Delta |\Phi|$ alternating paths that could be considered during some iteration that end at a u-fan in $\Phi$. By \Cref{cl:1P2F}, there are at most $12\Delta |\Phi|$ u-fans
    in $\mathcal U$ that could cause the algorithm to consider one of these paths (see \Cref{socialize-damage} for an illustration).
    Thus, at least $|\mathcal U| - 12\Delta |\Phi|$ of the u-fans in $\mathcal U$ are contained in $\mathcal U^\star$. It follows that
    $$ |\mathcal U^\star| \geq |\mathcal U| - 12\Delta |\Phi| \geq |\mathcal U| - \frac{\lambda}{4} \geq \frac{|\mathcal U|}{2}, $$
    where we are using the facts that $|\Phi| \leq \lambda /(48\Delta)$ and $|\mathcal U| \geq \lambda / 2$ from \Cref{cl:big U}. Since $\f$ is sampled uniformly at random from $\U$, it follows that
    $\Pr [\f \in \mathcal U^\star] \geq 1/2$. The lemma follows by applying a union bound.

    \begin{figure}
    	\centering
    	\begin{tikzpicture}[thick,scale=0.7]
	\draw (7, -1.5) node(0)[circle, draw, color = red, fill=black!50,
	inner sep=0pt, minimum width=10pt, label = $u'$] {};
	\draw (5, 0) node(1)[circle, draw, color=orange, fill=black!50,
	inner sep=0pt, minimum width=10pt, label = $v'$] {};
	\draw (5, -3) node(2)[circle, draw, color=orange, fill=black!50, inner sep=0pt, minimum width=10pt, label = -90:{$w'$}] {};	
	
	\draw (3, -1) node(3)[circle, draw, fill=black!50,
	inner sep=0pt, minimum width=6pt] {};
	\draw (1, -1) node(4)[circle, draw, fill=black!50,
	inner sep=0pt, minimum width=6pt] {};
	\draw (-1, -1) node(5)[circle, draw, fill=black!50,
	inner sep=0pt, minimum width=6pt] {};
	\draw (-3, -1) node(6)[circle, draw, fill=black!50, 
	inner sep=0pt, minimum width=6pt] {};
	
	\draw (3, 1) node(13)[circle, draw, fill=black!50,
	inner sep=0pt, minimum width=6pt] {};
	\draw (1, 1) node(14)[circle, draw, fill=black!50,
	inner sep=0pt, minimum width=6pt] {};
	\draw (-1, 1) node(15)[circle, draw, fill=black!50,
	inner sep=0pt, minimum width=6pt] {};
	\draw (-3, 1) node(16)[circle, draw, fill=black!50, 
	inner sep=0pt, minimum width=6pt] {};
	
	\draw (3, 0) node(23)[circle, draw, fill=black!50,
	inner sep=0pt, minimum width=6pt] {};
	\draw (1, 0) node(24)[circle, draw, fill=black!50,
	inner sep=0pt, minimum width=6pt] {};
	\draw (-1, 0) node(25)[circle, draw, fill=black!50,
	inner sep=0pt, minimum width=6pt] {};
	\draw (-3, 0) node(26)[circle, draw, fill=black!50, 
	inner sep=0pt, minimum width=6pt] {};	
	
	\draw [line width = 0.5mm, color=olive] (1) to (3);
	\draw [line width = 0.5mm, color=orange] (3) to node[above] {$\beta$} (4);
	\draw [line width = 0.5mm, color=olive] (4) to (5);
	\draw [line width = 0.5mm, color=orange] (5) to node[above] {$\beta$} (6);
	
	\draw [line width = 0.5mm, color=teal] (1) to (23);
	\draw [line width = 0.5mm, color=orange] (23) to node[above] {$\beta$} (24);
	\draw [line width = 0.5mm, color=teal] (24) to (25);
	\draw [line width = 0.5mm, color=orange] (25) to node[above] {$\beta$} (26);
	
	\draw [line width = 0.5mm, color=lime] (1) to (13);
	\draw [line width = 0.5mm, color=orange] (13) to node[above] {$\beta$} (14);
	\draw [line width = 0.5mm, color=lime] (14) to (15);
	\draw [line width = 0.5mm, color=orange] (15) to node[above] {$\beta$} (16);
	
	\draw (3, -2) node(33)[circle, draw, fill=black!50,
	inner sep=0pt, minimum width=6pt] {};
	\draw (1, -2) node(34)[circle, draw, fill=black!50,
	inner sep=0pt, minimum width=6pt] {};
	\draw (-1, -2) node(35)[circle, draw, fill=black!50,
	inner sep=0pt, minimum width=6pt] {};
	\draw (-3, -2) node(36)[circle, draw, fill=black!50, 
	inner sep=0pt, minimum width=6pt] {};
	
	\draw (3, -3) node(43)[circle, draw, fill=black!50,
	inner sep=0pt, minimum width=6pt] {};
	\draw (1, -3) node(44)[circle, draw, fill=black!50,
	inner sep=0pt, minimum width=6pt] {};
	\draw (-1, -3) node(45)[circle, draw, fill=black!50,
	inner sep=0pt, minimum width=6pt] {};
	\draw (-3, -3) node(46)[circle, draw, fill=black!50, 
	inner sep=0pt, minimum width=6pt] {};
	
	\draw (3, -4) node(53)[circle, draw, fill=black!50,
	inner sep=0pt, minimum width=6pt] {};
	\draw (1, -4) node(54)[circle, draw, fill=black!50,
	inner sep=0pt, minimum width=6pt] {};
	\draw (-1, -4) node(55)[circle, draw, fill=black!50,
	inner sep=0pt, minimum width=6pt] {};
	\draw (-3, -4) node(56)[circle, draw, fill=black!50, 
	inner sep=0pt, minimum width=6pt] {};
	
	\draw [line width = 0.5mm, color=Lavender] (2) to (33);
	\draw [line width = 0.5mm, color=orange] (33) to node[above] {$\beta$} (34);
	\draw [line width = 0.5mm, color=Lavender] (34) to (35);
	\draw [line width = 0.5mm, color=orange] (35) to node[above] {$\beta$} (36);
	
	\draw [line width = 0.5mm, color=Orchid] (2) to (43);
	\draw [line width = 0.5mm, color=orange] (43) to node[above] {$\beta$} (44);
	\draw [line width = 0.5mm, color=Orchid] (44) to (45);
	\draw [line width = 0.5mm, color=orange] (45) to node[above] {$\beta$} (46);
	
	\draw [line width = 0.5mm, color=Plum] (2) to (53);
	\draw [line width = 0.5mm, color=orange] (53) to node[above] {$\beta$} (54);
	\draw [line width = 0.5mm, color=Plum] (54) to (55);
	\draw [line width = 0.5mm, color=orange] (55) to node[above] {$\beta$} (56);
	
	\draw (9, -0.5) node(63)[circle, draw, fill=black!50,
	inner sep=0pt, minimum width=6pt] {};
	\draw (11, -0.5) node(64)[circle, draw, fill=black!50,
	inner sep=0pt, minimum width=6pt] {};
	\draw (13, -0.5) node(65)[circle, draw, fill=black!50,
	inner sep=0pt, minimum width=6pt] {};
	\draw (15, -0.5) node(66)[circle, draw, fill=black!50, 
	inner sep=0pt, minimum width=6pt] {};
	
	\draw (9, -1.5) node(73)[circle, draw, fill=black!50,
	inner sep=0pt, minimum width=6pt] {};
	\draw (11, -1.5) node(74)[circle, draw, fill=black!50,
	inner sep=0pt, minimum width=6pt] {};
	\draw (13, -1.5) node(75)[circle, draw, fill=black!50,
	inner sep=0pt, minimum width=6pt] {};
	\draw (15, -1.5) node(76)[circle, draw, fill=black!50, 
	inner sep=0pt, minimum width=6pt] {};
	
	\draw (9, -2.5) node(83)[circle, draw, fill=black!50,
	inner sep=0pt, minimum width=6pt] {};
	\draw (11, -2.5) node(84)[circle, draw, fill=black!50,
	inner sep=0pt, minimum width=6pt] {};
	\draw (13, -2.5) node(85)[circle, draw, fill=black!50,
	inner sep=0pt, minimum width=6pt] {};
	\draw (15, -2.5) node(86)[circle, draw, fill=black!50, 
	inner sep=0pt, minimum width=6pt] {};
	
	\draw [line width = 0.5mm, color=cyan] (0) to (63);
	\draw [line width = 0.5mm, color=red] (63) to node[above] {$\alpha$} (64);
	\draw [line width = 0.5mm, color=cyan] (64) to (65);
	\draw [line width = 0.5mm, color=red] (65) to node[above] {$\alpha$} (66);
	
	\draw [line width = 0.5mm, color=NavyBlue] (0) to (73);
	\draw [line width = 0.5mm, color=red] (73) to node[above] {$\alpha$} (74);
	\draw [line width = 0.5mm, color=NavyBlue] (74) to (75);
	\draw [line width = 0.5mm, color=red] (75) to node[above] {$\alpha$} (76);
	
	\draw [line width = 0.5mm, color=Blue] (0) to (83);
	\draw [line width = 0.5mm, color=red] (83) to node[above] {$\alpha$} (84);
	\draw [line width = 0.5mm, color=Blue] (84) to (85);
	\draw [line width = 0.5mm, color=red] (85) to node[above] {$\alpha$} (86);
	
	\draw [line width = 0.5mm, dashed] (0) to (1);
	\draw [line width = 0.5mm, dashed] (0) to (2);
	
	\draw [gray!50] plot [smooth cycle] coordinates {(8, -1.5) (5, 1) (4, -1.5) (5, -4)};
	\node at (7, -4) {u-fan $\f'\in \Phi$};
	
\end{tikzpicture}
    	\caption{In this picture, $\alpha$ is red, $\beta$ is orange, and $\f' = (u', v', w', \alpha, \beta)\in \Phi$ is a u-fan, and we have drawn $9$ different $\{\alpha, \cdot\}$- or $\{\beta, \cdot\}$-alternating paths starting from $\f'$.}\label{socialize-damage}
    \end{figure}
\end{proof}

\begin{lemma}\label{lem:bipartite:2}
    Each iteration of $\PrimeUFans$ takes time $O(m/ \lambda)$.
\end{lemma}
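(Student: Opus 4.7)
The plan is to exploit the explicit cost cap $\cost(\f) \le 128m/\lambda$ in \Cref{line: if 1} of \Cref{alg:colorufans}: the algorithm never performs any flipping or further bookkeeping unless this threshold is met, so the only concern is to compute the paths $P_u, P_v, P_w$ (and decide whether they are cheap enough) in $O(m/\lambda)$ time, without first paying for their full length.

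The key step is to construct $P_u, P_v, P_w$ \emph{incrementally and in parallel}, traversing one edge at a time in round-robin fashion across the three paths. Using the standard edge-coloring data structures from \Cref{sec:data structs}, each step of extending a maximal $\{c,c'\}$-alternating path costs $O(1)$: from the current endpoint, we look up the neighbor along the required color, and terminate if that color is missing. We keep a running total of edges explored and abort the traversal as soon as the total exceeds $128 m / \lambda$; this way, whether or not $\cost(\f) \le 128m/\lambda$, we spend at most $O(m/\lambda)$ time discovering this.

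Once the paths are known and their cost is within the budget, the remaining work is straightforwardly bounded. Sampling $\f$ uniformly from $\U$ takes $O(1)$ (we maintain $\U$ as an indexed array supporting uniform sampling). The check in \Cref{line: if 2} that the vertices of $\f$ together with the path endpoints do not belong to any u-fan in $\Phi$ can be answered in $O(1)$ per vertex by maintaining, for each $x \in V$, a pointer to the (at most one) u-fan in $\Phi$ containing $x$; since $|\{u,v,w\} \cup S| = O(1)$, this is $O(1)$ overall. Flipping the three paths takes time proportional to $\cost(\f) = O(m/\lambda)$ using the coloring data structures. Finally, by \Cref{lem:low damage flips}, at most $O(1)$ u-fans in $\U$ are damaged by the flip, and each can be located and removed in $O(1)$ time via the queries from \Cref{sec:data struc overview}; inserting the new $\{\alpha',\beta'\}$-primed u-fan into $\U$ and $\Phi$ is also $O(1)$.

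The main obstacle is ensuring that we never actually enumerate a long path: naive construction of the maximal alternating paths from $u, v, w$ would take time $\Theta(|P_u|+|P_v|+|P_w|)$, which could be much larger than $m/\lambda$. The incremental round-robin exploration with early abort circumvents this, and charges exactly $O(m/\lambda)$ work per iteration regardless of the outcome. Combining all contributions yields the claimed $O(m/\lambda)$ time bound per iteration.
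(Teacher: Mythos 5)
Your proposal is correct and matches the paper's proof: the paper likewise bounds the iteration time by traversing the alternating paths with an early abort once their total length exceeds $128m/\lambda$, and notes that the subsequent flipping and bookkeeping are proportional to that capped length. Your write-up simply spells out the data-structure details (round-robin exploration, $O(1)$ membership checks against $\Phi$, constant-time damage detection) more explicitly than the paper does.
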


\begin{proof}  
    Using standard data structures, each iteration can be implemented in time proportional to the length of the alternating paths $P_u$, $P_v$ and $P_w$ considered by the algorithm during the iteration.
    We can check if $|P_u| + |P_v| + |P_w| \leq 128m/\lambda$ in $O(m/\lambda)$ time by traversing these paths and aborting if we notice that their total length exceeds $128m/\lambda$. If their total length is at most $128m/\lambda$, then we can flip these paths and update the collection $\U$ in $O(m /\lambda)$ time.
\end{proof}

\begin{lemma}\label{lemma:prime time}
    The subroutine $\PrimeUFans$ runs in time $O(m \log n/\Delta)$ with high probability. 
\end{lemma}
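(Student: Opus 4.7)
The plan is to combine the per-iteration success probability (\Cref{lem:bipartite:1}) with the per-iteration time bound (\Cref{lem:bipartite:2}), controlled by a Chernoff-style concentration bound on the total number of iterations. More specifically, the while loop terminates once $|\Phi| \geq \lambda/(48\Delta)$, so it suffices to bound the number of iterations $N$ needed to accumulate $k := \lceil \lambda/(48\Delta) \rceil$ successes.

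By \Cref{lem:bipartite:1}, each iteration succeeds with probability at least $1/4$ regardless of prior history, so the sequence of success indicators stochastically dominates a sequence of i.i.d.\ $\mathrm{Bernoulli}(1/4)$ trials. A multiplicative Chernoff bound then implies that, for a sufficiently large constant $C$, running $N := C(k + \log n)$ iterations produces at least $k$ successes with probability at least $1 - n^{-10}$. Together with \Cref{lem:bipartite:2}, which caps each iteration (including aborted ones, thanks to the explicit cost threshold check before any path is flipped) at $O(m/\lambda)$ time, this yields a total runtime of
\[
O\bigl((k + \log n) \cdot m/\lambda\bigr) = O(m/\Delta) + O(m\log n/\lambda)
\]
with high probability. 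The first summand is already $O(m\log n/\Delta)$, and for the second I would appeal to the fact that $\lambda = \Omega(\Delta)$ in every invocation of $\PrimeUFans$ that arises inside $\ColorUFans$, so $m\log n/\lambda = O(m\log n/\Delta)$ throughout.

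The main obstacle is the concentration step, since iterations are not strictly independent: the state of $\Phi$ and $\U$ evolves across iterations, and the identity of the u-fan sampled in a later iteration depends on what has happened so far. This is resolved by observing that the lower bound $\Pr[\text{success} \mid \text{history}] \geq 1/4$ in \Cref{lem:bipartite:1} holds unconditionally on prior outcomes, so a standard coupling to an i.i.d.\ Bernoulli sequence makes the Chernoff bound applicable to the actual process. A minor auxiliary point is the boundary regime $\lambda < 48\Delta$, where $k = 1$ and a direct geometric-tail bound already gives $N = O(\log n)$ with high probability, matching the Chernoff conclusion.
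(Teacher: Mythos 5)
Your proposal follows the same route as the paper's (one-paragraph) proof: combine the per-iteration success probability of at least $1/4$ (\Cref{lem:bipartite:1}) with the $O(m/\lambda)$ per-iteration cost (\Cref{lem:bipartite:2}) and a concentration bound on the number of iterations. You are in fact more careful on the concentration step than the paper, which simply asserts that $O(\lambda\log n/\Delta)$ iterations suffice w.h.p.; your stochastic-domination coupling to i.i.d.\ Bernoulli$(1/4)$ trials is the right way to handle the dependence across iterations, and it yields the tighter count $O(k+\log n)$ for $k=\lceil\lambda/(48\Delta)\rceil$ successes.

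The one step that is not justified is the appeal to ``$\lambda=\Omega(\Delta)$ in every invocation of $\PrimeUFans$ inside $\ColorUFans$.'' Nothing in \Cref{lem:color u-fans} imposes a lower bound on the number of u-fans, and in the overall algorithm the number of uncolored edges---and hence $\lambda$---is driven down toward $O(1)$, so $\lambda$ can be far smaller than $\Delta$; without that assumption your term $O(m\log n/\lambda)$ does not reduce to $O(m\log n/\Delta)$. To be fair, the paper's own proof has the same blind spot: when $\lambda=o(\Delta)$ one still needs $\Theta(\log n)$ iterations to secure even a single success w.h.p., so the claimed $O(\lambda\log n/\Delta)$ iteration bound (and with it the stated runtime) also fails in that regime. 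The paper only repairs this corner case in \Cref{sec:log(n)} (see \Cref{lem:log(n)-color u-fans} and its surrounding discussion), by replacing the per-invocation w.h.p.\ guarantee with an expected-time bound plus a tail inequality of the form $\exp(-\Omega(\delta\lambda))$ and aggregating over all invocations. So your instinct that the small-$\lambda$ regime is the delicate point is correct, but the fix is the expectation-plus-tail reformulation, not the unsupported claim that $\lambda=\Omega(\Delta)$.
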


\begin{proof}
    Since each iteration of $\PrimeUFans$ succeeds with probability at least $1/4$ by \Cref{lem:bipartite:1} and the subroutine performs iterations until it succeeds $\lambda/(48\Delta)$ times, it follows that it performs at most $O(\lambda \log n / \Delta)$ iterations with high probability. Since each iteration takes $O(m /\lambda)$ time by \Cref{lem:bipartite:2}, it follows that the total running time is $O(m \log n/\Delta)$ with high probability.
\end{proof}

\subsection{The Subroutine $\ActUFans$}\label{sec:actUfans}

As input, this subroutine is given a graph $G$, a partial $(\Delta + 1)$-edge coloring $\chi$ of $G$, and a subset $\Phi \subseteq \U$ of $\mu$ vertex-disjoint $\{\alpha, \beta\}$-primed u-fans such that at most $O(m/\Delta)$ edges have color $\alpha$ or $\beta$.
The subroutine repeatedly picks any $\f \in \Phi$ and proceeds to activate the u-fan. It repeats this process until $\Phi = \varnothing$. The pseudocode in \Cref{alg:actufans} gives a formal description of the subroutine.

\begin{algorithm}[H]
    \SetAlgoLined
    \DontPrintSemicolon
    \SetKwRepeat{Do}{do}{while}
    \SetKwBlock{Loop}{repeat}{EndLoop}
    \While{$\Phi \neq \varnothing$}{
        Let $\f \in \Phi$\label{line:fan}\;
        Let $P$ and $P'$ be the $\{\alpha, \beta\}$-alternating paths starting at the leaves of $\f$\label{line:paths}\;
        Activate the u-fan $\f$ by flipping the path $P$ or $P'$\label{line:actpaths}\;
        Remove $\f$ and any damaged u-fans from $\U$ and $\Phi$\label{line:removedead}\;
    }
    \caption{$\ActUFans(\mathcal U, \alpha, \beta)$}
    \label{alg:actufans}
\end{algorithm}

\subsubsection*{Analysis of $\ActUFans$}

The following lemmas summarise the key properties of the subroutine $\ActUFans$.

\begin{lemma}\label{lem:activation number}
    The subroutine $\ActUFans$ extends the coloring $\chi$ to at least $\mu/2$ more edges.
\end{lemma}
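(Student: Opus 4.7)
The plan is to track two quantities per iteration of the \textbf{while} loop: the number of newly colored edges (which I will show is exactly $+1$ per iteration), and the decrease in $|\Phi|$ (which I will show is at most $2$ per iteration). Combining these, since the loop terminates when $|\Phi|=0$ starting from $|\Phi|=\mu$, the total number of iterations is at least $\mu/2$, giving at least $\mu/2$ newly colored edges.

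For the first part, when the algorithm picks $\f=(u,v,w,\alpha,\beta)\in\Phi$ on \Cref{line:fan}, by the definition of a u-fan at least one of the maximal $\{\alpha,\beta\}$-alternating paths starting at $v$ or $w$ does not end at $u$; say the path $P$ from $v$. Flipping $P$ places $\alpha\in\miss_\chi(v)$ (and preserves $\alpha\in\miss_\chi(u)$), so setting $\chi(u,v)\leftarrow\alpha$ properly extends the coloring to exactly one new edge. This is the activation step on \Cref{line:actpaths}.

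For the second part, I bound how many u-fans in $\Phi$ are damaged by the two modifications to $\chi$ in this iteration: (i) flipping $P$, whose endpoints are $v$ and some vertex $y$, and (ii) coloring the edge $(u,v)$. The u-fan $\f$ itself is damaged (its uncolored edge $(u,v)$ is now colored) and accounts for one removal from $\Phi$. Modification (ii) only changes the palettes at $u$ and $v$, both of which lie in $\f$; since the u-fans in $\Phi$ are vertex-disjoint by \Cref{lem:bipartite:0}, and this property is preserved throughout $\ActUFans$ because the subroutine only removes u-fans from $\Phi$, no u-fan in $\Phi\setminus\{\f\}$ contains $u$ or $v$, so (ii) damages no other u-fan in $\Phi$. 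For modification (i), \Cref{lem:low damage flips} applied to the separable collection $\U$ gives that flipping $P$ damages at most $2$ u-components of $\U$, one corresponding to each endpoint of $P$; the damage at $v$ is absorbed by $\f$, and vertex-disjointness in $\Phi$ means at most one u-fan in $\Phi$ contains $y$, contributing at most one further removal. So $|\Phi|$ drops by at most $2$ per iteration, as needed.

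The only subtlety to emphasize is that \emph{interior} vertices of the flipped $\{\alpha,\beta\}$-alternating path $P$ see no palette change (both colors remain used), so only the two endpoints of $P$ can damage u-components in $\U$; this is precisely the content of \Cref{lem:low damage flips}, and it is what enables the clean ``at most $2$ removals per iteration'' bound. The rest is just bookkeeping, so I do not anticipate a real obstacle beyond making these palette-change accounting statements precise.
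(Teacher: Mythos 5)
Your proof is correct and follows essentially the same route as the paper's: each iteration of the \textbf{while} loop colors exactly one edge by activating a u-fan, and by \Cref{lem:low damage flips} the flip damages at most two u-fans in $\Phi$ (the activated one and at most one more at the far endpoint of the flipped path), so $|\Phi|$ drops by at most $2$ per iteration and at least $\mu/2$ iterations occur. Your extra accounting of why coloring $(u,v)$ and the interior of the flipped path cause no further damage is a more explicit version of what the paper leaves implicit.
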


\begin{proof}
    During each iteration of the \textup{\textbf{while}} loop, we activate a u-fan $\f$ and extend the coloring to an uncolored edge in $\f$. By \Cref{lem:low damage flips}, this process damages at most $2$ u-fans in $\U$ (including $\f$). Thus, in each iteration $|\Phi|$ decreases by at most $2$, and hence we extend the coloring to at least $\mu/2$ more edges across all the iterations performed by the subroutine.
\end{proof}

\begin{lemma}\label{lem:actUfanTime}
    The subroutine $\ActUFans$ has a running time of $O(m / \Delta)$.
\end{lemma}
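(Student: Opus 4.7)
The plan is to bound the total running time of $\ActUFans$ by $O(1)$ per iteration of bookkeeping plus $O(|P|+|P'|)$ to explore and possibly flip the two $\{\alpha,\beta\}$-alternating paths $P, P'$ starting at the leaves of the chosen u-fan $\f$. With at most $|\Phi| \leq \mu = O(m/\Delta)$ iterations and $O(1)$ bookkeeping per iteration (via the data structures of \Cref{sec:data struc overview}), the fixed overhead is $O(m/\Delta)$; the remaining task is to show that the total path-traversal cost is also $O(m/\Delta)$. To do this I will first establish that $|E_{\alpha\beta}|$, the number of edges currently colored $\alpha$ or $\beta$, stays $O(m/\Delta)$ throughout the subroutine, and then charge each edge-traversal to an edge of $E_{\alpha\beta}$ while showing each such edge is charged $O(1)$ times.

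For the first claim, $|E_{\alpha\beta}| = O(m/\Delta)$ initially by \Cref{cl:small colors}, and each iteration adds at most one new edge to $E_{\alpha\beta}$: flipping an $\{\alpha,\beta\}$-alternating path preserves $|E_{\alpha\beta}|$ since it just swaps colors within $\{\alpha,\beta\}$, and the activation colors exactly one new edge with $\alpha$. With $O(m/\Delta)$ iterations, $|E_{\alpha\beta}|$ remains $O(m/\Delta)$, so the total length of all maximal $\{\alpha,\beta\}$-alternating paths at any moment is also $O(m/\Delta)$. For the charging, since the u-fans in $\Phi$ are vertex-disjoint (\Cref{lem:bipartite:0}) and $\Phi$ only shrinks during $\ActUFans$, each maximal $\{\alpha,\beta\}$-alternating path at any moment contains at most two u-fan leaves (one per endpoint). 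When we process $\f_i$ and flip the path $P$ from one of its leaves, the palette of $P$'s far endpoint changes, which damages and removes any u-fan there (\Cref{line:removedead}), clearing $P$ of u-fan leaves. If instead a path $P'$ is left unflipped (which happens whenever it ends at $u$, or arbitrarily when both paths avoid $u$), then a u-fan $\f'$ may persist at $P'$'s far endpoint and will traverse those edges once more when later processed, after which the same damaging mechanism removes the remaining leaf. Hence each edge is traversed $O(1)$ times in total, giving total traversal work $O(|E_{\alpha\beta}|) = O(m/\Delta)$.

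The main subtlety, and the step that will require the most care in the actual writeup, is tracking the path structure through activations: coloring the new edge $(u,v)$ with $\alpha$ can merge the flipped $P$ with an existing path incident on $u$ via a $\beta$-edge, and I must verify that the endpoints of any such merged path inherit the ``no surviving u-fan leaf'' property from their constituent pieces, so that the constant-traversal bound is preserved through merges. Combining this with the $O(m/\Delta)$ bookkeeping overhead then yields the claimed $O(m/\Delta)$ running time.
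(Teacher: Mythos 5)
Your charging scheme is essentially the paper's: both proofs bound the work by the $O(m/\Delta)$ edges colored $\alpha$ or $\beta$, and both use the vertex-disjointness of the u-fans in $\Phi$ (\Cref{lem:bipartite:0}) together with the removal of damaged u-fans at the endpoints of flipped paths (\Cref{line:removedead}) to argue that each such edge is charged $O(1)$ times. The only cosmetic difference is that the paper freezes the collection $\mathcal{P}$ of maximal $\{\alpha,\beta\}$-paths at the moment $\ActUFans$ is called and argues each path in $\mathcal{P}$ is flipped at most once, whereas you track the evolving edge set and charge per edge; your additional observation that activations keep the count of $\alpha/\beta$-edges at $O(m/\Delta)$ is correct but not needed if one only charges against the initial path system.

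The genuine problem is the step you defer to the ``actual writeup,'' and the resolution you sketch for it does not work. After flipping $P$ and setting $\chi(u,v)\leftarrow\alpha$, the new maximal $\{\alpha,\beta\}$-path is $Q_u\cdot(u,v)\cdot P^{\mathrm{flip}}$, where $Q_u$ is the maximal $\{\alpha,\beta\}$-path that previously ended at the \emph{center} $u$ (which misses $\alpha$). One endpoint of this merged path is the far endpoint of $P$, whose u-fan is indeed removed by the flip; but the other endpoint is the far endpoint of $Q_u$, a path that has never been traversed or flipped, and nothing prevents that vertex from being a surviving leaf of another u-fan $\f'\in\Phi$. So the merged path does \emph{not} ``inherit the no-surviving-u-fan-leaf property from its constituent pieces'': $Q_u$ never had that property. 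When $\f'$ is later activated, the maximal path from its leaf is the entire merged path, so the edges of $P$ are traversed (and possibly flipped) again; chaining this through $k$ u-fans charges the edges of $P$ up to $k$ times, and the total cost is then not bounded by $O(m/\Delta)$ by your accounting. Closing this gap requires an argument about these merges that goes beyond ``flipping damages the far endpoint'' --- for instance, exploiting which of the two leaf paths the algorithm chooses to explore and flip, or a different amortization --- and is not supplied by the proposal. (The paper's own proof dismisses the issue by asserting that the structure of the paths in $\mathcal{P}$ does not change, which likewise ignores the newly colored edge $(u,v)$; but since you explicitly stake the proof on the inheritance claim, the failure of that claim is a concrete gap in your argument.)
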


\begin{proof}
    Let $\mathcal P$ denote the collection of maximal $\{\alpha, \beta\}$-alternating paths in the coloring $\chi$ when we first call $\ActUFans$. Since at most $O(m/\Delta)$ edges have color $\alpha$ and $\beta$ in this coloring, we have that the total length of the paths in $\mathcal P$ is $O(m/\Delta)$ since the paths in $\mathcal P$ are all vertex-disjoint.

    Throughout the run of $\ActUFans$, we only modify $\chi$ by flipping the colors of $\{\alpha,\beta\}$-alternating paths (see \Cref{line:actpaths}). Thus, the structure of the paths in $\mathcal P$ does not change (but their colors might be flipped).
    Furthermore, each time the subroutine flips an $\{\alpha,\beta\}$-alternating path, it removes any u-fan from $\Phi$ that contains an endpoint of this alternating path (see \Cref{line:removedead}). Thus, each path in $\mathcal P$ is only flipped at most once.

    Using standard data structures, each iteration of the \textup{\textbf{while}} loop can be implemented in time proportional to the length of the alternating path that is flipped during the iteration (see \Cref{line:actpaths}). It follows that the total running time of the subroutine across all iterations is at most $\sum_{P \in \mathcal P} O(|P|) \leq O(m /\Delta)$.
\end{proof}

\begin{claim}\label{cl:big U 2}
    $|\mathcal U|$ decreases by at most $2\mu$ throughout the run of $\ActUFans$.
\end{claim}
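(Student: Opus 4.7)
The plan is to bound (i) the number of iterations of the \textbf{while} loop and (ii) the decrease of $|\mathcal U|$ per iteration, then multiply the two bounds.

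For (i), in each iteration we activate some $\f \in \Phi$ by coloring one of its uncolored edges; this damages $\f$, so $\f$ is removed from $\Phi$ (and $\U$) in the cleanup at Line~\ref{line:removedead}. Hence $|\Phi|$ drops by at least $1$ per iteration, and the loop runs for at most $\mu$ iterations.

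For (ii), consider one iteration activating $\f = (u, v, w, \alpha, \beta)$ by flipping the maximal $\{\alpha,\beta\}$-alternating path $P$ starting at the leaf $v$ (the other case is symmetric), with $P$ ending at some vertex $z$, and then setting $\chi(u, v) \leftarrow \alpha$. By \Cref{lem:low damage flips}, flipping $P$ damages at most two u-components of $\U$ (one at each endpoint $v$ and $z$; note that the one at $v$ is $\f$ itself, since $c_{\f}(v) = \beta$ leaves $v$'s palette). I would then argue that the subsequent assignment $\chi(u, v) \leftarrow \alpha$ introduces no additional damage: it affects only the edge $(u, v)$ and the palettes of $u$ and $v$. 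By edge-disjointness of $\U$, only $\f$ contains the edge $(u, v)$, and $\f$ is already counted. By separability of $\U$, at most one $\g \in \U$ assigns color $\alpha$ at $u$; since $c_{\f}(u) = \alpha$, that u-component must be $\f$, again already counted. Finally, no $\g \in \U$ can assign $\alpha$ at $v$: before the current iteration, $\alpha \notin \miss_\chi(v)$, and $\U$ has not been modified during the iteration, so no $\g \in \U$ containing $v$ has $c_{\g}(v) = \alpha$.

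Combining, $|\U|$ decreases by at most $2\mu$ throughout the run of $\ActUFans$. The only subtle point, and the one to verify with care, is the last claim: although $\alpha$ momentarily appears in $v$'s palette after flipping $P$ (which is precisely what validates the coloring step), no $\g \in \U$ can be relying on $\alpha$ at $v$, because $\U$ was determined with respect to the pre-iteration coloring where $\alpha$ was not missing at $v$.
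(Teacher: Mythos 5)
Your proof follows essentially the same route as the paper: bound the number of iterations by $\mu$ (each iteration removes at least the activated u-fan from $\Phi$) and bound the loss of $|\U|$ per iteration by $2$ via \Cref{lem:low damage flips}. Your extra verification that the assignment $\chi(u,v)\leftarrow\alpha$ causes no further damage is a detail the paper's one-line proof glosses over, and it is correct except in the corner case where $P=\varnothing$ (i.e.\ $\alpha$ was already missing at $v$, so your premise ``$\alpha\notin\miss_\chi(v)$ before the iteration'' fails); there the coloring step could damage one u-component at $v$, but the flip then damages nothing, so the per-iteration bound of $2$ and hence the claim still hold.
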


\begin{proof}
During each iteration, we flip an alternating path and remove the damaged u-fans from $\U$ and $\Phi$. 
It follows from \Cref{lem:low damage flips} that each iteration removes at most $2$ u-fans from these sets. Since we perform at most $\mu$ iterations, the claim follows.
\end{proof}

\subsection{Analysis of $\ColorUFans$: Proof of \Cref{lem:color u-fans}}\label{sec:proof of L1}

Given a separable collection of $\lambda$ u-fans $\mathcal U$, the algorithm $\ColorUFans$ repeatedly calls $\PrimeUFans$ and $\ActUFans$ with the set $\mathcal U$ as described in \Cref{alg:colorufansfinal}.
It repeats this process for $\Delta/2$ iterations.
It follows from Claims~\ref{cl:big U} and \ref{cl:big U 2} that $|\mathcal U|$ decreases by at most a $(1 - 1/\Delta)$ factor during each iteration. Thus, by Bernoulli's inequality, we get that
$$ |\mathcal U| \geq \lambda \cdot \left(1 - \frac{1}{\Delta} \right)^{\Delta/2} \geq \frac{\lambda}{2} $$
throughout the entire run of the algorithm. 
In each iteration of \Cref{alg:colorufansfinal}, we extend the coloring to $\Omega(\lambda / \Delta)$ edges in $O(m \log n/\Delta)$ time w.h.p. Thus, in total, we extend the coloring to $\Omega(\lambda)$ edges in $O(m \log n)$ time w.h.p.

\section{Implementation and Data Structures}\label{sec:data structs}

In this section, we describe the key data structures that we use to implement an edge coloring $\chi$ and a separable collection $\U$, allowing us to efficiently implement the operations performed by our algorithms.
We first describe the data structures and then show how they can be used to efficiently implement the queries described in \Cref{sec:data struc overview}.

\medskip
\noindent \textbf{Implementing an Edge Coloring:}
Let $G = (V, E)$ be a graph of maximum degree $\Delta$ and let $\mathcal C := [\Delta + 1] \cup \{\bot\}$.
We implement an edge coloring $\chi : E \longrightarrow \mathcal C$ of $G$ using the following:
\begin{itemize}
    \item The map $\phi : E \longrightarrow \mathcal C$ where $\phi(e) := \chi(e)$ for all $e \in E$.
    \item The map $\phi' : V \times \mathcal C \longrightarrow E$ where $\phi_u'(c) := \{e \ni u \mid \chi(e) = c\}$.
    \item The set $\chi^{-1}(c) := \{e \in E \mid \chi(e) = c\}$, for all $c \in \mathcal C$.
    \item The set $\miss_\chi(u) \cap [\deg_G(u) + 1]$, for all $u \in V$.\footnote{We take this intersection with $[\deg_G(u)+1]$ instead of maintaining $\miss_\chi(u)$ directly to ensure that the space complexity and initialization time of the data structures are $\tilde O(m)$ and not $\Omega(\Delta n)$.}
\end{itemize}
We implement all of the maps and sets using hashmaps, allowing us to perform membership queries, insertions and deletions in $O(1)$ time (see \Cref{prop:hash-map}).\footnote{By using balanced search trees instead of hashmaps, we can make the data structures deterministic while increasing the time taken to perform these operations to $O(\log n)$.}
The map $\phi'$ allows us to check if a color $c \in [\Delta + 1]$ is available at a vertex $u \in V$ in $O(1)$ time, and if it is not, to find the edge $e \ni u$ with $\chi(e) = c$.
The sets $\{\chi^{-1}(c)\}_{c \in \mathcal C}$ allow us to easily return all edges with a specific color (including $\bot$). Furthermore, we can determine which color classes are the least common in $O(1)$ time.\footnote{For example, we can then maintain a list of colors $c \in \mathcal C$ sorted by the values of $|\chi^{-1}(c)|$.} 
Each time an edge $e$ changes color under $\chi$, we can easily update all of these data structures in $O(1)$ time. Furthermore, given $O(1)$ time query access to an edge coloring $\chi$, we can initialize these data structures in $O(m)$ time.
We note that $\chi$ is a proper edge coloring if and only if $|\phi'_u(c)| \leq 1$ for all $u \in V, c \in [\Delta + 1]$.

Since the hashmap used to implement $\phi$ stores $m$ elements, it follows that it can be implemented with $O(m)$ space. Similarly, the map $\phi'$ stores $2m$ elements (if $\{e \ni u \mid \chi(e) = c\} = \varnothing$, then we do not store anything for $\phi'_u(c)$) and thus can be implemented with $O(m)$ space since each element has size $O(1)$ (recall that $|\phi'_u(c)| \leq 1$ since the coloring is proper). Since each set $\chi^{-1}(c)$ can be stored in space $O(|\chi^{-1}(c)|)$ and $\sum_{c}|\chi^{-1}(c)| = m$, these sets can be implemented in space $O(m)$. Similarly, since $\sum_u |\miss_\chi(u) \cap [\deg_G(u) + 1]| = O(m)$, the sets $\miss_\chi(u) \cap [\deg_G(u) + 1]$ can also be implemented in $O(m)$ space.

\medskip
\noindent \textbf{Implementing a Separable Collection:}
We implement a separable collection $\U$ in a similar manner using the following:
\begin{itemize}
    \item The map $\psi : V \times [\Delta + 1] \longrightarrow \U$ where $\psi_u(c) := \{\g \in \U \mid u \in \g, c_{\g}(u) = c\}$.
    \item The set $C_{\U}(u) := \{c_{\g}(u) \mid \g \in \U, u \in \g\}$, for all $u \in V$.
    \item The set $\overline{C}_{\U}(u) := \left(\miss_\chi(u) \cap [\deg_G(u) + 1] \right) \setminus C_{\U}(u)$, for all $u \in V$.
\end{itemize}
We again implement all of the maps and sets using hashmaps, allowing us to access and change entries in $O(1)$ time.
We note that, since $\U$ is separable, $|\psi_u(c)| \leq 1$ for all $u \in V, c \in [\Delta + 1]$.
Thus, we can determine the size of $\U$ and also sample from $\U$ uniformly at random in $O(1)$ time.\footnote{For example, we can sample a number $r \sim [|\mathcal U|]$ u.a.r.~and then return the $r^{th}$ element in the hashmap that implements $\psi$.} Each time we remove a color $c \in [\Delta + 1]$ from the palette $\miss_\chi(u)$ of a vertex $u \in V$, we can update $\overline{C}_{\U}(u)$ in $O(1)$ time and check $\psi_u(c)$ in $O(1)$ time to find any u-component that has been damaged. Each time we add or remove a u-component from $\U$, we can update all of these data structures in $O(1)$ time. 
Furthermore, we can initialize these data structures for an empty collection in $O(m)$ time by creating an empty map $\psi$, empty sets $C_{\U}(u)$ for each $u \in V$ and copying the sets $\overline{C}_{\U}(u) = \miss_\chi(u) \cap [\deg_G(u) + 1]$ for each $u \in V$ which are maintained by the data structures for the edge coloring $\chi$.
Since $\U$ is separable, we can see that $\overline{C}_{\U}(u) \neq \varnothing$. Thus, whenever we want a color from the set $\miss_\chi(u) \setminus C_{\U}(u)$, it suffices to take an arbitrary color from $\overline{C}_{\U}(u)$.

For each $\g \in \U$, we can see that $\g$ is contained at most $3$ times in $\psi$. Thus, the total space required to store the hashmap that implements $\psi$ is $O(|\U|)$. Since $\U$ is separable, the u-components in $\U$ are edge-disjoint, and thus $|\mathcal U| \leq m$. It follows that the map $\psi$ can be stored with $O(m)$ space. For each $u \in V$, we can observe that $|C_{\U}(u)| \leq \deg_G(u)$ since at most $\deg_G(u)$ many u-components in $\U$ contain the vertex $u$, and $|\overline{C}_{\U}(u)| \leq \deg_G(u) + 1$ since $\overline{C}_{\U}(u) \subseteq [\deg_G(u) + 1]$. Thus, the total space required to store the sets $\{C_{\U}(u)\}_{u \in V}$ and $\{\overline{C}_{\U}(u)\}_{u \in V}$ is $O(m)$.

\subsection{Implementing the Operations from \Cref{sec:data struc overview}}\label{sec:implementing queries}

We now describe how to implement each of the operations from \Cref{sec:data struc overview}.

\medskip
\noindent \textbf{Implementing} $\textsc{Initialize}(G, \chi)$: Suppose that we are given the graph $G$ and $O(1)$ time query access to an edge coloring $\chi$ of $G$. We can initialize the data structures used to maintain the maps $\phi$ and $\phi'$ in $O(m)$ time. We can then scan through the edges $e \in E$ and initialize the sets $\chi^{-1}(c)$ in $O(m)$ time. Finally, we can scan through the vertices $u \in V$ and initialize the sets $\miss_\chi(u) \cap [\deg_G(u) + 1]$ in $O(m)$ time. We can then initialize the data structures for an empty separable collection in $O(m)$ time by creating an empty map $\psi$ and, for each $u \in V$, initializing the sets $C_{\U}(u) \leftarrow \varnothing$ and $\overline{C}_{\U}(u) \leftarrow \miss_\chi(u) \cap [\deg_G(u) + 1]$.

\medskip
\noindent \textbf{Implementing} $\textsc{Insert}_{\U}(\g)$: By performing at most $3$ queries to the map $\psi$, we can check if $\U \cup \{\g\}$ is separable. If so we can update $\psi$ and the sets $C_{\U}(x)$ and $\overline{C}_{\U}(x)$ for $x \in \g$ in $O(1)$ time in order to insert $\g$ into $\U$. Otherwise, we return $\texttt{fail}$.

\medskip
\noindent \textbf{Implementing} $\textsc{Delete}_{\U}(\g)$: We can first make a query to $\psi$ to ensure that $\g \in \U$. If so, we can update $\psi$ and the sets $C_{\U}(x)$ and $\overline{C}_{\U}(x)$ for $x \in \g$ in $O(1)$ time to remove $\g$ from $\U$.

\medskip
\noindent \textbf{Implementing} $\textsc{Find-Component}_{\U}(x,c)$: We make a query to $\psi$ by checking if there is an element $\psi_x(c)$. If no such element is contained in $\psi$, then return $\texttt{fail}$. Otherwise, return the unique u-component in the set $\psi_x(c)$.

\medskip
\noindent \textbf{Implementing} $\textsc{Missing-Color}_{\U}(x)$: Return an arbitrary color from the set $\overline{C}_{\U}(x)$.

\newcommand{\set}[1]{\ensuremath{\{#1\}}}

\section{The Final Algorithm: Proof of~\Cref{thm:main}}\label{sec:log(n)}

Up until this point in the presentation of our algorithm and its analysis, we made no attempt in optimizing the logarithmic runtime factors, which led to an algorithm with $O(m\log^3{n})$ time for finding a $(\Delta+1)$-edge coloring with high probability in~\Cref{thm:main-tech}. We now show that we can further optimize the algorithm and obtain an $O(m\log{n})$ time algorithm and conclude the proof of~\Cref{thm:main}. 

Let us first start by listing where the $\log(n)$-terms come from in the proof of~\Cref{thm:main-tech}: 
\begin{enumerate}
	\item In~\Cref{lem:color u-fans}, when coloring u-fans, we are losing an $O(\log{n})$ factor to ensure the probabilistic guarantees of the algorithm hold with high probability in {each step} (see~\Cref{lemma:prime time}).\label{item:log:2} 
	\item Our application of~\Cref{lem:build u-fans,lem:color u-fans} can color a constant fraction of remaining uncolored edges, hence, we need to run them $O(\log{n})$ times to color all uncolored edges (see~\Cref{lem:main extend}).\label{item:log:3}  
	\item And finally, the entire framework of reducing $(\Delta+1)$ coloring to extending the coloring to $O(m/\Delta)$ uncolored edges using Eulerian partition technique (in the proof of~\Cref{thm:main-tech}) leads 
	to a recursion depth of $O(\log{n})$ leading to another $O(\log{n})$ overhead in the runtime.\label{item:log:4}
\end{enumerate}
All in all, these factors led to the $O(m  \log^3{n})$ bound of our algorithm in~\Cref{thm:main-tech}. 

We now show how these log factors can be reduced to a single one. The two main ideas are: (1) relaxing the requirement of~\Cref{lem:color u-fans} so that its runtime holds in expectation, plus a suitable tail bound. 
Then, instead of maintaining high probability bound on \emph{each} invocation of this lemma, we only bound the runtime of \emph{all} invocations of this lemma together with high probability; (2) relaxing the 
Eulerian partition technique to color most of the graph recursively, instead of the entire graph, and then taking care of the remaining uncolored edges at the end.

We start by presenting a more fine-grained version of two of our main technical lemmas in the proof of~\Cref{thm:main-tech} (\Cref{lem:color u-fans} and~\Cref{lem:main extend}; recall that~\Cref{lem:build u-fans} already runs in linear time deterministically), which correspond to part (1) above, and then use these 
to present part (2) of above and conclude the proof. 

\subsection{Fine-Grained Variants of ~\Cref{lem:color u-fans} and~\Cref{lem:main extend}}

We remove the $O(\log{n})$-term of~\Cref{lem:color u-fans} by focusing on the expected runtime of the algorithm (plus a crucial tail inequality). In the next part, we show how to recover the final result 
even from this weaker guarantee. 

\begin{lemma}[A slight modification of~\Cref{lem:color u-fans}]\label{lem:log(n)-color u-fans}
 	There is an algorithm that, given a graph $G$, a partial $(\Delta + 1)$-edge coloring $\chi$ and a separable collection of 
	$\lambda$ u-fans, extends $\chi$ to $\Omega(\lambda)$ in $T$ (randomized) time such that for some $T_0 = O(m)$, we have, 
	\[
	\expect{T} \leq T_0 \quad \text{and for all $\delta > 0$} \quad \Pr\sparen{T \geq (1+\delta) \cdot 2 T_0} \leq \exp\paren{-\frac{\delta \cdot \lambda}{100}}.
	\] 
\end{lemma}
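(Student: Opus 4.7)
I would run the same $\ColorUFans$ algorithm (\Cref{alg:colorufansfinal}) as in the proof of \Cref{lem:color u-fans}, and only re-analyze its runtime. The previous analysis spent an $O(\log n)$ factor per call to $\PrimeUFans$ to guarantee (via Chernoff) that enough successful inner iterations occur with high probability. Here I replace that with an expectation bound per call together with a single global Chernoff-type tail bound for the total number of inner iterations across the entire run.

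\textbf{Bounding $\expect{T}$.} Each of the $\Delta/2$ outer iterations of $\ColorUFans$ consists of one call to $\PrimeUFans$ followed by one call to $\ActUFans$. By \Cref{lem:actUfanTime}, $\ActUFans$ takes $O(m/\Delta)$ time deterministically, contributing $O(m)$ total across all outer iterations. For $\PrimeUFans$, \Cref{lem:bipartite:2} shows that every inner iteration takes $O(m/\lambda)$ time in the worst case (the algorithm aborts any path traversal exceeding its $128m/\lambda$ budget), and \Cref{lem:bipartite:1} shows that every inner iteration succeeds with probability at least $1/4$ conditional on the history. Since each $\PrimeUFans$ call requires only $\lambda/(48\Delta)$ successes, the expected number of inner iterations per call is $O(\lambda/\Delta)$, and the expected cost per call is $O(m/\Delta)$. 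Summing over the $\Delta/2$ outer iterations gives $\expect{T}\le T_0$ for some $T_0=O(m)$.

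\textbf{Tail bound.} Let $N$ denote the total number of inner $\PrimeUFans$ iterations performed during the entire run of $\ColorUFans$, and let $k := (\Delta/2)\cdot\lambda/(48\Delta) = \Theta(\lambda)$ denote the total number of successes required. Since each inner iteration succeeds with probability at least $1/4$ conditional on the history, $N$ is stochastically dominated by the waiting time for $k$ successes in an i.i.d.~sequence of $\mathrm{Bernoulli}(1/4)$ trials. The per-iteration worst-case runtime bound yields $T \le c_1\cdot N\cdot (m/\lambda) + c_2 m$ for absolute constants $c_1,c_2$, so $T \ge 2(1+\delta) T_0$ forces $N \ge C(1+\delta)\lambda$ for an absolute constant $C$ (chosen in particular so that $C\lambda/4 \ge 2k$). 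By stochastic dominance it suffices to bound the probability that $n := C(1+\delta)\lambda$ independent $\mathrm{Bernoulli}(1/4)$ trials yield fewer than $k$ successes; the expectation is $\mu = n/4 = C(1+\delta)\lambda/4 \ge 2(1+\delta) k$, and the multiplicative Chernoff lower-tail bound gives
\[
\Pr\sparen{\text{fewer than } k \text{ successes among } n \text{ trials}} \;\le\; \exp\!\paren{-\Omega((1+\delta)\lambda)} \;\le\; \exp\!\paren{-\delta\lambda/100}
\]
after suitable tuning of $C$ and the constants $c_1, c_2$.

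\textbf{Main obstacle.} The subtlety is that the inner iterations of $\PrimeUFans$ are not independent: both their success probabilities and their per-iteration costs depend on the entire history (the current $\chi$, $\U$, and $\Phi$). The reason the argument still goes through cleanly is that the bounds provided by \Cref{lem:bipartite:1,lem:bipartite:2} are pointwise over histories---success probability at least $1/4$ and worst-case cost $O(m/\lambda)$ per iteration---so $N$ is stochastically dominated by a negative-binomial random variable with i.i.d.~trials and $T$ decomposes as $N\cdot O(m/\lambda) + O(m)$ deterministically, reducing the whole tail analysis to a textbook Chernoff inequality for i.i.d.~Bernoulli variables. The only genuinely delicate part is tuning the constants so that the tail $\exp(-\delta\lambda/100)$ holds uniformly in $\delta>0$, handling the small-$\delta$ and large-$\delta$ regimes in one shot.
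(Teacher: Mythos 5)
Your proposal is correct and follows essentially the same route as the paper: bound the expected cost per call to $\PrimeUFans$ via the pointwise per-iteration bounds of \Cref{lem:bipartite:1,lem:bipartite:2}, then obtain the tail bound by counting the total number of inner iterations across all $\Delta/2$ calls, stochastically dominating it by the waiting time for $\Theta(\lambda)$ successes in i.i.d.\ $\mathrm{Bernoulli}(1/4)$ trials, and applying a Chernoff bound (the paper phrases this via the stopping time $t$ with $S_t = \gamma\lambda$, which is the same negative-binomial argument). Your explicit decomposition $T \le c_1 N (m/\lambda) + c_2 m$ isolating the deterministic $\ActUFans$ cost is a slightly more careful bookkeeping of what the paper states as "the runtime is $t\cdot O(m/\lambda)$," but it is not a different proof.
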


\begin{proof}
	The amortized runtime for coloring each single edge in~\Cref{lem:color u-fans} is $O(m/\lambda)$ by~\Cref{lem:bipartite:2} assuming the coloring succeeds, 
	which happens with probability at least $1/4$ by~\Cref{lem:bipartite:1}. We emphasize that 
	this is in an average sense: the algorithm $\ColorUFans$ first uses $\PrimeUFans$ to prime $\Omega(\lambda/\Delta)$ u-fans in expected $O(m/\lambda)$ time per u-fan and thus $O((\lambda/\Delta) \cdot (m/\lambda)) = O(m/\Delta)$ expected
	total time, and then deterministically colors them in $O(m/\Delta)$ time using $\ActUFans$. It then repeats this process $\Delta/2$ times to color $\Omega(\lambda)$ edges, implying that $\expect{T} = O(m)$ time. 
		
        We now prove the desired tail inequality on $T$ as well. 
        Let us number all iterations of the while-loop in $\PrimeUFans$ from $1$ to $t$, across \emph{all} $\Delta/2$ times $\PrimeUFans$ is called in $\ColorUFans$: for the $i^{th}$ iteration, define an indicator random variable $X_i$ 
        which is $1$ iff this iteration of the while-loop is successful in priming a u-fan (i.e., increasing the size of $\Phi$). Additionally, for every such $i$, define $S_i := \sum_{j=1}^{i} X_j$ which 
        denotes the number of successful iterations after the algorithm has done $i$ iterations in total. Thus, $t$ is the smallest integer where $S_t = \gamma \cdot \lambda$ for some integer $\gamma \in (0,1)$ which is the fraction of u-fans the algorithm colors 
        (precisely, $\gamma = 1/48 \cdot 1/2 \geq 1/100$). 
        
        For any $i \geq 1$, $S_i$ stochastically dominates the binomial distribution with parameters $i$ and $p=1/4$ (by~\Cref{lem:bipartite:1} each iteration is 
        successful with probability at least $1/4$). Hence, for $\delta > 0$, by concentration results for the binomial distribution, 
        \[
        		\Pr\sparen{t \geq (1+\delta) \cdot 8\gamma \cdot \lambda} \leq \exp\paren{-\frac{(2(1+\delta))^2}{2+2(1+\delta)} \cdot 4\gamma \cdot \lambda} \leq \exp\paren{-\frac{\delta \cdot \lambda}{100}},
        \]
        using a loose upper bound in the last step. Given that the runtime of the algorithm is $t \cdot O(m/\lambda)$, the tail inequality follows. 
\end{proof}

We note~\Cref{lem:log(n)-color u-fans} guarantees that as long as $\lambda = \omega(\log{n})$, the $O(m)$ runtime of the algorithm also holds with high probability. For smaller values of $\lambda$ (which happens only as a corner case in the algorithm\footnote{The
only case in our algorithm where this tail bound cannot be replaced with a high probability bound is when $m/\Delta = o(\log{n})$ and at the same time $\Delta\log^2\Delta = \omega(m\log{\Delta})$, which means $m = o(n\log{n})$ and yet $\Delta = \omega(n/\log{n})$.}), we need the specific tail inequality proven in the lemma instead.

Furthermore, we provide a similarly fine-grained version of~\Cref{lem:main extend} that will be used in the last step of the argument to bypass the $O(\log{n})$ factor
loss of the original lemma. 

\begin{lemma}[A slight modification of \Cref{lem:main extend}]\label{lem:log(n)-main extend}	    
     There is an algorithm that given a graph $G$, a partial $(\Delta + 1)$-edge coloring $\chi$ of $G$ with $\lambda$ uncolored edges, 
     and an integer $\lambda_0 \leq \lambda$, extends $\chi$ to all but $\lambda_0$ uncolored edges in $T$ (randomized) time such that 
     for some $T_0 = O(m \cdot \log{(\lambda/\lambda_0)} + \Delta \lambda)$, we have, 
     \[
     	\expect{T} \leq T_0 \quad \text{and for all $\delta > 0$} \quad \Pr\sparen{T \geq (1+\delta) \cdot 2T_0} \leq \exp\paren{-\frac{\delta \cdot \lambda_0}{200}}. 
     \] 
\end{lemma}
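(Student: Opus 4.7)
The algorithm will be the natural iterative scheme: while the number of uncolored edges exceeds $\lambda_0$, invoke $\ConUFans$ (\Cref{lem:build u-fans}) on the current uncolored edges and, whenever it returns a separable collection of $\Omega(\lambda^{(i-1)})$ u-fans instead of coloring that many edges directly, immediately follow with a single call to $\ColorUFans$ via \Cref{lem:log(n)-color u-fans}. Let $\lambda = \lambda^{(0)} > \lambda^{(1)} > \cdots$ denote the number of uncolored edges after each round. Since in either branch of $\ConUFans$ a constant fraction of the uncolored edges gets colored, and this lower bound is \emph{deterministic}, the sequence $\lambda^{(i)}$ is non-random and decreases by at least a constant factor per round. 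Consequently, after $R = O(\log(\lambda/\lambda_0))$ rounds we have $\lambda^{(R)} \le \lambda_0$, and $\lambda^{(R-1-j)} \ge \lambda_0 \rho^{j}$ for some fixed constant $\rho>1$.

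\textbf{Expected runtime.} Round $i$ incurs a deterministic cost $O(m + \Delta \lambda^{(i-1)})$ from $\ConUFans$ and, in the worst case, a randomized cost $T_i^{\mathrm{col}}$ with $\expect{T_i^{\mathrm{col}}} \le c\,m$ from \Cref{lem:log(n)-color u-fans}. Summing the deterministic parts over $R$ rounds and using the geometric bound $\sum_i \lambda^{(i-1)} = O(\lambda)$ gives $O(mR + \Delta\lambda)$; the expected randomized contribution sums to $R \cdot O(m) = O(mR)$. Choosing a large enough hidden constant $C$ in the definition $T_0 := C\,(m \log(\lambda/\lambda_0) + \Delta\lambda)$ yields $\expect{T} \le T_0$.

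\textbf{Tail inequality.} For the tail, I plan to use the \emph{same} deviation parameter $\delta$ in the tail of \Cref{lem:log(n)-color u-fans} applied to each round:
\[
\Pr\!\sparen{T_i^{\mathrm{col}} \ge 2(1+\delta)\,c\,m} \;\le\; \exp\!\paren{-\delta\, \lambda^{(i-1)}/100}.
\]
A union bound over the $R$ rounds is dominated by the round with the smallest $\lambda^{(i-1)}$---namely the last one, for which $\lambda^{(R-1)} \ge \lambda_0$---because the sum $\sum_{j \ge 0} \exp(-\delta \lambda_0 \rho^{j}/100)$ decays super-exponentially in $j$ and is bounded by a constant multiple of $\exp(-\delta\lambda_0/100)$. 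Off this failure event, the total randomized time is at most $2(1+\delta) c\,m\,R$, so $T \le T^{\mathrm{build}} + 2(1+\delta) c\,m\,R \le 2(1+\delta) T_0$ after absorbing constants into $C$. The remaining factor of $2$ in the denominator of $\exp(-\delta\lambda_0/200)$ is precisely what absorbs the constant loss from the union bound, so the stated tail holds once $\delta \lambda_0$ exceeds some fixed constant; for smaller $\delta\lambda_0$, $\exp(-\delta\lambda_0/200) = \Omega(1)$ and the inequality is trivial.

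\textbf{Main obstacle.} No serious obstacle is expected. The only delicate bookkeeping is in the tail analysis: the geometric decay of $\lambda^{(i-1)}$ is essential so that the union bound over rounds does not blow up logarithmically. The slack factors of $2$ in both $2T_0$ and in $\exp(-\delta\lambda_0/200)$ built into the statement are precisely what let the argument swallow the $O(1)$ constant-factor losses without requiring any sharper concentration bound inside \Cref{lem:log(n)-color u-fans}.
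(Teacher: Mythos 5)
Your proposal is correct and follows essentially the same route as the paper: the same iterative scheme (repeatedly apply \Cref{lem:build u-fans} and \Cref{lem:log(n)-color u-fans} until at most $\lambda_0$ edges remain uncolored), the same geometric-series accounting for the $\Delta\lambda$ terms and the $O(\log(\lambda/\lambda_0))$ applications of the $O(m)$-expected-time coloring step, and the same tail argument via a union bound with a common $\delta$ over rounds whose failure probabilities decay geometrically and are dominated by the last round (where $\lambda' \geq \lambda_0$), with the factor $200$ absorbing the constant loss. The only cosmetic imprecision is the claim that the sequence $\lambda^{(i)}$ is non-random --- it need not be, but only the deterministic constant-factor reduction per round (and hence the deterministic bound on $R$ and the lower bounds $\lambda^{(R-1-j)} \geq \lambda_0\rho^j$) is used, which is exactly what the paper relies on as well.
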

\begin{proof}
	The algorithm is verbatim as in~\Cref{lem:main extend} by replacing~\Cref{lem:color u-fans} with~\Cref{lem:log(n)-color u-fans}: repeatedly color
	a constant fraction of remaining edges as long as $\lambda_0$ uncolored edges remain. 
	Since we start with $\lambda$ uncolored edges and finish with $\lambda_0$ many, and reduce uncolored edges by a constant factor each time, we apply~\Cref{lem:build u-fans} and~\Cref{lem:log(n)-color u-fans} for a total of $O(\log{(\lambda/\lambda_0)})$ times. Moreover, 
	 the $\Delta \lambda$ terms in the runtime of~\Cref{lem:build u-fans}  form a geometric series and thus their contribution to the runtime is $O(\Delta\lambda)$ in total.
	This proves the expected bound on the runtime, i.e., $\expect{T} \leq T_0 = O(m\cdot\log{(\lambda/\lambda_0)} + \Delta \lambda)$. 
	
	As for the tail bound, in each application of~\Cref{lem:log(n)-color u-fans} with an intermediate value $\lambda' \in [\lambda_0 , \lambda]$, 
	and $T'_0 = O(m)$ being the $T_0$-parameter of~\Cref{lem:log(n)-color u-fans}, we have, 
	\[
		\Pr\sparen{\text{runtime of~\Cref{lem:log(n)-color u-fans} with $\lambda'$ uncolored edges} \geq (1+\delta) \cdot 2T'_0} \leq  \exp\paren{-\frac{\delta \cdot \lambda'}{100}}.
	\]
	Thus, by union bound, 
	\begin{align*}
		\Pr\sparen{\text{total runtime} \geq k \cdot T_0} &\leq \sum_{\lambda'} \exp\paren{-\frac{\delta \cdot \lambda'}{100}} \\
		&\leq \sum_{i=0}^{\infty} \cdot \exp\paren{-\frac{\delta \cdot \lambda_0 \cdot \gamma^i}{100}} \tag{as number of uncolored edges drops by some constant factor, say, $\gamma = 1+\Theta(1)$ each time} \\
		&\leq \exp\paren{-\frac{\delta \cdot \lambda_0}{200}}, 
	\end{align*}
	by (loosely) upper bounding the sum of the geometric series using its first term. 
\end{proof}

Before moving on from this subsection, we mention the following specialized concentration inequality that we need in order to be able to exploit the tail bounds proven in~\Cref{lem:log(n)-color u-fans} and~\Cref{lem:log(n)-main extend}. 
The proof follows a standard moment generating function argument and is postponed to~\Cref{app:concentration}. 
\begin{proposition}\label{prop:conc}
	Let $\set{X_i}_{i=1}^{n}$ be $n$ independent non-negative random variables associated with parameters $\set{\alpha_i}_{i=1}^{n}$ and $\set{\beta_i}_{i=1}^{n}$ such that for each $i \in [n]$, $\alpha_i,\beta_i \geq 1$, and for every $\delta > 0$, 
	\[
		\Pr\sparen{X_i \geq (1+\delta) \cdot \alpha_i} \leq \exp\paren{- \delta \cdot \beta_i}; 
	\]
	then,  for every $t \geq 0$,
	\[
		\Pr\sparen{\sum_{i=1}^{n} X_i \geq \sum_{i=1}^{n} \alpha_i + t} \leq \exp\paren{-\left(\min_{i=1}^n\frac{\beta_i}{2\alpha_i} \right) \cdot \paren{t-2\sum_{i=1}^n\frac{\alpha_i}{\beta_i}}}.
	\]
\end{proposition}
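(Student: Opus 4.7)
The proof plan is to apply a standard Chernoff-style moment generating function (MGF) argument, tailored to the one-sided exponential tail hypothesis on each $X_i$. In broad strokes, I will (i) derive an upper bound on $\expect{e^{\lambda X_i}}$ valid for all sufficiently small $\lambda > 0$, (ii) apply Markov's inequality to $e^{\lambda \sum_i X_i}$ and use independence to factor the expectation, and (iii) pick a specific $\lambda$ that balances the two sides and matches the constants in the statement.

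For step (i), since $X_i \geq 0$, I would start from the layer-cake identity $\expect{e^{\lambda X_i}} = 1 + \int_0^\infty \lambda e^{\lambda y}\, \Pr\sparen{X_i \geq y}\, dy$. Splitting this integral at $y = \alpha_i$, on $[0,\alpha_i]$ I bound $\Pr[X_i \geq y]$ trivially by $1$, and on $(\alpha_i,\infty)$ I apply the hypothesis with $\delta = y/\alpha_i - 1 > 0$ to get $\Pr[X_i \geq y] \leq \exp(-\beta_i(y/\alpha_i - 1))$. A direct evaluation of the two resulting integrals, which requires $\lambda\alpha_i < \beta_i$ for convergence at infinity, yields after a brief cancellation the clean bound
\[
\expect{e^{\lambda X_i}} \;\leq\; e^{\lambda \alpha_i} \cdot \frac{\beta_i}{\beta_i - \lambda \alpha_i}.
\]

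For steps (ii) and (iii), I would set $\lambda = \lambda^\star := \min_{i} \beta_i/(2\alpha_i)$, which is exactly the factor appearing in the statement. This choice guarantees $\lambda^\star \alpha_i \leq \beta_i/2$ for every $i$ simultaneously, so each factor $\beta_i/(\beta_i - \lambda^\star \alpha_i)$ stays safely bounded away from $\infty$. Applying Markov and independence gives
\[
\Pr\sparen{\sum_{i=1}^n X_i \geq \sum_{i=1}^n \alpha_i + t} \;\leq\; \frac{\prod_i \expect{e^{\lambda^\star X_i}}}{e^{\lambda^\star \left(\sum_i \alpha_i + t\right)}} \;\leq\; e^{-\lambda^\star t}\, \prod_{i=1}^n \frac{\beta_i}{\beta_i - \lambda^\star \alpha_i}.
\]
To simplify the remaining product, I would use the elementary inequality $1/(1-x) \leq e^{2x}$ valid on $[0,1/2]$ (which follows from the Taylor estimate $-\ln(1-x) \leq x/(1-x) \leq 2x$ on that range) applied with $x = \lambda^\star \alpha_i/\beta_i \leq 1/2$, giving $\prod_i \beta_i/(\beta_i - \lambda^\star \alpha_i) \leq \exp\paren{2\lambda^\star \sum_i \alpha_i/\beta_i}$. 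Substituting into the previous display produces exactly the claimed tail bound.

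The main (though still routine) obstacle is the MGF integration combined with verifying that a single $\lambda$ can be used uniformly across all coordinates inside Markov's inequality; this is precisely why the factor $1/2$ appears inside $\min_i \beta_i/(2\alpha_i)$, and it is the same factor of $2$ that propagates into the $2\sum_i \alpha_i/\beta_i$ correction term on the right-hand side. Everything else is bookkeeping.
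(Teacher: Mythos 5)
Your proposal is correct and follows essentially the same route as the paper's proof: a Chernoff/moment-generating-function argument with the same parameter choice $\lambda^\star=\min_i\beta_i/(2\alpha_i)$ and the same elementary bound $1/(1-x)\le e^{2x}$ on $[0,1/2]$. The only (cosmetic) difference is that you bound $\expect{e^{\lambda X_i}}$ directly via the layer-cake formula rather than passing through the truncated variables $Y_i=\max(X_i-\alpha_i,0)$ as the paper does; your integration is in fact the more careful of the two and does not even need the hypothesis $\beta_i\ge 1$.
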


\subsection{Proof of~\Cref{thm:main}}
We are now ready to use the more fine-grained versions of our main technical lemmas to $(\Delta+1)$ edge color the graph in $O(m\log{n})$ randomized time, and conclude the proof of~\Cref{thm:main}. 

The first part is based on a modification to the Eulerian partition approach used in the proof of~\Cref{thm:main-tech}. Notice that the runtime of this algorithm is even $O(m\log{\Delta})$ and not $O(m\log{n})$ (the distinction 
at this point is irrelevant for us in proving~\Cref{thm:main}, but we state the result this way so we can use it later in~\Cref{app:small-Delta} as well).  

\begin{lemma}\label{lem:log(n)-Euler}
     There is an algorithm that given a graph $G$, finds a $(\Delta+1)$ edge coloring of all but (exactly) $m/\Delta$ edges in $O(m\log{\Delta})$ time
     with high probability. 
\end{lemma}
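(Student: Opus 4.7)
The plan is to adapt the standard Eulerian partition recursion from the proof of Theorem~\ref{thm:main-tech}, with two small modifications: (i) every recursive call is required to leave exactly $m_j/\Delta_j$ edges uncolored (rather than coloring everything), and (ii) at each recursive level we invoke the finer-grained extension algorithm of Lemma~\ref{lem:log(n)-main extend} with $\lambda_0$ set to this quota, so the extension pays only $O(m_j)$ per subgraph instead of $O(m_j\log n)$. This is precisely what buys us the elimination of the $\log n$ factor at every level of the recursion.

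Formally, compute the Eulerian partition of $G$ in $O(m)$ time to obtain edge-disjoint subgraphs $G_1, G_2$ with $|E(G_i)| = m_i$, $m_1+m_2=m$, and $\Delta(G_i) \leq \lceil\Delta/2\rceil$. Recursively color each $G_i$ on a disjoint palette using $\Delta_i+1$ colors, leaving $m_i/\Delta_i$ uncolored edges. The combined coloring uses at most $\Delta+3$ colors and has $m_1/\Delta_1+m_2/\Delta_2 = O(m/\Delta)$ uncolored edges; uncoloring the two least frequent color classes contributes another $O(m/\Delta)$ uncolored edges (by averaging) and brings the palette down to $\Delta+1$. We then invoke Lemma~\ref{lem:log(n)-main extend} with $\lambda = O(m/\Delta)$ and $\lambda_0 = m/\Delta$, which, since $\lambda/\lambda_0 = O(1)$, runs in expected time $O(m\log(\lambda/\lambda_0)+\Delta\lambda) = O(m)$. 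This yields the recursion $T(m,\Delta) \leq T(m_1,\Delta/2)+T(m_2,\Delta/2)+O(m)$ with $m_1+m_2=m$, which unrolls over $O(\log\Delta)$ levels to expected total time $O(m\log\Delta)$.

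For the high-probability bound I would apply Proposition~\ref{prop:conc} to the collection $\{X_{i,j}\}$ of independent random running times of the extension calls, indexed by recursion level $i$ and subgraph $j$. Lemma~\ref{lem:log(n)-main extend} supplies parameters $\alpha_{i,j} = \Theta(m_j^{(i)})$ and $\beta_{i,j} = \Theta(m_j^{(i)}/\Delta_i)$, so that $\sum_{i,j} \alpha_{i,j} = O(m\log\Delta)$, $\sum_{i,j} \alpha_{i,j}/\beta_{i,j} = O(\Delta\log\Delta)$, and $\min_{i,j} \beta_{i,j}/\alpha_{i,j} = \Omega(1/\Delta)$. Proposition~\ref{prop:conc} then gives that $\sum_{i,j} X_{i,j} = O(m\log\Delta)+O(\Delta\log n)$ holds with probability at least $1-n^{-\Omega(1)}$. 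The main obstacle is absorbing the additive $O(\Delta\log n)$ slack into the target $O(m\log\Delta)$: in the dominant regime where $m/\Delta \geq \Omega(\log n/\log\Delta)$ this is immediate, and the boundary case $\Delta = n^{o(1)}$ with very sparse $G$ can be handled by cutting off the recursion on subgraphs whose edge count falls below $\Theta(\Delta_i \log n)$---so that every per-call tail bound is polynomially small in $n$ and a plain union bound suffices---and coloring these small subgraphs directly via a bounded-degree algorithm, whose total cost is shown to fit within the budget by a layer-by-layer edge-accounting argument.
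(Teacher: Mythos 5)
Your construction and expected-time analysis coincide with the paper's proof of this lemma: the same Eulerian-partition recursion in which each recursive call is only required to leave $m_i/\Delta_i$ edges uncolored, followed by uncoloring the two smallest color classes and one invocation of Lemma~\ref{lem:log(n)-main extend} with $\lambda_0 = m/\Delta$ so that $\log(\lambda/\lambda_0)=O(1)$, and the same application of Proposition~\ref{prop:conc} with $\alpha_{i,j}=\Theta(m_j^{(i)})$ and $\beta_{i,j}=\Theta(m_j^{(i)}/\Delta_i)=\Theta(m/\Delta)$. Up to that point the argument is correct and essentially identical to the paper's.

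The gap is in how you dispose of the corner case of the concentration bound. First, you have mischaracterized the problematic regime: if $m/\Delta = o(\log n/\log\Delta)$ then (absent isolated vertices, so $m\ge n/2$) we get $\Delta \ge m/\log n = \Omega(n/\log n)$, i.e.\ the hard case is \emph{very large} $\Delta$, not ``$\Delta = n^{o(1)}$ with very sparse $G$.'' Second, the proposed fallback does not fit the budget. Consider the star $K_{1,n-1}$: here $m_j^{(i)}/\Delta_i = \Theta(1)$ at every level, so your cutoff $m_j^{(i)} < \Theta(\Delta_i\log n)$ triggers already at the root, and the standard bounded-degree edge-coloring algorithm costs $\tilde O(m\Delta)=\tilde O(n^2)$ on that instance --- far above $O(m\log\Delta)$. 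No ``layer-by-layer edge-accounting'' rescues this; the fallback is simply too slow where it is invoked. The paper avoids the issue without any case split of your kind: it sets $t = \Theta\bigl(\sum_{i,j}\alpha_{i,j}\bigr) = \Theta(m\log\Delta)$ directly in Proposition~\ref{prop:conc}, obtains failure probability $\exp\bigl(-\Theta(m\log\Delta/\Delta)\bigr)$, and then observes that this is always $1/\poly(n)$: if $\Delta<\sqrt m$ the exponent is $\Omega(\sqrt m\,\log\Delta)$, and if $\Delta\ge\sqrt m$ then $\log\Delta=\Theta(\log n)$ and $m/\Delta\ge 1$ already gives exponent $\Omega(\log n)$ with a constant that can be made arbitrarily large. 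In particular, in the very regime you flag as hard, $\log\Delta=\Theta(\log n)$ holds automatically and the direct computation closes the case. You should replace your truncation-plus-fallback step with this observation (or an equivalent one); as written, that step of your high-probability argument fails.
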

\begin{proof}
	Consider the following recursive algorithm. 
	Find an Eulerian tour of $G$ and partition $G$ into two edge-disjoint subgraphs $G_1$ and $G_2$ on the same vertex set such that $\Delta(G_1),\Delta(G_2) \leq \lceil \Delta/2 \rceil$
	and $m(G_1),m(G_2) \leq \lceil m/2 \rceil$. For $i \in \{1,2\}$, recursively, find a $(\Delta(G_i)+1)$-edge coloring $\chi_i$ of $G_i$ that leaves $m(G_i)/\Delta(G_i) = O(m/\Delta)$ edges uncolored. 
	Combining $\chi_1$ and $\chi_2$ and uncoloring the two smallest color classes, gives a $(\Delta+1)$ coloring $\chi$ of all but $\lambda := O(m/\Delta)$ edges. 
	We then run~\Cref{lem:log(n)-main extend} to reduce the number of uncolored edges to $\lambda_0 := m/\Delta$ edges. 
	The correctness of the algorithm thus follows immediately. 
	
	For the runtime, we have $k:= O(\Delta)$ sub-problems in total, with $2^{d}$ subproblems on $\ceil{m/2^{d}}$ edges and maximum degree $\ceil{\Delta/2^{d}}$ for $d \leq \ceil{\log{\Delta}}$. 
	Let $X_1,\ldots,X_{k}$ denote the runtime of these subproblems and thus $X := \sum_{i=1}^{k} X_i$ is the total runtime. We have, 
	\[
		\expect{X} = \sum_{i} \expect{X_i} = \sum_{d=1}^{\ceil{\log{\Delta}}} 2^{d} \cdot O \left(\frac{m}{2^{d}} \cdot \log{(\lambda/\lambda_0)} + \frac{\Delta}{2^{d}} \cdot \lambda \right) = O(m\log{\Delta})
	\]
	by~\Cref{lem:log(n)-main extend}, since $\lambda = O(m/\Delta)$, $\lambda_0 = m/\Delta$, and thus $\log{(\lambda/\lambda_0)} = O(1)$ and $\Delta \cdot \lambda = O(m)$. Moreover, 
	by~\Cref{lem:log(n)-main extend}, for the sub-problem corresponding to $X_i$ at some level $d$ of the recursion, $T^d_0 = O(m/2^d)$, and every $\delta \geq 0$, 
	\[
		\Pr\sparen{X_i \geq (1+\delta) \cdot 2T^d_0} \leq \exp\paren{-\delta \cdot \frac{\lambda_0}{200}} = \exp\paren{-\delta \cdot \frac{m}{200\Delta}}. 
	\]
	For this $X_i$, define $\alpha_i := 2T^d_0$ and $\beta_i := m/200\Delta$. We can now apply~\Cref{prop:conc} and for 
	\[
	t := 1000 \sum_{i=1}^{k} \alpha_i = 1000 \cdot \sum_{d=1}^{\ceil{\log{\Delta}}} 2^d \cdot 2T^d_0 = O(m\log{\Delta}),
	\]
	 have, 
	\begin{align*}
		\Pr\sparen{X \geq t} &\leq \exp\paren{-\left(\min_{i=1}^n\frac{\beta_i}{2\alpha_i}\right) \cdot \paren{t-2\sum_{i=1}^n\frac{\alpha_i}{\beta_i}}} \\
		&= \exp\paren{-\frac{\Theta(1)}{\Delta} \cdot \paren{t- 2 \cdot \frac{t \cdot 200\Delta}{1000 \cdot m}}} \\
		&\leq \exp\paren{-\frac{\Theta(1)}{\Delta} \cdot (3t/5)} \tag{as $\Delta \leq m$ always trivially} \\
		&= \exp\paren{-\Theta(1) \cdot \frac{m\log{\Delta}}{\Delta}}, 
	\end{align*}
	where we can make the constant in $\Theta(1)$ arbitrarily large without changing the asymptotic runtime of the algorithm. 
	This probability is always at most $1/\poly(n)$ because either $\Delta < \sqrt{m}$, in which case, the probability is at most $1/2^{\Theta(\sqrt{m})}$, 
	or $\Delta > \sqrt{m}$, and thus $\log{\Delta} = \Theta(\log{n})$ and the probability is $1/\poly(n)$ at the very least. This concludes the proof. 
\end{proof}  

To conclude the proof, we have the following lemma that colors the remaining $m/\Delta$ edges left uncolored by the algorithm of~\Cref{lem:log(n)-Euler}. The proof is a straightforward application of~\Cref{lem:log(n)-main extend},
plus the original Vizing's Fan and Chain approach stated in~\Cref{lem:full vizing proof}.  

\begin{lemma}\label{lem:log(n)-wrap-up}
	There is an algorithm that, given a graph $G$, a partial $(\Delta+1)$-edge coloring $\chi$ 
	with $\lambda=m/\Delta$ uncolored edges, extends the coloring to all edges in $O(m\log{n})$ time with high probability. 
\end{lemma}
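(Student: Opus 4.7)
The plan is a two-phase algorithm that combines our near-linear extend subroutine with the classical Vizing procedure. In Phase~1, we invoke the algorithm of~\Cref{lem:log(n)-main extend} with target $\lambda_0 := C \log n$ for a sufficiently large constant $C$, reducing the uncolored edge count from $\lambda = m/\Delta$ down to $\lambda_0$. In Phase~2, we iterate over the (at most) $\lambda_0$ remaining uncolored edges and extend the coloring to each one via a direct call to $\Vizing$ as in~\Cref{lem:full vizing proof} (building the fan with the standard $\VizingF$ from~\Cref{alg:vizing fan}, no separable collection required). If $\lambda \le \lambda_0$ to begin with, Phase~1 is skipped.

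For Phase~1, since $\Delta \lambda = m$ and $\log(\lambda/\lambda_0) \le \log n$, \Cref{lem:log(n)-main extend} yields expected runtime $T_0 = O(m \log n + m) = O(m \log n)$. Plugging $\delta = 1$ into its tail bound gives
\[
\Pr[\,T \ge 4 T_0\,] \;\le\; \exp(-\lambda_0/200) \;\le\; 1/n^{10}
\]
for $C$ chosen sufficiently large, so Phase~1 finishes in $O(m \log n)$ time with high probability.

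For Phase~2, each invocation of $\Vizing$ costs $O(\Delta + |P|)$, where $P$ is the resulting Vizing chain---a simple alternating path in $G$---so $|P| \le \min(n-1,\,m)$. Combined with the trivial bound $\Delta \le \min(n-1,\,m)$, each call runs in $O(\min(n,m))$ time. Hence Phase~2 runs deterministically in $O(\min(n,m) \cdot \lambda_0) = O(\min(n,m) \cdot \log n) \le O(m \log n)$ time. In the corner case where $\lambda \le \lambda_0$ and Phase~1 is skipped, we have $\Delta > m/\lambda_0 = \Omega(m/\log n)$, so Phase~2's cost is at most $O(\min(n,m) \cdot \lambda) \le O(m \cdot m/\Delta) = O(m \log n)$.

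The main subtlety is the joint choice of $\lambda_0$: it must be $\Omega(\log n)$ so that the tail inequality of \Cref{lem:log(n)-main extend} gives a $1/\poly(n)$ failure probability, yet small enough that the deterministic Phase~2 cost $O(\min(n,m) \cdot \lambda_0)$ fits inside the $O(m \log n)$ budget---the setting $\lambda_0 = \Theta(\log n)$ hits both constraints simultaneously. Everything else is a direct plug-in of previously established lemmas.
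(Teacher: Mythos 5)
Your proposal is correct and follows essentially the same two-phase route as the paper: invoke \Cref{lem:log(n)-main extend} with $\lambda_0 = \Theta(\log n)$ (the paper uses $100\log n$) so that the tail bound yields a $1/\poly(n)$ failure probability, then finish the remaining $O(\log n)$ edges one at a time via the classical Vizing procedure of \Cref{lem:full vizing proof}. Your accounting is, if anything, slightly more careful than the paper's (the $O(\min(n,m))$ per-edge bound and the explicit corner case $\lambda \le \lambda_0$), but the argument is the same.
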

\begin{proof}
	We first run our~\Cref{lem:log(n)-main extend} with the given parameters $\lambda=m/\Delta$ and $\lambda_0=100\log{n}$. By~\Cref{lem:log(n)-main extend}, the expected runtime will be 
	some $T_0 = O(m\log{(\lambda/\lambda_0)} + \Delta \lambda) = O(m\log{n})$ since $\lambda \leq m$ and $\lambda_0 \geq 1$. 
	Moreover, by the same lemma, 
	\[
		\Pr\sparen{\text{runtime of the algorithm} \geq 2000 \cdot T_0} \leq \exp\paren{-10 \lambda_0} \leq \exp\paren{-1000\log{n}} = 1/\poly(n),  
	\]
	since $\lambda_0 \geq 100\log{n}$. Thus, after running this part, in $O(m\log{n})$ time, with high probability, we will be left with only $O(\log{n})$ uncolored edges. 
	We can then color each remaining uncolored edge in $O(n)$ time using the original Vizing's Fans and Chains approach of~\Cref{lem:full vizing proof}, 
	thus obtaining a $(\Delta+1)$ coloring of the entire graph in $O(m\log{n})$ time with high probability. 
\end{proof}

\Cref{thm:main} now follows immediately from~\Cref{lem:log(n)-Euler} and~\Cref{lem:log(n)-wrap-up}.

\section*{Acknowledgements} 
Part of this work was conducted while Sepehr Assadi and Soheil Behnezhad were visiting the Simons Institute for the Theory of Computing as part of the Sublinear Algorithms program. 

Sepehr Assadi is supported in part by a Sloan Research Fellowship, an NSERC Discovery Grant (RGPIN-2024-04290), and a Faculty of Math Research Chair grant from University of Waterloo.
Soheil Behnezhad is funded by an NSF CAREER award CCF-2442812 and a Google Faculty Research Award. Mart\'in Costa is supported by a Google PhD Fellowship. Shay Solomon is funded by the European Union (ERC, DynOpt, 101043159). Views and opinions expressed are however those of the author(s) only and do not necessarily reflect those of the European Union or the European Research Council. Neither the European Union nor the granting authority can be held responsible for them. Shay Solomon is also funded by a grant from the United States-Israel Binational Science Foundation (BSF), Jerusalem, Israel, and the United States National Science Foundation (NSF). Tianyi Zhang is supported by funding from the starting grant ``A New Paradigm for Flow and Cut Algorithms'' (no. TMSGI2\_218022) of the Swiss National Science Foundation.

\bibliography{bibliography.bib}

\newcommand{\etalchar}[1]{$^{#1}$}
\begin{thebibliography}{GKMU18}

\bibitem[Alo03]{alon2003simple}
Noga Alon.
\newblock {A simple algorithm for edge-coloring bipartite multigraphs}.
\newblock {\em Information Processing Letters}, 85(6):301--302, 2003.

\bibitem[Arj82]{arjomandi1982efficient}
Eshrat Arjomandi.
\newblock {An efficient algorithm for colouring the edges of a graph with $\Delta+1$ colours}.
\newblock {\em INFOR: Information Systems and Operational Research}, 20(2):82--101, 1982.

\bibitem[Ass25]{Assadi24}
Sepehr Assadi.
\newblock {Faster Vizing and Near-Vizing Edge Coloring Algorithms}.
\newblock In {\em Annual ACM-SIAM Symposium on Discrete Algorithms (SODA)}, 2025.

\bibitem[BBKO22]{balliu2022distributed}
Alkida Balliu, Sebastian Brandt, Fabian Kuhn, and Dennis Olivetti.
\newblock {Distributed edge coloring in time polylogarithmic in $\Delta$}.
\newblock In {\em Proceedings of the 2022 ACM Symposium on Principles of Distributed Computing}, pages 15--25, 2022.

\bibitem[BCC{\etalchar{+}}24]{BhattacharyaCCSZ24}
Sayan Bhattacharya, Din Carmon, Mart{\'\i}n Costa, Shay Solomon, and Tianyi Zhang.
\newblock {Faster $(\Delta+1)$-Edge Coloring: Breaking the $m\sqrt{n}$ Time Barrier}.
\newblock In {\em 65th IEEE Symposium on Foundations of Computer Science (FOCS)}, 2024.

\bibitem[BCHN18]{BhattacharyaCHN18}
Sayan Bhattacharya, Deeparnab Chakrabarty, Monika Henzinger, and Danupon Nanongkai.
\newblock {Dynamic Algorithms for Graph Coloring}.
\newblock In {\em Proceedings of the Twenty-Ninth Annual {ACM-SIAM} Symposium on Discrete Algorithms (SODA)}, pages 1--20. {SIAM}, 2018.

\bibitem[BCPS24a]{BhattacharyaCPS24c}
Sayan Bhattacharya, Mart{\'{\i}}n Costa, Nadav Panski, and Shay Solomon.
\newblock {Arboricity-Dependent Algorithms for Edge Coloring}.
\newblock In {\em 19th Scandinavian Symposium and Workshops on Algorithm Theory (SWAT)}, volume 294 of {\em LIPIcs}, pages 12:1--12:15, 2024.

\bibitem[BCPS24b]{BhattacharyaCPS24b}
Sayan Bhattacharya, Mart{\'{\i}}n Costa, Nadav Panski, and Shay Solomon.
\newblock {Density-Sensitive Algorithms for ({\(\Delta\)} + 1)-Edge Coloring}.
\newblock In {\em 32nd Annual European Symposium on Algorithms, {ESA} 2024}, volume 308 of {\em LIPIcs}, pages 23:1--23:18, 2024.

\bibitem[BCPS24c]{BhattacharyaCPS24}
Sayan Bhattacharya, Mart\'in Costa, Nadav Panski, and Shay Solomon.
\newblock {Nibbling at Long Cycles: Dynamic (and Static) Edge Coloring in Optimal Time}.
\newblock In {\em Proceedings of the {ACM-SIAM} Symposium on Discrete Algorithms (SODA)}. {SIAM}, 2024.

\bibitem[BCSZ25]{BhattacharyaCSZ24}
Sayan Bhattacharya, Mart{\'\i}n Costa, Shay Solomon, and Tianyi Zhang.
\newblock {Even Faster $(\Delta+1)$-Edge Coloring via Shorter Multi-Step Vizing Chains}.
\newblock In {\em Annual ACM-SIAM Symposium on Discrete Algorithms (SODA)}, 2025.

\bibitem[BD23]{BernshteynD23}
Anton Bernshteyn and Abhishek Dhawan.
\newblock Fast algorithms for vizing's theorem on bounded degree graphs.
\newblock {\em CoRR}, abs/2303.05408, 2023.

\bibitem[BD24]{bernshteyn2024linear}
Anton Bernshteyn and Abhishek Dhawan.
\newblock {A linear-time algorithm for $(1+\epsilon)\Delta$-edge-coloring}.
\newblock {\em arXiv preprint arXiv:2407.04887}, 2024.

\bibitem[BDH{\etalchar{+}}19]{BehnezhadDHKS19}
Soheil Behnezhad, Mahsa Derakhshan, MohammadTaghi Hajiaghayi, Marina Knittel, and Hamed Saleh.
\newblock Streaming and massively parallel algorithms for edge coloring.
\newblock In {\em 27th Annual European Symposium on Algorithms (ESA)}, volume 144 of {\em LIPIcs}, pages 15:1--15:14, 2019.

\bibitem[Ber22]{Bernshteyn22}
Anton Bernshteyn.
\newblock {A fast distributed algorithm for $(\Delta+1)$-edge-coloring}.
\newblock {\em J. Comb. Theory, Ser. {B}}, 152:319--352, 2022.

\bibitem[BGW21]{BhattacharyaGW21}
Sayan Bhattacharya, Fabrizio Grandoni, and David Wajc.
\newblock Online edge coloring algorithms via the nibble method.
\newblock In {\em Proceedings of the{ACM-SIAM} Symposium on Discrete Algorithms (SODA)}, pages 2830--2842. {SIAM}, 2021.

\bibitem[BM17]{BarenboimM17}
Leonid Barenboim and Tzalik Maimon.
\newblock Fully-dynamic graph algorithms with sublinear time inspired by distributed computing.
\newblock In {\em International Conference on Computational Science (ICCS)}, volume 108 of {\em Procedia Computer Science}, pages 89--98. Elsevier, 2017.

\bibitem[BSVW24]{BilkstadSVW24}
Joakim Blikstad, Ola Svensson, Radu Vintan, and David Wajc.
\newblock Online edge coloring is (nearly) as easy as offline.
\newblock In {\em Proceedings of the Annual {ACM} Symposium on Theory of Computing (STOC)}. {ACM}, 2024.

\bibitem[BSVW25]{BlikstadOnline2025}
Joakim Blikstad, Ola Svensson, Radu Vintan, and David Wajc.
\newblock {Deterministic Online Bipartite Edge Coloring}.
\newblock In {\em Annual ACM-SIAM Symposium on Discrete Algorithms (SODA)}, 2025.

\bibitem[CH82]{cole1982edge}
Richard Cole and John Hopcroft.
\newblock On edge coloring bipartite graphs.
\newblock {\em SIAM Journal on Computing}, 11(3):540--546, 1982.

\bibitem[CHL{\etalchar{+}}20]{ChangHLPU20}
Yi{-}Jun Chang, Qizheng He, Wenzheng Li, Seth Pettie, and Jara Uitto.
\newblock {Distributed Edge Coloring and a Special Case of the Constructive Lov{\'{a}}sz Local Lemma}.
\newblock {\em {ACM} Trans. Algorithms}, 16(1):8:1--8:51, 2020.

\bibitem[Chr23]{Christiansen23}
Aleksander Bj{\o}rn~Grodt Christiansen.
\newblock {The Power of Multi-step Vizing Chains}.
\newblock In {\em Proceedings of the 55th Annual {ACM} Symposium on Theory of Computing (STOC)}, pages 1013--1026. {ACM}, 2023.

\bibitem[Chr24]{Christiansen24}
Aleksander B.~G. Christiansen.
\newblock Deterministic dynamic edge-colouring.
\newblock {\em CoRR}, abs/2402.13139, 2024.

\bibitem[CK08]{cole2008new}
Richard Cole and {\L}ukasz Kowalik.
\newblock New linear-time algorithms for edge-coloring planar graphs.
\newblock {\em Algorithmica}, 50(3):351--368, 2008.

\bibitem[CMZ24]{chechik2023streaming}
Shiri Chechik, Doron Mukhtar, and Tianyi Zhang.
\newblock Streaming edge coloring with subquadratic palette size.
\newblock In {\em 51st International Colloquium on Automata, Languages, and Programming, {ICALP} 2024, July 8-12, 2024, Tallinn, Estonia}, volume 297 of {\em LIPIcs}, pages 40:1--40:12. Schloss Dagstuhl - Leibniz-Zentrum f{\"{u}}r Informatik, 2024.

\bibitem[CN90]{chrobak1990improved}
Marek Chrobak and Takao Nishizeki.
\newblock Improved edge-coloring algorithms for planar graphs.
\newblock {\em Journal of Algorithms}, 11(1):102--116, 1990.

\bibitem[COS01]{combinatorica/ColeOS01}
Richard Cole, Kirstin Ost, and Stefan Schirra.
\newblock {Edge-Coloring Bipartite Multigraphs in $O(E \log D)$ Time}.
\newblock {\em Comb.}, 21(1):5--12, 2001.

\bibitem[CPW19]{CohenPW19}
Ilan~Reuven Cohen, Binghui Peng, and David Wajc.
\newblock Tight bounds for online edge coloring.
\newblock In {\em 60th {IEEE} Annual Symposium on Foundations of Computer Science (FOCS)}, pages 1--25. {IEEE} Computer Society, 2019.

\bibitem[CRV24]{ChristiansenRV24}
Aleksander B.~G. Christiansen, Eva Rotenberg, and Juliette Vlieghe.
\newblock Sparsity-parameterised dynamic edge colouring.
\newblock In {\em 19th Scandinavian Symposium and Workshops on Algorithm Theory (SWAT)}, volume 294 of {\em LIPIcs}, pages 20:1--20:18, 2024.

\bibitem[CY89]{chrobak1989fast}
Marek Chrobak and Moti Yung.
\newblock Fast algorithms for edge-coloring planar graphs.
\newblock {\em Journal of Algorithms}, 10(1):35--51, 1989.

\bibitem[DadH90]{DietzfelbingerH90}
Martin Dietzfelbinger and Friedhelm~Meyer auf~der Heide.
\newblock A new universal class of hash functions and dynamic hashing in real time.
\newblock In Mike Paterson, editor, {\em Automata, Languages and Programming, 17th International Colloquium, ICALP90, Warwick University, England, UK, July 16-20, 1990, Proceedings}, volume 443 of {\em Lecture Notes in Computer Science}, pages 6--19. Springer, 1990.

\bibitem[Dav23]{Davies23}
Peter Davies.
\newblock Improved distributed algorithms for the lov{\'{a}}sz local lemma and edge coloring.
\newblock In {\em Proceedings of the {ACM-SIAM} Symposium on Discrete Algorithms (SODA)}, pages 4273--4295. {SIAM}, 2023.

\bibitem[DGS25]{dudeja2024randomizedgreedyonlineedge}
Aditi Dudeja, Rashmika Goswami, and Michael Saks.
\newblock {Randomized Greedy Online Edge Coloring Succeeds for Dense and Randomly-Ordered Graphs}.
\newblock In {\em Annual ACM-SIAM Symposium on Discrete Algorithms (SODA)}, 2025.

\bibitem[Dha24a]{Dhawan24}
Abhishek Dhawan.
\newblock Edge-coloring algorithms for bounded degree multigraphs.
\newblock In David~P. Woodruff, editor, {\em Proceedings of the 2024 {ACM-SIAM} Symposium on Discrete Algorithms, {SODA} 2024, Alexandria, VA, USA, January 7-10, 2024}, pages 2120--2157. {SIAM}, 2024.

\bibitem[Dha24b]{dhawan2024simple}
Abhishek Dhawan.
\newblock A simple algorithm for near-vizing edge-coloring in near-linear time.
\newblock {\em arXiv preprint arXiv:2407.16585}, 2024.

\bibitem[DHZ19]{duan2019dynamic}
Ran Duan, Haoqing He, and Tianyi Zhang.
\newblock Dynamic edge coloring with improved approximation.
\newblock In {\em 30th Annual ACM-SIAM Symposium on Discrete Algorithms (SODA)}, 2019.

\bibitem[EK24]{elkin2024deterministic}
Michael Elkin and Ariel Khuzman.
\newblock {Deterministic Simple $(1+\epsilon)$-Edge-Coloring in Near-Linear Time}.
\newblock {\em arXiv preprint arXiv:2401.10538}, 2024.

\bibitem[EPS14]{elkin20142delta}
Michael Elkin, Seth Pettie, and Hsin-Hao Su.
\newblock {$(2\Delta - 1)$-Edge-Coloring is Much Easier than Maximal Matching in the Distributed Setting}.
\newblock In {\em Proceedings of the Twenty-Sixth Annual ACM-SIAM Symposium on Discrete Algorithms}, pages 355--370. SIAM, 2014.

\bibitem[FGK17]{fischer2017deterministic}
Manuela Fischer, Mohsen Ghaffari, and Fabian Kuhn.
\newblock Deterministic distributed edge-coloring via hypergraph maximal matching.
\newblock In {\em 2017 IEEE 58th Annual Symposium on Foundations of Computer Science (FOCS)}, pages 180--191. IEEE, 2017.

\bibitem[GKK10]{goel2010perfect}
Ashish Goel, Michael Kapralov, and Sanjeev Khanna.
\newblock Perfect matchings in $o(n\log n)$ time in regular bipartite graphs.
\newblock In {\em Proceedings of the Forty-second ACM Symposium on Theory of Computing}, pages 39--46, 2010.

\bibitem[GKMU18]{ghaffari2018deterministic}
Mohsen Ghaffari, Fabian Kuhn, Yannic Maus, and Jara Uitto.
\newblock Deterministic distributed edge-coloring with fewer colors.
\newblock In {\em Proceedings of the 50th Annual ACM SIGACT Symposium on Theory of Computing}, pages 418--430, 2018.

\bibitem[GNK{\etalchar{+}}85]{gabow1985algorithms}
Harold~N Gabow, Takao Nishizeki, Oded Kariv, Daneil Leven, and Osamu Terada.
\newblock Algorithms for edge coloring.
\newblock {\em Technical Rport}, 1985.

\bibitem[GP20]{grebik2020measurable}
Jan Greb{\'\i}k and Oleg Pikhurko.
\newblock Measurable versions of vizing's theorem.
\newblock {\em Advances in Mathematics}, 374:107378, 2020.

\bibitem[GS24]{ghosh2023low}
Prantar Ghosh and Manuel Stoeckl.
\newblock Low-memory algorithms for online edge coloring.
\newblock In {\em 51st International Colloquium on Automata, Languages, and Programming, {ICALP} 2024, July 8-12, 2024, Tallinn, Estonia}, volume 297 of {\em LIPIcs}, pages 71:1--71:19. Schloss Dagstuhl - Leibniz-Zentrum f{\"{u}}r Informatik, 2024.

\bibitem[Hol81]{holyer1981np}
Ian Holyer.
\newblock The np-completeness of edge-coloring.
\newblock {\em SIAM Journal on computing}, 10(4):718--720, 1981.

\bibitem[KLS{\etalchar{+}}22]{KulkarniLSST22}
Janardhan Kulkarni, Yang~P. Liu, Ashwin Sah, Mehtaab Sawhney, and Jakub Tarnawski.
\newblock Online edge coloring via tree recurrences and correlation decay.
\newblock In {\em 54th Annual {ACM} {SIGACT} Symposium on Theory of Computing (STOC)}, pages 104--116. {ACM}, 2022.

\bibitem[Kow24]{Kowalik24}
Lukasz Kowalik.
\newblock {Edge-Coloring Sparse Graphs with {\(\Delta\)} Colors in Quasilinear Time}.
\newblock In {\em 32nd Annual European Symposium on Algorithms, {ESA} 2024}, volume 308 of {\em LIPIcs}, pages 81:1--81:17, 2024.

\bibitem[KS87]{karloff1987efficient}
Howard~J Karloff and David~B Shmoys.
\newblock Efficient parallel algorithms for edge coloring problems.
\newblock {\em Journal of Algorithms}, 8(1):39--52, 1987.

\bibitem[Kus22]{Kuszmaul22}
William Kuszmaul.
\newblock A hash table without hash functions, and how to get the most out of your random bits.
\newblock In {\em 63rd {IEEE} Annual Symposium on Foundations of Computer Science, {FOCS} 2022, Denver, CO, USA, October 31 - November 3, 2022}, pages 991--1001. {IEEE}, 2022.

\bibitem[PR01]{panconesi2001some}
Alessandro Panconesi and Romeo Rizzi.
\newblock Some simple distributed algorithms for sparse networks.
\newblock {\em Distributed computing}, 14(2):97--100, 2001.

\bibitem[SB24]{behnezhad2023streaming}
Mohammad Saneian and Soheil Behnezhad.
\newblock Streaming edge coloring with asymptotically optimal colors.
\newblock In {\em 51st International Colloquium on Automata, Languages, and Programming, {ICALP} 2024, July 8-12, 2024, Tallinn, Estonia}, volume 297 of {\em LIPIcs}, pages 121:1--121:20. Schloss Dagstuhl - Leibniz-Zentrum f{\"{u}}r Informatik, 2024.

\bibitem[Sha49]{shannon1949theorem}
Claude~E Shannon.
\newblock A theorem on coloring the lines of a network.
\newblock {\em Journal of Mathematics and Physics}, 28(1-4):148--152, 1949.

\bibitem[Sin19]{sinnamon2019fast}
Corwin Sinnamon.
\newblock Fast and simple edge-coloring algorithms.
\newblock {\em arXiv preprint arXiv:1907.03201}, 2019.

\bibitem[SW21]{SaberiW21}
Amin Saberi and David Wajc.
\newblock The greedy algorithm is not optimal for on-line edge coloring.
\newblock In {\em 48th International Colloquium on Automata, Languages, and Programming (ICALP)}, volume 198 of {\em LIPIcs}, pages 109:1--109:18, 2021.

\bibitem[Viz64]{Vizing}
V.~G. Vizing.
\newblock On an estimate of the chromatic class of a p-graph.
\newblock {\em Discret Analiz}, 3:25--30, 1964.

\bibitem[Viz65]{vizing1965chromatic}
Vadim~G Vizing.
\newblock The chromatic class of a multigraph.
\newblock {\em Cybernetics}, 1(3):32--41, 1965.

\end{thebibliography}
\bibliographystyle{alpha}

\newpage

\appendix

\section{Vizing's Theorem for Multigraphs in Near-Linear Time}

We now show how to extend our arguments to \emph{multigraphs}. A generalization of Vizing's theorem to multigraphs shows that any multigraph $G$ with maximum degree $\Delta$ and maximum multiplicity $\mu$ can be edge colored with $\Delta + \mu$ colors \cite{Vizing} (and this is worst-case optimal for $\mu \leq \Delta/2$ in the sense that not every multigraph admits an edge coloring with fewer colors). We prove the following theorem, which shows how to compute such a coloring in near-linear time.

\begin{theorem}\label{thm:multi Viz}
    There is a randomized algorithm that, given an undirected multigraph $G = (V, E)$ on $n$ vertices and $m$ edges with maximum degree $\Delta$ and maximum multiplicity $\mu$, finds a $(\Delta + \mu)$-edge coloring of $G$ in $O(m\log{n})$ time with high probability.
\end{theorem}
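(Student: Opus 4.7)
The plan is to extend the framework underlying Theorem~\ref{thm:main} (the recursive Eulerian-partition approach combined with the near-linear-time color extension) to multigraphs by (i) establishing a suitable Eulerian-partition lemma that \emph{simultaneously} halves both $\Delta$ and $\mu$, and (ii) verifying that every subroutine (Vizing fans, u-fans, separable collections, \ConUFans, \ColorUFans, and the refined color extension from \Cref{sec:log(n)}) is agnostic to the simple-vs.-multigraph distinction when run with $\Delta+\mu$ colors.

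First I would prove the partition lemma: any multigraph $G$ with max degree $\Delta$ and multiplicity $\mu$ admits, in $O(m)$ time, an edge partition $G=G_1\cup G_2$ with $\Delta(G_i)\le \lceil \Delta/2\rceil$ and $\mu(G_i)\le \lceil \mu/2\rceil$. The proof first partitions each parallel-edge bundle between a pair $\{u,v\}$ into halves of sizes $\lceil k(u,v)/2\rceil$ and $\lfloor k(u,v)/2\rfloor$, which already gives the multiplicity bound; then we rebalance degrees without changing bundle sizes by running a standard Euler-tour alternation on an auxiliary multigraph (doubling odd-degree vertices via a virtual matching) and swapping edges between $G_1$ and $G_2$ in pairs that keep each bundle balanced. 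This is essentially the multigraph Euler-partition result of \cite{gabow1985algorithms}.

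Second, I would check that every ingredient of the main algorithm works in the multigraph setting once we have $\Delta+\mu$ colors available. The only structural property of simple graphs actually used is that, under any proper partial coloring, a vertex $v$ of degree $d$ misses at least $1+(\text{\# uncolored edges at }v)$ colors---this is what makes Vizing fans, \Cref{claim:missing color}, and the separable-collection machinery go through. With $\Delta+\mu$ colors and max degree $\Delta$, every vertex misses at least $\mu\ge 1$ colors, so this property holds. Vizing fans are defined via distinct neighbors, which is still well-posed in multigraphs (and a fan has at most $\Delta$ leaves since at most $\Delta$ edges are incident to its center). The alternating-path arguments are identical, and the data structures of \Cref{sec:data structs} do not use simplicity anywhere. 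Hence \Cref{lem:build u-fans}, \Cref{lem:color u-fans}, \Cref{lem:log(n)-color u-fans}, and \Cref{lem:log(n)-main extend} all transfer verbatim after substituting ``$\Delta+1$'' with ``$\Delta+\mu$'' and ``simple graph'' with ``multigraph.''

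Third, I would assemble the pieces exactly as in \Cref{sec:log(n)}. The multigraph version of \Cref{lem:log(n)-Euler} recursively partitions $G$ into $G_1,G_2$ with degrees and multiplicities both halved, recursively $(\lceil\Delta/2\rceil+\lceil\mu/2\rceil)$-colors each (giving a $(\Delta+\mu+2)$-coloring of $G$ after merging), uncolors the two smallest color classes to obtain a partial $(\Delta+\mu)$-coloring with $O(m/(\Delta+\mu))$ uncolored edges, and then invokes the (multigraph) \Cref{lem:log(n)-main extend} to reduce this to exactly $m/(\Delta+\mu)$ uncolored edges. The recursion has depth $O(\log\Delta)$ and the same moment-generating-function concentration argument (\Cref{prop:conc}) yields an $O(m\log\Delta)$-time, high-probability partial coloring. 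Finally the analogue of \Cref{lem:log(n)-wrap-up} extends this to all edges in $O(m\log n)$ time w.h.p., by driving the uncolored count down to $O(\log n)$ via \Cref{lem:log(n)-main extend} and finishing those edges by one Vizing chain each, at cost $O(n\log n)=O(m\log n)$.

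The main obstacle is the partition lemma: the naive Euler-tour alternation does \emph{not} automatically halve multiplicity, since $k$ parallel edges between a fixed pair can appear at positions of the same parity along the tour. The fix I outlined (pre-split bundles, then rebalance via Euler-tour swaps that preserve bundle cardinalities) needs some care to verify that the rebalancing terminates in linear time and does not violate either the degree or the multiplicity constraint during intermediate swaps. Everything downstream is then a direct transcription of the simple-graph proofs, since no subroutine uses simplicity beyond the guarantee that each vertex misses enough colors.
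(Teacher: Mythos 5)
Your overall architecture matches the paper's (Euler partition halving both $\Delta$ and $\mu$, transfer of all subroutines to multigraphs with $\Delta+\mu$ colors, then the $O(m\log n)$ optimization of \Cref{sec:log(n)}), and your concern about the partition lemma is legitimate and handled reasonably: the paper simply asserts that the Euler partition halves multiplicity, whereas your pre-splitting of parallel bundles followed by degree rebalancing is a concrete way to actually guarantee $\mu(G_i)\le\lceil\mu/2\rceil$. However, there is a genuine gap in your second step. It is not true that ``the only structural property of simple graphs actually used is that each vertex misses enough colors,'' and the subroutines do \emph{not} transfer verbatim. The key failure is in the Vizing fan construction (\Cref{alg:vizing fan}): when the fan at center $u$ has leaves $v_1,\dots,v_k$ with colors $c_1,\dots,c_k$ and the loop looks up the edge at $u$ colored $c_k$, in a multigraph that edge may be \emph{parallel} to an existing fan edge, i.e.\ lead back to some $v_j$ with $j\le k$. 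The fan then acquires a repeated leaf, the rotation property breaks (two edges at $v_j$ compete for the single reserved color $c_j\in\miss_\chi(v_j)$), and the greedy construction no longer terminates in a valid fan. This is exactly why the paper redefines the fan for multigraphs (\Cref{def:multifans}) with a color \emph{set} $C_i\subseteq\miss_\chi(v_i)$ of size $\mu$ at each leaf: the counting argument $|C_1\cup\dots\cup C_i|=i\mu$ versus at most $i\mu-1$ colored edges between $u$ and $\{v_1,\dots,v_i\}$ is what guarantees a fresh leaf at every step, and it consumes precisely the $\mu$ extra colors in the palette. Your observation that every vertex misses at least $\mu$ colors is the right raw material, but you never use it where it is actually needed.

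Two further places require new cases rather than transcription, both of which the paper handles explicitly in \Cref{sec:proof lem multi1}: (i) in $\PruneVFans$, two $\alpha$-primed u-edges in the separable collection may be parallel uncolored edges, which forces a preprocessing step (one of them can be colored $\alpha$ outright since $\alpha$ is then missing at both endpoints); and (ii) in $\ReduceUEdges$, two Vizing chains can collide not at a shared edge but at a pair of \emph{parallel} edges, a configuration invisible to the edge-membership test on $S$ in \Cref{alg:update path}, and which is resolved by uncoloring both parallel edges, activating both fans, and recoloring one of them with $\alpha$. Without these cases the invariants of \Cref{invaraint:paths} (suitably strengthened to forbid parallel edges across distinct prefix paths) cannot be maintained. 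So the proposal is incomplete: the downstream machinery needs a genuinely new fan definition and two new collision cases, not a substitution of ``$\Delta+1$'' by ``$\Delta+\mu$.''
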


Similar to our results for $(\Delta+1)$ edge coloring, we first focus on only obtaining an $\tilde O(m)$ time algorithm, and then use almost exactly the same argument we used to extend~\Cref{thm:main-tech} 
to~\Cref{thm:main}, to optimize this runtime to $O(m\log{n})$ time. 

\medskip
\noindent\textbf{Notation for Multigraphs:}
We refer to edges in $G$ with the same endpoints as \emph{parallel}. Whenever we refer to an edge $e$ of the multigraph $G$, we are referring to a specific edge and say that this edge is distinct from its parallel edges.

\subsection{Vizing Fans and Separable Collection in Multigraphs}

Let $G = (V, E)$ be an undirected multigraph on $n$ vertices and $m$ edges with maximum degree $\Delta$ and maximum multiplicity $\mu$, and $\chi$ be a partial $(\Delta + \mu)$-edge coloring of $G$. 
We now describe how to generalize the objects used by our algorithm for simple graphs to deal with multigraphs.
We begin by defining a generalization of Vizing fans for multigraphs \cite{Vizing}. For convenience, we still refer to them as Vizing fans.

\begin{definition}[Vizing fan for multigraphs]\label{def:multifans}
A \emph{Vizing fan} for a multigraph is a sequence $\F = (u, \alpha),(v_1,e_1,C_1),\dots,(v_k,e_k,C_k)$ where $u,v_1,\dots,v_k$ are distinct vertices, $C_1,\dots,C_k \subseteq [\Delta + \mu]$ are subsets of colors and $e_1,\dots,e_k$ are edges such that
\begin{enumerate}
    \item $\alpha \in \miss_\chi(u)$ and $C_i \subseteq \miss_\chi(v_i)$ of size $\mu$ for all $i \in [k]$.\label{def:multif:1}
    \item $v_1,\dots,v_k$ are distinct neighbours of $u$ and $e_i$ is an edge with endpoints $u$ and $v_i$ for all $i \in [k]$.
    \item $\chi(e_1) = \bot$ and, for all $i > 1$, there exists $p(i) \in [i-1]$ such that $\chi(e_i) \in C_{p(i)}$.
    \item The sets $C_1,\dots,C_{k-1},\miss_\chi(u)$ are mutually disjoint.\label{def:multif:4}
    \item Either $C_k \cap \miss_\chi(u) \neq \varnothing$ or $C_k \cap \left(C_1 \cup \dots \cup C_{k-1}\right) \neq \varnothing$.
\end{enumerate}
\end{definition}

\noindent
We say that the Vizing fan $\F = (u, \alpha),(v_1,e_1,C_1),\dots,(v_k,e_k,C_k)$ is $\alpha$-\emph{primed}, has \emph{center} $u$ and \emph{leaves} $v_1,\dots,v_k$. We refer to $C_i$ as the colors of $v_i$ within $\F$.
A crucial property is that we can \emph{rotate} colors around the Vizing fan $\F$: given any $i \in [k]$, there is a sequence $1 = i_1 < \dots < i_\ell = i$ (where $i_{\ell-j} = p^{(j)}(i)$) such that we can set $\chi(e_{i_1}) \leftarrow \chi(e_{i_2}),\dots, \chi(e_{i_{\ell - 1}}) \leftarrow \chi(e_{i_\ell}), \chi(e_{i_\ell}) \leftarrow \bot$.
We say that $\F$ is a \emph{trivial} Vizing fan if $C_k \cap \miss_\chi(u) \neq \varnothing$.
Note that, if $\F$ is trivial, we can immediately extend the coloring $\chi$ to $(e_1)$ by rotating colors around $\F$ to leave $(e_1)$ uncolored and setting $\chi(e_k)$ to be any color in $C_k \cap \miss_\chi(u)$.

For completeness, we prove the following lemma, which shows how to efficiently construct such a Vizing fan.

\begin{lemma}[\cite{Vizing}] Given an uncolored edge $e = (u,v)$ and a color $\alpha \in \miss_\chi(u)$, we can construct an $\alpha$-primed Vizing fan $\F$ with center $u$ in $O(\Delta)$ time.
\end{lemma}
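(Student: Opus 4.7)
The plan is to adapt Vizing's classical fan-construction procedure to the multigraph setting, iteratively extending the fan one leaf at a time while maintaining an accumulated ``forbidden palette'' $F := \miss_\chi(u) \cup C_1 \cup \cdots \cup C_{k-1}$ of colors that would either trigger termination or have already been reserved. I initialize $v_1 \gets v$, $e_1 \gets e$, $k \gets 1$, and $F \gets \miss_\chi(u)$. At each iteration $k$, I examine $\miss_\chi(v_k)$, which has size at least $\mu$ since $v_k$ has degree at most $\Delta$ while $\Delta + \mu$ colors are available. If $\miss_\chi(v_k) \cap F \neq \varnothing$, I build $C_k$ by selecting any $\mu$ colors from $\miss_\chi(v_k)$ that include at least one element of this intersection and declare the fan complete, satisfying condition~5 of Definition~\ref{def:multifans}. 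Otherwise, I pick $C_k$ to be any $\mu$ colors from $\miss_\chi(v_k)$ (automatically disjoint from $F$), pick an arbitrary $c \in C_k$, locate the unique edge $e_{k+1}$ incident on $u$ with $\chi(e_{k+1}) = c$ (which exists because $c \notin \miss_\chi(u)$ and $\chi$ is proper at $u$), set $v_{k+1}$ to be its other endpoint and $p(k+1) := k$, and update $F \gets F \cup C_k$.

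Verifying Definition~\ref{def:multifans} is then straightforward by induction on $k$: conditions~1 and 2 follow from the construction together with $|\miss_\chi(v_k)| \geq \mu$; condition~3 is immediate from the definitions of $e_1,\dots,e_k$ and $p(\cdot)$; condition~4 is preserved because each newly added $C_k$ is chosen disjoint from the current $F$; and condition~5 is exactly the termination rule. The main subtlety specific to multigraphs is that, when I follow the chosen color $c \in C_k$ from $u$, the endpoint $v_{k+1}$ may coincide with a previously visited leaf $v_j$ through a parallel edge, violating the distinctness required by condition~2. I expect this to be the main technical obstacle, and I will resolve it by reselecting $c$ within $C_k$ to avoid visited vertices; since $|C_k|=\mu$ and the multiplicity bound says at most $\mu$ colors at $u$ lead to any single previously visited $v_j$, a counting argument shows that either a valid $c$ can be found or else some $c \in C_k$ matches the color of an existing fan edge, in which case I replace $C_k$ by a $\mu$-subset of $\miss_\chi(v_k)$ overlapping an earlier $C_j$ and declare termination via condition~5. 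This parallels the standard case analysis underlying Vizing's theorem for multigraphs.

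For the running time, I maintain the palettes $\miss_\chi(x)$ via the hashmap-based data structures described in Section~\ref{sec:data structs}, supporting $O(1)$ membership, insertion, deletion, and constant-time enumeration of arbitrary elements, and I keep $F$ as a bit-vector on $[\Delta+\mu]$ that is initialized lazily by recording exactly which colors were inserted so that all updates are $O(1)$. Each iteration spends $O(|C_k|)=O(\mu)$ time forming $C_k$, testing intersection with $F$, and updating $F$, plus $O(1)$ time to locate $e_{k+1}$ via the map $\phi'$ from Section~\ref{sec:data structs}. Since the sets $C_1,\dots,C_{k-1}$ and $\miss_\chi(u)$ are mutually disjoint subsets of $[\Delta+\mu]$ with $|\miss_\chi(u)| \geq \mu$, the aggregate cost of constructing the $C_i$'s satisfies $\sum_i |C_i| \leq (\Delta+\mu)-\mu = \Delta$. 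Combined with at most $\Delta$ iterations (one distinct leaf per iteration), this gives the claimed total running time of $O(\Delta)$.
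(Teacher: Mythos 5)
There is a genuine gap in how you handle the parallel-edge obstacle, which is precisely the point where the multigraph case differs from the simple-graph case. You always set $p(k+1):=k$, i.e., you insist that the next fan edge take its color from the \emph{most recent} set $C_k$, and when every color of $C_k$ leads back to a visited leaf you appeal to a dichotomy whose second branch is vacuous: a color $c\in C_k$ can never ``match the color of an existing fan edge,'' because each fan edge $e_j$ ($j\ge 2$) has $\chi(e_j)\in C_{p(j)}$ and $C_k$ is chosen disjoint from $C_1\cup\dots\cup C_{k-1}$ (and $e_1$ is uncolored). Moreover, reselecting within $C_k$, or replacing $C_k$ by another $\mu$-subset of $\miss_\chi(v_k)$, need not help: $\miss_\chi(v_k)$ may have size exactly $\mu$, and it is genuinely possible for all $\mu$ edges at $u$ carrying colors of $C_k$ to end at visited leaves. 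Concretely, with $\mu=2$ and $k=3$, take $C_3=\{c_1,c_2\}$ where $u$ and $v_1$ are joined by $e_1$ plus an edge colored $c_1$, and $u$ and $v_2$ are joined by the fan edge $e_2$ plus an edge colored $c_2$; this is consistent with all the disjointness constraints, both colors of $C_3$ lead to visited vertices, and your procedure has no legal move.

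The repair is exactly the flexibility built into condition~3 of \Cref{def:multifans}: the next edge $e_{k+1}$ only needs $\chi(e_{k+1})\in C_{p(k+1)}$ for \emph{some} $p(k+1)\in[k]$, so you should search over the entire union $C_1\cup\dots\cup C_k$ rather than over $C_k$ alone. The counting then goes through globally: the union has exactly $k\mu$ colors, each lying outside $\miss_\chi(u)$ and hence realized by a distinct edge at $u$, while the number of \emph{colored} edges joining $u$ to $\{v_1,\dots,v_k\}$ is at most $k\mu-1$ (multiplicity $\mu$ per leaf, minus the uncolored $e_1$). Hence some color in the union leads to a fresh vertex $v_{k+1}$ — though possibly via a color in an old $C_j$, not in $C_k$. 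This is the paper's argument. With that change, your termination rule, your verification of conditions~1--5, and your $O(\Delta)$ accounting (via $\sum_i|C_i|\le\Delta$ and amortizing the failed candidate colors against the at most $\Delta$ colored edges from $u$ into the visited set) all go through.
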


\begin{proof}
    Let $v_1 = v$, $e_1 = e$ and $C_1 \subseteq \miss_\chi(v)$ be a subset of size $\mu$. If $C_1 \cap \miss_\chi(u) \neq \varnothing$, then we can return $(u,\alpha), (v_1, e_1, C_1)$ as a Vizing fan. Otherwise, we assume inductively that we have constructed a sequence $(u,\alpha), (v_1,e_1, C_1),\dots, (v_i, e_i, C_i)$ satisfying Conditions~\ref{def:multif:1} to \ref{def:multif:4} of \Cref{def:multifans}. In this case, we have that $|C_1 \cup \dots \cup C_i| = i\mu$. Since there are at most $i \mu - 1$ many colored edges that have $u$ as one endpoint and one of $v_1,\dots,v_i$ as the other, it follows that there must be some edge $e_{i+1} = (u, v_{i+1})$ such that $\chi(e_{i+1}) \in C_1 \cup \dots \cup C_i$ and $v_{i+1} \notin \{v_1,\dots,v_i\}$. Let $C_{i+1} \subseteq \miss_\chi(v_{i+1})$ be a subset of size $\mu$. If either $C_{i+1} \cap \miss_\chi(u) \neq \varnothing$ or $C_{i+1} \cap \left(C_1 \cup \dots \cup C_{i}\right) \neq \varnothing$, then $(u,\alpha), (v_1, e_1, C_1),\dots, (v_{i+1}, e_{i+1}, C_{i+1})$ is a Vizing fan and we are done. Otherwise, we continue the induction on this longer sequence. Since this process must terminate within at most $\Delta/\mu$ steps, this process always returns a Vizing fan. Furthermore, using standard data structures, we can implement this process in $O(\Delta)$ time.
\end{proof}

\medskip
\noindent \textbf{Vizing Chains in Multigraphs:} Let $\F= (u, \alpha),(v_1,e_1,C_1),\dots,(v_k,e_k,C_k)$ be a non-trivial $\alpha$-primed Vizing fan, let $c$ be a color in $C_k \cap (C_1,\dots,C_{k-1})$ and let $P$ denote the maximal $\{\alpha, c\}$-alternating path starting at $u$.
Then, by a similar argument to the case of simple graphs, we can extend the coloring $\chi$ to the edge $e_1$ by flipping the path $P$ and rotating colors around $\F$. Furthermore, this can be done in time $O(\Delta + |P|)$. We denote a call to the algorithm that extends the coloring in the manner by $\Vizing(\F)$.

\medskip
\noindent \textbf{Separable Collections in Multigraphs:}
We define a separable collection of u-components $\U$ in a multigraph in the exact same way as simple graphs. We emphasise that, given a separable collection $\U$, any distinct u-components $\g_1, \g_2 \in \U$ are edge-disjoint but may contain parallel edges. In other words, it is possible for two distinct parallel uncolored edges $e_1$ and $e_2$ to be contained in u-components within the separable collection $\U$.

We define $\U$-avoiding Vizing fans for multigraphs in the following analogous way.

\begin{definition}\label{def:U-avoid:multi}
    Let $\U$ be a separable collection and $\F = (u,\alpha),(v_1,e_1,C_1),\dots, (v_k,e_k,C_k)$ be a Vizing fan. We say that the Vizing fan $\F$ is \emph{$\U$-avoiding} if $C_i \subseteq \miss_\chi(v_i) \setminus C_{\U}(v_i)$ for each $v_i \in \F$.
\end{definition}

Using these new definitions, we can directly extend \Cref{lem:fast U-avoid fan,lem:Vfans are safe} to multigraphs by slightly changing the proofs, giving us the following lemmas.

\begin{lemma}\label{lem:fast U-avoid fan:multi}
    Given a u-edge $\e \in \U$, there exists a $\U$-avoiding Vizing fan $\F$ of $\e$. Furthermore, we can compute such a Vizing fan in $O(\Delta)$ time.\footnote{Recall that we say that $\F$ is a Vizing fan of $\e = (u,v,\alpha)$ if $\F$ is $\alpha$-primed, has center $u$, and its first leaf is $v$.}
\end{lemma}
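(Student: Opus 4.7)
My plan is to mirror the proof of \Cref{lem:fast U-avoid fan} for the simple-graph case: run the standard multigraph Vizing-fan construction described immediately before the lemma, but modify the step ``let $C_{i+1}\subseteq\miss_\chi(v_{i+1})$ be a subset of size $\mu$'' to ``let $C_{i+1}\subseteq\miss_\chi(v_{i+1})\setminus C_{\U}(v_{i+1})$ be a subset of size $\mu$''. By \Cref{def:U-avoid:multi}, any fan produced this way is automatically $\U$-avoiding, and conditions \ref{def:multif:1}--\ref{def:multif:4} of \Cref{def:multifans} do not depend on \emph{which} $\mu$-subset of $\miss_\chi(v_{i+1})$ we pick, so the correctness and termination arguments carry over verbatim from the standard construction. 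Hence only two things need to be verified: (i) the set $\miss_\chi(v_{i+1})\setminus C_{\U}(v_{i+1})$ really does contain $\mu$ colors, so the modified step is well-defined, and (ii) the modified construction can still be implemented in $O(\Delta)$ time.

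For (i), I will show that for \emph{every} vertex $x\in V$ we have $|\miss_\chi(x)\setminus C_{\U}(x)|\ge \mu$, generalizing \Cref{claim:missing color}. Let $d_c(x)$ and $d_u(x)$ denote the number of colored and uncolored edges incident to $x$. Since $\chi$ is a partial $(\Delta+\mu)$-edge coloring and $\deg_G(x)\le \Delta$,
\[
|\miss_\chi(x)| \;=\; (\Delta+\mu)-d_c(x).
\]
Because $\U$ is separable, the u-components of $\U$ containing $x$ are edge-disjoint and each contributes a distinct color to the multiset $C_{\U}(x)$; and every such u-component contains at least one uncolored edge at $x$. Therefore $|C_{\U}(x)|\le d_u(x)$, and consequently
\[
|\miss_\chi(x)\setminus C_{\U}(x)| \;\ge\; (\Delta+\mu)-d_c(x)-d_u(x) \;\ge\; (\Delta+\mu)-\deg_G(x) \;\ge\; \mu.
\]
This is enough to guarantee that the modified construction never gets stuck.

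For (ii), I will recall from the proof preceding the lemma that the standard construction terminates in at most $\lceil\Delta/\mu\rceil$ steps, since by condition \ref{def:multif:4} the sets $C_1,\dots,C_i$ and $\miss_\chi(u)$ are pairwise disjoint and contain at least $(i+1)\mu$ colors in total. In each step we need to (a) find the next neighbor $v_{i+1}$ and edge $e_{i+1}$ whose color lies in $C_1\cup\cdots\cup C_i$ (exactly as in the original construction) and (b) pick any $\mu$ colors from $\miss_\chi(v_{i+1})\setminus C_{\U}(v_{i+1})$. Using the set $\overline{C}_{\U}(v_{i+1})$ maintained by the data structures of \Cref{sec:data structs} (appropriately instantiated with palette $[\Delta+\mu]$ in the multigraph setting, so that $|\overline{C}_{\U}(x)|\ge \mu$ by the bound above), step (b) takes $O(\mu)$ time by reading off $\mu$ arbitrary entries. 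Step (a) takes amortized $O(\mu)$ per leaf, since across the whole construction we look at colors from $C_1\cup\cdots\cup C_k$ at $u$, a set of size $O(k\mu)=O(\Delta)$. Summing over all $O(\Delta/\mu)$ leaves gives total time $O(\Delta)$.

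The only place this proof could conceivably break is the interface with the existing data structures: those were written assuming a palette of size $\Delta+1$ and a bound $|\miss_\chi(x)\cap[\deg_G(x)+1]|\ge 1$, whereas here we need the analogous multigraph invariant $|\overline{C}_{\U}(x)|\ge \mu$ to hold. This is, however, a straightforward adjustment (replace the truncation $[\deg_G(x)+1]$ by $[\deg_G(x)+\mu]$ throughout \Cref{sec:data structs}), and the $O(m)$ space and $O(1)$ per-operation guarantees carry over unchanged. I expect this bookkeeping to be the only (mild) obstacle; the combinatorics is an immediate generalization of the simple-graph argument.
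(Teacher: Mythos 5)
Your proposal is correct and follows exactly the route the paper intends (the paper omits the proof, asserting that \Cref{lem:fast U-avoid fan} "directly extends by slightly changing the proofs"); the one genuinely new ingredient needed is your generalization of \Cref{claim:missing color} to $|\miss_\chi(x)\setminus C_{\U}(x)|\ge\mu$, and your counting argument $|\miss_\chi(x)\setminus C_{\U}(x)|\ge(\Delta+\mu)-d_c(x)-d_u(x)\ge\mu$ is exactly right, as is the observation that the termination and runtime analysis of the multigraph fan construction is insensitive to which $\mu$-subset of the palette is chosen at each leaf.
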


\begin{lemma}\label{lem:Vfans are safe:multi}
Let $\chi$ be a $(\Delta + \mu)$-edge coloring of a graph $G$ and $\U$ be a separable collection. For any u-edge $\e = (u,v,\alpha) \in \U$ with a $\U$-avoiding Vizing fan $\F$, we have the following:
\begin{enumerate}
    \item Rotating colors around $\F$ does not damage any u-component in $\U \setminus \{\e\}$.
    \item Calling $\Vizing(\F)$ damages at most one u-component in $\U \setminus \{\e\}$. Furthermore, we can identify the damaged u-component in $O(1)$ time.
\end{enumerate}
\end{lemma}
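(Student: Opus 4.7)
The proof follows the same structure as that of Lemma~\ref{lem:Vfans are safe}, with appropriate adaptations for the multigraph definitions of Vizing fans and $\U$-avoidance. Write $\F = (u,\alpha), (v_1, e_1, C_1), \dots, (v_k, e_k, C_k)$.

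\medskip
\noindent \textit{Part 1 (rotation).} Consider rotating colors around $\F$ up to some index $i$, following the sequence $1 = i_1 < i_2 < \cdots < i_\ell = i$, which sets $\chi(e_{i_j}) \leftarrow \chi(e_{i_{j+1}})$ for $j < \ell$ and $\chi(e_{i_\ell}) \leftarrow \bot$. The plan is to track the effect of this on palettes. The palette of $u$ is unchanged, since rotation only shuffles colors among edges incident to $u$ (no color leaves or enters this set). For each leaf $v_{i_j}$ with $j \ge 2$, the palette $\miss_\chi(v_{i_j})$ loses the color previously on $e_{i_{j+1}}$ (which lies in $C_{i_j}$ by the definition of a Vizing fan), and gains the color previously on $e_{i_j}$; for $v_1$, it only loses the color previously on $e_{i_2}$, which lies in $C_1$. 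Since $\F$ is $\U$-avoiding, each removed color lies in $C_{i_j} \subseteq \miss_\chi(v_{i_j}) \setminus C_\U(v_{i_j})$, so no u-component of $\U$ containing a leaf of $\F$ is damaged by these palette changes. The only other change is that $e_1$ becomes colored (damaging $\e$) and $e_{i_\ell}$ becomes uncolored (which damages no u-component, as u-components can only be damaged by edges becoming colored or by palette shrinkage). Hence only $\e \in \U$ is damaged.

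\medskip
\noindent \textit{Part 2 (activation).} If $\F$ is trivial, then $\Vizing(\F)$ rotates fully and then sets $\chi(e_k) \leftarrow c$ for some $c \in C_k \cap \miss_\chi(u)$. By Part 1 the rotation damages only $\e$. The assignment removes $c$ from both $\miss_\chi(v_k)$ and $\miss_\chi(u)$. The removal from $v_k$ is safe since $c \in C_k \subseteq \miss_\chi(v_k) \setminus C_\U(v_k)$; the removal from $u$ can damage at most one u-component of $\U$ by separability, and this u-component can be located in $O(1)$ time by a $\textsc{Find-Component}_\U(u, c)$ query.

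If $\F$ is non-trivial, pick $c \in C_k \cap (C_1 \cup \cdots \cup C_{k-1})$, let $P$ be the maximal $\{\alpha, c\}$-alternating path starting at $u$, and let $x$ be its other endpoint. Flipping $P$ changes the palettes only at $u$ and $x$: at $u$, the color $\alpha$ is removed, damaging $\e$ only (as $c_\e(u) = \alpha$); at $x$, exactly one of $\{\alpha, c\}$ is removed, and by separability this damages at most one u-component of $\U$, identifiable in $O(1)$ time by a $\textsc{Find-Component}_\U(x, \cdot)$ query. After flipping, a rotation of $\F$ in the modified coloring extends $\chi$ to $e_1$; as in Part 1, the rotation only removes colors in $C_{i_j} \setminus C_\U(v_{i_j})$ from leaf palettes (these sets are unchanged by the flip, since $P$'s endpoints are $u$ and $x$, and the leaves $v_1, \dots, v_k$ other than possibly one equal to $x$ are unaffected), so no further u-component of $\U \setminus \{\e\}$ is damaged.

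\medskip
\noindent \textit{Main obstacle.} The one subtlety is the non-trivial case when $x$ happens to coincide with a leaf $v_j$ of $\F$, in which case the rotation after the flip must be carried out with the same careful case analysis as in the simple-graph Vizing procedure (either a short rotation of the first $j$ edges, or a full rotation followed by assigning a freshly available color to $e_k$); verifying that this modified rotation still only removes colors in $C_i \setminus C_\U(v_i)$ from leaf palettes is the main technical check, but it is routine given the $\U$-avoiding property and separability of $\U$.
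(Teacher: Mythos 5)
Your proof is correct and takes essentially the same approach as the paper, which itself only remarks that the simple-graph proof of the analogous lemma (tracking palette changes under rotation/flipping and invoking the $\U$-avoiding property together with separability) carries over to multigraphs with minor changes; you have simply spelled out those changes, including the correct observation that the rotated edge $e_{i_j}$ receives a color from $C_{i_j}\subseteq \miss_\chi(v_{i_j})\setminus C_{\U}(v_{i_j})$. The subtlety you flag about the path endpoint coinciding with a leaf is exactly the case analysis already handled in the paper's Vizing-chain procedure, so nothing further is needed.
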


\subsection{Proof of \Cref{thm:multi Viz}}

Our main algorithm for multigraphs is completely analogous to our main algorithm for simple graphs, which we describe in \Cref{sec:main:algo}. In particular, the algorithm also consists of two main components that are combined in the same way, which we summarize in the following lemmas.

\begin{lemma}[A modification of~\Cref{lem:build u-fans} for multigraphs]\label{lem:build u-fans:multi}
    There is an algorithm that, given a multigraph $G$, a partial $(\Delta + \mu)$-edge coloring $\chi$ of $G$ and a set of $\lambda$ uncolored edges $U$, does one of the following in $O(m + \Delta \lambda)$ time:
    \begin{enumerate}
        \item  Extends the coloring to $\Omega(\lambda)$ uncolored edges.
        \item  Modifies $\chi$ to obtain a separable collection of $\Omega(\lambda)$ u-fans $\mathcal U$.
    \end{enumerate}
\end{lemma}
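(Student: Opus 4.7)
The plan is to run essentially the same algorithm $\ConUFans$ from~\Cref{alg:constructUfans} that was used to prove~\Cref{lem:build u-fans}, with the auxiliary objects (Vizing fans, $\U$-avoiding Vizing fans, Vizing chains) replaced by their multigraph counterparts defined above. The overall structure is: (i) build a separable collection $\U$ of $\lambda$ u-edges from the uncolored edges in $U$; (ii) for each color $\alpha \in [\Delta + \mu]$, invoke $\PruneVFans$ on $\E_\alpha(\U)$ to either color some u-edges, create some u-fans, or produce a vertex-disjoint family $\mathcal F$ of $\U$-avoiding Vizing fans for the remaining $\alpha$-primed u-edges, and then invoke $\ReduceUEdges$ on $(\U, \alpha, \mathcal F)$ which, using the ``parallel Vizing chain exploration'' technique, either extends the coloring to those u-edges or pairs them up into u-fans.

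First I would verify the basic setup. The initial collection of $\lambda$ u-edges can be built exactly as in~\Cref{lem:create u-edges}; the multigraph analog of~\Cref{claim:missing color} goes through because a vertex $x$ with $\deg_G(x)$ incident edges, of which $d$ are uncolored, misses at least $(\Delta + \mu) - (\deg_G(x) - d) \geq \mu + d > d$ colors, so we can always pick a distinct active color for each u-edge. All calls into the data structures from~\Cref{sec:data struc overview} remain $O(1)$, since those queries only depend on the palette and separable-collection structure, not on whether $G$ is simple. The multigraph versions of the two key safety lemmas for $\U$-avoiding Vizing fans are already given as~\Cref{lem:fast U-avoid fan:multi,lem:Vfans are safe:multi}, so the properties used inside $\PruneVFans$ and $\ReduceUEdges$ carry over verbatim.

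Next I would reprove~\Cref{lem:good Vfans,lem:potential prune,lem:prune} for multigraphs with essentially the same arguments. The only real adaptation is in the intersection case analysis of $\PruneVFans$: a multigraph Vizing fan has leaves $v_i$ equipped with color sets $C_i \subseteq \miss_\chi(v_i)$ of size $\mu$, rather than a single color, but the notion of ``fan $\F_i$ shares a vertex with fan $\F_j \in \mathcal F$'' is still well-defined, and the three subcases (center-into-leaf, leaf-into-center, leaf-into-leaf) are resolved in exactly the same way: we rotate one or both fans so that the shared vertex becomes the endpoint of an uncolored edge, and then either color it with $\alpha$ or create a u-fan $\f = (w, u_i, u_j, \beta, \alpha)$ for some $\beta \in \miss_\chi(w) \setminus C_{\U}(w)$ (which exists by the multigraph version of~\Cref{claim:missing color}). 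The $\U$-avoidance guarantees of~\Cref{lem:Vfans are safe:multi} ensure that each such rotation damages only the two u-edges involved, so the induction on ``$\mathcal F$ is a family of vertex-disjoint $\U$-avoiding Vizing fans for $\{\e_1,\dots,\e_i\} \cap \E_\alpha(\U)$'' goes through unchanged, and the runtime is still $O(\Delta \lambda_\alpha)$ since each fan has at most $O(\Delta/\mu) \cdot O(\mu) = O(\Delta)$ total ``size'' to process.

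For $\ReduceUEdges$, the same invariant-based analysis applies; the only real point to check is~\Cref{claim:path packing}, which bounds the length of any fully-explored Vizing chain by $O(m_\alpha / \lambda_\alpha)$. In the multigraph case, the Vizing chain is still an $\{\alpha, c\}$-alternating path for some color $c \in C_k \cap (C_1 \cup \dots \cup C_{k-1})$, so the packing argument (at least a third of its edges carry color $\alpha$, and there are at most $m_\alpha + O(\lambda_\alpha)$ edges of color $\alpha$ throughout) is identical. The two collision cases in $\UpdatePath$ (same orientation creates a u-fan, opposite orientation uncolors the shared edge and colors two u-edges) translate without change, and~\Cref{lem:Vfans are safe:multi} still caps the collateral damage at one extra u-component per Vizing-chain activation. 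Finally, the potential-function argument of~\Cref{cl:potential} is purely combinatorial in $(\Psi_\texttt{c}, \Psi_\texttt{e}, \Psi_\texttt{f})$ and carries over word for word, so after summing the per-color runtimes we obtain the total bound $\sum_\alpha O(m_\alpha + \lambda + \Delta \lambda_\alpha) = O(m + \Delta \lambda)$ and the conclusion that either $\Omega(\lambda)$ edges have been colored or $\Omega(\lambda)$ u-fans have been produced. The main obstacle, and the only place where one needs to be careful, is checking that the ``leaf meets leaf'' u-fan creation in $\PruneVFans$ remains valid when the color sets $C_i$ are not singletons; but since the collection is $\U$-avoiding and $|\miss_\chi(w) \setminus C_{\U}(w)| \geq \mu \geq 1$, an appropriate $\beta$ always exists.
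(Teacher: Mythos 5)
Your high-level plan matches the paper's: rerun $\ConUFans$ with the multigraph versions of Vizing fans, $\U$-avoidance, and the safety lemmas, and observe that the potential argument and the path-packing bound are color-counting arguments that do not care about simplicity. Those parts of your write-up are fine, including the multigraph version of \Cref{claim:missing color} and the $O(\Delta)$ bound on fan size.

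However, you have missed the one place where the multigraph setting genuinely introduces new cases: \emph{parallel edges}, and this is precisely what the paper's proof is devoted to. First, the initial separable collection may contain two distinct u-edges supported on parallel uncolored copies of $(u,v)$; if both are $\alpha$-primed, separability forces one to have center $u$ and the other center $v$, so $\alpha\in\miss_\chi(u)\cap\miss_\chi(v)$ and the paper adds a preprocessing pass to $\PruneVFans$ that colors one copy with $\alpha$ and discards both from $\E_\alpha(\U)$ before any fans are built. Without this, the fan-intersection case analysis (whose exhaustiveness rests on the centers and first leaves interacting non-degenerately) runs into degenerate configurations. Second, and more importantly, your claim that ``the two collision cases in $\UpdatePath$ translate without change'' is wrong: in a multigraph, two Vizing chains can meet at a \emph{pair of parallel edges} $e,e'$ with the same endpoints $\{x,y\}$ but as distinct edge objects, so the test ``$(x,y)\in S$'' fails and the unmodified algorithm would let both prefix paths proceed through $x$ and $y$. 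The paper adds a third case to $\UpdatePath$ for exactly this situation (uncolor both $e$ and $e'$, activate both truncated fans, recolor $e$ with $\alpha$; see \Cref{multi}), together with a strengthening of \Cref{invaraint:paths} to forbid parallel edges across distinct prefix paths. You need to either add these two modifications or argue (which is not immediate, because of degeneracies such as the edge preceding a collision edge being parallel to it) that the untouched simple-graph case analysis still resolves such collisions correctly one round later.
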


\noindent The proof of this lemma is similar to the proof of \Cref{lem:build u-fans} but with some minor changes.
In \Cref{sec:proof lem multi1}, we sketch how to modify the proof of \Cref{lem:build u-fans} to extend it to \Cref{lem:build u-fans:multi}.

\begin{lemma}[A modification of~\Cref{lem:color u-fans} for multigraphs]\label{lem:color u-fans:multi}
    There is an algorithm that, given a multigraph $G$, a partial $(\Delta + \mu)$-edge coloring $\chi$ of $G$ and a separable collection of $\lambda$ u-fans $\mathcal U$, extends $\chi$ to $\Omega(\lambda)$ edges in $O(m \log n)$ time w.h.p.
\end{lemma}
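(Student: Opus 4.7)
The plan is to run verbatim the same three-layered algorithm as in the simple graph case (\Cref{alg:colorufansfinal}, which alternates between $\PrimeUFans$ and $\ActUFans$), observing that the definition of a u-fan does not refer to Vizing fans at all. The only place where we relied on simple-graph structure in \Cref{sec:algo} is in the correctness and runtime analyses of $\PrimeUFans$ and $\ActUFans$, so we will simply re-verify each of those intermediate claims in the multigraph setting.

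First I would re-establish that activating an $\{\alpha,\beta\}$-primed u-fan $\f=(u,v,w,\alpha,\beta)$ is still correct: in a proper $(\Delta+\mu)$-edge coloring there is at most one edge at $u$ of color $\beta$, so at most one of the maximal $\{\alpha,\beta\}$-alternating paths starting at $v$ and at $w$ can end at $u$, and flipping the other one frees $\alpha$ at both endpoints of the corresponding uncolored edge. This is the one place where one must check that no parity argument breaks in the multigraph setting, but since proper coloring ensures each color appears at most once per vertex, the argument carries through as in the simple graph case. Next I would verify \Cref{cl:small colors}: with $\Delta+\mu$ available colors, picking the two least common ones covers at most $2m/(\Delta+\mu)=O(m/\Delta)$ edges (using $\mu\le \Delta$), so the bound on edges carrying colors $\alpha$ or $\beta$ survives.

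The technical heart of the analysis is the success probability of an iteration of $\PrimeUFans$ (the analogue of \Cref{lem:bipartite:1}). There I would re-derive \Cref{cl:1P2F}: for any maximal $\{c,c'\}$-alternating path $P$ with $c\in\{\alpha,\beta\}$ and $c'\notin\{\alpha,\beta\}$, separability of $\mathcal U$ (which is defined identically for multigraphs) still implies that at most one u-fan containing each endpoint assigns $c'$ to that endpoint, so at most two u-fans in $\mathcal U$ would cause the iteration to flip $P$. Combined with \Cref{claim:total ap bound}, whose proof (counting edges of color $c'$) is unchanged for multigraphs, one gets $\mathbb{E}[\cost(\f)]\le 32m/\lambda$ and hence the cost check passes with probability $\ge 3/4$. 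The socialization argument bounding $|\mathcal U^\star|\ge |\mathcal U|/2$ is identical, and a union bound yields success probability $\ge 1/4$. The running time bound $O(m/\lambda)$ per iteration (\Cref{lem:bipartite:2}) follows from the same edge-traversal/cutoff argument since nothing uses simplicity.

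Finally, $\ActUFans$ carries over verbatim: \Cref{lem:low damage flips} only uses separability and the fact that flipping a maximal alternating path only changes palettes at its two endpoints, both of which hold in multigraphs. The time bound $O(m/\Delta)$ via vertex-disjointness of maximal $\{\alpha,\beta\}$-alternating paths and the color count from the previous paragraph is unchanged. The main obstacle I anticipate is purely notational --- keeping the definitions of u-fans and separable collections distinct from the more elaborate multigraph Vizing fans of \Cref{def:multifans} --- but no new combinatorial idea is required, and wrapping the $\Delta/2$ outer iterations as in \Cref{sec:proof of L1} gives $\Omega(\lambda)$ newly colored edges in $O(m\log n)$ time w.h.p.
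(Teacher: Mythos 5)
Your proposal matches the paper exactly: the paper omits this proof entirely, stating only that it is ``almost identical to the proof of \Cref{lem:color u-fans},'' and your sketch correctly identifies and verifies the handful of places (u-fan activation, the two-least-common-colors bound, \Cref{cl:1P2F}, \Cref{claim:total ap bound}, and \Cref{lem:low damage flips}) where one must check that nothing in \Cref{sec:algo} relies on simplicity. Nothing further is needed.
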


\noindent The proof of this lemma is almost identical to the proof of \Cref{lem:color u-fans}, thus we omit the details.

Using \Cref{lem:build u-fans:multi,lem:color u-fans:multi}, we get the following lemma which allows us to extend an edge coloring $\chi$ to a set of uncolored edges.

\begin{lemma}[A modification of~\Cref{lem:main extend} for multigraphs]\label{lem:main extend:multi}
    There is an algorithm that, given a multigraph $G$ and a partial $(\Delta + \mu)$-edge coloring $\chi$ of $G$ with $\lambda$ uncolored edges $U$, extends $\chi$ to the remaining uncolored edges in time $O((m + \Delta \lambda)\log^2 n)$ w.h.p.
\end{lemma}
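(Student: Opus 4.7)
The plan is to mirror the proof of \Cref{lem:main extend} almost verbatim, substituting the multigraph analogues \Cref{lem:build u-fans:multi} and \Cref{lem:color u-fans:multi} for their simple-graph counterparts. Since the running time bounds in these two lemmas match those of \Cref{lem:build u-fans,lem:color u-fans} exactly, the overall time accounting should go through with no changes.

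More concretely, I would proceed as follows. Let $U$ denote the set of $\lambda$ edges that are uncolored by $\chi$. First, invoke \Cref{lem:build u-fans:multi} on $(G, \chi, U)$. By its guarantee, in $O(m + \Delta\lambda)$ time we are in one of two cases: either (i) the coloring $\chi$ has already been extended to $\Omega(\lambda)$ edges in $U$, or (ii) the algorithm has modified $\chi$ to produce a separable collection $\mathcal U$ of $\Omega(\lambda)$ u-fans. In case (ii), feed $\mathcal U$ into \Cref{lem:color u-fans:multi}, which extends $\chi$ to $\Omega(\lambda)$ of the edges contained in these u-fans in $O(m \log n)$ time with high probability. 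In both cases, a constant fraction of the remaining uncolored edges are colored in $O(m\log n + \Delta\lambda)$ time w.h.p.

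Iterate this ``one round'' procedure: after each round, the number of uncolored edges shrinks by a constant factor, so after $O(\log \lambda) \leq O(\log n)$ rounds every edge is colored. Taking a union bound over these $O(\log n)$ rounds (each of which succeeds w.h.p.) and summing the per-round bounds yields a total running time of
\[
O(\log n) \cdot O(m \log n + \Delta\lambda) = O\bigl((m + \Delta\lambda)\log^2 n\bigr)
\]
with high probability, as required. The $\Delta\lambda$ term is absorbed cleanly because $\lambda$ only decreases across rounds, so the sum of the $\Delta\lambda$ contributions forms a geometric series dominated by its first term (times the $\log n$ factor).

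The only subtle point — and hence the main thing to double-check rather than a genuine obstacle — is that the separability machinery really does behave the same way on multigraphs. Specifically, one must verify that a separable collection $\mathcal U$ produced by \Cref{lem:build u-fans:multi} is a valid input to \Cref{lem:color u-fans:multi}, which follows from the fact that the notion of separable collection (edge-disjointness of u-components together with distinctness of assigned colors at each vertex) is defined identically for multigraphs, and that parallel uncolored edges are treated as distinct edges throughout (as emphasized in the notation at the start of the appendix). Once this is observed, no further modifications to the simple-graph argument are needed.
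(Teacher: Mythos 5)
Your proposal is correct and matches the paper's intended argument exactly: the paper itself states that the proof is verbatim the same as that of \Cref{lem:main extend} with the multigraph lemmas substituted in, which is precisely the iteration-and-union-bound argument you give. The extra observations about the geometric series for the $\Delta\lambda$ term and the compatibility of separable collections in multigraphs are sound and consistent with the paper's definitions.
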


\noindent The proof of this lemma is verbatim the same as the proof of \Cref{lem:main extend}, and hence is omitted. 

Finally, in the same way as in the proof of \Cref{thm:main}, we can apply standard Euler partitioning using \Cref{lem:main extend:multi} to merge the colorings. We only note that in multigraphs, 
splitting an $m$-edge multigraph $G$ via the Euler partition leads to two multigraphs, each with $\ceil{m/2}$ edges, maximum degree $\ceil{\Delta/2}$, and maximum multiplicity $\ceil{\mu/2}$. 
Thus, the number of colors used to color the graph recursively will be $2 \cdot (\ceil{\Delta/2} + \ceil{\mu/2}) = \Delta + \mu + O(1)$ colors. Hence, we again only need to uncolor $O(m/\Delta)$ edges
to obtain a $(\Delta+\mu)$ coloring and then extend this coloring to the remaining $O(m/\Delta)$ edges using~\Cref{lem:main extend:multi}. This leads to an $O(m\log^3{n})$ time algorithm for finding a $(\Delta+\mu)$ edge coloring with high probability. 

We can also optimize this algorithm to run in $O(m\log{n})$ time with high probability exactly as in our $(\Delta+1)$ edge coloring algorithm. Specifically, using exactly the same argument as in~\Cref{sec:log(n)}, we have the following analogues of~\Cref{lem:log(n)-Euler} and~\Cref{lem:log(n)-wrap-up} for multigraphs. 

\begin{lemma}[A modification of~\Cref{lem:log(n)-Euler} for multigraphs]\label{lem:log(n)-Euler-multi}
     There is an algorithm that given a multigraph $G$, finds a $(\Delta+\mu)$ edge coloring of all but (exactly) $m/\Delta$ edges in $O(m\log{\Delta})$ time
     with high probability. 
\end{lemma}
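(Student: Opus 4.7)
The plan is to mirror verbatim the proof of Lemma~\ref{lem:log(n)-Euler}, swapping the simple-graph subroutines for their multigraph analogues. The algorithm is recursive and based on Eulerian partitioning: given a multigraph $G$, compute an Eulerian partition of $G$ into two edge-disjoint submultigraphs $G_1, G_2$ on the same vertex set, each with $m(G_i) \leq \lceil m/2 \rceil$ edges, maximum degree $\Delta(G_i) \leq \lceil \Delta/2 \rceil$, and maximum multiplicity $\mu(G_i) \leq \lceil \mu/2 \rceil$ (the multiplicity bound follows because each bundle of parallel edges is split almost evenly by the Euler tour). Recursively obtain a $(\Delta(G_i) + \mu(G_i) + 1)$-edge coloring $\chi_i$ of $G_i$ in which $m(G_i)/\Delta(G_i) = O(m/\Delta)$ edges remain uncolored, then combine $\chi_1,\chi_2$ using disjoint palettes. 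This yields a partial coloring of $G$ using $2 \cdot (\lceil \Delta/2 \rceil + \lceil \mu/2 \rceil) + 2 = \Delta + \mu + O(1)$ colors; uncoloring the $O(1)$ smallest color classes drops us to $\Delta + \mu$ colors and leaves $\lambda = O(m/\Delta)$ uncolored edges in total.

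To finish, I would apply a multigraph analogue of Lemma~\ref{lem:log(n)-main extend} with parameters $\lambda = O(m/\Delta)$ and $\lambda_0 = m/\Delta$, which reduces the number of uncolored edges to exactly $m/\Delta$. This analogue follows immediately by the same iterative reduction as the simple-graph case, but now invoking Lemmas~\ref{lem:build u-fans:multi} and~\ref{lem:color u-fans:multi} in place of Lemmas~\ref{lem:build u-fans} and~\ref{lem:color u-fans}. In particular, one first verifies that the multigraph analogue of Lemma~\ref{lem:log(n)-color u-fans} still holds, which only requires checking that the $\PrimeUFans$/$\ActUFans$ averaging argument carries over unchanged once u-components, separable collections, and $\U$-avoiding Vizing fans are interpreted via Definitions~\ref{def:multifans} and~\ref{def:U-avoid:multi} and Lemmas~\ref{lem:fast U-avoid fan:multi}--\ref{lem:Vfans are safe:multi}; these lemmas already grant us the same palette-avoidance guarantees used in the simple-graph analysis.

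For the runtime, the recursion has $O(\log \Delta)$ levels, with $2^d$ subproblems at level $d$, each of size $\lceil m/2^d \rceil$ and maximum degree $\lceil \Delta/2^d \rceil$. Let $X_i$ denote the runtime of the $i$-th subproblem and $X = \sum_i X_i$. By the multigraph version of Lemma~\ref{lem:log(n)-main extend}, $\mathbb{E}[X_i] = O((m/2^d)\log(\lambda/\lambda_0) + (\Delta/2^d) \cdot \lambda) = O(m/2^d)$ since $\lambda/\lambda_0 = O(1)$ and $\Delta \lambda = O(m)$, giving $\mathbb{E}[X] = O(m \log \Delta)$. Moreover, for each $X_i$ we get a tail bound of the form $\Pr[X_i \geq (1+\delta) \alpha_i] \leq \exp(-\delta \beta_i)$ with $\alpha_i = O(m/2^d)$ and $\beta_i = \Theta(m/\Delta)$. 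Plugging these into Proposition~\ref{prop:conc} with $t = \Theta(m \log \Delta)$ yields, exactly as in the proof of Lemma~\ref{lem:log(n)-Euler}, a failure probability at most $\exp(-\Theta(m \log \Delta / \Delta))$, which is $1/\mathrm{poly}(n)$ in both regimes $\Delta < \sqrt{m}$ and $\Delta \geq \sqrt{m}$.

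The main obstacle is the multigraph Eulerian partition: unlike the simple-graph case, one must confirm that it preserves the multiplicity bound, i.e.\ that every bundle of $k$ parallel edges is split into two bundles of size $\leq \lceil k/2 \rceil$. This follows by constructing the Euler partition so that consecutive edges in the Euler tour go to opposite subgraphs, making the alternation within each parallel bundle tight; once this is checked, every other step in the proof is a direct transcription of the argument for Lemma~\ref{lem:log(n)-Euler}.
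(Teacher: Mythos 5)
Your proposal is correct and follows essentially the same route as the paper, which itself only asserts that this lemma is obtained ``verbatim'' from \Cref{lem:log(n)-Euler} by substituting the multigraph subroutines (\Cref{lem:build u-fans:multi}, \Cref{lem:color u-fans:multi}, and the fine-grained analogue of \Cref{lem:log(n)-color u-fans}) and noting that the Euler partition halves $m$, $\Delta$, and $\mu$ up to rounding, so that the merged coloring uses $\Delta+\mu+O(1)$ colors. The one point worth tightening is your justification of the multiplicity bound: parallel edges of a bundle need not appear consecutively in the Euler tour, so the alternation argument as stated does not split bundles evenly -- one should instead pre-extract $\lfloor k/2\rfloor$ two-cycles from each bundle of multiplicity $k$ and partition the (now near-simple) remainder, which preserves both the degree and multiplicity bounds up to an additive constant and suffices for the $\Delta+\mu+O(1)$ count.
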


\begin{lemma}[A modification of~\Cref{lem:log(n)-wrap-up} for multigraphs]\label{lem:log(n)-wrap-up-multi}
	There is an algorithm that, given a multigraph $G$, a partial $(\Delta+\mu)$-edge coloring $\chi$ 
	with $\lambda=m/\Delta$ uncolored edges, extends the coloring to all edges in $O(m\log{n})$ time with high probability. 
\end{lemma}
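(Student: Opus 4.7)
The plan is to mirror the proof of~\Cref{lem:log(n)-wrap-up} for the simple-graph case, using the multigraph analogues of the building blocks. First, we observe that the proof of~\Cref{lem:log(n)-main extend} used only~\Cref{lem:build u-fans} and~\Cref{lem:color u-fans} (together with the tail bound of~\Cref{lem:log(n)-color u-fans}) in a black-box manner. Since~\Cref{lem:build u-fans:multi} and~\Cref{lem:color u-fans:multi} provide exactly the same guarantees for multigraphs, the identical argument yields a multigraph analogue of~\Cref{lem:log(n)-main extend}: for any $\lambda_0 \leq \lambda$, we can extend $\chi$ to all but $\lambda_0$ of the uncolored edges in time $T$ with $\expect{T} = O(m \log(\lambda/\lambda_0) + \Delta \lambda)$ and $\Pr[T \geq (1+\delta) \cdot 2\expect{T}] \leq \exp(-\delta \lambda_0/200)$.

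Next, we invoke this multigraph extension routine with $\lambda = m/\Delta$ and $\lambda_0 = 100\log n$. The expected runtime is $O(m \log(m/(\Delta \log n)) + \Delta \cdot (m/\Delta)) = O(m \log n)$, and by the tail bound with $\delta = 10$, the runtime is $O(m \log n)$ with probability at least $1 - \exp(-1000 \log n) = 1 - 1/\poly(n)$. After this step, with high probability only $O(\log n)$ uncolored edges remain.

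To color these remaining $O(\log n)$ edges, we invoke the multigraph version of Vizing's original color extension procedure one edge at a time: for each remaining uncolored edge $e = (u,v)$, build a $\U$-free multigraph Vizing fan from $u$ in $O(\Delta)$ time and either trivially extend the coloring (if the fan is trivial) or flip a single $\{\alpha,c\}$-alternating path of length at most $O(n)$ and rotate the fan, as described after~\Cref{def:multifans}. Each such operation takes $O(n + \Delta) = O(n)$ time, so coloring the $O(\log n)$ remaining edges contributes an additional $O(n \log n) = O(m \log n)$ term (using $n \leq m + 1$ for the non-trivial range of parameters). Combining the two steps, the total running time is $O(m \log n)$ with high probability.

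The main obstacle is really just ensuring that the multigraph analogues used above go through cleanly. The extension routine of~\Cref{lem:main extend:multi} already delivers this in expectation; one must simply verify that the tail-bound proof of~\Cref{lem:log(n)-main extend} transfers verbatim because it depended only on the iterate-and-reduce structure and on the tail bound for~\Cref{lem:log(n)-color u-fans}, both of which are preserved by~\Cref{lem:color u-fans:multi}. Once that is observed, the two-phase argument above completes the proof.
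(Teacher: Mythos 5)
Your proposal is correct and follows essentially the same route as the paper, which simply asserts that the argument of the simple-graph case (Lemma~\ref{lem:log(n)-wrap-up}) carries over verbatim: run the fine-grained extension routine with $\lambda_0 = \Theta(\log n)$ to reduce to $O(\log n)$ uncolored edges in $O(m\log n)$ time w.h.p.\ via the tail bound, then finish each remaining edge with a single multigraph Vizing fan/chain in $O(n)$ time. Your added remark that the tail-bound proof of Lemma~\ref{lem:log(n)-main extend} only uses Lemmas~\ref{lem:build u-fans:multi} and~\ref{lem:color u-fans:multi} as black boxes is exactly the justification the paper leaves implicit (your specific constants in the tail bound are slightly off, but any constant $\delta$ already gives a $1/\poly(n)$ failure probability, so this is immaterial).
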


\Cref{thm:multi Viz} now follows immediately from~\Cref{lem:log(n)-Euler-multi} and~\Cref{lem:log(n)-wrap-up-multi}.

\subsection{Proof Sketch of \Cref{lem:build u-fans:multi}}\label{sec:proof lem multi1}

The algorithm for \Cref{lem:build u-fans:multi} is almost identical to the algorithm for \Cref{lem:build u-fans}, except that we replace the relevant definitions with their generalizations for multigraphs, leading to a few extra cases. The overall structure of the algorithm is still the same as \Cref{alg:constructUfans}, but using modifications of $\PruneVFans$ and $\ReduceUEdges$ for multigraphs.

\medskip
\noindent \textbf{Initializing the Seperable Collection $\U$:} Given a set of $\lambda$ uncolored edges $U$, we can construct a separable collection of $\lambda$ u-edges $\U$ in the exact same way as simple graphs. However, there may exist distinct u-edges $\e, \e' \in \U$ that correspond to parallel (but distinct) uncolored edges.

\medskip
\noindent \textbf{Modifying $\PruneVFans$ for Multigraphs:} Suppose that we want to run $\PruneVFans$ on a separable collection $\U$ with color $\alpha$. In this call to the subroutine, we consider all of the $\alpha$-primed u-edges in $\E_\alpha(\U) \subseteq \U$. We implement this subroutine for multigraphs in the same way as for simple graphs, \emph{but with the following additional preprocessing step to ensure that none of the u-edges in $\E_\alpha(\U)$ are parallel}.

Suppose that we have distinct $\e, \e' \in \E_\alpha(\U)$ that correspond to parallel uncolored edges with endpoints $u$ and $v$. Then, by the definition of a separable collection, one of these u-edges must have $u$ as a center and the other must have $v$ as a center. Thus, $\alpha \in \miss_\chi(u) \cap \miss_\chi(v)$, so we can extend the coloring to one of these u-edges by coloring it with $\alpha$ and then remove them both from $\U$. We can scan through all of the $\alpha$-primed u-edges and perform this operation whenever we encounter parallel u-edges.
This maintains the invariant that $\U$ is a separable collection and ensures that none of the u-edges in $\E_\alpha(\U)$ are parallel.

After performing this preprocessing step, we can implement the rest of $\PruneVFans$ in the exact same way as described in \Cref{sec:prune}, except that we must use the generalization of Vizing fans for multigraphs. In particular, all of the lemmas in \Cref{sec:prune} describing the properties of $\PruneVFans$ still hold.

\medskip
\noindent \textbf{Modifying $\ReduceUEdges$ for Multigraphs:} Suppose that we want to run $\ReduceUEdges$ on a separable collection $\U$ with color $\alpha$ and have obtained a collection of vertex-disjoint $\U$-avoiding Vizing fans $\mathcal F$ corresponding to the u-edges in $\E_\alpha(\U)$ from calling $\PruneVFans$. We implement the subroutine $\ReduceUEdges$ in the same way as described in \Cref{sec:reduce} but with the following modification to \Cref{alg:update path} to account for an additional case.

Suppose that we call $\UpdatePath(\F)$ and take one more step along the path $\VizingP(\F)$ and observe the edge $e = (x,y)$. If the edge $e$ has already been seen (i.e.~is contained in $S$) then we proceed as normal.\footnote{That is, this exact copy of the edge is in $S$, not just an edge parallel to $e$.} If neither the edge $e$ nor any edge parallel to $e$ is contained in $S$, then we again proceed as normal. However, if we observe that $e$ is not already contained in $S$ but some edge $e'$ that is parallel to $e$ is contained in $S$, then we do the following: Let $\F' \in \mathcal F$ be the Vizing fan such that $e' \in P_{\F'}$, set $\chi(e) \leftarrow \bot$ and $\chi(e') \leftarrow \bot$, call $\Vizing(\F)$ and $\Vizing(\F')$, set $\chi(e) \leftarrow \alpha$, remove $\F$ and $\F'$ from $\mathcal F$ and $\e_{\F}$ and $\e_{\F'}$ from $\U$, and set $S \leftarrow S \setminus (P_{\F} \cup P_{\F'})$; see \Cref{multi} for an illustration. We can verify that, if this does happen, the edges $e$ and $e'$ must appear in different orientations in the paths $P_{\F}$ and $P_{\F'}$. Consequently, this operation removes 2 u-edges from $\U$ and extends the coloring to one more edge.

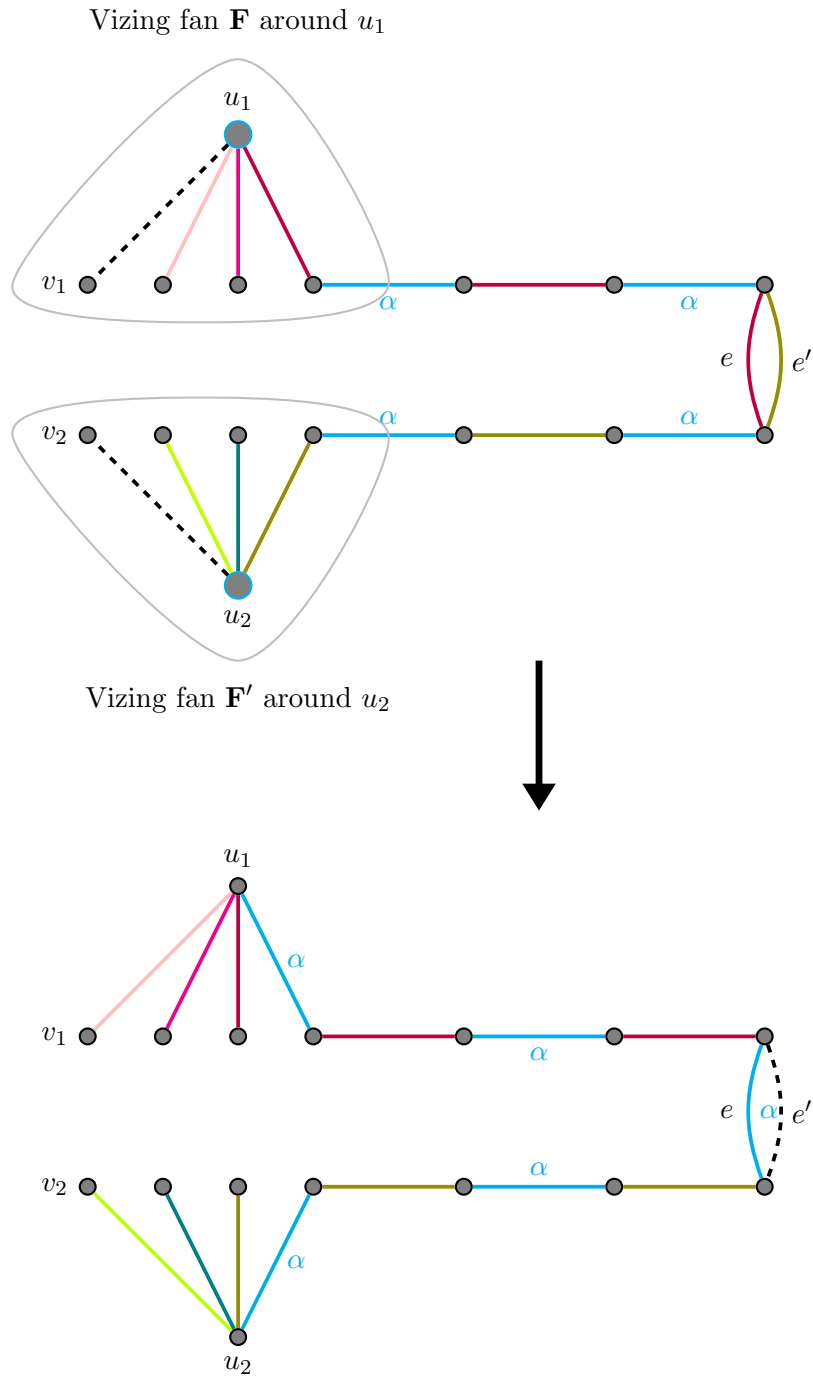
\begin{figure}
	\centering
	\begin{tikzpicture}[thick,scale=1]
	\draw (0, 3) node(1)[circle, draw, color=cyan, fill=black!50,
	inner sep=0pt, minimum width=10pt, label = $u_1$] {};
	
	\draw (-2, 1) node(2)[circle, draw, fill=black!50,
	inner sep=0pt, minimum width=6pt, label = 180:{$v_1$}] {};
	
	\draw (-1, 1) node(3)[circle, draw, fill=black!50,
	inner sep=0pt, minimum width=6pt] {};
	
	\draw (0, 1) node(4)[circle, draw, fill=black!50,
	inner sep=0pt, minimum width=6pt] {};
	
	\draw (1, 1) node(5)[circle, draw, fill=black!50,
	inner sep=0pt, minimum width=6pt] {};
	
	\draw (3, 1) node(6)[circle, draw, fill=black!50,
	inner sep=0pt, minimum width=6pt] {};
	
	\draw (5, 1) node(7)[circle, draw, fill=black!50,
	inner sep=0pt, minimum width=6pt] {};
	
	\draw (7, 1) node(8)[circle, draw, fill=black!50,
	inner sep=0pt, minimum width=6pt] {};
	
	
	\draw (0, -3) node(11)[circle, draw, color=cyan, fill=black!50,
	inner sep=0pt, minimum width=10pt, label = -90:{$u_2$}] {};
	
	\draw (-2, -1) node(12)[circle, draw, fill=black!50,
	inner sep=0pt, minimum width=6pt, label = 180:{$v_2$}] {};
	
	\draw (-1, -1) node(13)[circle, draw, fill=black!50,
	inner sep=0pt, minimum width=6pt] {};
	
	\draw (0, -1) node(14)[circle, draw, fill=black!50,
	inner sep=0pt, minimum width=6pt] {};
	
	\draw (1, -1) node(15)[circle, draw, fill=black!50,
	inner sep=0pt, minimum width=6pt] {};
	
	\draw (3, -1) node(16)[circle, draw, fill=black!50,
	inner sep=0pt, minimum width=6pt] {};
	
	\draw (5, -1) node(17)[circle, draw, fill=black!50,
	inner sep=0pt, minimum width=6pt] {};
	
	\draw (7, -1) node(18)[circle, draw, fill=black!50,
	inner sep=0pt, minimum width=6pt] {};
	
	\draw [line width = 0.5mm, dashed] (1) to (2);
	\draw [line width = 0.5mm, pink] (1) to (3);
	\draw [line width = 0.5mm, magenta] (1) to (4);
	\draw [line width = 0.5mm, purple] (1) to (5);
	\draw [line width = 0.5mm, cyan] (5) to node[below] {$\alpha$} (6);
	\draw [line width = 0.5mm, purple] (6) to (7);
	\draw [line width = 0.5mm, cyan] (7) to node[below] {$\alpha$} (8);
	\draw [line width = 0.5mm, purple] (8) to[out=-110, in=110] (18);
	\node at (6.5, 0) {$e$};
	\node at (7.5, 0) {$e'$};
	
	\draw [line width = 0.5mm, dashed] (11) to (12);
	\draw [line width = 0.5mm, lime] (11) to (13);
	\draw [line width = 0.5mm, teal] (11) to (14);
	\draw [line width = 0.5mm, olive] (11) to (15);
	\draw [line width = 0.5mm, cyan] (15) to node[above] {$\alpha$} (16);
	\draw [line width = 0.5mm, olive] (16) to (17);
	\draw [line width = 0.5mm, cyan] (17) to node[above] {$\alpha$} (18);
	\draw [line width = 0.5mm, olive] (18) to[out=70, in=-70] (8);
	
	\draw [gray!50] plot [smooth cycle] coordinates {(0, 4) (-3, 1)  (-0.5, 0.5) (2, 1)};
	\node at (0, 4.5) {Vizing fan $\F$ around $u_1$};
	
	\draw [gray!50] plot [smooth cycle] coordinates {(0, -4) (-3, -1)  (-0.5, -0.5) (2, -1)};
	\node at (0, -4.5) {Vizing fan $\F'$ around $u_2$};
	
        \draw[->, >={Triangle}, thick, line width = 0.9mm] (4, -4) to (4, -6);
	
	\draw (0, -7) node(21)[circle, draw, fill=black!50,
	inner sep=0pt, minimum width=6pt, label = $u_1$] {};
	
	\draw (-2, -9) node(22)[circle, draw, fill=black!50,
	inner sep=0pt, minimum width=6pt, label = 180:{$v_1$}] {};
	
	\draw (-1, -9) node(23)[circle, draw, fill=black!50,
	inner sep=0pt, minimum width=6pt] {};
	
	\draw (0, -9) node(24)[circle, draw, fill=black!50,
	inner sep=0pt, minimum width=6pt] {};
	
	\draw (1, -9) node(25)[circle, draw, fill=black!50,
	inner sep=0pt, minimum width=6pt] {};
	
	\draw (3, -9) node(26)[circle, draw, fill=black!50,
	inner sep=0pt, minimum width=6pt] {};
	
	\draw (5, -9) node(27)[circle, draw, fill=black!50,
	inner sep=0pt, minimum width=6pt] {};
	
	\draw (7, -9) node(28)[circle, draw, fill=black!50,
	inner sep=0pt, minimum width=6pt] {};
	
	
	\draw (0, -13) node(31)[circle, draw, fill=black!50,
	inner sep=0pt, minimum width=6pt, label = -90:{$u_2$}] {};
	
	\draw (-2, -11) node(32)[circle, draw, fill=black!50,
	inner sep=0pt, minimum width=6pt, label = 180:{$v_2$}] {};
	
	\draw (-1, -11) node(33)[circle, draw, fill=black!50,
	inner sep=0pt, minimum width=6pt] {};
	
	\draw (0, -11) node(34)[circle, draw, fill=black!50,
	inner sep=0pt, minimum width=6pt] {};
	
	\draw (1, -11) node(35)[circle, draw, fill=black!50,
	inner sep=0pt, minimum width=6pt] {};
	
	\draw (3, -11) node(36)[circle, draw, fill=black!50,
	inner sep=0pt, minimum width=6pt] {};
	
	\draw (5, -11) node(37)[circle, draw, fill=black!50,
	inner sep=0pt, minimum width=6pt] {};
	
	\draw (7, -11) node(38)[circle, draw, fill=black!50,
	inner sep=0pt, minimum width=6pt] {};
	
	\draw [line width = 0.5mm, pink] (21) to (22);
	\draw [line width = 0.5mm, magenta] (21) to (23);
	\draw [line width = 0.5mm, purple] (21) to (24);
	\draw [line width = 0.5mm, cyan] (21) to node[right] {$\alpha$} (25);
	\draw [line width = 0.5mm, purple] (25) to (26);
	\draw [line width = 0.5mm, cyan] (26) to node[below] {$\alpha$} (27);
	\draw [line width = 0.5mm, purple] (27) to (28);
	\draw [line width = 0.5mm, dashed] (28) to[out=-70, in=70] (38);
	
	\draw [line width = 0.5mm, lime] (31) to (32);
	\draw [line width = 0.5mm, teal] (31) to (33);
	\draw [line width = 0.5mm, olive] (31) to (34);
	\draw [line width = 0.5mm, cyan] (31) to node[right] {$\alpha$} (35);
	\draw [line width = 0.5mm, olive] (35) to (36);
	\draw [line width = 0.5mm, cyan] (36) to node[above] {$\alpha$} (37);
	\draw [line width = 0.5mm, olive] (37) to (38);
	\draw [line width = 0.5mm, cyan] (38) to[out=110, in=-110] node[right] {$\alpha$} (28);
	
	\node at (6.5, -10) {$e$};
	\node at (7.5, -10) {$e'$};	
\end{tikzpicture}
	\caption{If $P_{\F}$ and $P_{\F'}$ meet in the middle at a pair of parallel edges $e, e'$, then we can shift the uncolored edge from $\F'$ to $e'$ and apply $\Vizing(F)$ which assigns $\chi(e)\leftarrow \alpha$; here $\alpha$ is blue.}\label{multi}
\end{figure}

After modifying \Cref{alg:update path} with this additional case, we strengthen \Cref{invaraint:paths} so that no two edges contained in distinct prefix paths in $\{P_{\F}\}_{\F \in \mathcal F}$ are parallel. This is sufficient to ensure that the analysis of $\ReduceUEdges$ in \Cref{sec:reduce} extends to multigraphs. In particular, all of the lemmas in \Cref{sec:reduce} describing the properties of $\ReduceUEdges$ still hold.

\section{Shannon's Theorem for Multigraphs in Near-Linear Time}\label{sec:shannon}

A classical theorem by Shannon shows that any multigraph $G$ with maximum degree $\Delta$ can be edge colored with $\floor{3\Delta/2}$ colors \cite{shannon1949theorem}. Unlike Vizing's theorem for multigraphs, this bound does not depend on the maximum multiplicity of the graph. We show that we can also extend our approach to a near-linear time randomized algorithm for computing such a coloring. Previously, only 
the original $O(mn)$ time algorithm of~\cite{shannon1949theorem} and a recent $O(n \cdot \Delta^{18})$ time algorithm of~\cite{Dhawan24} were known for this problem. 

\begin{theorem}\label{thm:shannon}
    There is a randomized algorithm that, given an undirected multigraph graph $G = (V, E)$ on $n$ vertices and $m$ edges with maximum degree $\Delta$, finds a $\floor{3\Delta/2}$-edge coloring of $G$ in $O(m\log{n})$ time with high probability.
\end{theorem}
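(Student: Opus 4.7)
The proof will follow the same high-level blueprint as our proof of Theorem~\ref{thm:multi Viz} for the Vizing-multigraph bound $\Delta+\mu$, but adapted to the tighter Shannon bound $\lfloor 3\Delta/2 \rfloor$. First observe that whenever $\mu \le \lfloor \Delta/2 \rfloor$ we have $\Delta+\mu \le \lfloor 3\Delta/2\rfloor$, so Theorem~\ref{thm:multi Viz} already produces a $\lfloor 3\Delta/2\rfloor$-edge coloring in $O(m\log n)$ time with high probability. The interesting regime is therefore $\mu > \lfloor \Delta/2 \rfloor$, where the standard Vizing fan no longer fits inside the available color budget.

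In this regime, the classical proof of Shannon's theorem replaces the Vizing fan by a \emph{Shannon fan}, in which each leaf is assigned a \emph{set} of missing colors (rather than a single one) as in Definition~\ref{def:multifans}, but with the palette sizes tuned to the Shannon bound instead of to $\Delta + \mu$. The key combinatorial fact behind Shannon's proof is that, with $\lfloor 3\Delta/2\rfloor$ colors available and max multiplicity $\mu$, a primed Shannon fan together with a suitable $\{\alpha,\beta\}$-alternating Shannon chain admits a sequence of color rotations that extends the coloring to the central uncolored edge, even when $\mu$ is as large as $\Delta$. I would take this as the atomic color-extension primitive, in place of Lemma~\ref{lem:full vizing proof}.

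The rest of the proof ports the framework developed for Theorem~\ref{thm:multi Viz} essentially verbatim. I would define a u-component and separable collection exactly as in Section~\ref{sec:prelim:new}, define \emph{$\U$-avoiding Shannon fans} in direct analogy with Definition~\ref{def:U-avoid:multi}, and establish the safety lemma (analogous to Lemma~\ref{lem:Vfans are safe:multi}) that rotating such a fan damages at most one u-component outside the fan itself---this is purely a local argument about palette changes at the fan leaves and at the endpoint of the flipped alternating path, and so transfers with no real change. With these primitives, the subroutines $\PruneVFans$, $\ReduceUEdges$, $\PrimeUFans$ and $\ActUFans$ all go through, yielding Shannon-regime analogues of Lemmas~\ref{lem:build u-fans:multi}, \ref{lem:color u-fans:multi}, and \ref{lem:main extend:multi}. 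Plugging these into the Euler-partition framework (exactly as in the proof of Theorem~\ref{thm:main-tech}) gives an $O(m\log^3 n)$-time algorithm, and tightening via the two-step argument of Section~\ref{sec:log(n)}---namely, replacing the high-probability runtime of $\ColorUFans$ by an expected-time bound with the tail inequality of Proposition~\ref{prop:conc}, and terminating the recursion early so that only $m/\Delta$ uncolored edges need to be cleaned up---yields the final $O(m\log n)$ runtime.

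The main obstacle will be the Shannon-fan step itself: unlike Vizing fans, correctness of a single rotation depends delicately on the palette sizes, so one must show that $\lfloor 3\Delta/2 \rfloor$ colors suffice to complete the rotation even when $\mu$ approaches $\Delta$ and several leaves share large missing-color sets. This will require a case split on whether the $\{\alpha,\beta\}$-alternating Shannon chain returns to the fan center or to another leaf, together with a careful choice of the replacement color inside the missing-color sets so that the rotation remains $\U$-avoiding. Once this piece is in place, the rest of the argument is a mechanical translation of Section~\ref{sec:algo build u-fans}--\ref{sec:log(n)} to the Shannon setting.
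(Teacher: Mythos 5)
Your opening reduction is fine: when $\mu \le \lfloor\Delta/2\rfloor$ we have $\Delta+\mu \le \lfloor 3\Delta/2\rfloor$, so \Cref{thm:multi Viz} already covers that regime. The problem is that your plan for the hard regime $\mu > \lfloor\Delta/2\rfloor$ rests on a primitive that does not exist. You propose a rotation fan ``as in Definition~\ref{def:multifans}, but with the palette sizes tuned to the Shannon bound,'' i.e.\ a multi-leaf fan where each leaf carries a set of $\mu$ missing colors satisfying the mutual-disjointness conditions. Under a $\lfloor 3\Delta/2\rfloor$-edge coloring, a leaf of degree $\Delta$ has only $\lfloor 3\Delta/2\rfloor-\Delta=\lfloor\Delta/2\rfloor$ missing colors, which is strictly less than $\mu$ in exactly the regime you need; Condition~1 of \Cref{def:multifans} is therefore unsatisfiable, and no choice of ``tuning'' restores the disjointness needed for the rotation to be well-defined. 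You flag this step yourself as ``the main obstacle,'' but it is not a technicality to be patched later --- it is the entire content of Shannon's theorem, and the structure you are trying to build is the wrong one. Consequently the claim that $\PruneVFans$ and $\ReduceUEdges$ ``go through'' is unsupported: both subroutines are built on $\U$-avoiding Vizing fans and on shifting uncolored edges along their Vizing chains, none of which is available here.

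The paper's actual gadget is much smaller and sidesteps all of this. A Shannon fan is a tuple $(u,v,w,\alpha,\beta,\gamma)$ consisting of the uncolored edge $(u,v)$ and a \emph{single} colored edge $(u,w)$ with $\chi(u,w)=\gamma\in\miss_\chi(v)$, where $\alpha\in\miss_\chi(u)$ and $\beta\in\miss_\chi(v)\cap\miss_\chi(w)$. Its existence follows from a one-line counting claim: if $\miss_\chi(u)$ is disjoint from both $\miss_\chi(v)$ and $\miss_\chi(w)$, then $\miss_\chi(v)\cap\miss_\chi(w)\neq\varnothing$, since otherwise summing the three palette sizes contradicts the bound $\lfloor 3\Delta/2\rfloor$ (this is where the Shannon budget is used, once, cleanly). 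Activation is then identical to u-fan activation: at most one of the two $\{\alpha,\beta\}$-alternating paths from $v$ and $w$ ends at $u$. With this gadget, the analogue of \Cref{lem:build u-fans} (\Cref{lem:Shannon1}) is \emph{not} obtained by porting $\ConUFans$; it is a deterministic $O(m)$-time greedy construction of vertex-disjoint pre-Shannon fans with a potential-function argument, exploiting the fact that the uncolored edges form a matching and \emph{remain} one throughout (there is no shifting of uncolored edges in the Shannon pipeline). Only the coloring phase ($\PrimeUFans$/$\ActUFans$, \Cref{lem:Shannon2}) transfers essentially verbatim, because a Shannon fan behaves like a u-fan there. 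So the gap in your proposal is concrete: you need to replace your fan-rotation primitive with a correct Shannon-specific extension step, and once you do, the construction phase of your pipeline should be rebuilt around it rather than around $\PruneVFans$/$\ReduceUEdges$.
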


Yet again, we first focus on obtaining a near-linear time algorithm, and postpone optimizing the runtime to $O(m\log{n})$ to the end of the proof. 

\subsection{Shannon Fans}

Let $G = (V, E)$ be an undirected multigraph on $n$ vertices and $m$ edges with maximum degree $\Delta$ and $\chi$ be a partial $\floor{3\Delta/2}$-edge coloring of $G$. 
We begin by defining the notion of a \emph{Shannon fan}, which is conceptually very similar to a u-fan.

\begin{definition}[Shannon fan]
    A \emph{Shannon fan} is a tuple $\s = (u, v, w, \alpha, \beta, \gamma)$ where $u$, $v$ and $w$ are distinct vertices and $\alpha$, $\beta$ and $\gamma$ are distinct colors such that:
    \begin{enumerate}
        \item $(u,v)$ is an uncolored edge and $(u,w)$ is an edge with color $\gamma$.
        \item $\alpha \in \miss_\chi(u)$, $\beta \in \miss_\chi(v) \cap \miss_\chi(w)$ and $\gamma \in \miss_\chi(v)$.
    \end{enumerate}
\end{definition}
\noindent
We say that $u$ is the \emph{center} of $\s$ and that $v$ and $w$ are the \emph{leaves} of $\s$. We also say that the Shannon fan $\s$ is \emph{$\{\alpha, \beta\}$-primed}, and that $\gamma$ is the \emph{auxiliary} color. We say that $\s$ is the Shannon fan of the uncolored edge $(u,v)$.

\medskip
\noindent \textbf{Activating Shannon Fans:}
Let $\s$ be an $\{\alpha,\beta\}$-primed Shannon fan with center $u$ and leaves $v$ and $w$. The key property of Shannon fans is that at most one of the $\{\alpha, \beta\}$-alternating paths starting at $v$ or $w$ ends at $u$. Suppose that the $\{\alpha, \beta\}$-alternating path starting at $v$ does not end at $u$. Then, after flipping this $\{\alpha, \beta\}$-alternating path, both $u, v$ are missing color $\alpha$. Thus, we can extend the coloring $\chi$ by assigning $\chi(u, v)\leftarrow \alpha$. If the path does not end a $w$ instead, we can set $\chi(u,w) \leftarrow \bot$ and $\chi(u,v) \leftarrow \gamma$ and carry out an analogous argument.
We refer to this as \emph{activating} the Shannon fan $\s$.

\medskip
\noindent \textbf{Constructing Shannon Fans:} The following lemma shows how to construct a Shannon fan.

\begin{lemma}\label{lem:create S fan}
    Given an uncolored edge $e$, we can either construct a Shannon fan $\s$ of the edge $e$ or extend $\chi$ to $e$ in $O(\Delta)$ time.
\end{lemma}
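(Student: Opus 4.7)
The plan is to split into cases depending on whether $\miss_\chi(u) \cap \miss_\chi(v) = \varnothing$, after computing both palettes as hash sets in $O(\Delta)$ time. If the intersection is non-empty, I extend $\chi$ directly by coloring $(u,v)$ with any color from the intersection. Otherwise, I pick an arbitrary $\gamma \in \miss_\chi(v)$; since $\gamma \notin \miss_\chi(u)$, there is a unique edge $(u,w)$ with $\chi(u,w) = \gamma$, which I retrieve in $O(1)$ using the data structures of \Cref{sec:data structs}.

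Next, I scan through $\miss_\chi(w)$ (of size at most $O(\Delta)$) looking for a color $\beta$ that also lies in $\miss_\chi(u) \cup \miss_\chi(v)$; this takes $O(\Delta)$ time using the hash sets for $\miss_\chi(u)$ and $\miss_\chi(v)$. If such a $\beta$ is found with $\beta \in \miss_\chi(v)$, then together with an arbitrary $\alpha \in \miss_\chi(u)$ I output the Shannon fan $\s = (u, v, w, \alpha, \beta, \gamma)$; the distinctness of $\alpha, \beta, \gamma$ follows from the disjointness of $\miss_\chi(u)$ and $\miss_\chi(v)$ (separating $\alpha$ from $\beta, \gamma$) and from $\beta \in \miss_\chi(w)$, $\gamma \notin \miss_\chi(w)$ (separating $\beta$ from $\gamma$). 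If instead $\beta \in \miss_\chi(u)$, I directly extend $\chi$ by setting $\chi(u,w) \leftarrow \beta$ (a safe recoloring since $\beta \in \miss_\chi(w) \cap \miss_\chi(u)$) and then $\chi(u,v) \leftarrow \gamma$ (which becomes missing at $u$ once we free the color $\gamma$ there), using $O(1)$ additional time.

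The main obstacle is verifying that the scan always succeeds, i.e., $\miss_\chi(w) \cap (\miss_\chi(u) \cup \miss_\chi(v)) \neq \varnothing$ whenever the palettes of $u$ and $v$ are disjoint. For this I would appeal to inclusion-exclusion inside the universe of $\floor{3\Delta/2}$ colors: since $(u,v)$ is uncolored, $|\miss_\chi(u)|, |\miss_\chi(v)| \geq \floor{3\Delta/2} - (\Delta - 1) = \floor{\Delta/2} + 1$; combined with $|\miss_\chi(w)| \geq \floor{3\Delta/2} - \Delta = \floor{\Delta/2}$ and the disjointness of $\miss_\chi(u)$ and $\miss_\chi(v)$, one obtains
\[
   |\miss_\chi(w) \cap (\miss_\chi(u) \cup \miss_\chi(v))| \geq \floor{\Delta/2} + (2\floor{\Delta/2} + 2) - \floor{3\Delta/2} \geq 1,
\]
so the iteration over $\miss_\chi(w)$ must encounter a valid $\beta$ (note that the disjoint case only arises when $\Delta \geq 3$, since otherwise $|\miss_\chi(u)| + |\miss_\chi(v)|$ already exceeds $\floor{3\Delta/2}$), completing the algorithm within the $O(\Delta)$ time budget.
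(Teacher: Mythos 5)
Your proof is correct and follows essentially the same route as the paper's: first try to color $(u,v)$ directly, otherwise follow the $\gamma$-colored edge to $w$ and use a pigeonhole count over the $\floor{3\Delta/2}$ colors to find either a color $\beta\in\miss_\chi(w)\cap\miss_\chi(v)$ (yielding the Shannon fan) or $\beta\in\miss_\chi(w)\cap\miss_\chi(u)$ (yielding the shift-and-color extension). The only cosmetic difference is that you fold the paper's two remaining sub-cases into a single counting bound on $\miss_\chi(w)\cap(\miss_\chi(u)\cup\miss_\chi(v))$, whereas the paper first disposes of the case $\miss_\chi(u)\cap\miss_\chi(w)\neq\varnothing$ and then counts under full three-way disjointness; both arguments are the same inclusion--exclusion in disguise.
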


\begin{proof}
    Let $u$ and $v$ be the endpoints of $e$ and let $\gamma$ be any color in $\miss_\chi(v)$. If $\gamma \in \miss_\chi(u)$, then we can set $\chi(u,v) \leftarrow \gamma$ and we are done. Otherwise, let $(u,w)$ be the edge incident on $u$ with color $\gamma$ (note that $w \neq v$ since $\gamma \in \miss_\chi(v)$). Now, suppose that $\miss_\chi(u)$ is disjoint from both $\miss_\chi(v)$ and $\miss_\chi(w)$. Otherwise, we can extend $\chi$ to $e$ by shifting the color $\gamma$ between $(u,v)$ and $(u,w)$ and directly coloring the uncolored edge. The following claim shows that we can find a color in $\miss_\chi(v) \cap \miss_\chi(w)$.
    \begin{claim}
        We have that $\miss_\chi(v) \cap \miss_\chi(w) \neq \varnothing$.
    \end{claim}
    \begin{proof}
        Suppose that $\miss_\chi(v)$ and $\miss_\chi(v)$ are disjoint. Then we have that 
        $$2 \cdot \paren{\floor{\frac{3\Delta}{2}} - (\Delta - 1)} + \paren{\floor{\frac{3\Delta}{2}} - \Delta} \geq |\miss_\chi(u)| + |\miss_\chi(v)| + |\miss_\chi(w)| \geq \floor{\frac{3\Delta}{2}}, $$
        which implies that $\floor{3\Delta/2} \geq (3\Delta/2) - 1$, giving a contradiction.
    \end{proof}
    \noindent
    Now, we can find colors $\alpha \in \miss_\chi(u)$ and $\beta \in \miss_\chi(v) \cap \miss_\chi(w)$ in $O(\Delta)$ time. It follows that $\s = (u,v,w,\alpha, \beta, \gamma)$ is a Shannon fan of the edge $e$.
\end{proof}

\medskip
\noindent \textbf{Pre-Shannon Fans:} For notational convenience, we introduce the notion of a \emph{pre-Shannon fan}.

\begin{definition}[pre-Shannon fan]
    A \emph{pre-Shannon fan} is a tuple $\s^\star = (u, v, w, \gamma)$ where $u$, $v$ and $w$ are distinct vertices and $\gamma$ is the auxiliary color such that $(u,v)$ is an uncolored edge, $(u,w)$ is an edge with color $\gamma$ and $\gamma \in \miss_\chi(v)$.
\end{definition}

\noindent
By slightly modifying the proof of \Cref{lem:create S fan}, we get the following lemma which shows that we can either convert a pre-Shannon fan $\s^\star$ into a Shannon fan $\s$ or extend the coloring to the uncolored edge in $\s^\star$.

\begin{lemma}\label{lem:pre S fan}
    Given a pre-Shannon fan $\s^\star = (u,v,w,\gamma)$, we can do one of the following in $O(\Delta)$ time:
    \begin{enumerate}
        \item Find colors $\alpha$ and $\beta$ such that $\s = (u,v,w,\alpha, \beta,\gamma)$ is a Shannon fan.
        \item Extend the coloring to $(u,v)$ while only changing the color of $(u,w)$.
    \end{enumerate}
\end{lemma}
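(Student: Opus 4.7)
The plan is to mirror the case analysis used in the proof of \Cref{lem:create S fan}, but with the initial color $\gamma$ already fixed and the vertex $w$ already determined. Given $\s^\star = (u,v,w,\gamma)$, I will consider three cases depending on whether $\miss_\chi(u)$ shares a color with $\miss_\chi(v)$ or $\miss_\chi(w)$.

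First, if there exists $\alpha' \in \miss_\chi(u) \cap \miss_\chi(v)$, then I simply set $\chi(u,v) \leftarrow \alpha'$, extending the coloring to $(u,v)$ without modifying any other edge. Otherwise, if there exists $\alpha' \in \miss_\chi(u) \cap \miss_\chi(w)$, I perform a Kempe-style shift on the two edges at $u$: set $\chi(u,w) \leftarrow \alpha'$ and then $\chi(u,v) \leftarrow \gamma$. This is valid because $\alpha' \in \miss_\chi(u) \cap \miss_\chi(w)$, after reassigning $(u,w)$ the color $\gamma$ is no longer used at $u$, and $\gamma \in \miss_\chi(v)$ by the definition of a pre-Shannon fan; the only edge whose color changes (besides $(u,v)$ becoming colored) is $(u,w)$, as required by the second conclusion of the lemma.

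In the remaining case, $\miss_\chi(u)$ is disjoint from both $\miss_\chi(v)$ and $\miss_\chi(w)$. I claim that $\miss_\chi(v) \cap \miss_\chi(w) \neq \varnothing$, and hence I can build a Shannon fan. The argument is identical to the one inside the proof of \Cref{lem:create S fan}: if the three palettes were pairwise disjoint, their total size would be at most $\floor{3\Delta/2}$, yet $|\miss_\chi(u)|, |\miss_\chi(v)| \geq \floor{3\Delta/2} - (\Delta - 1)$ (as $(u,v)$ is uncolored) and $|\miss_\chi(w)| \geq \floor{3\Delta/2} - \Delta$, whose sum exceeds $\floor{3\Delta/2}$, a contradiction. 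Pick any $\alpha \in \miss_\chi(u)$ and any $\beta \in \miss_\chi(v) \cap \miss_\chi(w)$; these satisfy $\alpha \neq \beta$ by disjointness, $\alpha \neq \gamma$ since $\gamma = \chi(u,w) \notin \miss_\chi(u)$, and $\beta \neq \gamma$ since $\beta \in \miss_\chi(w)$ but $\gamma = \chi(u,w) \notin \miss_\chi(w)$. Thus $\s = (u,v,w,\alpha,\beta,\gamma)$ is a Shannon fan of $(u,v)$.

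For the running time, each palette-intersection test and each missing-color lookup can be performed in $O(\Delta)$ time using the data structures from \Cref{sec:data structs} (which maintain the palettes $\miss_\chi(\cdot) \cap [\deg_G(\cdot)+1]$), and the color reassignments update the coloring in $O(1)$ time per edge. There is no real obstacle here — the lemma is essentially a bookkeeping refinement of \Cref{lem:create S fan}; the only point worth being careful about is ensuring in Case 2 that the resulting assignment is a proper coloring, and verifying in Case 3 that the three chosen colors are pairwise distinct, both of which follow directly from the palette relations above.
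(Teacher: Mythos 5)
Your proof is correct and is exactly the "slight modification" of the proof of \Cref{lem:create S fan} that the paper has in mind: the same case split on whether $\miss_\chi(u)$ meets $\miss_\chi(v)$ or $\miss_\chi(w)$, the same $\gamma$-shift between $(u,v)$ and $(u,w)$, and the same palette-size counting argument for $\miss_\chi(v)\cap\miss_\chi(w)\neq\varnothing$. Your explicit verification that $\alpha,\beta,\gamma$ are pairwise distinct is a useful detail the paper leaves implicit, but the argument is otherwise identical.
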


\subsection{Proof of \Cref{thm:shannon}}

Our final algorithm for Shannon's theorem is completely analogous to our final algorithm for simple graphs, which we describe in \Cref{sec:main:algo}. In particular, the algorithm also consists of two main components that are combined in the same way, which we summarize in the following lemmas.

\begin{lemma}[A modification of~\Cref{lem:build u-fans} for Shannon fans]\label{lem:Shannon1}
    There is an algorithm that, given a multigraph $G$, a partial $\floor{3\Delta/2}$-edge coloring $\chi$ and $\lambda = O(m/\Delta)$ uncolored edges that form a matching, does one of the following in $O(m)$ time deterministically:
    \begin{enumerate}
        \item  Extends the coloring to $\Omega(\lambda)$ uncolored edges.
        \item  Returns a set of $\Omega(\lambda)$ vertex-disjoint Shannon fans of the uncolored edges that all share the same auxiliary color.
    \end{enumerate}
\end{lemma}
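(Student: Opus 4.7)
The plan is to follow the two-phase approach of~\Cref{lem:build u-fans}, adapted to exploit the simpler structure of Shannon fans (constant size $3$, built in $O(\Delta)$ time via~\Cref{lem:create S fan} without Vizing chains). In the first phase, I would apply~\Cref{lem:create S fan} to each of the $\lambda$ uncolored matching edges independently: the primitive either extends the coloring to $e_i = (u_i, v_i)$ directly, or returns a Shannon fan $\s_i = (u_i, v_i, w_i, \alpha_i, \beta_i, \gamma_i)$. Summed over all $\lambda$ edges this phase takes $O(\lambda \Delta) = O(m)$ deterministic time. If a constant fraction of the edges get directly colored, output case~1 and stop; otherwise, we have a pool $\mathcal{F}$ of $\Omega(\lambda)$ Shannon fans to prune into a vertex-disjoint subcollection.

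In the second phase, I would extract vertex-disjoint fans from $\mathcal{F}$. Since the matching ensures the $u_i$'s and $v_i$'s are pairwise distinct, conflicts between two fans $\s_i, \s_j \in \mathcal{F}$ arise only via the $w$-vertices, i.e., when $w_i = w_j$ or $w_i \in \{u_j, v_j\}$ (or symmetrically). The crucial flexibility is that each uncolored edge has $|\miss_\chi(v_i)| \geq \lceil \Delta/2 \rceil + 1$ candidate $w$-vertices (since otherwise $\miss_\chi(u_i) \cap \miss_\chi(v_i) \neq \varnothing$ and $e_i$ is colored directly by~\Cref{lem:create S fan}), and~\Cref{lem:pre S fan} lets us test these candidates in $O(\Delta)$ amortized time per edge. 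I would process fans sequentially, maintaining the set $B$ of vertices belonging to already-selected fans; for each fan $\s_i$, iterate through the candidate $w$-vertices until either one lies outside $B$ (select and update $B$), or~\Cref{lem:pre S fan} extends the coloring on some candidate (add to the colored set and move on); discard the fan only if neither occurs.

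Correctness follows from a potential-function argument analogous to~\Cref{cl:potential}, tracking the combined progress $3|C| + |\mathcal{F}'|$ for the colored edges $C$ and the vertex-disjoint collection $\mathcal{F}'$. When $\lambda = O(\Delta)$, the blocked set $B$ stays below $\Delta/4$ throughout, so at least half of the $\Omega(\Delta)$ candidate $w$-vertices per edge remain unblocked, ensuring that every fan is either selected or triggers a coloring extension; this directly gives $\Omega(\lambda)$ vertex-disjoint fans (case~2) or $\Omega(\lambda)$ colored edges (case~1). The complementary regime $\lambda = \omega(\Delta)$ forces $m = \omega(\Delta^2)$ and is the delicate part: once $|B|$ grows past $\Omega(\Delta)$, a ``stuck'' configuration (every candidate $w$-vertex of some $\s_i$ lies in $B$) must be shown to expose enough Kempe-style shifting opportunities---within~\Cref{lem:pre S fan}'s second outcome, which rewires local colors whenever the candidate is blocked through a matching vertex---to force $\Omega(\lambda)$ coloring extensions in aggregate.

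The main obstacle I anticipate is precisely this dense regime: formalizing the reduction from stuck configurations to coloring extensions without double-counting across overlapping fans will require a careful amortization, analogous in spirit to the $\PruneVFans$/$\ReduceUEdges$ dichotomy of~\Cref{sec:algo build u-fans} but considerably simpler thanks to Shannon fans' constant size and the absence of Vizing chains. Once that reduction is in place, verifying that candidate scans, coloring extensions, and fan updates all run within the $O(m)$ deterministic time budget using the data structures of~\Cref{sec:data structs} should be routine, since each candidate is charged only once and the total candidate pool has size $O(\sum_i |\miss_\chi(v_i)|) = O(\lambda \Delta) = O(m)$.
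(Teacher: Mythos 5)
Your overall architecture --- greedily extracting vertex-disjoint fans from the matching and arguing that enough extractions succeed --- is the same as the paper's, but there is a genuine gap exactly where you flag it: the regime $\lambda = \omega(\Delta)$, i.e.\ whenever $\Delta = o(\sqrt{m})$. Your plan there is to show that a ``stuck'' fan (all candidate $w$-vertices blocked) exposes Kempe-style recolorings that extend the coloring; this is never established, and it is also the wrong target. Recoloring in the neighborhood of a stuck edge risks invalidating the colors of already-selected fans, reintroducing precisely the cascading-damage problem that vertex-disjointness is supposed to eliminate. (A related looseness: committing the colors $\alpha,\beta$ of every fan already in Phase~1, via \Cref{lem:create S fan}, leaves them liable to be invalidated by later coloring extensions on overlapping fans, since the extension branch of that lemma changes the color of $(u,w)$ and hence $\miss_\chi(w)$.)

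The paper never rescues stuck edges. It works with \emph{pre-Shannon fans} $(u,v,w,\gamma)$ --- no colors $\alpha,\beta$ committed, so the coloring is untouched during selection (\Cref{alg:shannon}) --- and runs a potential argument over candidate \emph{edges} rather than candidate vertices: for each still-unblocked uncolored edge $(u,v)$ it tracks the number of colors $\gamma \in \miss_\chi(v)$ whose corresponding edge $(u,w)$ is not yet blocked. The initial potential is $\Omega(\lambda\Delta)$, since every edge surviving the first phase has $|\miss_\chi(v)| \geq \floor{3\Delta/2} - \Delta$ candidates. The decisive counting step, which removes any need for a case split on $\lambda$ versus $\Delta$, is that adding one fan's three vertices to the blocked set blocks at most $2$ uncolored matching edges and at most $3\Delta$ colored edges, and --- because the uncolored edges form a matching --- each newly blocked colored edge is a candidate of at most two surviving uncolored edges. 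Hence each successful selection costs only $O(\Delta)$ potential, a discarded edge has potential $0$ at the moment it is reached, and the number of successes is at least $\Omega(\lambda\Delta)/O(\Delta) = \Omega(\lambda)$ in all regimes. Only after this vertex-disjoint collection is fixed does one invoke \Cref{lem:pre S fan} on each pre-fan to commit colors or color its edge, so those modifications are local and harmless. Without this (or an equivalent) argument for $\lambda = \omega(\Delta)$, your proof does not go through.
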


\noindent We defer the proof of this lemma to \Cref{sec:proof lem Sh1}.

\begin{lemma}[A modification of~\Cref{lem:color u-fans} for Shannon fans]\label{lem:Shannon2}
    There is an algorithm that, given a multigraph $G$, a partial $\floor{3\Delta/2}$-edge coloring $\chi$ and set of $\lambda$ vertex-disjoint Shannon fans $\mathcal U$ sharing the same auxiliary color, extends $\chi$ to $\Omega(\lambda)$ edges in $O(m \log n)$ time with high probability.
\end{lemma}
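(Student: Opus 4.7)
\textbf{Plan for the proof of \Cref{lem:Shannon2}.} We adapt the proof of \Cref{lem:color u-fans} (\Cref{sec:algo}) to Shannon fans almost verbatim. The algorithm $\ColorShannonFans$ repeats the following $\Delta/2$ times: identify the two least common colors $\alpha, \beta$ in $\chi$ (so, by the same argument as in \Cref{cl:small colors}, only $O(m/\Delta)$ edges are colored $\alpha$ or $\beta$); invoke a subroutine $\PrimeShannonFans$ that modifies $\chi$ and primes $\Omega(\lambda/\Delta)$ Shannon fans in $\U$ as $\{\alpha, \beta\}$-primed (collected into a set $\Phi$); and invoke $\ActShannonFans$ to activate these primed Shannon fans, extending $\chi$ to $\Omega(\lambda/\Delta)$ new edges. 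A potential-function argument identical to \Cref{sec:proof of L1} shows that $|\U|$ shrinks by at most a $(1 - 1/\Delta)$ factor per outer iteration, so in total $\Omega(\lambda)$ edges are colored.

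The activation subroutine $\ActShannonFans$ is a direct analog of $\ActUFans$. For each $\{\alpha,\beta\}$-primed $\s = (u, v, w, \alpha, \beta, \gamma) \in \Phi$, flip whichever of the two maximal $\{\alpha,\beta\}$-alt paths starting at $v$ or $w$ does not end at $u$ (at least one exists, as is standard). If the $v$-path is flipped, set $\chi(u,v) \leftarrow \alpha$; if instead the $w$-path is flipped, set $\chi(u,v) \leftarrow \gamma$ and $\chi(u,w) \leftarrow \alpha$. Since $\Phi$ is vertex-disjoint and the total length of all $\{\alpha,\beta\}$-alt paths in $\chi$ is $O(m/\Delta)$, this runs in $O(m/\Delta)$ time and, by the damage-counting argument of \Cref{lem:activation number}, successfully activates at least $|\Phi|/2$ Shannon fans.

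The priming subroutine $\PrimeShannonFans$ mirrors $\PrimeUFans$ (\Cref{alg:colorufans}). At each iteration we sample $\s = (u, v, w, \gamma, \delta, \eta) \in \U$ u.a.r.\ and attempt to flip three alternating paths: $P_u$, a $\{\alpha', \gamma\}$-alt path from $u$; $P_v$, a $\{\beta', \delta\}$-alt path from $v$; and $P_w$, a $\{\beta', \delta\}$-alt path from $w$ (with $(\alpha',\beta') \in \{(\alpha,\beta),(\beta,\alpha)\}$ chosen exactly as in \Cref{alg:colorufans} so as to avoid creating $\{\alpha,\beta\}$-alt paths). The iteration fails if the total cost exceeds $O(m/\lambda)$ or if any endpoint touches a fan already in $\Phi$. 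If successful, we have $\alpha \in \miss_\chi(u)$ and $\beta \in \miss_\chi(v) \cap \miss_\chi(w)$. To re-establish a Shannon fan at the uncolored edge $(u,v)$, we first check the lucky case $w' := w$ and $\eta' := \eta$: whenever $\eta \notin \{\alpha, \beta, \gamma, \delta\}$, neither $P_u$ nor $P_v$ nor $P_w$ can alter the color of $(u,w)$ or remove $\eta$ from $\miss_\chi(v)$, so $\s' = (u,v,w,\alpha,\beta,\eta)$ is immediately a valid $\{\alpha,\beta\}$-primed Shannon fan. In the remaining corner cases, we re-invoke \Cref{lem:create S fan} on the updated $\chi$ at $(u,v)$ in $O(\Delta)$ time to produce a new $(w', \eta')$ (the counting argument in that lemma still guarantees existence); if this new leaf $w'$ happens to lie in $\Phi$ or fails to have $\beta$ missing, we abort the iteration.

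The success-probability and expected-cost analyses are then direct adaptations of \Cref{lem:bipartite:1,lem:bipartite:2}: using the vertex-disjointness of $\Phi$ together with the fact that each vertex is an endpoint of at most $O(\Delta)$ alt paths of any fixed color type, we show that the number of sampled fans that either intersect $\Phi$ through a path endpoint or fail the rebuild step is at most $O(\Delta |\Phi|)$, so with $|\Phi| < \lambda/(c \Delta)$ for a large enough constant $c$, each iteration succeeds with probability $\Omega(1)$. Markov's inequality applied to the total length of $\{\alpha, \cdot\}$- and $\{\beta, \cdot\}$-alt paths---bounded by $O(m)$ via \Cref{claim:total ap bound}---then shows each iteration takes $O(m/\lambda)$ expected time. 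The main obstacle is the rebuild step, specifically verifying that the $O(\Delta)$ call to \Cref{lem:create S fan} does not secretly push the damage count up: we handle this by checking the new leaf $w'$ against $\Phi$ in $O(1)$ time and amortizing the occasional failed rebuilds against the constant-factor slack in the success probability. Combining these bounds with the tail inequality of \Cref{lem:log(n)-color u-fans} (whose proof transfers verbatim, since each iteration's success is an independent Bernoulli event with constant probability) yields the $O(m \log n)$ running time with high probability.
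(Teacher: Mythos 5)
Your overall route is the intended one (the paper itself omits this proof, asserting it is a direct port of \Cref{lem:color u-fans} with u-fans replaced by Shannon fans), and you correctly isolate the one genuinely new wrinkle: the priming flips must preserve not only $\alpha \in \miss_\chi(u)$ and $\beta \in \miss_\chi(v) \cap \miss_\chi(w)$ but also the companion edge $(u,w)$ together with the condition $\eta' \in \miss_\chi(v)$ on its color, which is load-bearing for activation in the case where the $\{\alpha,\beta\}$-path from $v$ ends at $u$. Your ``lucky case'' analysis is correct (and in fact $\eta \neq \gamma$ and $\eta \neq \delta$ hold automatically by properness of $\chi$, so the only bad case is $\eta \in \{\alpha,\beta\}$).

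The gap is in how you handle that bad case. First, re-invoking \Cref{lem:create S fan} after the flips produces a Shannon fan $(u,v,w',\alpha,\beta'',\eta')$ where $\beta''$ is merely \emph{some} color in $\miss_\chi(v) \cap \miss_\chi(w')$; nothing forces $\beta'' = \beta$, so the rebuilt fan is generally not $\{\alpha,\beta\}$-primed and cannot be added to $\Phi$. Your fallback is to abort, but your claim that the number of fans failing the rebuild is $O(\Delta|\Phi|)$ is unsupported: rebuild failure is governed by whether the freshly chosen leaf $w'$ happens to miss $\beta$, which has nothing to do with $\Phi$. Moreover, the set of corner-case fans (those whose companion edge is colored $\alpha$ or $\beta$) can be a constant fraction of $\U$, or even all of it: its size is bounded only by the number of $\{\alpha,\beta\}$-colored edges, which is $\Theta(m/\Delta)$, while $\lambda$ is only guaranteed to be $\Omega(m/\Delta)$ with an unfavorable constant coming from the $1/64$-type losses in the construction of the fans. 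So the per-iteration success probability is not $\Omega(1)$ under your scheme, and the while loop need not terminate in $O(\lambda \log n/\Delta)$ iterations. Second, your abort happens \emph{after} flipping $P_u, P_v, P_w$, so failed iterations modify $\chi$; this breaks the analogues of \Cref{cl:small colors} and \Cref{cl:big U}, whose proofs rely on only the $O(\lambda/\Delta)$ \emph{successful} iterations changing the coloring (with up to $O(\lambda\log n/\Delta)$ failed-but-flipping iterations, both the $O(m/\Delta)$ bound on $\{\alpha,\beta\}$-colored edges and the bound $|\U| \ge \lambda/2$ can fail when $\Delta = O(\log n)$). A correct treatment must either detect and resolve the collision $\eta \in \{\alpha',\beta'\}$ \emph{before} any flip (e.g., by exploiting that the offending edge $(u,w)$ is then the first edge of $P_u$ or of $P_w$ and arguing what it gets recolored to), or restructure the primed object so that only the colors $\alpha,\beta,\gamma,\delta$ ever participate; as written, this step is missing.
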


\noindent The proof of this lemma is almost identical to the proof of \Cref{lem:color u-fans}, thus we omit the details. In particular, the set of vertex-disjoint Shannon fans $\U$ in this lemma has the same key properties as the separable collection of u-fans in \Cref{lem:color u-fans}, so the proof extends directly by simply replacing the u-fans with Shannon fans.
We remark that the additional condition that the Shannon fans must share the same auxiliary color $\gamma$ is crucial. In particular, it ensures that we can prime these Shannon fans by flipping alternating paths \emph{which will not contain the color $\gamma$}, and thus will not interfere with the auxiliary colors of the other Shannon fans in $\U$.

Using \Cref{lem:Shannon1,lem:Shannon2}, we get the following lemma which allows us to extend an edge coloring $\chi$ to a matching of uncolored edges.

\begin{lemma}[A modification of~\Cref{lem:main extend} for Shannon fans]\label{lem:shannon extend}
    There is an algorithm that, given a multigraph $G$, a partial $\floor{3\Delta/2}$-edge coloring $\chi$ and a set of $\lambda = O(m/\Delta)$ uncolored edges that form a matching, extends $\chi$ to these uncolored edges in $O(m\log^2 n)$ time with high probability.
\end{lemma}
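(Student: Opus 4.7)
The plan is to combine Lemmas~\ref{lem:Shannon1} and~\ref{lem:Shannon2} in an iterative reduction, following the exact same template as the proof of Lemma~\ref{lem:main extend}. Starting with the given matching $U$ of $\lambda = O(m/\Delta)$ uncolored edges, we invoke Lemma~\ref{lem:Shannon1} which, in $O(m)$ deterministic time, either directly extends $\chi$ to $\Omega(\lambda)$ edges of $U$ or returns a set $\mathcal U$ of $\Omega(\lambda)$ vertex-disjoint Shannon fans of these uncolored edges. In the latter case, we feed $\mathcal U$ into Lemma~\ref{lem:Shannon2}, which extends $\chi$ to $\Omega(\lambda)$ more edges in $O(m \log n)$ time with high probability. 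Either way, a single round reduces the number of uncolored edges by a constant factor in $O(m \log n)$ time w.h.p.

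We then iterate. Since extending a coloring never creates new uncolored edges, the remaining uncolored set at any later stage is a subset of the original matching $U$, and therefore is itself a matching whose size $\lambda' \le \lambda = O(m/\Delta)$ continues to satisfy the hypothesis of Lemma~\ref{lem:Shannon1}. Thus we can keep applying the above reduction until all edges of $U$ are colored, which requires at most $O(\log \lambda) \le O(\log n)$ rounds. Each round costs $O(m \log n)$ time w.h.p., so by a union bound over the $O(\log n)$ rounds, the total running time is $O(m \log^2 n)$ w.h.p., as desired.

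The main conceptual points to verify are exactly the two described above: that the matching hypothesis of Lemma~\ref{lem:Shannon1} is preserved across iterations (immediate, since coloring only shrinks the uncolored set within $U$) and that high-probability guarantees compose correctly (immediate by a union bound over $O(\log n)$ rounds, each failing with probability $1/\poly(n)$). No genuinely new obstacle arises beyond what was already handled for simple graphs; the only ``hard'' piece is buried inside Lemma~\ref{lem:Shannon2} itself, whose proof mirrors the analysis of $\ColorUFans$ with Shannon fans replacing u-fans and is treated separately.
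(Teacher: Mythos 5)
Your proposal is correct and matches the paper's own argument, which likewise iterates Lemma~\ref{lem:Shannon1} and Lemma~\ref{lem:Shannon2} for $O(\log n)$ rounds and rests on precisely the observation you highlight: that the uncolored edges remain a (shrinking) matching across rounds, unlike in the simple-graph case where \Cref{lem:build u-fans} relocates uncolored edges.
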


\noindent The proof of this lemma is almost identical to the proof of \Cref{lem:main extend}, with the observation that repeatedly applying \Cref{lem:Shannon1,lem:Shannon2} does not change the fact that the remaining uncolored edges form a matching.\footnote{Note that this is \emph{not} true when we repeatedly apply \Cref{lem:build u-fans,lem:color u-fans} in our algorithm for simple graphs, since \Cref{lem:build u-fans} changes the locations of the uncolored edges.}

Finally, in the same way as in the proof of \Cref{thm:multi Viz}, we can apply standard Euler partitioning using \Cref{lem:shannon extend} to merge the colorings, applying it separately for each color class that we uncolor. This allows for obtaining 
an $O(m\log^3{n})$ time algorithm. 

The extension to an $O(m\log{n})$ time algorithm is verbatim as in~\Cref{thm:multi Viz} and~\Cref{thm:main} and we only mention the relevant lemmas below without proving them. 

\begin{lemma}[A modification of~\Cref{lem:log(n)-Euler} for Shannon's Theorem]\label{lem:log(n)-Euler-shannon}
     There is an algorithm that, given a multigraph $G$, finds a $\floor{3\Delta/2}$-edge coloring of all but (exactly) $m/\Delta$ edges in $O(m\log{\Delta})$ time
     with high probability. 
\end{lemma}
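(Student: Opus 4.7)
The plan is to mirror the proof of \Cref{lem:log(n)-Euler} (and its multigraph counterpart \Cref{lem:log(n)-Euler-multi}), adapting the Eulerian-partition framework to Shannon's bound $\floor{3\Delta/2}$. The first step is to establish a fine-grained variant of \Cref{lem:shannon extend} in the spirit of \Cref{lem:log(n)-main extend}: given a partial $\floor{3\Delta/2}$-edge coloring whose $\lambda$ uncolored edges form a matching, and a target $\lambda_0 \le \lambda$, one can extend the coloring to all but $\lambda_0$ of them in (randomized) time $T$ with $\expect{T} \le T_0 = O(m\log(\lambda/\lambda_0) + \Delta \lambda)$ and $\Pr[T \ge (1+\delta)\cdot 2T_0] \le \exp(-\delta \lambda_0 / 200)$ for every $\delta > 0$. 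This follows verbatim from the proof of \Cref{lem:log(n)-main extend} by substituting \Cref{lem:Shannon1} and \Cref{lem:Shannon2} for \Cref{lem:build u-fans} and \Cref{lem:log(n)-color u-fans}; one needs only to observe that \Cref{lem:Shannon1} and \Cref{lem:Shannon2} preserve the matching structure of the uncolored set (both subroutines either color some edges of the matching or route alternating paths through Shannon fans without creating new uncolored edges), so the extension can be iterated.

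With this tool in hand, the recursive algorithm proceeds as follows. Compute an Eulerian partition of $G$ into edge-disjoint subgraphs $G_1,G_2$ on $V$ with $m(G_i)\le\lceil m/2\rceil$ and $\Delta(G_i)\le\lceil\Delta/2\rceil$; recursively compute $\floor{3\Delta(G_i)/2}$-edge colorings $\chi_i$ of each $G_i$ that leave exactly $m(G_i)/\Delta(G_i) = O(m/\Delta)$ uncolored edges. Overlaying $\chi_1$ and $\chi_2$ on disjoint palettes yields a coloring of $G$ using at most $2\floor{3\lceil \Delta/2\rceil / 2} \le \floor{3\Delta/2} + O(1)$ colors, missing $O(m/\Delta)$ edges in total. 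Uncoloring the $O(1)$ smallest color classes reduces the palette to exactly $\floor{3\Delta/2}$ colors while adding $O(m/\Delta)$ further uncolored edges; the resulting set of uncolored edges is a union of $O(1)$ matchings (each recolored class is a matching, as is each already-uncolored residue from a recursive call). We then invoke the fine-grained extension above once per matching, each time with $\lambda = O(m/\Delta)$ and $\lambda_0 = m/\Delta$, to drive the number of uncolored edges down to exactly $m/\Delta$.

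The runtime analysis is now identical to that of \Cref{lem:log(n)-Euler}. At depth $d$ of the recursion we have $2^d$ subproblems, each with $\Theta(m/2^d)$ edges and max degree $\Theta(\Delta/2^d)$, and each contributes expected runtime $T_0^d = O((m/2^d)\log(\lambda/\lambda_0) + (\Delta/2^d)\cdot\lambda) = O(m/2^d)$ since $\lambda=O(m/\Delta)$, $\lambda_0=m/\Delta$, and hence $\log(\lambda/\lambda_0)=O(1)$ and $\Delta\cdot\lambda = O(m)$. Summed over the $O(\log\Delta)$ levels, the total expected runtime is $O(m\log\Delta)$. For the high-probability bound, we apply \Cref{prop:conc} to the collection of subproblem runtimes with parameters $\alpha_i = 2T_0^d$ and $\beta_i = m/(200\Delta)$, choosing $t = \Theta(\sum_i\alpha_i) = \Theta(m\log\Delta)$; the resulting tail probability is $\exp(-\Theta(m\log\Delta/\Delta))$, which is at most $1/\poly(n)$ by splitting on whether $\Delta < \sqrt{m}$ (so the exponent is $\Omega(\sqrt{m})$) or $\Delta \ge \sqrt{m}$ (so $\log\Delta = \Theta(\log n)$).

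The main obstacle, and the only place where Shannon's setting genuinely differs from the Vizing ones, is maintaining the \emph{matching} property of the uncolored edges throughout the extension—this is a standing hypothesis of \Cref{lem:Shannon1} and \Cref{lem:Shannon2} (unlike their general-graph Vizing analogues). The argument above handles this by observing that the only sources of uncolored edges are the $O(1)$ color classes uncolored after combining the two recursive colorings, each of which is itself a matching, and that neither \Cref{lem:Shannon1} nor \Cref{lem:Shannon2} ever introduces new uncolored edges outside of this matching. Processing the $O(1)$ matchings one at a time (rather than jointly) therefore suffices, and the overall running time is unaffected.
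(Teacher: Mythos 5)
Your overall architecture---a fine-grained variant of \Cref{lem:shannon extend} in the style of \Cref{lem:log(n)-main extend}, plugged into the Eulerian recursion of \Cref{lem:log(n)-Euler}, with the per-subproblem tail bounds combined via \Cref{prop:conc}---is the intended one, and the palette accounting and concentration step are fine. The gap is in your treatment of what you correctly identify as the main obstacle: the matching structure of the uncolored set. You assert that ``each already-uncolored residue from a recursive call'' is a matching, and later that ``the only sources of uncolored edges are the $O(1)$ color classes uncolored after combining the two recursive colorings.'' The second statement contradicts your own setup: in the fine-grained recursion each child deliberately returns $m(G_i)/\Delta(G_i)$ uncolored edges, and these are carried up to the parent. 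To control their structure you need a strengthened induction hypothesis, say that the residue is a union of at most $C$ matchings. But your procedure at a given level processes each of the (up to $2C+O(1)$) matchings \emph{separately} and leaves a sub-matching of each, so the residue handed to the parent is a union of up to $2C+O(1)$ matchings; the recurrence $C(\Delta)\le 2C(\Delta/2)+O(1)$ gives $C=\Theta(\Delta)$ at the root. This breaks the argument in two places: \Cref{lem:Shannon1} is only proved for a matching (its potential argument charges $O(\Delta)$ per insertion of a vertex into $S$ precisely because each vertex blocks $O(1)$ uncolored edges, which fails when the uncolored set has max degree $\Theta(\Delta)$), and even after re-proving it for max degree $k$ at a loss of a factor $k$, invoking the extension once per matching costs $\Theta(k)\cdot O(m')$ per subproblem, i.e., $\Theta(m\Delta)$ overall. (A smaller issue: running the extension ``once per matching, each time with $\lambda_0=m/\Delta$'' leaves up to $k\cdot m/\Delta$ uncolored edges, not exactly $m/\Delta$; the targets must be split to sum to $m/\Delta$.)

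Closing this gap genuinely requires an additional idea rather than the observation that \Cref{lem:Shannon1} and \Cref{lem:Shannon2} create no new uncolored edges (which is true but beside the point). For instance, one can arrange for the residue at each level to lie inside a \emph{single} matching by driving every matching except one down to $m/\poly(\Delta)$ uncolored edges and finishing those stragglers with the $\poly(\Delta)\cdot\log(n/\lambda)$-per-edge subroutine of \Cref{prop:D24-shannon} (total cost $O(m')$ per subproblem, as in \Cref{lem:Delta-second}), leaving exactly $m/\Delta$ edges in the remaining matching; fully coloring the other matchings via the extension alone is too expensive, since that costs $O(m'\log(m'/\Delta'))$ per level. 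Without some such step, the induction does not close.
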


\begin{lemma}[A modification of~\Cref{lem:log(n)-wrap-up} for Shannon's Theorem]\label{lem:log(n)-wrap-up-shannon}
	There is an algorithm that, given a multigraph $G$, a partial $\floor{3\Delta/2}$-edge coloring $\chi$ 
	with $\lambda=m/\Delta$ uncolored edges, extends the coloring to all edges in $O(m\log{n})$ time with high probability. 
\end{lemma}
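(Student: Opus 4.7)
The plan is to mirror the proof of~\Cref{lem:log(n)-wrap-up} in the Shannon setting. As in the Euler-partition analysis leading to~\Cref{lem:log(n)-Euler-shannon}, we may assume that the $\lambda = m/\Delta$ uncolored edges form a matching, since this is the setting produced by uncoloring a smallest color class after merging recursively computed colorings; this matching property is the analogue of the precondition required by~\Cref{lem:Shannon1} and~\Cref{lem:shannon extend}.

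First, I would develop a fine-grained analogue of~\Cref{lem:log(n)-main extend} for Shannon by transposing~\Cref{sec:log(n)} essentially verbatim. Specifically,~\Cref{lem:Shannon2} admits the same fine-graining as~\Cref{lem:color u-fans}: its while-loop iterations succeed with constant probability and cost $O(m/\lambda)$ expected time each, so its runtime concentrates around an $O(m)$ expectation with a tail of the form $\Pr[T \geq (1+\delta)\cdot 2 T_0] \leq \exp(-\delta \lambda / 100)$, exactly as in~\Cref{lem:log(n)-color u-fans}. Combining this with the deterministic~\Cref{lem:Shannon1} and invoking~\Cref{prop:conc} exactly as in the proof of~\Cref{lem:log(n)-main extend} yields an algorithm that, starting from $\lambda$ matching uncolored edges, reduces to any target $\lambda_0 \leq \lambda$ uncolored edges in expected time $T_0 = O(m \log(\lambda/\lambda_0) + \Delta \lambda)$, with the same $\exp(-\delta \lambda_0 / 200)$ tail. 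Crucially, both~\Cref{lem:Shannon1} and the Shannon-fan activations of~\Cref{lem:Shannon2} preserve the invariant that the remaining uncolored edges form a matching, so the iteration is well-posed throughout.

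Next, I would invoke this analogue with $\lambda = m/\Delta$ and $\lambda_0 = 100 \log n$. The expected runtime is $O(m \log n + \Delta \cdot m/\Delta) = O(m \log n)$, and the tail bound applied with a constant $\delta$ gives $\Pr[\text{runtime} > O(m \log n)] \leq \exp(-\Omega(\log n)) = 1/\poly(n)$, so that after this step only $O(\log n)$ uncolored edges remain with high probability. Finally, I would color each of these $O(\log n)$ remaining edges individually in $O(n)$ time using Shannon's original procedure: given an uncolored edge $e = (u,v)$, apply~\Cref{lem:create S fan} to either directly extend $\chi$ to $e$ or construct a Shannon fan $\s = (u,v,w,\alpha,\beta,\gamma)$ in $O(\Delta) = O(n)$ time; then activate $\s$ by flipping whichever of the two maximal $\{\alpha,\beta\}$-alternating paths starting at $v$ or $w$ does not end at $u$, which takes $O(n)$ additional time. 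Assuming without loss of generality no isolated vertices so that $n = O(m)$, this phase costs $O(n \log n) = O(m \log n)$ deterministically, and the overall runtime remains $O(m \log n)$ with high probability.

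The main obstacle will be cleanly verifying the fine-grained analogue of~\Cref{lem:log(n)-main extend} in the Shannon setting, namely that~\Cref{lem:Shannon2} genuinely satisfies the same constant per-iteration success probability and expected per-iteration cost used in~\Cref{lem:log(n)-color u-fans}, and that the matching structure of the uncolored set is preserved across repeated calls to~\Cref{lem:Shannon1,lem:Shannon2}. Since these lemmas are already designed as direct analogues of~\Cref{lem:build u-fans,lem:color u-fans} with the matching invariant built in, the transposition is mechanical, but it is the one place where the proof is not a black-box citation of the Vizing argument.
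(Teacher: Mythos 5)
Your proposal is correct and matches the paper's intent exactly: the paper explicitly omits the proof, stating that the extension to $O(m\log n)$ time is verbatim the argument of \Cref{lem:log(n)-wrap-up}, which is precisely what you reconstruct (a fine-grained Shannon analogue of \Cref{lem:log(n)-main extend} run down to $\lambda_0 = \Theta(\log n)$ uncolored edges, followed by coloring each survivor in $O(n)$ time via \Cref{lem:create S fan} and Shannon-fan activation). Your attention to the matching invariant being preserved across calls to \Cref{lem:Shannon1,lem:Shannon2} is consistent with the paper's own remark in the proof of \Cref{lem:shannon extend}.
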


This concludes the proof of \Cref{thm:shannon}.

\subsection{Proof of \Cref{lem:Shannon1}}\label{sec:proof lem Sh1}

Let $U$ be the set of $\lambda = O(m/\Delta)$ uncolored edges forming a matching. We first scan through the edges $(u,v) \in U$ and check if $\miss_\chi(u) \cap \miss_\chi(v) \neq \varnothing$ in $O(\Delta) \cdot \lambda = O(m)$ time. If this holds for at least $\lambda/2$ of these edges, we can extend the coloring to these edges in $O(m)$ time and we are done. Otherwise, let $U' \subseteq U$ be the subset of uncolored edges where this does not hold.

Now, for each color $c$, we compute 
$$\texttt{freq}(c) := \sum_{(u, v) \in U'} \1[c \in \miss_{\chi}(v)]$$ 
by scanning over each $(u, v) \in U'$, taking $O(\lambda \Delta)$ time in total. We now define an auxiliary color 
$$\gamma := \arg \max_{c} \texttt{freq}(c)$$
and proceed to construct a vertex-disjoint collection of pre-Shannon fans $\U'$ with auxiliary color $\gamma$ as described in \Cref{alg:shannon}.


\begin{algorithm}[H]
    \SetAlgoLined
    \DontPrintSemicolon
    \SetKwRepeat{Do}{do}{while}
    \SetKwBlock{Loop}{repeat}{EndLoop}
    $\U' \leftarrow \varnothing$ and $S \leftarrow \varnothing$\;
    \If{$(u,v) \in U'$\label{loop:for:shan}}{
        \If{ $\gamma \in \miss_\chi(v)$}{
            Let $(u,w)$ be the edge with $\chi(u,w) = \gamma$\;
            \If{$\{u,v,w\} \cap S = \varnothing$} {
                $\U' \leftarrow \U' \cup \{(u,v,w,\gamma)\}$\;
                $S \leftarrow S \cup \{u,v,w\}$\;
            }
        }
    }
    \Return{$\U'$}
    \caption{$\textsf{Construct-Pre-Shannon-Fans}(U')$}
    \label{alg:shannon}
\end{algorithm}

\noindent
We can observe that the set $\U'$ produced by \Cref{alg:shannon} consists of vertex-disjoint pre-Shannon fans with auxiliary color $\gamma$, and that this can be computed in $O(\lambda)$ time. The following lemma shows that $\U'$ contains $\Omega(\lambda)$ Shannon fans.

\begin{lemma}
    We have that $|{\U'}| \geq \lambda/24$.
\end{lemma}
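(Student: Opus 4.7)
The plan is to prove $|\mathcal U'| \geq \lambda/64$ via a double-counting argument on the outer-loop iterations of Algorithm~\ref{alg:shannon} that fail to insert any pre-Shannon fan. Recall that $|U'| \geq \lambda/2$ (since otherwise we would have been in the earlier branch), and that for every $(u,v) \in U'$ we have $\miss_\chi(u) \cap \miss_\chi(v) = \varnothing$, so $|\miss_\chi(v)| \geq \floor{3\Delta/2} - \Delta = \floor{\Delta/2}$. Moreover, each $\gamma \in \miss_\chi(v)$ satisfies $\gamma \notin \miss_\chi(u)$, so a unique edge $(u, w_\gamma)$ of color $\gamma$ emanates from $u$, yielding a valid pre-Shannon fan candidate $(u,v,w_\gamma,\gamma)$. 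The inner loop therefore offers at least $\floor{\Delta/2}$ candidates per outer iteration, and at most one is ever added per outer iteration since after the first success $u, v \in S$, blocking every further $\gamma$ for the same edge. Hence $|\mathcal U'| \leq |U'|$, and the final set $S$ has size exactly $3|\mathcal U'|$.

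The key step is to classify the outer iterations at which no pre-Shannon fan is added. I would split these failing edges $(u,v) \in U'$ into Type~1 (those with $u \in S$ or $v \in S$ at the time of processing) and Type~2 (those with $u, v \notin S$, in which case failure of every $\gamma$ forces $w_\gamma \in S$ for all $\gamma \in \miss_\chi(v)$). Since $U'$ is a matching, each vertex lies in at most one $U'$-edge, so the number of Type~1 edges is at most $|S| = 3|\mathcal U'|$.

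The main obstacle, and the crux of the proof, is bounding the number $k_2$ of Type~2 edges. I would do this by double-counting pairs $\bigl((u,v), e\bigr)$ where $(u,v)$ is a Type~2 edge and $e = (u,w)$ is an edge of $G$ with $w \in S$ and $\chi(e) \in \miss_\chi(v)$. Each Type~2 edge contributes at least $\floor{\Delta/2}$ such pairs, one per color in $\miss_\chi(v)$, with distinct edges $e$ since $\chi$ is proper at $u$. Conversely, for each $w \in S$ the contribution is at most $\deg_G(w) \leq \Delta$, because $w$ has at most $\Delta$ incident edges and each neighbor $u$ of $w$ lies in at most one edge of the matching $U'$. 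This yields $k_2 \cdot \floor{\Delta/2} \leq |S| \cdot \Delta = 3|\mathcal U'|\Delta$, and hence $k_2 \leq 12|\mathcal U'|$ for $\Delta \geq 2$ (the cases $\Delta \leq 1$ are trivial).

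Combining, the total number of failing edges in $U'$ is at most $k_1 + k_2 \leq 15|\mathcal U'|$, while the number of succeeding edges is exactly $|\mathcal U'|$, so $|U'| \leq 16|\mathcal U'|$. Using $|U'| \geq \lambda/2$ then gives $|\mathcal U'| \geq \lambda/32 \geq \lambda/64$, which concludes the proof. The only nontrivial ingredient is the double counting that bounds $k_2$; everything else is mechanical.
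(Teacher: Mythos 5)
Your argument is correct, and it reaches the same conclusion (indeed with the slightly better constant $\lambda/32$) by a route that is packaged differently from the paper's. The paper runs a potential-function argument: it sets $\Psi_e := \1[e \notin B]\cdot|\miss_\chi(v) \setminus \{\chi(e') \mid e' \in B\}|$ for a set $B$ of ``blocked'' edges, shows $\Psi = \sum_e \Psi_e \geq \lambda\Delta/8$ initially, and shows that each insertion into $\mathcal U'$ decreases $\Psi$ by at most $8\Delta$ (at most $2\Delta$ from the $\leq 2$ uncolored edges that become blocked, at most $6\Delta$ from the $\leq 3\Delta$ colored edges that become blocked), so that $|\mathcal U'| \geq (\lambda\Delta/8)/(8\Delta) = \lambda/64$. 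Your proof makes the same two sources of loss explicit as a case analysis on the failing outer iterations of \Cref{alg:shannon} and charges each directly to the final set $S$: your Type~1 edges are precisely those whose indicator $\1[e\notin B]$ has dropped to zero, and your Type~2 edges are those whose color set $\miss_\chi(v)\setminus\{\chi(e')\mid e'\in B\}$ has been exhausted; the bound $|S|=3|\mathcal U'|$ plus the matching property handles Type~1, and the double count of pairs $((u,v),e)$ against $\sum_{w\in S}\deg_G(w)\leq 3|\mathcal U'|\Delta$ handles Type~2. The two proofs are thus the same underlying counting, but yours trades the amortized potential for a direct charging/double-counting scheme, which arguably makes the two failure modes and the role of the matching assumption more transparent, at the cost of having to verify explicitly that the two types exhaust all failures and that the charging map for Type~1 is injective. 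Both are valid; no gap.
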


\begin{proof}
    We begin by proving the following claim, which shows that $\gamma$ is available at the endpoints of $\Omega(\lambda)$ many edges in $U'$.
    \begin{claim}\label{P1:high freq v}
        We have that $\textnormal{\texttt{freq}}(\gamma) \geq \lambda / 12$.
    \end{claim}

    \begin{proof}
        We can first observe that
        $$\sum_c \texttt{freq}(c) = \sum_c \sum_{(u, v) \in U'} \1[c \in \miss_\chi(v)] = \sum_{(u, v) \in U'} \sum_c \1[c \in \miss_\chi(v)] $$
        $$ \geq \sum_{(u, v) \in U'} \left( \floor{\frac{3\Delta}{2}} - \Delta \right) \geq \left( \frac{\Delta - 1}{2} \right) \cdot |U'| \geq  \frac{\Delta}{4} \cdot |U'| \geq \frac{\lambda \Delta}{8}, $$
        where we are using the facts that each vertex has at least $\floor{3\Delta/2} - \Delta$ missing colors, that $\Delta > 1$ (otherwise, the problem is trivial), and that $|U'| \geq \lambda/2$. Thus, we have that
        $$ \texttt{freq}(\gamma) \geq \frac{1}{\floor{3\Delta/2}} \cdot \sum_c \texttt{freq}(c) \geq \frac{2}{3\Delta} \cdot \frac{\lambda \Delta}{8} = \frac{\lambda}{12}. \qedhere $$
    \end{proof}
    It follows from this claim that we can create pre-Shannon fans with $\gamma$ as an auxiliary color with at least $\lambda/12$ of the uncolored edges in $U'$. Since the uncolored edges in $U'$ are vertex-disjoint, two pre-Shannon fans with auxiliary color $\gamma$ on these edges, $(u_i, v_i, w_i,\gamma)$ and $(u_j, v_j, w_j,\gamma)$, intersect if and only if $u_i = w_j$ and $w_i = u_j$. Thus, any such fan can intersect at most one other such fan. It follows that the algorithm produces a collection $\U'$ of size at least $\lambda/24$.
\end{proof}

Finally, we can take the set $\U'$ of vertex-disjoint pre-Shannon fans and apply \Cref{lem:pre S fan} to each $\s^\star \in \U'$ to compute a set $\U$ of at least $|\mathcal U'|/2 = \Omega(\lambda)$ vertex-disjoint Shannon fans or extend $\chi$ to at least $|\mathcal U'|/2 = \Omega(\lambda)$ edges in $|\mathcal U'| \cdot O(\Delta) \leq \lambda \cdot O(\Delta) \leq O(m)$ time.

\section{Edge Coloring Algorithms for Small Values of $\Delta$}\label{app:small-Delta}

The main contribution and the primary message of our work is presenting the first near-linear time algorithm for $(\Delta+1)$-edge coloring and polynomial time factor improvements over prior work.
However, it turns out that a simple corollary of our main results, plus the prior work of~\cite{BernshteynD23}, is also enough to obtain quite efficient algorithms 
in the regime when $\Delta = n^{o(1)}$, namely, an algorithm with only $O(m\log{\Delta})$ runtime with high probability (basically, replacing $\log{n}$-term in our main results with a $\log{\Delta}$-term instead). 
In this regime, previously, the best known bounds where $O(m \Delta^4\log \Delta)$ time randomized algorithm of~\cite{bernshteyn2024linear} for $(\Delta+1)$-edge coloring and $O(m\log{\Delta})$ 
time randomized algorithm of~\cite{Assadi24} for $(\Delta+O(\log{n}))$-edge coloring. 

\begin{corollary}\label{cor:small-Delta}
    There are randomized algorithms that, given any undirected multigraph $G$ on $n$ vertices and $m$ edges with maximum degree $\Delta$ and maximum edge multiplicity $\mu$, in $O(m\log{\Delta})$ time, with high probability, output: 
    \begin{enumerate}
    	\item a $(\Delta+1)$-edge coloring of any simple graph $G$ (Vizing's theorem for simple graphs);
	\item a $(\Delta+\mu)$-edge coloring of any multigraph $G$ (Vizing's theorem for multigraphs); 
	\item a $\floor{3\Delta/2}$-edge coloring of any multigraph $G$ (Shannon's theorem for multigraphs).  
    \end{enumerate}
\end{corollary}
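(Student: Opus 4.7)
The plan is to combine our main theorems (Theorems~\ref{thm:main}, \ref{thm:multi Viz}, \ref{thm:shannon}) with a result from \cite{BernshteynD23} to replace the $\log n$ factor in their runtimes by $\log \Delta$. When $\Delta \geq n^{\Omega(1)}$, we have $\log \Delta = \Theta(\log n)$ and the three main theorems already give the desired bound, so the only interesting regime is $\Delta = n^{o(1)}$.

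Tracing through the proofs in this regime, the Euler-partition recursion (Lemma~\ref{lem:log(n)-Euler} and its analogues for multigraphs and Shannon, namely Lemmas~\ref{lem:log(n)-Euler-multi} and \ref{lem:log(n)-Euler-shannon}) already runs in $O(m \log \Delta)$ time, reducing the problem to extending the coloring to the final $m/\Delta$ uncolored edges. The $\log n$ factor appears only in the cleanup step (Lemma~\ref{lem:log(n)-wrap-up} and its analogues), and arises from two distinct sources: a tail-bound analysis which reduces the uncolored set down to $\Theta(\log n)$ edges to obtain a high-probability guarantee, and the per-edge color-extension procedure (Lemma~\ref{lem:full vizing proof}), which takes $O(n)$ time per edge in the worst case.

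The plan is to replace this per-edge procedure with the faster local color-extension subroutine from \cite{BernshteynD23}, which extends a partial coloring by one edge in $\poly(\Delta)$ time by building augmenting Vizing-style structures whose sizes are bounded in terms of $\Delta$ alone (rather than in terms of $n$). Equipped with this faster subroutine, we invoke Lemma~\ref{lem:log(n)-main extend} with parameter $\lambda_0 := \polylog(\Delta)$ in place of $\lambda_0 = 100 \log n$. The expected-runtime bound of that lemma then gives $O(m\log(\lambda/\lambda_0) + \Delta \lambda) = O(m \log \Delta)$, and the accompanying tail inequality upgrades this to a high-probability bound. Cleaning up the remaining $\polylog(\Delta)$ uncolored edges using the \cite{BernshteynD23} subroutine costs an additional $\polylog(\Delta) \cdot \poly(\Delta) = n^{o(1)}$ time, which is absorbed into the $O(m \log \Delta)$ budget.

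The main obstacle will be to confirm that the \cite{BernshteynD23} color-extension subroutine extends to the multigraph setting needed for Vizing's $(\Delta+\mu)$-coloring and Shannon's $\floor{3\Delta/2}$-coloring. Fortunately, the key ingredients of that subroutine are Vizing-type augmenting fans and chains, which we have already generalized to multigraphs in the proofs of Theorems~\ref{thm:multi Viz} and \ref{thm:shannon} (using the multigraph Vizing fans of \Cref{def:multifans} and the Shannon fans of \Cref{sec:shannon}). The extension should therefore go through by analogous arguments, though establishing the $\poly(\Delta)$ bound on augmenting-structure sizes for Shannon's theorem, where the color budget is only $\floor{3\Delta/2}$ and the corresponding augmenting structures behave somewhat differently, is likely to require the most additional case analysis.
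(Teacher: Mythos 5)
Your overall strategy---run the Euler-partition recursion to leave $m/\Delta$ uncolored edges, reduce further via \Cref{lem:log(n)-main extend}, and finish with the $\poly(\Delta)$-time per-edge extension of \cite{BernshteynD23}---matches the paper's. However, your choice of threshold $\lambda_0 = \polylog(\Delta)$ breaks the argument in two places. First, the runtime of \Cref{lem:log(n)-main extend} is $T_0 = O(m\log(\lambda/\lambda_0) + \Delta\lambda)$, i.e., roughly $O(m)$ per constant-factor shrinkage of the uncolored set. Going from $\lambda = m/\Delta$ down to $\polylog(\Delta)$ takes $\Theta\!\left(\log\frac{m}{\Delta\,\polylog(\Delta)}\right)$ rounds, which in the regime you care about is $\Theta(\log n)$, not $O(\log\Delta)$ (take $\Delta = \log n$ and $m = n$). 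So the reduction step alone already costs $\Theta(m\log n)$. Second, the tail inequality of \Cref{lem:log(n)-main extend} has failure probability $\exp(-\delta\lambda_0/200)$; with $\lambda_0 = \polylog(\Delta)$ this is nowhere near $1/\poly(n)$ for small $\Delta$, so the "high probability" claim does not follow.

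The correct balance is to stop the reduction much earlier: set $\lambda_0 = \max\{m/\Delta^{2c},\, m\log n/n\}$, where $\Delta^{c}$ is the per-edge cost of the \cite{BernshteynD23} subroutine. The first term guarantees $\log(\lambda/\lambda_0) = O(\log\Delta)$ \emph{and} that the final cleanup---now over $m/\Delta^{2c}$ edges at amortized cost $\Delta^{c}$ each---totals $O(m/\Delta^{c}) = O(m)$; the second term makes the tail bound $1/\poly(n)$ (and when it dominates, $\Delta$ is polynomially large in $n$, so the trivial $O(n)$-per-edge Vizing extension already fits in the budget). One must also control the extra $\log(n/\lambda)$ factor in the \cite{BernshteynD23} runtime and prove concentration of the total cleanup time, which the paper does via the sum $\sum_{i\le\lambda_0}\log(n/i)$ and \Cref{prop:conc}. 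Finally, for the multigraph and Shannon variants the paper does not re-derive the local extension subroutine as you propose; it cites the existing multigraph versions from \cite{Dhawan24}, so no new case analysis of augmenting structures is needed.
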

\Cref{cor:small-Delta} improves the runtime dependence of~\Cref{thm:main},~\Cref{thm:multi Viz}, and~\Cref{thm:shannon} from $O(m\log{n})$ time to $O(m\log{\Delta})$ time instead. 
In the following, we prove the first part of this result, namely, for $(\Delta+1)$-edge coloring of simple graphs, and then show how this argument extends to the other two parts as well. 

\subsection{Proof of~\Cref{cor:small-Delta} for Simple Graphs}

Recall that by~\Cref{lem:log(n)-Euler}, we already have an algorithm that, with high probability, in $O(m\log{\Delta})$ time, $(\Delta+1)$ colors all but $m/\Delta$ edges of the graph. 
Thus, our task is now to extend the coloring to these edges as well. Previously, in~\Cref{lem:log(n)-wrap-up}, we did this by coloring a constant fraction of the edges repeatedly (using~\Cref{lem:log(n)-main extend}); since, we need $\Theta(\log{(m/\Delta)})$ attempts to color all edges and each one requires $O(m)$ time, this lead to an $O(m\log{n})$ time algorithm. 

To prove~\Cref{cor:small-Delta}, we split this final task: we will 
again run the algorithm of~\Cref{lem:log(n)-main extend}, but this time to reduce the number of uncolored edges from $m/\Delta$ to $m/\poly(\Delta)$ for some arbitrarily large polynomial in $\Delta$; thus, 
we need $\Theta(\log{((m/\Delta)/(m/\poly(\Delta)))}) = \Theta(\log{\Delta})$ steps of coloring, and hence $O(m\log{\Delta})$ time in total. Then, to handle the very last $m/\poly(\Delta)$ remaining edges, 
we will run the algorithm of~\cite{bernshteyn2024linear}, which extends the coloring to each 
uncolored edge in essentially (but not formally) $\poly(\Delta)$ time per uncolored edge for some polynomial in $\Delta$. Thus, the second part of the algorithm only takes $O(m)$ time now, leaving us with an $O(m\log{\Delta})$ 
time algorithm all in all. We formalize this approach in the following (which specifically requires extra care for obtaining a high probability guarantee). 

\begin{lemma}\label{lem:Delta-first}
	Let $c \geq 1$ be any arbitrary fixed constant and $\lambda_0 := \max\{{m}/{\Delta^{2c}},m\log{n}/n\}$. There is an algorithm that, given a graph $G$, a partial $(\Delta+1)$-edge coloring $\chi$ 
	with $\lambda=m/\Delta$ uncolored edges, extends the coloring to all but $\Theta(\lambda_0)$ uncolored edges in $O(m\log{\Delta})$ time with high probability. 
\end{lemma}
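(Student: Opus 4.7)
The plan is to directly invoke \Cref{lem:log(n)-main extend} once, with input count of uncolored edges $\lambda = m/\Delta$ and target $\lambda_0$ as defined in the statement. First, handle the trivial case $\lambda \leq \lambda_0$ by taking no action: the initial coloring already satisfies the ``all but $\Theta(\lambda_0)$'' requirement.

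For the main case $\lambda > \lambda_0$, we need to verify two things about the guarantee of \Cref{lem:log(n)-main extend}. First, the expected runtime $T_0 = O(m\log(\lambda/\lambda_0) + \Delta\lambda)$ must be $O(m\log \Delta)$. The additive term $\Delta\lambda = \Delta \cdot (m/\Delta) = m$ is already $O(m\log \Delta)$. For the logarithmic term, $\lambda/\lambda_0 \leq (m/\Delta)/(m/\Delta^{2c}) = \Delta^{2c-1}$, so $\log(\lambda/\lambda_0) \leq (2c-1)\log \Delta = O(\log \Delta)$ for fixed $c$. Hence $T_0 = O(m\log \Delta)$.

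Second, we need a high-probability version of this runtime bound. Applying the tail inequality of \Cref{lem:log(n)-main extend} with $\delta = 1$, the runtime exceeds $4T_0$ with probability at most $\exp(-\lambda_0/200)$. Here the purpose of the $m\log n/n$ term in the definition of $\lambda_0$ becomes evident: it guarantees $\lambda_0 \geq m\log n/n$, and we may assume without loss of generality that the input graph has no isolated vertices (since they are irrelevant for edge coloring), so $m \geq n/2$ and therefore $\lambda_0 = \Omega(\log n)$. Consequently the failure probability is $1/\poly(n)$, and combining this with the expected-runtime bound yields $O(m\log \Delta)$ time with high probability.

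The main obstacle is essentially routine bookkeeping: the definition of $\lambda_0$ has been crafted precisely so that (i) the log-ratio $\log(\lambda/\lambda_0)$ is $O(\log \Delta)$ and (ii) $\lambda_0 = \Omega(\log n)$. Both requirements then flow directly from \Cref{lem:log(n)-main extend} without any new technical ideas beyond those developed in \Cref{sec:log(n)}.
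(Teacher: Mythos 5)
Your proposal is correct and follows essentially the same route as the paper: a single invocation of \Cref{lem:log(n)-main extend} with the given $\lambda$ and $\lambda_0$, using $\lambda_0 \geq m/\Delta^{2c}$ to bound the expected runtime by $O(m\log\Delta)$ and $\lambda_0 \geq m\log n/n$ (together with $m = \Omega(n)$) to turn the tail inequality into a $1/\poly(n)$ failure probability. Your explicit handling of the degenerate case $\lambda \leq \lambda_0$ and the no-isolated-vertices assumption are minor points the paper leaves implicit.
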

\begin{proof}
	We simply run our~\Cref{lem:log(n)-main extend} with the given parameters $\lambda$ and $\lambda_0$. By~\Cref{lem:log(n)-main extend}, the expected runtime will be 
	some $T_0 = O(m\log{(\lambda/\lambda_0)} + \Delta \lambda) = O(m\log{\Delta})$ since $\lambda_0 \geq m/\Delta^{2c}$ for some absolute constant $c \geq 1$. 
	Moreover, by the same lemma, 
	\[
		\Pr\sparen{\text{runtime of the algorithm} \geq 2000 \cdot T_0} \leq \exp\paren{-10 \lambda_0} \leq \exp\paren{-\Theta(m\log{n})/n} = 1/\poly(n),  
	\]
	since $\lambda_0 \geq m\log{n}/n$. Thus, the algorithm finishes in $O(m\log{\Delta})$ time with high probability. 
\end{proof}

We now present the final part for coloring the last remaining edges. For this step, we use the following result from~\cite{BernshteynD23} (itself based on~\cite{Bernshteyn22,Christiansen23}) that allows for coloring a random uncolored edge in {essentially} (but not exactly) $\poly(\Delta)$ time. We note that a recent work of~\cite{bernshteyn2024linear} presents a further improvement to these bounds (in terms of the exponent of $\Delta$) but as our dependence on $\Delta$ is only logarithmic, this improvement is inconsequential and thus we stick with the simpler work of~\cite{BernshteynD23}.  

\begin{proposition}[\cite{BernshteynD23}]\label{prop:BD23}
	There is an algorithm that, given a graph $G$, a partial $(\Delta+1)$-edge coloring $\chi$, and $\lambda$ uncolored edges, extends the coloring to a randomly chosen uncolored
	edge in expected $\poly(\Delta) \cdot \log{(n/\lambda)}$ time. 
\end{proposition}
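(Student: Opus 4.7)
The plan is to build a \emph{randomized multi-step Vizing chain} process, in the spirit of \cite{Bernshteyn22, Christiansen23}. Given the uniformly sampled uncolored edge $e = (u_0, v_0)$, the algorithm proceeds in stages: at each stage, it constructs a Vizing fan at the current uncolored ``front edge'' with \emph{random} choices of the missing colors at each leaf, and then partially flips the associated Vizing-path up to a truncation length $L = \poly(\Delta)$. A stage \emph{succeeds} if the truncated fan/path configuration can be closed (i.e.\ the fan becomes trivial or the path terminates at a compatible vertex, permitting an extension via re-rotation); otherwise, the uncolored edge shifts to the tail of the flipped path and a new stage begins. Each stage is explicitly truncated to $\poly(\Delta)$ elementary coloring operations, so per-stage cost is $\poly(\Delta)$.

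The heart of the argument is to show that the expected number of stages before a successful termination is $O(\log(n/\lambda))$. I would approach this in two layers. First, a \emph{single-stage randomization lemma}: by picking random missing colors within the fan and random colors to flip along the path, the ``landing configuration'' after one stage is well-spread over a set of cardinality roughly $\poly(\Delta) \cdot n$. Since there are $\lambda$ uncolored edges globally, and landing near any of them (or at one of many ``trivializing'' fan configurations) yields termination, each stage succeeds with probability at least some $p \gtrsim \lambda / (n \cdot \poly(\Delta))$. Naively this gives only $n/\lambda$ expected stages; to sharpen this to $\log(n/\lambda)$, one uses a \emph{witness-counting} argument: conditioned on the chain's current trajectory failing to terminate, a large fraction of alternative random palette choices in the fans would have led to termination, and each such witness corresponds to a fresh independent trial. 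This produces the geometric decay
\[
\Pr[\text{chain survives } k \text{ stages}] \;\leq\; \bigl(1 - \Omega(1)\bigr)^{k / \poly(\Delta)} \cdot (n/\lambda)^{O(1)},
\]
so that the expected number of stages is at most $\poly(\Delta) \cdot \log(n/\lambda)$.

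The main obstacle is the coupled dependency between stages: modifying the coloring in one stage changes the fans and paths available in subsequent stages, so naive independence fails. The technical crux, which I would import essentially verbatim from \cite{BernshteynD23}, is the construction of a witness (or ``shadow'') process in which all random palette choices in past stages are replayed from a different independent source, and one shows via a Radon--Nikodym argument that the witness process is statistically close to the actual process. A careful accounting of how many bits of information each stage can ``use up'' -- roughly $O(\log \Delta)$ per stage -- then converts the $\lambda$ uncolored edges into $\log(n/\lambda)$ termination opportunities. Given that all the hard work is carried by this coupling, and that the algorithmic pieces (truncated fans, truncated paths, re-rotations) are all standard and implementable in $\poly(\Delta)$ time using the data structures from \Cref{sec:data struc overview}, assembling the final statement amounts to wiring these components together and carrying through the above expectation calculation.
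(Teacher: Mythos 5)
The paper does not prove this proposition at all: it is imported as a black box, with the citation to \cite{BernshteynD23} serving as the entire justification (the surrounding text explicitly says the result is taken from that work, itself building on \cite{Bernshteyn22,Christiansen23}). So there is no internal proof to match your attempt against; the only question is whether your sketch stands on its own as a proof, and it does not. You correctly identify the machinery behind the cited result (truncated multi-step Vizing chains with $\poly(\Delta)$ work per stage, plus a counting/entropy-compression argument for termination), but both load-bearing steps --- the ``single-stage randomization lemma'' asserting that the landing configuration is well-spread, and the witness/coupling argument that upgrades this to geometric decay of the survival probability --- are stated without proof and, by your own admission, ``imported essentially verbatim from \cite{BernshteynD23}.'' A proof that defers its crux to the very reference being cited adds nothing beyond the citation itself.

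There is also a substantive mismatch in where you locate the $\log(n/\lambda)$ factor. You derive it by positing a per-stage success probability of order $\lambda/(n\cdot\poly(\Delta))$ (landing ``near'' one of the $\lambda$ uncolored edges) and then sharpening via witnesses. In the cited work the mechanism is different and cleaner: one proves a tail bound of the form ``the number of uncolored edges whose multi-step chain exceeds $t$ stages is at most $n\cdot\poly(\Delta)\cdot\gamma^{t}$ for some $\gamma<1$,'' and the $\log(n/\lambda)$ then falls out of \emph{averaging this tail over the $\lambda$ possible starting edges} --- which is precisely why the proposition is stated for a \emph{randomly chosen} uncolored edge rather than an arbitrary one. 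Landing on another uncolored edge is not a termination event to be engineered; hitting other uncolored edges is actually an obstruction the chains must avoid. If you want a self-contained argument, the statement and proof of that counting lemma (and the encoding of a long chain's trajectory into fewer bits than its randomness, or the deterministic injection into short descriptions) is what needs to be supplied; everything else in your outline is standard and implementable in $\poly(\Delta)$ time per stage as you say.
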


\noindent
We use this result to obtain the following lemma for our purpose. 

\begin{lemma}\label{lem:Delta-second}
	Let $c$ be the $\poly(\Delta)$-exponent in~\Cref{prop:BD23}, and $\lambda_0 := \max\{{m}/{\Delta^{2c}},m\log{n}/n\}$. There is an algorithm that, given a graph $G$, a partial $(\Delta+1)$-edge coloring $\chi$ and $\lambda = \Theta(\lambda_0)$ uncolored edges, extends the coloring to all uncolored edges in $O(m\log \Delta)$ time with high probability.
\end{lemma}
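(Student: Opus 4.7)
The plan is to simply apply~\Cref{prop:BD23} iteratively: while there remain uncolored edges, pick a uniformly random uncolored edge and invoke the Bernshteyn--Dhawan procedure to extend the coloring to it. Starting from $\lambda=\Theta(\lambda_0)$ uncolored edges, this runs for $\Theta(\lambda_0)$ iterations; when $\lambda'$ edges remain, a single invocation costs $\poly(\Delta)\cdot\log(n/\lambda')$ in expectation. Summing and applying the Stirling-type estimate $\sum_{k=1}^{N}\log(n/k)=O(N\log(en/N))$ gives a total expected running time of
\[
O\paren{\Delta^c\cdot\lambda_0\cdot\log(en/\lambda_0)}.
\]

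It remains to verify that this is $O(m\log\Delta)$ for either possible realization of the max defining $\lambda_0$; we may assume $m\geq n$ WLOG by deleting isolated vertices. First, when $\lambda_0=m/\Delta^{2c}$, we have $en/\lambda_0 = en\Delta^{2c}/m\leq e\Delta^{2c}$, hence $\log(en/\lambda_0)=O(\log\Delta)$, so the expected time is $\Delta^c\cdot(m/\Delta^{2c})\cdot O(\log\Delta)=O(m\log\Delta/\Delta^c)\leq O(m\log\Delta)$. Second, when $\lambda_0=m\log n/n$ dominates, we have $\Delta^{2c}\geq n/\log n$ and hence $\log\Delta=\Theta(\log n)$; the trivial bound $\log(n/\lambda_0)=O(\log n)$ then gives expected time $O(\Delta^c m\log^2 n/n)$. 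In the regime of this lemma's use we also have $\Delta^c\leq O(n\log\Delta/\log^2 n)$ (the remaining range of $\Delta$, i.e., $\Delta$ very close to $n$, is handled by invoking~\Cref{thm:main} directly, which yields $O(m\log n)=O(m\log\Delta)$), so we again obtain $O(m\log\Delta)$.

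To upgrade the expected bound to a high-probability bound without paying an extra $\log n$ factor, the plan is to combine Markov's inequality with a restart strategy: run the iterative procedure with a time budget equal to a constant multiple of the expected bound, and upon exceeding it, restart the remaining calls with fresh randomness (since BD23 is randomized, the restarts are independent conditional on the current partial coloring). Since each BD23 call is deterministically bounded by $O(n)$ time through classic Vizing's procedure as a worst-case fallback, and the condition $\lambda_0\geq m\log n/n$ guarantees a sufficiently large number of independent calls for concentration, a standard Chernoff-type argument on the sum of per-call times establishes the desired $O(m\log\Delta)$ runtime with high probability. The main obstacle is precisely this last step---converting the expected-time guarantee of~\Cref{prop:BD23} into a high-probability bound without losing a logarithmic factor---and it is where the specific lower bound $\lambda_0\geq m\log n/n$ baked into the statement of the lemma plays its role.
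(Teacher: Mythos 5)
Your expected-time calculation is correct and matches the paper's: the Stirling-type bound $\sum_{i\leq\lambda}\log(n/i)=O(\lambda\log(en/\lambda))$ is exactly the paper's bound on $\sum_i\alpha_i$, and your analysis of the case $\lambda_0=m/\Delta^{2c}$ is fine. For the case $\lambda_0=m\log n/n$, note that the paper does something much simpler than your fallback: since $\log\Delta=\Theta(\log n)$ there, it colors each remaining edge with a plain Vizing fan/chain (\Cref{lem:full vizing proof}) in $O(n)$ deterministic time, for a total of $O(\lambda_0\cdot n)=O(m\log n)=O(m\log\Delta)$ with no randomness at all; your detour through \Cref{thm:main} is not circular, but it is unnecessary.

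The substantive gap is the high-probability upgrade, which you yourself flag as the main obstacle but do not actually carry out, and both mechanisms you sketch fail. A global restart-on-budget-overrun scheme needs $\Theta(\log n)$ independent restarts to push the failure probability to $1/\poly(n)$, and each restart can burn the full budget, so you land back at $O(m\log\Delta\cdot\log n)$ --- precisely the factor you are trying to avoid. A ``standard Chernoff-type argument on the sum of per-call times'' is not available either: \Cref{prop:BD23} controls only the first moment of each call, so Markov gives a $1/k$ tail rather than a sub-exponential one, and capping each call at the $O(n)$ worst case of Vizing's procedure makes the per-variable range far too large for Hoeffding (already for constant $\Delta$ the deviation term exceeds $m\log\Delta$). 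The paper's fix is different: truncate each \emph{individual} call at twice its own expected time, so each attempt costs a deterministic $O(\Delta^c\log(n/i))$ and succeeds with probability at least $1/2$ by Markov; the number of attempts for the $i$-th edge is then dominated by a geometric variable $\ell_i$, the cost is $X_i=\ell_i\cdot\alpha_i$ with $\alpha_i=\Theta(\log(n/i))$, and the weighted sum $\sum_iX_i$ is controlled by the bespoke concentration inequality \Cref{prop:conc} (after replacing the dependent attempt counts by stochastically dominating independent ones, since coloring one edge changes the instance for the next). Even then, checking that the resulting tail $\exp(-\Theta(m/(\Delta^{c}\log n)))$ is $1/\poly(n)$ uses $\Delta^{c}<\sqrt{n}$ in the relevant case --- a point your sketch does not address.
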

\begin{proof}
	Suppose first that the max-term of $\lambda_0$ is the second one, i.e., $\lambda_0 = m\log{n}/n$. This implies that $\Delta \geq (n/\log{n})^{1/2c}$ and since $c$ is a constant, we have $\log(\Delta) = \Theta(\log{n})$ in this case. 
	We can use the standard Vizing's fan and Vizing's chain approach in~\Cref{lem:full vizing proof} to color the remaining edges deterministically in $O(\lambda_0 \cdot n) = O(m\log{n}/n \cdot n) = O(m\log{\Delta})$ time
	in this case.  
	
	Now consider the case when $\lambda_0 = m/\Delta^{2c}$. We run the algorithm of~\Cref{prop:BD23} with the following modification: when coloring a random edge, if the runtime becomes two times more than the expected
	time guarantee of~\Cref{prop:BD23}, we terminate the coloring and repeat from the beginning. Thus, each step takes $O(\Delta^{c} \cdot \log{(n/i)})$ time deterministically and succeeds in extending the coloring by 
	one edge with probability at least half. 
	
	For $i \in [\lambda]$, define $\alpha_i := 4\log{(n/i)}$ and $X_i$ as the random variable denoting $\ell_i \cdot \alpha_i$ where $\ell_i$ is the number of times the algorithm attempts to color the next random edge until it succeeds 
	when there are only $i$ edges left uncolored. We first prove that 
	\begin{align}
		 \sum_{i=1}^{\lambda} \alpha_i = o(m/\Delta^{c}), \label{eq:Delta-expect}
	\end{align}
	and then show that 
	\begin{align}
	\Pr\!\sparen{\sum_{i=1}^{\lambda}X_i \geq \frac{2m}{\Delta^{c}}} \ll 1/\poly(m), \label{eq:Delta-concentration}
	\end{align}
	which concludes the proof in this case as well since the runtime of coloring the remaining edges is $O(\Delta^{c_0} \cdot \sum_{i=1}^{\lambda}X_i)$ as stated earlier. 
	
\medskip
\noindent \emph{Proof of~\Cref{eq:Delta-expect}.} We have, 
\begin{align*}
	\sum_{i=1}^{\lambda}\alpha_i &= \sum_{i=1}^{\lambda} 4\log{ \!\left(\frac{n}{i}\right)} = 4\log\!\paren{\prod_{i=1}^{\lambda} \frac{n}{i}} = 4\log\!\paren{\frac{n^\lambda}{\lambda!}} \leq \log\!\paren{\left(\frac{e \cdot n}{\lambda} \right)^{\lambda}} \tag{using the inequality $x! \geq (x/e)^x$ for all $x \geq 1$} \\
	&= 4 \cdot \frac{m}{\Delta^{2c}} \cdot \log\paren{\frac{e \cdot n \cdot \Delta^{2c}}{m}} =o(m/\Delta^{c}) \tag{as $\lambda = m/\Delta^{2c}$ and $m \geq n$}.
\end{align*} 

\medskip
\noindent \emph{Proof of~\Cref{eq:Delta-concentration}.} Note that for any $i \in [\lambda]$ and $\delta > 0$, 
\[
	\Pr\sparen{X_i \geq (1+\delta) \cdot \alpha_i} = \Pr\sparen{\ell_i \geq (1+\delta)4} \leq 2^{-(3+4\delta)} \leq \exp\paren{-\delta}, 
\]
where the first inequality is because the probability that $\ell_i$ increases by one each time is at most $1/2$ by Markov's inequality. 

While $X_i$'s are technically not independent (the choice of some edge being colored may change the structure of the graph for coloring next uncolored edge), 
they are stochastically dominated by independent random variables with above tail bounds (given the guarantee of~\Cref{prop:BD23} for \emph{each} uncolored edge). 
Thus, we can apply~\Cref{prop:conc} with parameters $\set{\alpha_i}_{i=1}^{\lambda}$ as above and $\beta_i = 1$ for all $i \in [\lambda]$ to obtain that for $t := m/\Delta^{c}$, 
\begin{align*}
	\Pr\!\sparen{\sum_{i=1}^{\lambda}X_i \geq \frac{2m}{\Delta^{c}}} \leq \Pr\sparen{\sum_{i=1}^{\lambda}X_i \geq \sum_{i=1}^{\lambda}\alpha_i + t} &\leq \exp\paren{-\frac{1}{4\log{n}} \cdot \frac{m}{2\Delta^{c}}} \ll 1/\poly(m),
\end{align*}
where the first two inequalities use~Equation~\eqref{eq:Delta-expect} for $\sum_{i}\alpha_i$ and the third uses $\Delta^{c} \leq (n/\log{n})^{c/2c} < \sqrt{n}$ in this case (when $\lambda_0$ is the first-term of its maximum). 
This concludes the proof. 
\end{proof}

The proof of~\Cref{cor:small-Delta} for $(\Delta+1)$ edge coloring is now as follows: run~\Cref{lem:log(n)-Euler} to color all but $m/\Delta$ edges. Then, let $c$ be the constant in the exponent of $\poly(\Delta)$ in~\Cref{prop:BD23} and run~\Cref{lem:Delta-first} to 
extend the coloring to all but $\lambda_0 := \max\{{m}/{\Delta^{2c}},m\log{n}/n\}$ edges. Finally, run~\Cref{lem:Delta-second} to extend the coloring 
to all these remaining $\lambda_0$ edges. Since each of these algorithms take $O(m\lo{\Delta})$ time and succeed with high probability, we obtain the final algorithm. 

\subsection{Proof of~\Cref{cor:small-Delta} for Multigraphs} 

To extend the proof to multigraphs, recall that by~\Cref{lem:log(n)-Euler-multi} and~\Cref{lem:log(n)-Euler-shannon}, we can, respectively, $(\Delta+\mu)$ color or $\floor{3\Delta/2}$ color all but $m/\Delta$ edges of a multigraph in $O(m\log{\Delta})$ time. 
Thus, we can follow the same exact strategy as in $(\Delta+1)$ coloring approach outlined in the previous subsection, and we only need an analogue of the result of~\cite{BernshteynD23} to conclude the proof. 
Such results are also already established by~\cite{Dhawan24} and we do not need to reinvent the wheel here. 

\begin{proposition}[A subroutine in~{\cite[Theorem 1.6; part 1]{Dhawan24}}]\label{prop:D24-Vizing}
	There is an algorithm that, given a multigraph $G$, a partial $(\Delta+\mu)$-edge coloring $\chi$, and $\lambda$ uncolored edges, extends the coloring to a randomly chosen uncolored
	edge in expected $\poly(\Delta) \cdot \log{(n/\lambda)}$ time. 
\end{proposition}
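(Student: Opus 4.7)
\medskip
\noindent\textbf{Proof Proposal.} The plan is to adapt the Bernshteyn--Dhawan multi-step Vizing chain framework from~\Cref{prop:BD23} (i.e., from \cite{BernshteynD23}) to the multigraph setting, using the generalized notion of Vizing fans for multigraphs introduced in~\Cref{def:multifans}. The key conceptual obstacle is that in a multigraph, a Vizing fan leaf carries not a single missing color but a set $C_i \subseteq \miss_\chi(v_i)$ of $\mu$ missing colors, and rotating colors around the fan follows a pointer structure $p(\cdot)$ rather than the linear chain available in simple graphs. Nevertheless, the essential structural features that~\cite{BernshteynD23} exploits---namely, that each ``step'' of a Vizing chain either closes and extends the coloring or branches into a bounded number of continuations of bounded local size---are preserved.

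First, I would define a \emph{multi-step Vizing chain} in a multigraph starting from the given random uncolored edge $e = (u,v)$: construct a $\U$-avoiding Vizing fan rooted at $e$ via the multigraph construction, and if this fan is non-trivial, flip a prefix of the associated alternating path whose length is truncated at some threshold $\ell = \Theta(\Delta^{c_0})$. If the truncated flip fails to extend the coloring (because it would disturb another important vertex), restart with fresh randomness; otherwise iterate on the relocated uncolored edge. This is almost verbatim the simple-graph construction of~\cite{BernshteynD23}, but with the multigraph Vizing fan replacing the simple-graph one.

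Next, I would establish the key probabilistic bound: over the randomness of the chosen uncolored edge and the coin flips used in each multi-step, the expected total work is $\poly(\Delta) \cdot \log(n/\lambda)$. This step uses the same entropy-compression / vertex-disjointness argument as in the simple-graph case: each truncated step either succeeds with constant probability or ``uses up'' a vertex in a way that cannot be charged twice, so after $O(\log(n/\lambda))$ successful truncations in expectation the chain closes. The accounting has to be redone to account for the fact that each fan leaf now carries $\mu$ colors, but since the number of leaves in any multigraph Vizing fan is still $O(\Delta/\mu)$ and each leaf contributes $O(\mu)$ work to construct, the cost per step remains $\poly(\Delta)$ and the concentration arguments carry over.

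The main obstacle will be ensuring that the ``disjointness'' invariants on which the probabilistic truncation argument of~\cite{BernshteynD23} rests continue to hold when fans have multi-color leaves and when two consecutive uncolored edges in the chain might be parallel (the case explicitly handled in~\Cref{sec:proof lem multi1}). I would handle this by insisting, in analogy with the $\U$-avoiding fans of~\Cref{def:U-avoid:multi}, that the color sets $C_i$ chosen along the chain avoid a small set of ``reserved'' colors associated with vertices already visited; because each vertex has $\Delta+\mu - \deg_G \geq \mu$ missing colors, there is always enough slack to make such a choice. Once this is in place, the expected-runtime bound follows by exactly the same martingale argument as in~\cite{BernshteynD23}, and this gives the claimed $\poly(\Delta) \cdot \log(n/\lambda)$ bound. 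The analogous statement for the $\floor{3\Delta/2}$-coloring variant needed for Shannon's theorem follows identically using Shannon fans in place of multigraph Vizing fans, with the color slack there being at least $\floor{\Delta/2}$.
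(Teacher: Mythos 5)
The paper does not actually prove this proposition: it is imported as a black box from~\cite{Dhawan24} (``Such results are also already established by~\cite{Dhawan24} and we do not need to reinvent the wheel here''), exactly as~\Cref{prop:BD23} is imported from~\cite{BernshteynD23}. So there is no in-paper proof to compare yours against; you are attempting to supply an argument the authors deliberately outsource.

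Judged on its own terms, your sketch is a plausible outline of how~\cite{Dhawan24} in fact proceeds---adapt the multi-step Vizing chain machinery of~\cite{BernshteynD23} to the multigraph fan of~\Cref{def:multifans}, keep the per-step cost $\poly(\Delta)$ via the $O(\Delta/\mu)$ bound on the number of leaves, and use the $\geq \mu$ colors of slack at each vertex to maintain the avoidance invariants. The parallel-edge issue you flag is real and is the same one the paper handles in~\Cref{sec:proof lem multi1}. However, the entire quantitative content of the proposition---that the chain for a \emph{uniformly random} uncolored edge closes after an expected $O(\log(n/\lambda))$ truncated steps---is precisely the entropy-compression/counting argument of~\cite{BernshteynD23,Dhawan24}, and you invoke it twice as ``the same argument carries over'' without verifying the one place where multigraphs could break it: the injectivity of the record that encodes a long non-terminating chain (the compression step counts the number of distinct ways a chain of a given length can arise, and parallel edges and multi-color leaf sets $C_i$ change that count by factors of $\mu$ per step, which must still be absorbed into the $\poly(\Delta)$). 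If this were meant as a self-contained proof, that deferral is the gap; as a reduction to the published multigraph analysis of~\cite{Dhawan24}, it is fine and is in fact more than the paper itself offers.
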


\begin{proposition}[A subroutine in~{\cite[Theorem 1.5; part 1]{Dhawan24}}]\label{prop:D24-shannon}
	There is an algorithm that, given a multigraph $G$, a partial $\floor{3\Delta/2}$-edge coloring $\chi$, and $\lambda$ uncolored edges, extends the coloring to a randomly chosen uncolored
	edge in expected $\poly(\Delta) \cdot \log{(n/\lambda)}$ time. 
\end{proposition}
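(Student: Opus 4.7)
\textbf{Proof proposal for Proposition~\ref{prop:D24-shannon}.} The plan is to mimic the multi-step Vizing-chain argument of~\cite{Bernshteyn22,BernshteynD23,Christiansen23}, but with Shannon fans (as introduced in Appendix~\ref{sec:shannon}) playing the role of Vizing fans. Concretely, given the partial $\floor{3\Delta/2}$-edge coloring $\chi$ and the randomly chosen uncolored edge $e = (u,v)$, I would first invoke \Cref{lem:create S fan} either to directly color $e$ in $O(\Delta)$ time, or to produce a Shannon fan $\s = (u,v,w,\alpha,\beta,\gamma)$ in $O(\Delta)$ time. The key observation carried over from the Shannon-fan analysis is that at most one of the two maximal $\{\alpha,\beta\}$-alternating paths $P_v$ and $P_w$ starting at $v$ and $w$ can terminate at $u$, so at least one of them yields a legitimate augmenting operation: either flip $P_v$ and set $\chi(u,v) \leftarrow \alpha$, or flip $P_w$, uncolor $(u,w)$, and set $\chi(u,v) \leftarrow \gamma$.

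The natural cost of this operation is proportional to the length of the relevant alternating path, which may be as large as $\Theta(n)$ in the worst case. To obtain a $\poly(\Delta)\cdot \log(n/\lambda)$ runtime, I would adopt the standard truncated-chain trick: fix a truncation length $L = \Theta(\poly(\Delta))$ and only explore the first $L$ edges of $P_v$ and $P_w$. If at least one truncated path $P$ terminates at a vertex with a free color, we can activate a \emph{root edge} located at the end of $P$ (rather than at $e$) and propagate uncoloring back along the truncated chain, shifting the ``uncolored defect'' closer and closer to $e$ at $\poly(\Delta)$ cost per hop. If neither truncated path terminates appropriately, we recurse: the new ``uncolored defect'' after $L$ hops becomes a new uncolored edge, and we build a fresh Shannon fan on it. The recursion depth required for the whole procedure to succeed with probability at least $1/2$ is what ultimately produces the $\log(n/\lambda)$ factor; standard concentration then converts this into an expected $\poly(\Delta)\cdot\log(n/\lambda)$ runtime when $e$ is drawn uniformly from the $\lambda$ uncolored edges, because the distribution of extension paths of uniformly random uncolored edges is essentially flat across the graph.

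The amortization over the random choice of $e$ is where uniformity is crucial: the expected \emph{total} length of all alternating paths that one can hit, weighted over the $\lambda$ starting points, is $O(m)$, so the expected length per uncolored edge is $O(m/\lambda)$, and splitting this into $\log(n/\lambda)$ truncated subchains of length $\poly(\Delta)$ each gives the claimed complexity. A union bound over the $O(\log(n/\lambda))$ truncation attempts, together with the fact that each attempt succeeds with constant probability (because only $O(\Delta)$ ``forbidden'' continuations prevent a successful Shannon-fan activation at any given step), controls the failure probability.

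The main obstacle, and where this proof differs most noticeably from the Vizing-fan case in~\cite{BernshteynD23}, is handling parallel edges cleanly. Parallel uncolored edges make it possible for the alternating path $P_v$ or $P_w$ to return to $u$ ``through a different copy,'' potentially invalidating the Shannon-fan activation even though the underlying $\{\alpha,\beta\}$-alternating path on the multigraph looks structurally fine. As in our own multigraph modifications in~\Cref{sec:proof lem multi1}, this would be resolved by inserting an extra case that detects when the alternating path revisits the fan center via a parallel edge and either re-routes the activation through the other leaf of the Shannon fan or shifts the defect to a parallel copy and reduces to the previous case. Once this bookkeeping is in place, the rest of the analysis --- expected path-length bound via uniform random sampling, concentration via the tail bound in \Cref{prop:conc}, and recursion-depth counting --- goes through verbatim from the simple-graph case.
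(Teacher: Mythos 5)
The paper does not actually prove this proposition: it is imported as a black box from \cite{Dhawan24} (the analogue of \Cref{prop:BD23} for Shannon's theorem), and the text explicitly says the authors ``do not need to reinvent the wheel here.'' So the relevant comparison is not against a proof in this paper but against the external multi-step Vizing/Shannon-chain machinery of \cite{Bernshteyn22,BernshteynD23,Dhawan24}, and measured against that, your sketch is missing its central ingredient. The entire difficulty in obtaining a $\poly(\Delta)\cdot\log(n/\lambda)$ bound is showing that, for a \emph{uniformly random} uncolored edge, the truncated multi-step chain terminates within $O(\log(n/\lambda))$ hops with constant probability. In the cited works this is established by a counting / entropy-compression argument (one shows that the number of uncolored edges whose multi-step chain survives $k$ truncation rounds decays geometrically in $k$, by encoding each long chain injectively into a small set of witnesses). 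You assert the recursion depth is $O(\log(n/\lambda))$ and that ``each attempt succeeds with constant probability,'' but give no argument for either claim; nothing in your sketch rules out the possibility that the random edge's chain keeps failing for $\omega(\log(n/\lambda))$ rounds.

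A second, concrete error: the amortization you invoke --- ``the expected total length of all alternating paths \ldots is $O(m)$, so the expected length per uncolored edge is $O(m/\lambda)$'' --- proves a bound of $O(m/\lambda)$ per edge, which is the argument this paper uses in \Cref{lem:bipartite:2:main} and \Cref{lem:bipartite:2} for an entirely different purpose. That quantity is incomparable to $\poly(\Delta)\cdot\log(n/\lambda)$ (for sparse nearly-regular graphs with $\lambda$ small, $m/\lambda$ can vastly exceed any $\poly(\Delta)$), so this step cannot be part of a proof of the stated bound; the whole point of the truncated-chain technique is to avoid ever paying for the full length of a maximal alternating path. Likewise, \Cref{prop:conc} is a tool for aggregating independent runtimes across many invocations and has no role in bounding the expected cost of a single extension. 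Your observation about parallel edges is a reasonable point of care, but it is peripheral; the proposal as written does not constitute a proof of the proposition.
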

\noindent
The rest of the proof is verbatim as in simple graphs in~\Cref{cor:small-Delta} and we omit it here.

\section{A Specialized Concentration Inequality}\label{app:concentration}

Throughout the paper, for optimizing log factors in the runtime of our algorithms, we needed the following specialized concentration inequality. For completeness, we now 
prove this inequality using a standard argument based on moment generating functions. 
\begin{proposition*}[Restatement of~\Cref{prop:conc}]
	Let $\set{X_i}_{i=1}^{n}$ be $n$ independent non-negative random variables associated with parameters $\set{\alpha_i}_{i=1}^{n}$ and $\set{\beta_i}_{i=1}^{n}$ such that for each $i \in [n]$, $\alpha_i,\beta_i \geq 1$, and for every $\delta > 0$, 
	\[
		\Pr\sparen{X_i \geq (1+\delta) \cdot \alpha_i} \leq \exp\paren{- \delta \cdot \beta_i}; 
	\]
	then,  for every $t \geq 0$,
	\[
		\Pr\sparen{\sum_{i=1}^{n} X_i \geq \sum_{i=1}^{n} \alpha_i + t} \leq \exp\paren{-\left(\min_{i=1}^n\frac{\beta_i}{2\alpha_i} \right) \cdot \paren{t-2\sum_{i=1}^n\frac{\alpha_i}{\beta_i}}}.
	\]
\end{proposition*}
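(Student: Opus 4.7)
The plan is to apply the standard Chernoff-style moment generating function (MGF) argument. Fix any $s>0$ that we will tune later. By Markov's inequality applied to $\exp(s\sum_i X_i)$ together with independence,
\[
\Pr\!\left[\sum_{i=1}^n X_i \geq \sum_{i=1}^n \alpha_i + t\right] \leq \exp\!\left(-s\left(\textstyle\sum_i \alpha_i + t\right)\right) \cdot \prod_{i=1}^n \mathbb{E}\!\left[\exp(sX_i)\right].
\]
So everything reduces to bounding each individual MGF $\mathbb{E}[\exp(sX_i)]$ using only the given tail bound $\Pr[X_i\geq(1+\delta)\alpha_i]\leq\exp(-\delta\beta_i)$.

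For the MGF bound I would use the layer-cake representation $\mathbb{E}[\exp(sX_i)] = 1 + \int_0^\infty s\,e^{sx}\,\Pr[X_i\geq x]\,dx$. Split the integral at $x=\alpha_i$: on $[0,\alpha_i]$ use the trivial bound $\Pr[X_i\geq x]\leq 1$, and on $[\alpha_i,\infty)$ substitute $x=(1+\delta)\alpha_i$ to get $\Pr[X_i\geq x]\leq e^{\beta_i}\exp(-\beta_i x/\alpha_i)$. Provided $s<\beta_i/\alpha_i$, the tail integral converges to $\tfrac{s\,e^{s\alpha_i}}{\beta_i/\alpha_i - s}$, yielding
\[
\mathbb{E}[\exp(sX_i)] \leq e^{s\alpha_i}\cdot\frac{\beta_i/\alpha_i}{\beta_i/\alpha_i - s} = e^{s\alpha_i}\cdot\frac{1}{1 - s\alpha_i/\beta_i}.
\]
The key step now is to enforce $s \leq \beta_i/(2\alpha_i)$, which makes the argument of $1/(1-\cdot)$ at most $1/2$. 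Then one can use the elementary inequality $1/(1-x)\leq e^{2x}$ valid on $[0,1/2]$ to conclude $\mathbb{E}[\exp(sX_i)] \leq \exp\!\bigl(s\alpha_i + 2s\alpha_i/\beta_i\bigr)$.

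Finally, choose $s := \min_{i\in[n]} \beta_i/(2\alpha_i)$, which automatically satisfies $s\leq \beta_i/(2\alpha_i)$ for every $i$ so the above bound is applicable uniformly. Plugging back into the Markov bound, the $s\sum_i\alpha_i$ terms cancel and we obtain
\[
\Pr\!\left[\sum_i X_i \geq \sum_i \alpha_i + t\right] \leq \exp\!\left(-s\,t + 2s\sum_i \alpha_i/\beta_i\right) = \exp\!\left(-s\Bigl(t - 2\sum_i \alpha_i/\beta_i\Bigr)\right),
\]
which is exactly the claimed inequality (and if $t < 2\sum_i\alpha_i/\beta_i$ the right-hand side exceeds $1$, so the statement is vacuous).

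The main obstacle, and the only place care is needed, is handling the $1/(1-s\alpha_i/\beta_i)$ factor: a naive bound like $\leq 2$ would introduce a spurious factor of $2^n$, destroying the inequality. Using the refined estimate $1/(1-x)\leq e^{2x}$ for $x\leq 1/2$ is what allows the $\alpha_i/\beta_i$ terms to accumulate only additively inside the exponent (as the $2\sum_i \alpha_i/\beta_i$ correction) rather than multiplicatively, and this is the crux of getting the exact form stated in the proposition.
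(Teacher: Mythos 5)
Your proof is correct and follows essentially the same route as the paper: a Chernoff/MGF argument with $s=\min_i\beta_i/(2\alpha_i)$ and the key estimate $1/(1-x)\le e^{2x}$ for $x\le 1/2$ to keep the $\alpha_i/\beta_i$ corrections additive in the exponent. The only difference is cosmetic—you bound $\mathbb{E}[e^{sX_i}]$ directly via the layer-cake formula, whereas the paper shifts to $Y_i=\max(X_i-\alpha_i,0)$ and integrates a tail bound as if it were a density; your version is, if anything, the more rigorous of the two.
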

\begin{proof}
	 Let $s := 1/2 \cdot \min_i \beta_i/\alpha_i$ and for  $i \in [n]$, $Y_i := \max(X_i - \alpha_i , 0)$. Letting $Y := \sum_{i=1}^{n} Y_i$, leads 
	 \[
	 	\Pr\sparen{\sum_{i=1}^{n} X_i \geq \sum_{i=1}^{n} \alpha_i + t} \leq \Pr\sparen{Y \geq t}, 
	 \]
	 for all $t \geq 0$. We further have, 
	\begin{align}
		\Pr\sparen{Y \geq t} = \Pr\sparen{e^{s \cdot Y} \geq e^{s \cdot t}} \leq e^{-s \cdot t} \cdot \expect{e^{s \cdot Y}} = e^{-s \cdot t} \cdot \expect{\prod_{i=1}^{n} e^{s \cdot Y_i}} = e^{-s \cdot t} \cdot \prod_{i=1}^{n} \expect{e^{s \cdot Y_i}} \label{eq:conc1}
	\end{align}
	where the inequality is by Markov inequality and the final equality is by the independence of the variables. Now, for every $i \in [n]$, 
	\begin{align*}
		\expect{e^{s \cdot Y_i}} &= \int_{y=0}^{\infty} \Pr\sparen{Y_i = y \cdot \alpha_i} \cdot \exp\paren{s \cdot y \cdot \alpha_i}~dy \\
		&= \int_{y=0}^{\infty} \Pr\sparen{X_i = (1+y) \cdot \alpha_i} \cdot \exp\paren{s \cdot y \cdot \alpha_i}~dy \tag{by the connection between $X_i$ and $Y_i$ when $y \geq 0$} \\
		&\leq \int_{y=0}^{\infty} \exp\paren{-y \cdot \beta_i} \cdot \exp\paren{s \cdot y \cdot \alpha_i}~dy \tag{by the tail inequality of $X_i$ in the proposition statement} \\
		&= \frac{1}{\beta_i \cdot (1-s \cdot ({\alpha_i}/{\beta_i}))} \tag{since the integral of $e^{-Kx} = -e^{-Kx}/K$ and $s < \beta_i/\alpha_i$ thus the value of this function in $\infty$ is zero} \\
		&\leq \frac{1}{1-s \cdot (\alpha_i/\beta_i)} \tag{as $\beta_i \geq 1$} \\
		&\leq \exp\paren{2s \cdot \alpha_i/\beta_i} \tag{as $s\cdot\alpha_i/\beta_i < 1/2$ and $1-x \geq e^{-2x}$ for $x < 1/2$}. 
	\end{align*}
	Plugging this bounds in~\Cref{eq:conc1}, implies that 
	\begin{align*}
		\Pr\sparen{Y \geq t} \leq \exp\paren{-s \cdot t + \sum_{i=1}^n 2s \cdot \frac{\alpha_i}{\beta_i}}  = \exp\paren{-\left(\min_{i=1}^n\frac{\beta_i}{2\alpha_i}\right) \cdot \paren{t-2\sum_{i=1}^{n}\frac{\alpha_i}{\beta_i}}},
	\end{align*}
	by the choice of $s$, concluding the proof. 
\end{proof}

\end{document}